\newtheorem{thm}{Theorem}
\newtheorem{lemm}[thm]{Lemma}
\newtheorem*{lemm*}{Lemma}
\newtheorem{prop}[thm]{Proposition}
\newtheorem{coro}[thm]{Corollary}
\theoremstyle{definition}
\newtheorem{defi}[thm]{Definition}
\newtheorem{exam}[thm]{Example}
\newtheorem{rema}[thm]{Remark}
\newcommand{\Ckl}{ \mathcal{C}_{k,t}^l }
\newcommand{\Ckmupl}{ \mathcal{C}_{k-1,t}^{l,p} }
\newcommand{\Ckmul}{ \mathcal{C}_{k-1,t}^{l} }
\author[1]{Bérénice Delcroix-Oger \thanks{The work of the first author has been partially funded by ANR CARPLO.  }}
\author[1]{Matthieu Josuat-Verg\`{e}s \thanks{The work of the second author has been partially funded by ANR COMBINÉ.}}
\author[2]{Lucas Randazzo}
\affil[1]{Université de Paris, CNRS, IRIF, F-75006, Paris, France} % suivant : https://www.irif.fr/intranet/logo
\affil[2]{Université Gustave Eiffel, CNRS, LIGM, Champs-sur-Marne, FRANCE}
\newcommand{\pp}{\mathrlap{\,\scriptstyle{\overset{2}{\null}}}\Pi}
\newcommand{\ppdel}{\mathrlap{\scriptstyle{\overset{\hspace{1pt}\Delta}{\null}}}\Pi}
\newcommand{\ppp}{\mathrlap{\hspace{0.5mm}\scalebox{0.56}{$\overset{2}{\null}$}}\Pi}
\DeclareMathOperator{\rk}{rk}
\DeclareMathOperator{\im}{im}
\begin{document}
\setcounter{tocdepth}{2}

\title{Some properties of the parking function poset}

\maketitle

\abstract{In 1980, Edelman defined a poset on objects called the noncrossing 2-partitions.  They are closely related with noncrossing partitions and parking functions.  To some extent, his definition is a precursor of the parking space theory, in the framework of finite reflection groups.  We present some enumerative and topological properties of this poset.  In particular, we get a formula counting certain chains, that encompasses formulas for Whitney numbers (of both kinds).  We prove shellability of the poset, and compute its homology as a representation of the symmetric group. We moreover link it with two well-known polytopes : the associahedron and the permutohedron.}

\tableofcontents

\section*{Introduction}

{\it Parking functions} are fundamental objects in algebraic combinatorics.  It is well known that the set of parking functions of length $n$ has cardinality $(n+1)^{n-1}$, and the natural action of the symmetric group $\mathfrak{S}_n$ on this set occurs in the deep work of Haiman~\cite{haiman} about diagonal coinvariants.  Generalizations to other finite reflection groups lead to the {\it parking space theory} of Armstrong, Reiner, Rhoades~\cite{ARR,rhoades}.

The poset mentioned in the title was introduced by Edelman~\cite{edelman} in 1980, as a variant of the {\it noncrossing partition lattice} introduced by Kreweras~\cite{kreweras} (hence the name {\it noncrossing 2-partitions} in~\cite{edelman}).  One striking feature of Edelman's definition is that it really fits well in the noncrossing parking space theory mentionned above, so it seems that this overlooked poset can give a new perspective on recent results about parking functions.

Our goal is to obtain new enumerative and topological properties of Edelman's poset.  Through various bijections, we will see that several variants of the same objects are relevant: 
\begin{itemize}
\item 2-noncrossing partitions (Section~\ref{sec11}), 
\item some pairs of a noncrossing partition together with a permutation (Section~\ref{sec_parkingspace}),
\item parking functions in the usual way (Section~\ref{sec13}),
\item parking trees (Section~\ref{sec13}).
\end{itemize}
The latter, which have the additional structure of a {\it species}, are defined on a set $V$ as trees whose nodes are labelled by (possibly empty) parts of a weak partition of $V$ and such that a node labelled by $N$ as exactly $N$ (possibly empty) children. They will be useful to write functional equations and get our enumerative results in Section~\ref{sec_enum}.  We draw below the different representations of the same element.

\begin{figure}[h!]
    \centering
\begin{tikzpicture}
   \node(a) at (0,0) {$\{\{1,5,6,8\},$};
   \node(b) at (1.5,0) {$\{2,4\},$};
   \node(c) at (2.5,0) {$\{3\},$};
   \node(d) at (3.3,0) {$\{7\},$};
   \node(e) at (4.5,0) {$\{9,10,12\},$};
   \node(f) at (6,0) {$\{11\}\}$};
   \node(a2) at (0,-2) {$\{\{1\},$};
   \node(b2) at (1.5,-2) {$\{2,9,10,11\},$};
   \node(c2) at (3.2,-2) {$\{3,4,8\},$};
   \node(d2) at (4.2,-2) {$\{5\},$};
   \node(e2) at (5.3,-2) {$\{6,12\},$};
   \node(f2) at (6.3,-2) {$\{7\}\}$};
   \node(z) at (-1,-1) {$\lambda$};
\draw[->] (a.south)--(b2.north);
\draw[->] (b.south)--(e2.north);
\draw[->] (c.south)--(d2.north);
\draw[->] (d.south)--(f2.north);
\draw[->] (e.south)--(c2.north);
\draw[->] (f.south)--(a2.north);
\end{tikzpicture}    
 \begin{tikzpicture}[grow=up, scale=1, level distance=8mm, sibling distance=8mm]
    \tikzstyle{som} = [ellipse, draw, inner sep=1mm]
\node[som]{2 \ 9  \ 10 \ 11}
     child{
            node[som]{3 \ 4 \ 8}
            child{
                node{}
            }
            child{
               node[som]{1}
               child{
                    node{}
               }
            }
            child{
                node{}
            }
        }        
    child{
        node{}
    }
    child { 
        node[som] {7}
        child{
            node{} 
        }
    }
    child{
            node{} 
    }        
    child{
            node[som]{6 \ 12}
            child{
                node{}
            }
            child{
                node[som]{5}
                child{}
            }
    }        
;
\end{tikzpicture}
     \begin{tikzpicture}[scale=0.6]
   \tikzstyle{ver} = [circle, draw, fill, inner sep=0.5mm]
   \tikzstyle{edg} = [line width=0.6mm]
   \node[ver] at (1,0) {};
   \node[ver] at (2,0) {};
   \node[ver] at (3,0) {};
   \node[ver] at (4,0) {};
   \node[ver] at (5,0) {};
   \node[ver] at (6,0) {};
   \node[ver] at (7,0) {};
   \node[ver] at (8,0) {};
   \node[ver] at (9,0) {};
   \node[ver] at (10,0) {};
   \node[ver] at (11,0) {};
   \node[ver] at (12,0) {};
   \node      at (1,-0.6) {2};
   \node      at (2,-0.6) {6};
   \node      at (3,-0.6) {5};
   \node      at (4,-0.6) {12};
   \node      at (5,-0.6) {9};
   \node      at (6,-0.6) {10};
   \node      at (7,-0.6) {7};
   \node      at (8,-0.6) {11};
   \node      at (9,-0.6) {3};
   \node      at (10,-0.6) {4};
   \node      at (11,-0.6) {1};
   \node      at (12,-0.6) {8};
   \draw[edg] (1,0) to[bend left=60] (5,0);
   \draw[edg] (5,0) to[bend left=60] (6,0);
   \draw[edg] (2,0) to[bend left=60] (4,0);
   \draw[edg] (6,0) to[bend left=60] (8,0);
   \draw[edg] (9,0) to[bend left=60] (10,0);
   \draw[edg] (10,0) to[bend left=60] (12,0);
 \end{tikzpicture}
     $12 \ 1 \ 10\ 10\ 3\ 2\ 7\ 10\ 1\ 1\ 1\ 2$
 \caption{Four representations of the same element : as a 2-noncrossing partitions, as a parking tree, as a pair of a noncrossing partition together with a permutation and as a parking function}
    \label{fig:4park}
\end{figure}
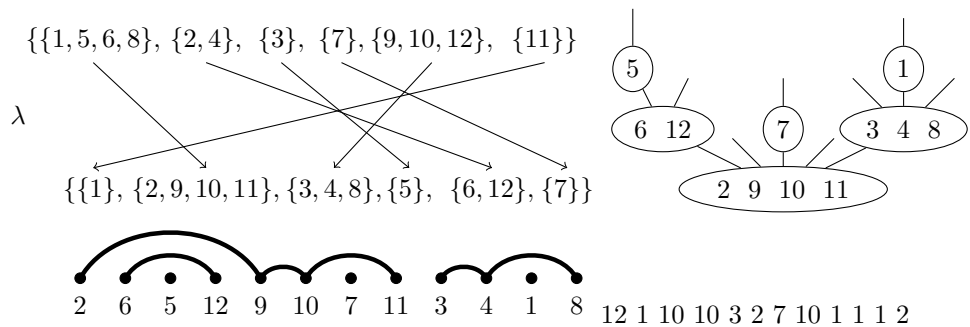

What we get is the following formula counting chains of $k$ elements whose top element has rank $\ell$:
\begin{equation*} 
    \ell !  \binom{kn}{\ell} S_2(n,\ell+1).
 \end{equation*}
 A nice feature of this formula is that it encompasses a nice formula for Whitney numbers of the second kind at $k=1$ (this one being obtained by Edelman), and one for Whitney numbers of the first kind at $k=-1$. We give a bijective proof of this formula in Section~\ref{ktree}, introducing $k$-parking trees, which generalise parking trees and encode chains of $k$ elements in the parking posets.

Then we go on to topological properties: we will see in Section~\ref{sec_shell} that the poset is shellable.  Unlike the case of noncrossing partitions which can be treated by EL-shellability, we need here an {\it ad hoc} property of the lattice. More precisely, we find an order on atoms of every intervals and we show that this order induces the shellability of the poset as it satisfies Lemma~\ref{lemma_for_shelling}, which we recall here:
\begin{lemm*} 
  Let $x,y,y',z\in\pp_n$ such that $x\lessdot y \lessdot z$, $x\lessdot y'$, and $y'\prec_x y$.  Then:
  \begin{itemize}
      \item either there exists $y'' \in \pp_n$ such that $x\lessdot y'' \lessdot z$ and $y''\prec_x y$,
      \item or there exists $z'\in\pp_n$ such that $y\lessdot z' \leq y'\vee z$ and $z' \prec_y z$.
  \end{itemize}
\end{lemm*}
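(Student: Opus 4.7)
The plan is to reduce the abstract hypothesis to an explicit statement about how the three cover relations interact, using the merge description of covers in $\pp_n$. Recall that an element of $\pp_n$ can be represented as a pair of partitions of $[n]$, the top one noncrossing, together with a bijection between their blocks, and that a cover relation $a \lessdot b$ corresponds to simultaneously merging two blocks of $a$ in the noncrossing component and the two corresponding blocks (under the bijection) in the other component. The atom order $\prec_x$ on $[x,\hat 1]$ fixed earlier in Section~\ref{sec_shell} is described by canonical labels on these merges, for instance by the pair of minima of the two blocks being merged.

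Write $M_{xy}$, $M_{xy'}$, and $M_{yz}$ for the three merge data associated with $x\lessdot y$, $x\lessdot y'$, and $y\lessdot z$ respectively; note that $M_{yz}$ is a pair of blocks of $y$, which may or may not include the new block created by $M_{xy}$. The proof proceeds by a case analysis on whether $M_{xy'}$ and $M_{yz}$ are \emph{compatible} when both read as merges on $x$, i.e.\ whether the pairs of blocks they touch are disjoint and the resulting simultaneous merge of four blocks of $x$ remains noncrossing.

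In the compatible case one performs $M_{xy'}$ on $x$ first and then a suitable avatar of $M_{yz}$, obtaining a saturated chain $x \lessdot y' \lessdot z$; setting $y'':=y'$ and using $y' \prec_x y$ yields the first alternative. In the incompatible case, the two merges share a block or are forced to cross, and one shows that $y'\vee z$ has rank $\mathrm{rank}(z)+1$ and contains a distinguished element $z'$ covering $y$, obtained by applying to $y$ the merge that ``completes'' $M_{xy'}$ after $M_{xy}$ has been performed. The key point is that the merge label attached to this $z'$ is, by construction, the one inherited from $M_{xy'}$, which is strictly earlier than the label of $M_{yz}$ precisely because $y'\prec_x y$. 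This gives $y\lessdot z' \leq y'\vee z$ with $z'\prec_y z$, i.e.\ the second alternative.

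The main obstacle is the incompatible case: one must verify carefully, from the simultaneous-merge rules on both partition components, that the candidate $z'$ is really a cover of $y$ (and not merely some element of $[y,y'\vee z]$ of larger rank), that it lies below $y'\vee z$, and that the inequality $y'\prec_x y$ transfers correctly to $z'\prec_y z$ under the chosen canonical labelling. This is likely to involve subcases according to how many blocks of $x$ are shared between $M_{xy}$, $M_{xy'}$, and $M_{yz}$, and a check that the noncrossing condition is preserved throughout; once the bookkeeping is set up, each subcase reduces to an elementary comparison of merge labels.
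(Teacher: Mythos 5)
Your proposal has a genuine gap, and in fact several interlocking errors that would sink the argument as written.

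First, the atom order in the paper is \emph{not} given by block minima. Definition~\ref{def:orders} defines $\prec_\phi$ primarily through the \emph{code} $\gamma(\tau)$ of the permutation component (with an EL-label on $NC_n$ as a tiebreaker). The paper's proof of this lemma is driven almost entirely by bookkeeping with codes, the statistic $p_0$, and the comparison quantities $m(\phi,\phi')$; any argument that ignores the code and compares merges by block minima is proving the lemma for a different atom order, and the two orders do not coincide. The sentence ``for instance by the pair of minima of the two blocks being merged'' is a substitution of hypothesis, not a paraphrase.

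Second, your ``compatible case'' conclusion is false as stated. If the split data $M_{xy'}$ and $M_{yz}$ act on disjoint parts of $x$, performing $M_{xy'}$ then $M_{yz}$ does \emph{not} yield $z$; it yields $y'\vee(\text{the other atom of }[x,z])$, which equals $z$ only when $y'$ is the second element of the diamond $[x,z]$, i.e.\ when $y'\lessdot z$. Nothing in the hypotheses forces $y'< z$; $y'$ and $z$ can be incomparable. The correct candidate for $y''$ in this situation is the other atom $\bar y$ with $x\lessdot\bar y\lessdot z$, not $y'$, and it is \emph{not} automatic that $\bar y\prec_x y$. This is exactly what the paper's Case~6 and Case~7 handle: it compares $\gamma(\bar y)$ with $\gamma(y)$ and, if $\bar y\not\prec_x y$, falls back to constructing $z'$ with $m(y,z')\le m(x,y)<m(y,z)$ via Lemma~\ref{lemm:case7}. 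Your plan contains no mechanism for this fallback.

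Third, the hard case (all three covers splitting the same block of $x$, the paper's Case~8) is precisely the part you defer as ``the main obstacle,'' and it is where the real content lives. The paper's resolution there is quite specific: it reduces to $x=\hat 0$, isolates the position $l$ with $p_0(y)=p_0(z)=n-l$, splits on whether $c_l(y)=c_l(z)$, and in the remaining subcase builds $z'$ by splitting off the second block $b_2$ of $y$, then checks $c_l(z')=c_l(y)<c_l(z)$. None of this is visible in your sketch, and it is not a routine verification that ``each subcase reduces to an elementary comparison of merge labels'': the inequality $z'\prec_y z$ requires the code-based analysis, and an order by block minima would give a different and generally wrong answer. (A smaller issue worth fixing: since the rank function is $|\pi|-1$, covers $a\lessdot b$ \emph{split} a block, not merge; describing them as merges is backwards and suggests the picture driving your cases may be inverted.)

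In short, the dichotomy ``compatible versus incompatible splits'' is a reasonable starting point, and it roughly corresponds to the paper's distinction $N(y,z)\not\subset N(x,y)$ versus $N(y,z)\subset N(x,y)$. But the argument needs to be rebuilt on top of the actual code-based order $\prec_\phi$, with the fallback construction of $z'$ in the compatible case and the $p_0$/$c_l$ analysis in the incompatible case.
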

 This lemma gives a new criterion on orders on atoms of a poset to prove shellability. 
Still, the EL-shellability of noncrossing partitions is a key tool.  There are well known consequences of shellability such as Cohen-Macaulayness, and hence that only one homology group of the poset is non trivial.  We use this fact in Section~\ref{sec_homology} to compute the character of this homology group as a representation of $\mathfrak{S}_n$.

This poset is finally also deeply linked with the permutahedron (Section~\ref{permutohedron}), as it contains its face poset as a subposet,  and with the associahedron (Section~\ref{associahedron}) through cluster parking functions.

\section{Parking function posets}
\label{sec1}

\label{sec11}
%%%%%%%%%%%%%%%%%%%%%%%%%%%%%%%%%%%%%%%%%%%%%%%%%%%%%%
\subsection{Set partitions and noncrossing partitions}
%%%%%%%%%%%%%%%%%%%%%%%%%%%%%%%%%%%%%%%%%%%%%%%%%%%%%%

\label{sec:nc}

For integers $i\leq j$, we will denote $\llbracket i;j\rrbracket := \{i,i+1 ,\dots, j\}$.  Let $\Pi_n$ denote the lattice of set partitions of $\llbracket 1;n\rrbracket$, endowed with refinement order. Note that we take the convention that the minimal element is $\{\llbracket 1;n\rrbracket \}$ (the set partition with one block, denoted $0_n$), and the maximal element is $\{\{1\},\{2\},\dots,\{n\}\}$ (the set partition with $n$ blocks, denoted $1_n$).

A set partition $\pi\in\Pi_n$ is {\it noncrossing} if there exists no $i<j<k<l$ such that $i,k\in B_1$ and $j,l\in B_2$ where $B_1$ and $B_2$ are two distinct blocks of $\pi$.  Endowed with the refinement order, noncrossing partitions of $\llbracket 1 ; n\rrbracket$ form a lattice denoted $NC_n$, first defined by Kreweras~\cite{kreweras}.  It is also a (full) subposet of $\Pi_n$.  The cardinality of $NC_n$ is the $n$th Catalan number, defined as
\[
  C_n := \frac{1}{2n+1}\binom{2n+1}{n}.
\]
More generally, the zeta polynomial of $NC_n$ is given by the Fuß-Catalan number $C_n^{(k)}$:
\[
  Z(NC_n,k) = C_n^{(k)} := \frac{1}{kn+1}\binom{kn+1}{n}.
\]
(Recall that, by definition, $Z(NC_n,k)$ is the number of $k-1$-element weakly increasing sequences in $NC_n$.)
In particular, the Möbius number of $NC_n$ is $Z(NC_n,-1) = (-1)^{n-1}C_{n-1}$.  This integer $C_n^{(k)}$ is also the number of $k$-{\it trees} with $n$ internal vertices, which are by definition rooted plane trees where each of the $n$ internal vertices has $ki$ descendant for some $i\in \mathbb{N}$.  This can be proved by showing that the generating function $T(z)$ of $k$-trees satisfy
\[
  T(z) = \frac{1}{1-T(z)^k}.
\]

There exists a bijection $\beta$ between $NC_n$ and rooted plane trees with $n+1$ vertices.  It can be described inductively as follows.  For $\pi\in NC_n$, let $b\in\pi$ denote the block containing $1$.  If $b=\{i_1,\dots,i_k\}$, then the root of $\beta(\pi)$ has $k$ descendants, and the $j$th subtree 

There is a natural embedding of $NC_n$ in the symmetric group~\cite{Biane}.
To each noncrossing partition $\pi\in NC_n$, we associate a permutation $\bar\pi \in \mathfrak{S}_n$ having one cycle for each block of $\pi$: for $B=\{b_1,b_2,\dots,b_k\}\in\pi$ we have $\bar\pi(b_i)=b_{i+1}$ if $i<k$, and $\bar\pi(b_k)=b_1$.  This permits us to define the {\it Kreweras complement} \cite{kreweras}: for $\pi\in NC_n$, it is $K(\pi)\in NC_n$ such that the associated permutation is $\bar 0_n \bar\pi^{-1}$ (recall that $0_n$ is the minimal noncrossing partition, so that $\bar 0_n$ is an $n$-cycle).  The map $\pi\mapsto K(\pi)$ is an anti-automorphism of $NC_n$.  For example, $K(\{\{1,2\},\{3\},\{4,5,6\}\}) = \{\{1,3,4\},\{2\},\{5\},\{6\}\}$ since in the symmetric group we have $(123456)(12)(654)=(134)$.

To end this section, we note the following facts about order ideals and filters in $NC_n$.  It is easy to see that the order filter containing all elements above some $\pi\in NC_n$ is isomorphic to the product $NC_{i_1}\times NC_{i_2}\times \cdots$, where $i_1,i_2,\dots$ are the block sizes of $\pi$.  As the anti-automorphism $K$ sends order ideals to order filters, the order ideal containing all elements below some $\pi\in NC_n$ is isomorphic to the product $NC_{i_1}\times NC_{i_2}\times \cdots$, where $i_1,i_2,\dots$ are the block sizes of $K(\pi)$.

%%%%%%%%%%%%%%%%%%%%%%%%%%%%%%%%%%%%%
\subsection{Noncrossing 2-partitions}
%%%%%%%%%%%%%%%%%%%%%%%%%%%%%%%%%%%%%

The following definition is due to Edelman~\cite{edelman}.

\begin{defi}[\cite{edelman}]  \label{defi_edelman}
A {\it noncrossing 2-partition} of $\llbracket 1; n\rrbracket$ is a 
triple $(\pi,\rho,\lambda)$ where:
\begin{itemize}
 \item $\pi \in NC_n$ and $\rho\in \Pi_n$,
 \item $\lambda$ is a bijection from (the blocks of) $\pi$ to (those of) $\rho$, and $\forall B\in \pi$, $|\lambda(B)| = |B|$.
\end{itemize}
This set is denoted $\pp_n$.  A partial order on $\pp_n$ is defined by $(\pi,\rho,\lambda) \geq (\pi',\rho',\lambda')$ iff:
\begin{itemize}
 \item $\pi$ is a refinement of $\pi'$, $\rho$ is a refinement of $\rho'$,
 \item if $\biguplus_{i=1}^j B_i = B'$ where $B_i\in \pi$ and $B'\in\pi'$, then $\biguplus_{i=1}^j \lambda(B_i) = \lambda'(B')$.
\end{itemize}
\end{defi}

Note that there is some redundancy in the notation as above, as the blocks of $\rho$ are the images $\lambda(B)$ for $B\in\pi$.  
For example, such a triple $(\pi,\rho,\lambda)$ is as follows: $\pi= \{\{1,5,6,8\},\allowbreak \{2,3\},\allowbreak \{4\}, \allowbreak \{7\}\}$, $\rho$ and $\lambda$ are given by $\lambda(\{1,5,6,8\})=\{2,3,4,7\}$, $\lambda(\{2,3\})=\{5,8\}$, $\lambda(\{4\})=\{1\}$, $\lambda(\{7\})=\{6\}$.  Another representation will be given in Section~\ref{sec_parkingspace} (in particular, see the example at the end).

Many properties of this poset will follow from the following:

\begin{lemm}  \label{lem:unique}
  Let $(\pi,\rho,\lambda) \in \pp_n $, and let $\pi' \in NC_n$ such that $\pi' \leq \pi$.  Then, there exists unique $\rho'$ and $\lambda'$ such that $(\pi',\rho',\lambda') \in \pp_n $ and $(\pi',\rho',\lambda') \leq (\pi,\rho,\lambda)$.
\end{lemm}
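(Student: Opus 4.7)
The plan is to observe that the compatibility condition in the order on $\pp_n$ forces the definitions of $\rho'$ and $\lambda'$, and then to verify that the forced definitions give a valid noncrossing 2-partition.

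First I would treat uniqueness. Suppose $(\pi',\rho',\lambda')\leq(\pi,\rho,\lambda)$ in $\pp_n$. Since $\pi\leq \pi'$ (using the convention of the paper), each block $B'\in\pi'$ is a disjoint union $B' = B_1\uplus\cdots\uplus B_k$ where $B_1,\dots,B_k$ are the blocks of $\pi$ contained in $B'$. The third bullet of Definition~\ref{defi_edelman} then forces
\[
 \lambda'(B') = \lambda(B_1)\uplus\cdots\uplus \lambda(B_k).
\]
So $\lambda'$ is determined by $\pi'$, $\pi$ and $\lambda$, and $\rho'$ is determined as the collection of the sets $\lambda'(B')$.

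For existence, I would take the above formula as a \emph{definition} of $\lambda'$ and $\rho'$, and check that $(\pi',\rho',\lambda')\in\pp_n$ and that $(\pi',\rho',\lambda')\leq(\pi,\rho,\lambda)$. The main verifications are:
\begin{itemize}
\item The sets $\lambda'(B')$ for $B'\in\pi'$ are nonempty, pairwise disjoint, and cover $\llbracket 1;n\rrbracket$. This follows from the same properties of $\rho$ (since $\lambda$ is a bijection) and of $\pi'$.
\item $|\lambda'(B')| = \sum_{B\subseteq B'} |\lambda(B)| = \sum_{B\subseteq B'} |B| = |B'|$, so $\lambda'$ preserves block sizes.
\item $\pi'\in NC_n$ by hypothesis, $\rho'\in\Pi_n$ by the previous point, and $\lambda'$ is a bijection $\pi'\to\rho'$ by construction. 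So $(\pi',\rho',\lambda')\in\pp_n$.
\item $\pi\leq\pi'$ by hypothesis, $\rho\leq\rho'$ since each block of $\rho'$ is by definition a union of blocks of $\rho$, and the compatibility between $\lambda$ and $\lambda'$ holds by construction.
\end{itemize}

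There is no real obstacle: the lemma is essentially bookkeeping, and the only mildly delicate point is making sure that the formula defining $\lambda'(B')$ does give a set partition $\rho'$ of $\llbracket 1;n\rrbracket$, which reduces to the bijectivity of $\lambda$ together with the fact that $\pi'$ is a set partition. The lemma is stated at this point because it is the key structural fact that will later let one lift refinements of $\pi$ to refinements of the whole triple, and use $NC_n$ to control the poset $\pp_n$.
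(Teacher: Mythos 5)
Your proof is correct and takes essentially the same route as the paper: the second condition in the definition of the order forces $\lambda'(B')=\biguplus_i\lambda(B_i)$, and one then takes this formula as a definition and checks that it yields a valid element of $\pp_n$ below $(\pi,\rho,\lambda)$ (your verification is just more explicit than the paper's). One small caveat: you have several inequalities reversed (``$\pi\leq\pi'$'', ``$\rho\leq\rho'$''), because the paper takes $0_n=\{\llbracket1;n\rrbracket\}$ as the \emph{minimal} partition, so refining a partition makes it \emph{larger}; the hypothesis $\pi'\leq\pi$ indeed means every block of $\pi'$ is a union of blocks of $\pi$, as you use, and the conclusion should read $\rho'\leq\rho$.
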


\begin{proof}
  Let $B'\in \pi'$, and write $B' = \biguplus_{i=1}^j B_i$ where $B_i\in \pi$.  By the second condition in the definition of the order, the only possible choice is to define $\lambda'(B')$ as $\biguplus_{i=1}^j \lambda(B_i)$, and $\rho'$ as the set of all such $\lambda'(B')$.  This definition ensures that blocks of $\rho'$ are union of blocks of $\rho$ (so $\rho'\leq\rho$ in $\Pi_n$), and $(\pi',\rho',\lambda')$ satisfies the required properties.
\end{proof}

The following two propositions contains simple consequences of the definitions or of the previous lemma.  Proofs are straightforward.

\begin{prop} The poset $\pp_n$
  \begin{itemize}
    \item is ranked, with rank function $\rk((\pi,\rho,\lambda)) = |\pi|-1$,
    \item has one minimal element, namely $(0_n, 0_n, id)$ (in general, $id$ will be the identity map of some set which is not specified in the notation but clear from the context), 
    \item has $n!$ maximal elements, namely the elements $(1_n, 1_n, \sigma)$ for $\sigma\in \mathfrak{S}_n$.
  \end{itemize}
\end{prop}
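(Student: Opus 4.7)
The plan is to reduce all three claims to the structural information provided by Lemma~\ref{lem:unique}.

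For the rank function and the uniqueness of the minimum, the key step is to exhibit an order-isomorphism between the principal order ideal $\{y \in \pp_n : y \leq (\pi,\rho,\lambda)\}$ and the interval $[0_n, \pi] \subseteq NC_n$, via the map $(\pi',\rho',\lambda') \mapsto \pi'$. Lemma~\ref{lem:unique} directly gives injectivity, and since any $\pi' \leq \pi$ arises from a (unique) triple below $(\pi,\rho,\lambda)$, surjectivity is immediate; a second application of the same uniqueness turns any inequality between first coordinates into the full inequality of triples, so the map is an order-isomorphism. Under this correspondence, the bottom of $[0_n,\pi]$ pulls back to $(0_n, 0_n, \mathrm{id})$, which is the only triple with first coordinate $0_n$ lying below $(\pi,\rho,\lambda)$: the construction in the proof of the lemma forces $\lambda'$ to send the single block of $0_n$ to $\biguplus_i \lambda(B_i) = \llbracket 1;n\rrbracket$. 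This simultaneously identifies $(0_n, 0_n, \mathrm{id})$ as the unique minimum and shows that $\pp_n$ is ranked with $\rk(\pi,\rho,\lambda) = \rk_{NC_n}(\pi) = |\pi|-1$, since covers below $(\pi,\rho,\lambda)$ in $\pp_n$ correspond exactly to covers below $\pi$ in $NC_n$.

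For the maximal elements, I would argue in two steps. First, each $(1_n, 1_n, \sigma)$ is a valid element of $\pp_n$ and is maximal because $1_n$ is the maximum of $NC_n$; there are $n!$ such elements, one for each $\sigma \in \mathfrak{S}_n$. Conversely, if $(\pi,\rho,\lambda)$ is maximal I would show $\pi = 1_n$ by contraposition: if $B \in \pi$ has $|B| \geq 2$, then isolating the smallest element of $B$ into a singleton yields a noncrossing refinement $\pi''$ of $\pi$; choosing any compatible refinement of $\lambda(B)$ into a singleton and its complement then produces a triple $(\pi'', \rho'', \lambda'') \in \pp_n$ lying strictly above $(\pi,\rho,\lambda)$, a contradiction. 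Once $\pi = 1_n$, the condition $|\lambda(B)| = |B| = 1$ forces $\rho = 1_n$, and $\lambda$ reduces to a permutation of $\llbracket 1;n\rrbracket$.

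No substantial obstacle arises here: the only mildly delicate point is the verification that Lemma~\ref{lem:unique} yields an order-isomorphism of ideals and not merely a set bijection, but this is essentially already built into the statement and proof of that lemma.
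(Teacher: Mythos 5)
Your proof is correct and takes essentially the route the paper intends: the paper states that the proposition has a "straightforward" proof following from Lemma~\ref{lem:unique}, and you flesh out exactly that reduction. Your observation that the principal order ideal below $(\pi,\rho,\lambda)$ is order-isomorphic to $[0_n,\pi]\subseteq NC_n$ is precisely Proposition~\ref{lemm:intervproj} in the paper; invoking it to get rankedness and the unique minimum, and then handling maximality by splitting a non-singleton block, is the natural argument and matches the paper's perspective.

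Two very minor remarks. First, when you deduce rankedness from the ideal isomorphism, the logic should be completed with the observation that the maximal elements all have rank $n-1$ (which you establish separately), so every maximal chain of $\pp_n$ has length $n-1$; you do have all the pieces, but the link between "ideals below a fixed top are graded" and "the whole poset is graded" deserves one explicit sentence. Second, in the maximality argument it is worth noting (as you implicitly do) that detaching the minimum of a block from a block of a noncrossing partition always yields a noncrossing partition, since singletons cannot participate in crossings and the remaining set is contained in the original block. Neither point is a gap; both are fine as written and would be acceptable at the level of rigor the paper itself applies.
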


\begin{prop} \label{lemm:intervproj}
For each $(\pi,\rho,\lambda)\in\pp_n$, the order ideal of elements below $(\pi,\rho,\lambda)$ in $\pp_n$ is isomorphic to the order ideal of elements below $\pi$ in $NC_n$, via the projection $(\pi',\rho',\lambda') \mapsto \pi'$.  
\end{prop}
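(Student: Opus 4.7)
The plan is to use Lemma~\ref{lem:unique} essentially as a black box. Write $I\subseteq \pp_n$ for the order ideal below $(\pi,\rho,\lambda)$, write $J\subseteq NC_n$ for the order ideal below $\pi$, and let $\phi\colon I\to J$ be the projection $(\pi',\rho',\lambda')\mapsto \pi'$. That $\phi$ is well-defined on $I$ is immediate from the first clause of Definition~\ref{defi_edelman}. Lemma~\ref{lem:unique} then says exactly that every $\pi'\in J$ has a unique preimage under $\phi$, which simultaneously gives surjectivity (existence) and injectivity (uniqueness). Hence $\phi$ is a bijection.

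It remains to check that $\phi$ is an order-isomorphism. One direction is trivial: if $(\pi'',\rho'',\lambda'')\leq (\pi',\rho',\lambda')$ in $\pp_n$, then $\pi''\leq\pi'$ in $NC_n$ by definition of the order on $\pp_n$. For the reverse direction, suppose $\pi''\leq\pi'$ in $J$, and let $(\pi',\rho',\lambda')=\phi^{-1}(\pi')$ and $(\pi'',\rho'',\lambda'')=\phi^{-1}(\pi'')$. Applying Lemma~\ref{lem:unique} to the element $(\pi',\rho',\lambda')$ and to $\pi''\leq \pi'$, we obtain a unique triple $(\pi'',\tilde\rho'',\tilde\lambda'')\leq(\pi',\rho',\lambda')$ in $\pp_n$. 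By transitivity this triple also lies below $(\pi,\rho,\lambda)$, and so the uniqueness clause of Lemma~\ref{lem:unique} applied to $(\pi,\rho,\lambda)$ and $\pi''$ forces $(\pi'',\tilde\rho'',\tilde\lambda'')=(\pi'',\rho'',\lambda'')$. This yields the desired inequality $(\pi'',\rho'',\lambda'')\leq (\pi',\rho',\lambda')$.

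The only subtle point is the compatibility that the second paragraph verifies: a priori one could fear that the data $(\rho'',\lambda'')$ attached to $\pi''$ when extended from $(\pi,\rho,\lambda)$ might differ from those obtained when extending from the intermediate element $(\pi',\rho',\lambda')$. But this is precisely ruled out by the uniqueness statement of Lemma~\ref{lem:unique}, and no further computation is needed.
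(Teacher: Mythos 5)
Your proof is correct and follows exactly the route the paper gestures at: the paper declares this proposition a ``straightforward'' consequence of Lemma~\ref{lem:unique} and gives no details, whereas you spell out the routine verification that the lemma gives a bijection and that both $\phi$ and $\phi^{-1}$ are order-preserving (the latter by applying the lemma twice and invoking uniqueness). Nothing is missing; the only quibble is that well-definedness of $\phi$ uses not just the first clause of Definition~\ref{defi_edelman} (which gives $\pi'\in NC_n$) but also the first bullet of the order definition (which gives $\pi'\leq\pi$), a trivial point.
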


In particular, for each maximal element $\phi\in\pp_n$, the interval of $\pp_n$ bounded by the minimal element and $\phi$ is isomorphic to $NC_n$. Thus, $\pp_n$ can be seen as $n!$ copies of $NC_n$, which blend in a rather nontrivial way.

From the previous proposition and the structure of order ideals in $NC_n$, we get the following:

\begin{prop}
For each $(\pi,\rho,\lambda)\in\pp_n$, the order ideal of elements below $(\pi,\rho,\lambda)$ in $\pp_n$ is isomorphic to the product $NC_{i_1}\times NC_{i_2}\times \cdots$, where $i_1,i_2,\dots$ are the block sizes of $K(\pi)$.
\end{prop}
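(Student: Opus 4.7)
The plan is to chain together two facts that are already available in the excerpt, so the argument is essentially a one-line composition. First I would invoke Proposition~\ref{lemm:intervproj}: the order ideal below $(\pi,\rho,\lambda)$ in $\pp_n$ is isomorphic, via the projection $(\pi',\rho',\lambda')\mapsto \pi'$, to the order ideal below $\pi$ in $NC_n$. This reduces the statement entirely to a property of $NC_n$.

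Then I would recall the structural fact noted at the end of Section~\ref{sec:nc}: via the Kreweras complement $K$, order ideals in $NC_n$ correspond to order filters in $NC_n$, and the order filter above an element $\sigma\in NC_n$ factors as a product $NC_{j_1}\times NC_{j_2}\times\cdots$ indexed by the block sizes of $\sigma$. Applying $K$ to $\pi$, the order ideal below $\pi$ in $NC_n$ is isomorphic to the order filter above $K(\pi)$, hence to the product $NC_{i_1}\times NC_{i_2}\times\cdots$ where the $i_j$ are the block sizes of $K(\pi)$. Composing with the isomorphism given by Proposition~\ref{lemm:intervproj} yields the claim.

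There is no real obstacle here: both ingredients have already been stated in the paper, and the only thing to check is that the composition is indeed a poset isomorphism (not just a bijection). This is automatic, since Proposition~\ref{lemm:intervproj} gives a poset isomorphism and the Kreweras-based decomposition of the ideal below $\pi$ is also a poset isomorphism. I would simply write the two-line argument and point to the fact that no further verification is needed, since everything was set up in Section~\ref{sec:nc} and in the preceding proposition.
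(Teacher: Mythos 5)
Your proposal is correct and follows exactly the route the paper intends: the paper introduces this proposition with the sentence ``From the previous proposition and the structure of order ideals in $NC_n$, we get the following,'' which is precisely the composition of Proposition~\ref{lemm:intervproj} with the Kreweras-complement description of order ideals in $NC_n$ recalled at the end of Section~\ref{sec:nc}. Nothing more is needed.
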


Similarly, the following result is straightforward.  This naturally extends the remark about the structure of order filters in $NC_n$.

\begin{prop}
  The order filter of $\pp_n$ containing all elements above $(\pi,\rho,\lambda)$ is isomorphic to a product $\pp_{i_1}\times \pp_{i_2}\times \cdots$, where $i_1,i_2,\dots$ are the block sizes $\pi$.
\end{prop}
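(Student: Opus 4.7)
The plan is to build an explicit order isomorphism between the filter $\mathcal{F} := \{x \in \pp_n : x \geq (\pi,\rho,\lambda)\}$ and the product $\prod_{B \in \pi} \pp_{|B|}$, by showing that once $(\pi,\rho,\lambda)$ is fixed, the data of a larger noncrossing 2-partition decouples block by block.

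First, starting from $(\pi'',\rho'',\lambda'') \in \mathcal{F}$, I would read off, for each block $B \in \pi$: the partition $\sigma_B$ of $B$ whose blocks are the blocks of $\pi''$ contained in $B$; the partition $\tau_B$ of $\lambda(B)$ whose blocks are those of $\rho''$ contained in $\lambda(B)$; and the size-preserving bijection $\mu_B$ from $\sigma_B$ to $\tau_B$ obtained by restricting $\lambda''$. The compatibility clause in Definition~\ref{defi_edelman} is exactly what guarantees that $\lambda''$ restricts in this fashion. Identifying $B$ and $\lambda(B)$ with $\llbracket 1,|B| \rrbracket$ via their induced orders, each triple $(\sigma_B,\tau_B,\mu_B)$ then lives in $\pp_{|B|}$.

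The one substantive point is that $\sigma_B$ is noncrossing. This rests on the standard observation that a refinement of a noncrossing partition $\pi$ is itself noncrossing if and only if its restriction to each block of $\pi$ is noncrossing: any crossing in $\pi''$ witnessed by $i<j<k<l$ whose four indices are not contained in a single block of $\pi$ would already be a crossing of $\pi$. Once this is granted, the inverse map is immediate: given a family $(\sigma_B,\tau_B,\mu_B)_{B\in\pi}$, declare $\pi''$, $\rho''$, $\lambda''$ to be the disjoint unions of the $\sigma_B$'s, $\tau_B$'s and $\mu_B$'s respectively; the same observation ensures $\pi'' \in NC_n$, and the required compatibility with $(\pi,\rho,\lambda)$ is built in.

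Finally, I would verify that this bijection is an isomorphism of posets. This is routine bookkeeping: refining further within a given $B$ and its matched $\lambda(B)$ is independent of the choices made in the other blocks, so the order induced on $\mathcal{F}$ from $\pp_n$ agrees with the componentwise order on $\prod_B \pp_{|B|}$. The main (and really only) obstacle is the noncrossing-restriction argument used in both directions; everything else is simply unpacking the definitions, which is presumably why the authors call the result straightforward.
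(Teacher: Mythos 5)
Your proof is correct and is exactly the kind of block-by-block decomposition the authors have in mind; the paper itself omits the argument, calling it a ``straightforward'' extension of the analogous fact for order filters in $NC_n$ (which is proved by precisely this restriction-to-blocks idea). You correctly isolate the only non-trivial point---that a refinement of the noncrossing $\pi$ is noncrossing iff its restriction to each block of $\pi$ is noncrossing under the order-preserving identification with $\llbracket 1;|B|\rrbracket$---and the rest (carrying along $\rho$ and $\lambda$ via Lemma~\ref{lem:unique}-style uniqueness, and checking the order is componentwise) is indeed routine.
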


For a non-crossing $2$-partition $ \phi = (\pi,\rho,\lambda)\in\pp_n$ and $k \in \llbracket 1;n\rrbracket$, let $\eta_\phi(k)$ denote the part $B\in\pi$ such that $k\in\lambda(B)$.  Note that $\phi$ is characterized by the sequence $(\eta_\phi(k))_{1\leq k\leq n}$.  For a given $\pi\in NC_n$, sequences obtained in this way are length $n$ sequences such that $B\in\pi$ appears exactly $|B|$ times.  The following lemma is deduced directly from the definition of the $2$-noncrossing partition poset:
 
\begin{lemm}
 \label{lem:c_eta}
 Let $\phi,\psi \in \pp_n$. Then $\phi \leq \psi$ if and only if $\forall k\in\llbracket 1;n\rrbracket$,  $\eta_\psi(k) \subset \eta_\phi(k)$.
\end{lemm}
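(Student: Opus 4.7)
The plan is to unpack both directions using that $\eta_\phi(k)$ is, by construction, the unique block $B\in\pi$ with $k\in\lambda(B)$, where $\phi=(\pi,\rho,\lambda)$.

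For the forward implication, I would assume $\phi=(\pi,\rho,\lambda)\leq\psi=(\pi',\rho',\lambda')$, fix $k\in\llbracket 1;n\rrbracket$, and set $B'=\eta_\psi(k)\in\pi'$. Since $\pi'$ refines $\pi$, there is a unique $B\in\pi$ containing $B'$. Writing $B=\biguplus_{i}B'_i$ with $B'_i\in\pi'$ and $B'$ among the $B'_i$, the compatibility clause of Definition~\ref{defi_edelman} delivers $\lambda(B)=\biguplus_i\lambda'(B'_i)$, from which $k\in\lambda'(B')\subseteq\lambda(B)$, and therefore $\eta_\phi(k)=B\supseteq B'=\eta_\psi(k)$.

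For the converse I would assume $\eta_\psi(k)\subseteq\eta_\phi(k)$ for every $k$, and verify the three clauses of Definition~\ref{defi_edelman} in turn. For the refinement of $\pi$ by $\pi'$: for $B'\in\pi'$ I pick any $j\in\lambda'(B')$, which is nonempty since $|\lambda'(B')|=|B'|\geq 1$, and read off $B'=\eta_\psi(j)\subseteq\eta_\phi(j)\in\pi$. For the refinement of $\rho$ by $\rho'$: two elements $k,k'$ of a common block $\lambda'(B')\in\rho'$ satisfy $\eta_\psi(k)=\eta_\psi(k')=B'$, which forces $\eta_\phi(k)$ and $\eta_\phi(k')$ to be blocks of $\pi$ both containing the nonempty set $B'$, hence equal; consequently $k$ and $k'$ share a block of $\rho$. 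For the compatibility clause, given $B\in\pi$ decomposed as $B=\biguplus_iB'_i$ with $B'_i\in\pi'$, the same reasoning pins down $\eta_\phi(k)=B$ for every $k\in\lambda'(B'_i)$, yielding $\biguplus_i\lambda'(B'_i)\subseteq\lambda(B)$, and equality follows from the cardinality count $\sum_i|\lambda'(B'_i)|=\sum_i|B'_i|=|B|=|\lambda(B)|$.

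No step is a serious obstacle, since the lemma is essentially a reformulation of the partial order of Definition~\ref{defi_edelman} in terms of the pointwise data $(\eta_\phi(k))_k$. The only point requiring care is to remember that $\lambda$ and $\lambda'$ are size-preserving: this both guarantees the existence of the auxiliary element $j$ in the argument above and closes the last inclusion via the cardinality count.
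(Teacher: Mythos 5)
Your proof is correct, and since the paper asserts this lemma "is deduced directly from the definition" without supplying details, it provides exactly the unpacking the paper leaves implicit. Both directions are handled soundly: the forward direction correctly extracts the inclusion from the compatibility clause, and the converse verifies all three clauses of Definition~\ref{defi_edelman}, with the size-preserving property of $\lambda$ used appropriately (once to ensure $\lambda'(B')$ is nonempty, once to upgrade the inclusion $\biguplus_i\lambda'(B'_i)\subseteq\lambda(B)$ to an equality). A small observation: the refinement of $\rho$ by $\rho'$ is actually a formal consequence of the other two clauses together with the fact that blocks of $\rho$ are exactly the images $\lambda(B)$, so verifying it separately is not strictly necessary—but it is harmless and clear to do so.
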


Next, we consider the poset $\hat\pp_n$ obtained from $\pp_n$ by adding a new maximal element $\hat 1$. It is thus a {\it bounded} poset (it has one minimal element and one maximal element).

\begin{prop}
 \label{prop:lattice}
 The poset $\hat\pp_n$ is a lattice.  
\end{prop}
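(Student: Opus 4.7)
Since $\hat\pp_n$ is finite and has top element $\hat 1$, it suffices to show that every pair of elements of $\pp_n$ admits a greatest lower bound in $\pp_n$: joins in $\hat\pp_n$ then follow automatically as the meet of the nonempty set of common upper bounds (which always contains $\hat 1$).

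Fix $\phi_1=(\pi_1,\rho_1,\lambda_1)$ and $\phi_2=(\pi_2,\rho_2,\lambda_2)$. Proposition~\ref{lemm:intervproj} identifies each ideal below $\phi_i$ with $[0_n,\pi_i]\subset NC_n$ via the projection $(\pi,\rho,\lambda)\mapsto\pi$; let $f_i$ denote its inverse. Lemma~\ref{lem:unique} implies that any common lower bound $\phi$ of $\phi_1,\phi_2$ has $\pi$-part $\pi\leq\pi_1\wedge\pi_2$ (meet in $NC_n$) and must equal $f_1(\pi)=f_2(\pi)$. I call $\pi\in[0_n,\pi_1\wedge\pi_2]$ \emph{admissible} when $f_1(\pi)=f_2(\pi)$; the meet $\phi_1\wedge\phi_2$ exists iff the set $A$ of admissible elements admits a maximum.

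I will show $A$ is a principal order ideal. Downward closure is immediate: if $\pi\leq\pi'\in A$, the unique element below $f_1(\pi')=f_2(\pi')$ with $\pi$-part $\pi$ is simultaneously $f_1(\pi)$ and $f_2(\pi)$ by the uniqueness in Lemma~\ref{lem:unique}. Closure under joins needs more work: I reformulate via Lemma~\ref{lem:c_eta}, setting $R_i(k):=\eta_{\phi_i}(k)$. A direct computation shows $\eta_{f_i(\pi)}(k)$ is the block of $\pi$ containing $R_i(k)$, so admissibility of $\pi$ amounts to $R_1(k)$ and $R_2(k)$ lying in a common block of $\pi$ for every $k$, i.e.\ $\pi$ identifies each set $R_1(k)\cup R_2(k)$.

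The key structural input is that $NC_n$ is a sub-join-semilattice of $\Pi_n$: the common refinement of two noncrossing partitions is again noncrossing, since an alternating $4$-pattern inside two intersection blocks $B\cap B'$ and $C\cap C'$ would force a crossing in $\pi$ (when $B\neq C$) or in $\pi'$ (when $B'\neq C'$). Thus blocks of $\pi\vee\pi'$ are exactly the nonempty intersections $B\cap B'$, and a set $X$ identified by both $\pi$ and $\pi'$, say $X\subset B\in\pi$ and $X\subset B'\in\pi'$, satisfies $X\subset B\cap B'$ and is hence identified by $\pi\vee\pi'$. Applied to each $R_1(k)\cup R_2(k)$, this yields closure of $A$ under joins; since $A$ is finite and nonempty ($0_n\in A$), the join of all its elements is its unique maximum, yielding $\phi_1\wedge\phi_2$. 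The main obstacle is this join-closure step, which rests on the sub-join-semilattice property of $NC_n$ in $\Pi_n$.
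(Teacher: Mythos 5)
Your proof is correct, and it takes the dual route to the paper's. The paper makes $\hat\pp_n$ a join-semilattice by constructing $\phi\vee\psi$ directly: using Lemma~\ref{lem:c_eta}, it intersects the sequences $\eta_\phi(k)\cap\eta_\psi(k)$ and handles the two degenerate cases (some intersection empty, or some block of $\pi\vee\pi'$ occurring too often) by sending the join to $\hat 1$. You instead make $\hat\pp_n$ a meet-semilattice: leaning on Lemma~\ref{lem:unique} and Proposition~\ref{lemm:intervproj}, you reduce the existence of $\phi_1\wedge\phi_2$ to showing that the set $A$ of ``admissible'' $\pi\in[0_n,\pi_1\wedge\pi_2]$ has a maximum, and you prove this by showing $A$ is a nonempty, downward-closed, join-closed subset of $NC_n$. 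The key structural fact you invoke — that $NC_n$ is closed under common refinement, so the $NC_n$-join agrees with the $\Pi_n$-join — is also implicit in the paper's proof when it identifies the blocks of $\pi\vee\pi'$ with nonempty intersections $B_1\cap B_2$. Your route avoids the paper's casework on degenerate intersections (which gets absorbed into ``$A$ might be just $\{0_n\}$''), at the cost of a slightly longer setup with the maps $f_1,f_2$. Both are complete and correct; the paper's argument is more explicit about what the join actually is, while yours is perhaps cleaner as a pure existence proof.

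One small remark: when you argue ``joins in $\hat\pp_n$ then follow automatically as the meet of the nonempty set of common upper bounds,'' it is worth noting you are using that $\hat\pp_n$ is finite with top element $\hat 1$, so the meet of a finite nonempty set is obtained by iterating binary meets; this is fine but should be said (or at least it should be clear that the set of common upper bounds is finite and nonempty). Similarly, the maximum of $A$ is the join of all of its finitely many elements, which lies in $A$ because $A$ is closed under binary joins — again an iteration over a finite set.
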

 
\begin{proof}
As the poset is bounded, it suffices to prove that it is a join-semilattice, i.e., for $\phi,\psi\in\pp_n$ there is a least upper bound $\phi\vee\psi \in\hat\pp_n$. Indeed, the meet $\phi \wedge \psi$ is then obtained as the join of all elements that are below both $\phi$ and $\psi$.  Write $\phi=(\pi,\rho,\lambda)$ and $\psi=(\pi',\rho',\lambda')$.  Let $\gamma$ be the meet $\pi\vee\pi'$ in $NC_n$, so that the blocks of $\gamma$ are the nonempty intersection $B_1\cap B_2$ for $B_1\in\pi$ and $B_2\in\pi'$.

Using the previous lemma, it is natural to consider the sequence $(\eta_\phi(k) \cap \eta_\psi(k))_{1\leq k  \leq n}$.  Each element is either empty or a block of $\gamma$.  We distinguish several cases.

In the first case, assume that $\eta_\phi(k) \cap \eta_\psi(k) = \varnothing$ for some $k$.  It means there is no $\chi\in\pp_n$ above $\phi$ and $\psi$, as the inclusion $\eta_\chi(k) \subset \eta_\phi(k) \cap \eta_\psi(k)$ is impossible.  So $\hat 1$ is the least upper bound of $\phi$ and $\psi$ in $\hat\pp_n$.

In the second case, assume that there are $b\in\gamma$ and $\ell$ indices $k_1,\dots,k_\ell$ such that $\eta_\phi(k_i) \cap \eta_\psi(k_i) = b$ (for $1\leq i \leq \ell$), and $\ell>|b|$.  By way of contradiction, assume that there is a maximal element $\chi$ above $\phi$ and $\psi$.  Then, the sets $\eta_\chi(k_i)$ are pairwise distinct singletons (by maximality of $\chi$), and they are included in $b$ (by the previous lemma, since $\chi\geq\phi$ and $\chi\geq\psi$). This is not possible as $\ell>|b|$, and it follows that $\hat 1$ is the least upper bound of $\phi$ and $\psi$.

In the remaining cases, we show that there exists $\chi\in\pp_n$ such that $\eta_\chi(k) = \eta_\phi(k) \cap \eta_\psi(k)$ for $1\leq k\leq n$.  Indeed, each $b\in\gamma$ appears at most $|b|$ times in the sequence $(\eta_\phi(k) \cap \eta_\psi(k))_{1\leq k  \leq n}$.  Since the elements in the sequence are nonempty, a counting argument shows that each $b\in\pi$ appears exactly $|b|$ times, as needed.  By the previous lemma, $\chi$ is a least upper bound of $\phi$ and $\psi$.
\end{proof}

%%%%%%%%%%%%%%%%%%%%%%%%%%%%%%%%
\subsection{The parking space}
%%%%%%%%%%%%%%%%%%%%%%%%%%%%%%%%
\label{sec_parkingspace}

The goal of this section is to describe a natural action of the symmetric group $\mathfrak{S}_n$ on $\pp_n$.  It leads to a connection with {\it parking spaces} as defined by Armstrong, Reiner and Rhoades in \cite{ARR}, and to an alternative definition of $2$-noncrossing partitions as some pairs $(\pi,\sigma)\in NC_n\times \mathfrak{S}_n$.

The natural action of $\mathfrak{S}_n$ on $\llbracket 1;n\rrbracket$ can be extended to various other combinatorial sets by the rule $\sigma\cdot X = \{\sigma \cdot x \; : \; x\in X \}$.  This extended action automatically respects properties of sets such as inclusion, disjoint unions, etc.  In particular, this gives a natural action of $\mathfrak{S}_n$ on $\Pi_n$ which respects the poset structure.  

\begin{prop}
  There is an action of $\mathfrak{S}_n$ on $\pp_n$ defined by 
  \begin{equation}  \label{act_ontriples}
    \sigma \cdot (\pi,\rho,\lambda) = (\pi, \sigma\cdot\rho , \sigma \circ \lambda ),
  \end{equation}
  where in $\sigma \circ \lambda$ we identify $\sigma$ with its action on set partitions.  This action preserves the order relation of $\pp_n$, so that it can be extended to an action on the chains of $\pp_n$.
\end{prop}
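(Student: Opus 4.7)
The plan is a routine verification broken into three independent checks. Write $\phi = (\pi, \rho, \lambda) \in \pp_n$ and fix $\sigma \in \mathfrak{S}_n$.

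First I would verify that $\sigma \cdot \phi$ as defined actually lies in $\pp_n$. Since $\pi$ is unchanged, it is still noncrossing. The action of $\sigma$ on $\Pi_n$ (extended from its action on $\llbracket 1;n\rrbracket$ via $\sigma \cdot X = \{\sigma(x) : x \in X\}$) sends set partitions to set partitions, so $\sigma \cdot \rho \in \Pi_n$. The composition $\sigma \circ \lambda$, where we identify $\sigma$ with the induced bijection from blocks of $\rho$ to blocks of $\sigma \cdot \rho$, is then a bijection from $\pi$ to $\sigma \cdot \rho$. Finally, for $B \in \pi$ we have $|(\sigma \circ \lambda)(B)| = |\sigma \cdot \lambda(B)| = |\lambda(B)| = |B|$, so the size condition of Definition~\ref{defi_edelman} is preserved.

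Second, I would check the group action axioms, which reduce to the fact that the action of $\mathfrak{S}_n$ on sets (and hence on set partitions) is a group action: $e \cdot \phi = \phi$ is immediate, and for $\sigma, \tau \in \mathfrak{S}_n$,
\begin{equation*}
  (\sigma\tau) \cdot \phi = (\pi, (\sigma\tau) \cdot \rho, (\sigma\tau) \circ \lambda) = (\pi, \sigma \cdot (\tau \cdot \rho), \sigma \circ (\tau \circ \lambda)) = \sigma \cdot (\tau \cdot \phi).
\end{equation*}

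Third, I would verify that $\phi \geq \phi'$ implies $\sigma \cdot \phi \geq \sigma \cdot \phi'$. Writing $\phi' = (\pi', \rho', \lambda')$, the first component condition ($\pi$ refines $\pi'$) is unchanged, and the second ($\sigma \cdot \rho$ refines $\sigma \cdot \rho'$) follows because the action of $\sigma$ on $\Pi_n$ respects refinement. For the third condition, if $B' = \biguplus_{i=1}^j B_i$ with $B' \in \pi'$ and $B_i \in \pi$, then applying $\sigma$ setwise to the identity $\biguplus_{i=1}^j \lambda(B_i) = \lambda'(B')$ yields $\biguplus_{i=1}^j (\sigma \circ \lambda)(B_i) = (\sigma \circ \lambda')(B')$, which is exactly the required condition for $\sigma \cdot \phi \geq \sigma \cdot \phi'$. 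The extension to chains is then automatic: a chain $\phi_1 \leq \cdots \leq \phi_k$ is sent to the chain $\sigma \cdot \phi_1 \leq \cdots \leq \sigma \cdot \phi_k$.

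There is no real obstacle here; the only mild subtlety is notational, namely keeping track of the two uses of $\sigma$ — once as a bijection of $\llbracket 1;n\rrbracket$ acting on $\rho$ blockwise, and once as the induced bijection between blocks, which is what composes with $\lambda$. Once this identification is made explicit in a single sentence, each of the three checks is a one-line verification.
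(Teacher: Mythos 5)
Your proof is correct and fills in exactly the routine verification that the paper compresses into the single sentence "This is a direct application of the properties of the action of $\mathfrak{S}_n$ on $\Pi_n$." Same approach, just spelled out.
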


\begin{proof}
  This is a direct application of the properties of the action of $\mathfrak{S}_n$ on $\Pi_n$.
\end{proof}

Note that $(\pi,\rho,\lambda)$ and $(\pi',\rho',\lambda')$ are in the same orbit if and only if $\pi=\pi'$, so that the orbits are naturally indexed by $NC_n$.

For $\pi\in NC_n$, we denote by $\mathfrak{S}_n(\pi)$ the set of $\sigma\in\mathfrak{S}_n$ such that $\sigma\cdot  b = b$ for any block $b\in\pi$. Then $\mathfrak{S}_n(\pi)$ is a {\it parabolic subgroup} (it is conjugated to a Young subgroup).  Via left multiplication, the quotient $\mathfrak{S}_n / \mathfrak{S}_n(\pi)$ is acted on by $\mathfrak{S}_n$.

\begin{prop}  \label{prop:equivariantbij}
  There is a $\mathfrak{S}_n$-equivariant bijection between $\pp_n$ and pairs $(\pi,\sigma\cdot \mathfrak{S}_n(\pi))$ where $\pi\in NC_n$ and $\sigma \in \mathfrak{S}_n$ (where it is understood that $\mathfrak{S}_n$ acts on the second element in the pair via left multiplication on the cosets).
\end{prop}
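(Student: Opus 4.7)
The plan is to define a map $\Phi:\pp_n\to \{(\pi,\sigma\mathfrak{S}_n(\pi)) : \pi\in NC_n,\ \sigma\in\mathfrak{S}_n\}$ by ``straightening'' the block bijection $\lambda$ into a permutation modulo the stabiliser of $\pi$. Concretely, given $(\pi,\rho,\lambda)\in\pp_n$, I would pick any $\sigma\in\mathfrak{S}_n$ such that $\sigma\cdot B = \lambda(B)$ as a set, for every block $B\in\pi$. Such a $\sigma$ exists because $|\lambda(B)|=|B|$ and one can choose any bijection $B\to\lambda(B)$ independently on each block and glue them together. I then set $\Phi(\pi,\rho,\lambda) = (\pi,\sigma\mathfrak{S}_n(\pi))$.

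The first step of the proof is well-definedness: if $\sigma,\sigma'$ both satisfy $\sigma\cdot B=\sigma'\cdot B=\lambda(B)$ for all $B\in\pi$, then $\sigma^{-1}\sigma'\cdot B = B$ for every $B\in\pi$, so $\sigma^{-1}\sigma'\in\mathfrak{S}_n(\pi)$ and hence $\sigma\mathfrak{S}_n(\pi)=\sigma'\mathfrak{S}_n(\pi)$. For the inverse map $\Psi$, given $(\pi,\sigma\mathfrak{S}_n(\pi))$, I would define $\lambda(B):=\sigma\cdot B$ and $\rho:=\{\lambda(B) : B\in\pi\}$. Independence from the choice of coset representative is immediate: replacing $\sigma$ by $\sigma\tau$ with $\tau\in\mathfrak{S}_n(\pi)$ leaves $\sigma\tau\cdot B=\sigma\cdot(\tau\cdot B)=\sigma\cdot B$ unchanged. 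The resulting triple lies in $\pp_n$ since $\lambda$ is a bijection from $\pi$ to $\rho$ preserving block sizes (and nothing forces $\rho\in NC_n$, which is consistent with Definition~\ref{defi_edelman}).

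That $\Phi$ and $\Psi$ are mutually inverse is then a direct unwinding: applying $\Psi$ to $\Phi(\pi,\rho,\lambda)$ returns $\lambda(B)=\sigma\cdot B$ and $\rho=\{\sigma\cdot B\}$, which by construction coincide with the original $\lambda$ and $\rho$; and applying $\Phi$ to $\Psi(\pi,\sigma\mathfrak{S}_n(\pi))$ asks for a permutation sending each $B$ to $\sigma\cdot B$, for which $\sigma$ itself is a representative, recovering the coset.

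For $\mathfrak{S}_n$-equivariance, one applies~\eqref{act_ontriples}: $\tau\cdot(\pi,\rho,\lambda)=(\pi,\tau\cdot\rho,\tau\circ\lambda)$, so a representative of $\Phi(\tau\cdot(\pi,\rho,\lambda))$ is any permutation sending $B$ to $\tau\cdot\lambda(B)=\tau\sigma\cdot B$, which means the associated coset is $\tau\sigma\mathfrak{S}_n(\pi)$. This matches the left multiplication action of $\tau$ on the cosets, proving equivariance. There is no real obstacle in this proof; the only point requiring a little care is the well-definedness on both sides, which boils down to the standard observation that the stabiliser of the block data of $\pi$ is precisely the parabolic subgroup $\mathfrak{S}_n(\pi)$.
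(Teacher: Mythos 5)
Your proof is correct and takes essentially the same approach as the paper: both construct the bijection by choosing a permutation $\sigma$ with $\sigma\cdot B=\lambda(B)$ on each block, observe that $\sigma$ is determined up to right multiplication by $\mathfrak{S}_n(\pi)$, and check equivariance by composing with $\tau$. The only cosmetic difference is that the paper states the map from pairs to triples first, while you start from triples and spell out the well-definedness and mutual-inverse checks in more detail.
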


\begin{proof}
  To a pair $(\pi,\sigma \cdot \mathfrak{S}_n(\pi))$ as in the proposition, we associate $(\pi,\rho,\lambda)\in\pp_n$ by letting $\rho=\sigma\cdot\pi$, and $\lambda$ sends $b\in \pi$ to $\sigma\cdot b \in \rho$.  By replacing $\sigma$ with the product $\sigma_1\sigma_2$ in the definition of this map, we see that $(\pi,\sigma) \mapsto (\pi,\rho,\lambda)$ is equivariant.

  To define the inverse map, let $(\pi,\rho,\lambda)\in\pp_n$.  There exists $\sigma \in\mathfrak{S}_n$ such that $\sigma \cdot b = \lambda(b)$ for $b\in\pi$, moreover it is unique up to right multiplication by an element of $\mathfrak{S}_n(\pi)$.  The inverse bijection sends $(\pi,\rho,\lambda)\in\pp_n$ to $(\pi,\sigma)$.
\end{proof}

By taking minimal length coset representatives in each coset, we immediately obtain the following.

\begin{prop} \label{prop_pppairs}
  There is a bijection between $\pp_n$ and the set $\mathcal{P}_n$ of pairs $(\pi,\sigma) \in NC_n \times \mathfrak{S}_n$ such that if  $b_1<b_2<\ldots$ are the elements of  a block $b\in\pi$ then $\sigma(b_1)<\sigma(b_2)<\dots$.
\end{prop}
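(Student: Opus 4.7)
The plan is to apply Proposition~\ref{prop:equivariantbij} and then pick, inside each coset $\sigma\cdot\mathfrak{S}_n(\pi)$, a distinguished representative that is increasing on every block of $\pi$. This will put the set of cosets in bijection with $\mathcal{P}_n$, and composing with the equivariant bijection of Proposition~\ref{prop:equivariantbij} yields the desired bijection with $\pp_n$.

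First, I would spell out the structure of the parabolic $\mathfrak{S}_n(\pi)$. Since the action of $\mathfrak{S}_n$ on blocks is the one induced from its action on $\llbracket 1;n\rrbracket$, the stabilizer $\mathfrak{S}_n(\pi)$ is exactly the Young subgroup $\prod_{b\in\pi}\mathfrak{S}(b)$ of permutations fixing every block of $\pi$ setwise, i.e., permutations that shuffle elements inside a block arbitrarily without moving them between blocks. Consequently, two elements of $\mathfrak{S}_n$ lie in the same coset $\sigma\cdot\mathfrak{S}_n(\pi)$ if and only if they induce the same set-valued map on the blocks of $\pi$.

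Second, I would establish existence and uniqueness of a block-increasing representative in each coset. Given $\sigma\in\mathfrak{S}_n$ and a block $b=\{b_1<b_2<\dots<b_k\}\in\pi$, the set $\sigma(b)$ has a well-defined increasing listing $c_1<c_2<\dots<c_k$, and declaring $\sigma_0(b_i):=c_i$ for every $b\in\pi$ produces a permutation of $\llbracket 1;n\rrbracket$. By construction $\sigma_0(b)=\sigma(b)$ as sets for every $b\in\pi$, so $\sigma_0\in\sigma\cdot\mathfrak{S}_n(\pi)$ and $\sigma_0$ belongs to $\mathcal{P}_n$. For uniqueness, if $\sigma_0,\sigma_0'\in\sigma\cdot\mathfrak{S}_n(\pi)$ are both increasing on every block, then on each block $b$ the two permutations have the same image as a set and the same relative order on $b$, hence they coincide pointwise. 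This is exactly the standard minimal-length coset representative referenced before the statement.

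Third, I would assemble the bijection. Composing with Proposition~\ref{prop:equivariantbij}, the map $\pp_n\to\mathcal{P}_n$ sends $(\pi,\rho,\lambda)$ to $(\pi,\sigma_0)$, where $\sigma_0$ is the unique permutation satisfying $\sigma_0(b)=\lambda(b)$ as sets and which is increasing on each $b\in\pi$. The inverse is immediate: from $(\pi,\sigma_0)\in\mathcal{P}_n$ recover the triple $(\pi,\,\sigma_0\cdot\pi,\,b\mapsto\sigma_0(b))$, which indeed lies in $\pp_n$. The only real content lies in the Young coset argument of the second paragraph, which is entirely routine; this is why the authors qualify the result as ``immediate.''
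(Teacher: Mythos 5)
Your proposal is correct and takes the same route as the paper: it simply unpacks the one-line argument ``by taking minimal length coset representatives in each coset,'' applied to Proposition~\ref{prop:equivariantbij}. The existence/uniqueness argument for the block-increasing representative in each coset $\sigma\cdot\mathfrak{S}_n(\pi)$ is exactly what makes that phrase precise, and your description of the inverse matches the formulas of Proposition~\ref{prop:equivariantbij}.
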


The example given after Definition~\ref{defi_edelman} gives the pair
%\[
$ (\{\{1,5,6,8\},\{2,3\},\{4\},\{7\}\}, 2 5 8 1 3 4 6 7)$.
%\]
It can be naturally represented as a noncrossing partition with labels:
\begin{align} \label{pfp_example}
 \begin{tikzpicture}[scale=0.6]
   \tikzstyle{ver} = [circle, draw, fill, inner sep=0.5mm]
   \tikzstyle{edg} = [line width=0.6mm]
   \node[ver] at (1,0) {};
   \node[ver] at (2,0) {};
   \node[ver] at (3,0) {};
   \node[ver] at (4,0) {};
   \node[ver] at (5,0) {};
   \node[ver] at (6,0) {};
   \node[ver] at (7,0) {};
   \node[ver] at (8,0) {};
   \node      at (1,-0.6) {2};
   \node      at (2,-0.6) {5};
   \node      at (3,-0.6) {8};
   \node      at (4,-0.6) {1};
   \node      at (5,-0.6) {3};
   \node      at (6,-0.6) {4};
   \node      at (7,-0.6) {6};
   \node      at (8,-0.6) {7};
   \draw[edg] (1,0) to[bend left=60] (5,0);
   \draw[edg] (5,0) to[bend left=60] (6,0);
   \draw[edg] (6,0) to[bend left=60] (8,0);
   \draw[edg] (2,0) to[bend left=60] (3,0);
 \end{tikzpicture}.
\end{align}

The cover relation is easily described in this representation.  To obtain $(\pi',\sigma')$ such that $(\pi',\sigma')\lessdot(\pi,\sigma)$, choose $\pi'\in NC_n$ such that $\pi'\lessdot \pi$, and $\sigma'$ is obtained by rearranging the labels so as to respect the increasing condition on the blocks of $\pi'$.

We refer to \cite{macdonald} for characters of the symmetric group, symmetric functions, and the Frobenius characteristic map $\chi$ relating these two notions.  The character of $\mathfrak{S}_n / \mathfrak{S}_n(\pi)$ is $\operatorname{Ind}_{\mathfrak{S}_n(\pi)}^{\mathfrak{S}_n}(1)$, the trivial character of $\mathfrak{S}_n(\pi)$ induced to $\mathfrak{S}_n$.  It is such that
\[
  \chi\big( \operatorname{Ind}_{\mathfrak{S}_n(\pi)}^{\mathfrak{S}_n}(1) \big) 
  =
  h_\lambda
\]
where $h_\lambda$ is the {\it homogeneous symmetric function}, and $\lambda$ is the integer partition obtained by sorting block sizes of $\pi$.  (One can replace $\operatorname{Ind}_{\mathfrak{S}_n(\pi)}^{\mathfrak{S}_n}(1)$ with $h_\pi = h_\lambda$ to work with symmetric functions rather than characters in what follows, at the condition of being aware that the evaluation of a character $\Psi$ at some permutation $\sigma$ is $\langle \chi(\Psi) | \frac{p_\lambda}{z_\lambda} \rangle$ where the integer partition $\lambda$ is the cycle type of $\sigma$.)

\begin{prop} \label{prop:2pp_park}
  The character of the action of $\mathfrak{S}_n$ on the orbit of $(\pi , \rho , \lambda) \in \pp_n$ is $\operatorname{Ind}_{\mathfrak{S}_n(\pi)}^{\mathfrak{S}_n}(1)$.  Moreover, character of the action of $\mathfrak{S}_n$ on $\pp_n$ is
  \begin{align} \label{eq:parking}
    \sum_{\pi \in NC_n} \operatorname{Ind}_{\mathfrak{S}_n(\pi)}^{\mathfrak{S}_n}(1).
  \end{align}
\end{prop}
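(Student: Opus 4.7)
The plan is to deduce this almost immediately from Proposition~\ref{prop:equivariantbij} together with the observation (already noted just after~\eqref{act_ontriples}) that two triples lie in the same $\mathfrak{S}_n$-orbit iff they share the same underlying noncrossing partition $\pi$. Thus the orbits of $\pp_n$ are indexed by $NC_n$, and by the additivity of characters on disjoint unions of $\mathfrak{S}_n$-sets, the second assertion reduces to the first.

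For the first assertion, I would invoke the $\mathfrak{S}_n$-equivariant bijection of Proposition~\ref{prop:equivariantbij}, which identifies each orbit $\{(\pi,\sigma\cdot\rho,\sigma\circ\lambda) : \sigma\in\mathfrak{S}_n\}$ with the set of pairs $\{\pi\}\times\bigl(\mathfrak{S}_n/\mathfrak{S}_n(\pi)\bigr)$, where $\mathfrak{S}_n$ acts trivially on the first factor and by left multiplication on cosets in the second. By the standard identification of the permutation representation on a coset space $\mathfrak{S}_n/H$ with the induced representation $\operatorname{Ind}_H^{\mathfrak{S}_n}(1)$, this gives exactly the stated character. (One minor check is that the stabilizer of the representative $(\pi,\pi,\mathrm{id})$ under the action~\eqref{act_ontriples} is indeed $\mathfrak{S}_n(\pi)$, which is immediate from the definition: $\sigma\cdot(\pi,\pi,\mathrm{id})=(\pi,\pi,\mathrm{id})$ forces $\sigma\cdot b = b$ for every block $b\in\pi$.)

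Summing over the orbits then yields
\[
  \sum_{\pi\in NC_n}\operatorname{Ind}_{\mathfrak{S}_n(\pi)}^{\mathfrak{S}_n}(1),
\]
proving the second claim. There is essentially no obstacle here: the content is entirely packaged into Proposition~\ref{prop:equivariantbij}. The only thing to be careful about is to phrase the equivariance correctly (left multiplication on cosets matches the definition of induction from $\mathfrak{S}_n(\pi)$), so that the translation to $\operatorname{Ind}_{\mathfrak{S}_n(\pi)}^{\mathfrak{S}_n}(1)$ is literal rather than up to some isomorphism that would require further justification.
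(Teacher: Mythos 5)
Your proposal is correct and follows the same route as the paper: invoke the $\mathfrak{S}_n$-equivariant bijection of Proposition~\ref{prop:equivariantbij} to identify each orbit with the coset space $\mathfrak{S}_n/\mathfrak{S}_n(\pi)$, recognize its permutation character as $\operatorname{Ind}_{\mathfrak{S}_n(\pi)}^{\mathfrak{S}_n}(1)$, and sum over the orbits indexed by $NC_n$. Your added verification that the stabilizer of the representative $(\pi,\pi,\mathrm{id})$ is exactly $\mathfrak{S}_n(\pi)$ is a small welcome detail that the paper leaves implicit, but it does not change the substance of the argument.
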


\begin{proof}
From Proposition~\ref{prop:equivariantbij}, we see that the orbit of $(\pi , \rho , \lambda) \in \pp_n$ is isomorphic (as a $\mathfrak{S}_n$-set) to $\mathfrak{S}_n / \mathfrak{S}_n(\pi)$.  The first statement follows.  The second one is obtained by summing over the orbits in $\pp_n$.
\end{proof}

The character in \eqref{eq:parking} coincides with that of the {\it noncrossing parking space} from \cite{ARR}.  The evaluation of this character is given by $\sigma \mapsto (n+1)^{z(\sigma)-1}$ where $z(\sigma)$ is the number of cycles of $\sigma$.  Let us present a more general version of this character, defined by Rhoades~\cite{rhoades}.  
He considered a character which comes from the action of $\mathfrak{S}_n$ on chains of $\pp_n$ (in particular, the poset $\pp_n$ appears implicitly in \cite{rhoades}).  Note that via Lemma~\ref{lem:unique}, the orbit of a chain $\phi_1 \leq \dots \leq \phi_k$ is isomorphic as a $\mathfrak{S}_n$-set to the orbit of $\phi_k$.

\begin{prop}[{\cite[Section~8]{rhoades}}]
Let $\operatorname{Park}^{(k)}_n$ denote the character of $\mathfrak{S}_n$ acting on $k$-chains $\phi_1 \leq \dots \leq \phi_k$ of $\pp_n$, so that:
\[
  \operatorname{Park}^{(k)}_n
  = 
  \sum_{\substack{\pi_1 \leq \dots \leq \pi_k \\ \pi_1 , \dots , \pi_k \in NC_n}}
  \operatorname{Ind}_{\mathfrak{S}_n(\pi_k)}^{\mathfrak{S}_n}(1).
\]
Then, for any $\sigma\in \mathfrak{S}_n$, we have:
\begin{align}  \label{char_chains}
  \operatorname{Park}^{(k)}_n(\sigma) 
  = 
  (kn+1)^{z(\sigma)-1}.
\end{align}
\end{prop}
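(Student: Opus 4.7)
The strategy is to reduce via the order-ideal structure in $NC_n$ and then either chase a symmetric-function identity or, more elegantly, exhibit a bijection with $k$-parking functions.

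By Lemma~\ref{lem:unique}, a $k$-chain $\phi_1\leq\dots\leq\phi_k$ in $\pp_n$ is uniquely determined by its top element $\phi_k=(\pi,\rho,\lambda)$ together with a $(k-1)$-multichain $\pi_1\leq\dots\leq\pi_{k-1}\leq\pi$ in $NC_n$: the intermediate $\phi_i$'s are forced as the unique lifts of the $\pi_i$'s below $\phi_k$. Since $\mathfrak{S}_n$ preserves every first coordinate, it acts trivially on the $NC_n$-part of this data, so a chain is fixed by $\sigma$ if and only if its top $\phi_k$ is. This already reduces the character value to a weighted fixed-point count:
\[
  \operatorname{Park}^{(k)}_n(\sigma) \;=\; \sum_{\phi\in\pp_n^{\sigma}} Z\bigl([0_n,\pi(\phi)],\,k\bigr).
\]

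Using the factorization $[0_n,\pi]\cong NC_{i_1}\times NC_{i_2}\times\cdots$ where the $i_j$'s are the block sizes of $K(\pi)$ (recalled at the end of Section~\ref{sec:nc}), together with $Z(NC_m,k)=C_m^{(k)}$, the inner count becomes $\prod_j C_{i_j}^{(k)}$. Passing to Frobenius characteristics via $\chi\bigl(\operatorname{Ind}_{\mathfrak{S}_n(\pi)}^{\mathfrak{S}_n}(1)\bigr)=h_{\lambda(\pi)}$, the identity to prove reduces to a symmetric-function equality involving a sum over $\pi\in NC_n$ of $\bigl(\prod_j C_{i_j}^{(k)}\bigr)\,h_{\lambda(\pi)}$, grouped by the joint block-size data of $\pi$ and $K(\pi)$ (enumerated by Kreweras' formula). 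For $k=1$ this is precisely the Armstrong--Reiner--Rhoades formula underlying Proposition~\ref{prop:2pp_park}, and for general $k$ the same manipulation should absorb the Fuss weights $\prod_j C_{i_j}^{(k)}$ to promote $(n+1)$ to $(kn+1)$ in the exponent.

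The main obstacle is exactly this last step: controlling simultaneously the block-size compositions of $\pi$ and $K(\pi)$ inside the sum is delicate, and a direct symmetric-function proof tends to become a computation in Fuss--Catalan identities rather than a conceptual argument. A cleaner alternative I would pursue in parallel is to construct an explicit $\mathfrak{S}_n$-equivariant bijection between $k$-chains of $\pp_n$ and \emph{$k$-parking functions} of length $n$ (this is natural via the $k$-parking trees introduced later in the paper, which are precisely designed to encode $k$-chains), and then compute the character of $k$-parking functions directly by the classical Pollak-type cyclic-rotation argument on $\{0,1,\dots,kn\}^n$: the orbits under the shift by $(1,1,\dots,1)$ have size $kn+1$ and contain exactly one parking function, and the same orbit decomposition restricts nicely to $\sigma$-fixed functions, immediately yielding $(kn+1)^{z(\sigma)-1}$.
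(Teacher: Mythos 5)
The paper does not actually prove this proposition; it is cited from Rhoades~\cite{rhoades}, so there is no in-paper argument to compare yours against. Evaluating your proposal on its own terms: the first route is correct up to the point you acknowledge getting stuck. Your reduction to the weighted fixed-point count $\sum_{\phi\in\pp_n^\sigma} Z([0_n,\pi(\phi)],k)$, and then to $\sum_{\pi\in NC_n}\bigl(\prod_{b\in K(\pi)}C^{(k)}_{|b|}\bigr)\operatorname{Ind}_{\mathfrak{S}_n(\pi)}^{\mathfrak{S}_n}(1)$, is exactly the content of Lemma~\ref{lemm_parkk}. But the Fuss--Catalan symmetric-function identity needed to evaluate that sum to the $h$-expansion of $\sigma\mapsto(kn+1)^{z(\sigma)-1}$ is essentially the entire content of the proposition, so this route does not close the argument as written.

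Your second route is the one to commit to, and it matches how the paper handles the analogous prime case (Proposition~\ref{prop_park} is proved by identifying $\pp'_n$ with $(n,n-1)$-rational parking functions and invoking the character formula $\sigma\mapsto b^{z(\sigma)-1}$ of Armstrong--Loehr--Warrington). Two points need to be made explicit to turn the sketch into a proof. First, the bijection $\Phi$ from $k$-chains of $\pp_n$ to $k$-parking trees in Section~\ref{ktree} must be checked to be $\mathfrak{S}_n$-equivariant; the paper only states equivariance for the step from $k$-parking trees to $k$-parking functions, so you should observe that $\Phi$ manipulates only labels/broods and hence commutes with relabelling. Second, the Pollak argument should be spelled out: with $w\in(\mathbb{Z}/(kn+1))^n$ and the shift $w\mapsto w+(1,\dots,1)$, orbits have size $kn+1$ (the shift is fixed-point-free), each orbit contains exactly one $k$-parking function, and the shift commutes with the $\mathfrak{S}_n$-action; restricting to the $(kn+1)^{z(\sigma)}$ functions constant on cycles of $\sigma$ and dividing by the orbit size gives $(kn+1)^{z(\sigma)-1}$. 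With these details filled in, your second approach is a complete and self-contained proof.
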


This character $\operatorname{Park}^{(k)}_n$ will be called a {\it zeta character}, as it encompasses both the zeta polynomial of $\pp_n$ (see Section~\ref{sec_enum}) and its character as a $\mathfrak{S}_n$-set.  There is an alternative expression for $\operatorname{Park}^{(k)}_n$, that will be useful below.

\begin{lemm} \label{lemm_parkk}
We have:
\begin{equation}  \label{eq_parkk}
  \operatorname{Park}^{(k)}_n 
  = 
  \sum_{\pi\in NC_n} \bigg( \prod_{b\in K(\pi)} C^{(k)}_{|b|} \bigg) 
  \operatorname{Ind}_{\mathfrak{S}_n(\pi)}^{\mathfrak{S}_n}(1).
\end{equation}
\end{lemm}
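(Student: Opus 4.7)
The plan is to start from the definition of $\operatorname{Park}^{(k)}_n$ given in the proposition and reorganize the sum according to the top element $\pi_k$ of each multichain. Since the summand $\operatorname{Ind}_{\mathfrak{S}_n(\pi_k)}^{\mathfrak{S}_n}(1)$ depends only on $\pi_k$, fixing $\pi_k = \pi$ reduces the task to counting the multichains $\pi_1 \leq \dots \leq \pi_{k-1} \leq \pi$ in $NC_n$. This is precisely a $(k-1)$-element weakly increasing sequence in the principal order ideal $I_\pi := \{x \in NC_n : x \leq \pi\}$, so by the definition recalled in Section~\ref{sec:nc} the number of such multichains equals the zeta polynomial $Z(I_\pi, k)$.

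Next, I would invoke the structural description of $I_\pi$ established at the end of Section~\ref{sec:nc}: the ideal below $\pi$ in $NC_n$ is isomorphic to the direct product
\[
I_\pi \;\cong\; \prod_{b \in K(\pi)} NC_{|b|},
\]
where $K(\pi)$ denotes the Kreweras complement of $\pi$. Since a multichain in a product of posets is just a tuple of multichains in each factor, the zeta polynomial is multiplicative under direct products. Combining this with the formula $Z(NC_m, k) = C^{(k)}_m$ recalled earlier gives
\[
Z(I_\pi, k) \;=\; \prod_{b \in K(\pi)} Z(NC_{|b|}, k) \;=\; \prod_{b \in K(\pi)} C^{(k)}_{|b|}.
\]
Substituting this expression for the number of multichains with fixed top $\pi$ back into the regrouped sum then yields the identity~\eqref{eq_parkk}.

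Since every ingredient—the Kreweras-complement description of order ideals in $NC_n$, the Fuß--Catalan formula for $Z(NC_n, k)$, and multiplicativity of the zeta polynomial over direct products—is already available, no genuine obstacle arises; the argument is essentially bookkeeping. The only minor point requiring attention is aligning the indexing conventions: once the top of the chain is fixed, the remaining data is a $(k-1)$-element chain in $I_\pi$, which is exactly what $Z(I_\pi, k)$ counts under the convention stated in Section~\ref{sec:nc}.
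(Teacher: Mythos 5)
Your proof is correct and takes essentially the same route as the paper: regroup the defining sum by the top element $\pi_k=\pi$, then count the remaining $(k-1)$-element multichains in the order ideal below $\pi$ using the isomorphism $I_\pi\cong\prod_{b\in K(\pi)}NC_{|b|}$ and $Z(NC_m,k)=C^{(k)}_m$. The paper states this count more tersely ("follows from the structure of order ideals and knowing its zeta polynomial"), while you spell out the multiplicativity of the zeta polynomial over products; this is a minor difference in exposition, not in substance.
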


\begin{proof}
Let $(\pi_1,\rho_1,\sigma_1) \leq \dots \leq (\pi_k,\rho_k,\sigma_k)$ be a $k$-element chain in $\pp_n$.  As noted above, the action of $\mathfrak{S}_n$ on its orbit is isomorphic to the action on the orbit of $(\pi_k,\rho_k,\sigma_k)$.  This follows from Lemma~\ref{lem:unique} and the fact that the group only acts on the second and third elements of each triple.  Therefore, the character of this action is $\operatorname{Ind}_{\mathfrak{S}_n(\pi_k)}^{\mathfrak{S}_n}(1)$.

For a given $\pi\in NC_n$, the number of $k$-element chains having $\pi$ as their top element is equal to 
\[
  \prod_{b\in K(\pi)} C^{(k)}_{|b|},
\]
indeed this follows from the result on the structure of order ideals in $NC_n$ and knowing its zeta polynomial.  Thus, by summing over $\pi$ we get the desired formula for $\operatorname{Park}^{(k)}_n$.
\end{proof}

%%%%%%%%%%%%%%%%%%%%%%%%%%%%%%%%%%%%%%%%%%%%
\subsection{Classical parking functions}
%%%%%%%%%%%%%%%%%%%%%%%%%%%%%%%%%%%%%%%%%%%%
\label{sec13}

In this section, we use the following terminology.  A {\it weak composition of an integer} $n\geq 0$ is a finite sequence of nonnegative integers, such that the sum is $n$.  Similarly, a {\it weak composition of a set} $X$ is a finite sequence of pairwise disjoint sets (that are possibly empty), such that the union is $X$.

We first need a lemma that gives an alternative encoding of noncrossing partitions.  

\begin{lemm}  \label{lemma_encodingNC}
  There is a bijection between $NC_n$ and weak compositions $(a_1,\dots,a_n)$ of $n$ such that $\sum_{i=1}^j a_i \geq j $ for any $j \in \llbracket 1;\dots,n\rrbracket$.  It is given by:
  \[
     a_i = \begin{cases}
               |B| & \text{ if } i = \min (B) \text{ for some } B\in \pi, \\
               0 & \text{ otherwise}.
           \end{cases}
  \]
\end{lemm}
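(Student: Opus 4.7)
The plan is threefold: (i) check that the stated map sends $NC_n$ to the claimed set of ballot-type weak compositions, (ii) construct an inverse via a stack algorithm, and (iii) verify the two maps are mutually inverse.

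For step (i), the identity $\sum_{i=1}^n a_i = \sum_{B\in\pi}|B|=n$ shows that $(a_1,\dots,a_n)$ is a weak composition of $n$, and the inequality follows by noting that $\sum_{i=1}^j a_i$ equals the total size of all blocks $B\in\pi$ with $\min(B)\leq j$; since every element of $\llbracket 1;j\rrbracket$ belongs to such a block, the sum is at least $j$. Neither fact requires the noncrossing hypothesis.

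For step (ii), given a valid sequence $(a_1,\dots,a_n)$, I would process positions $i=1,\dots,n$ in order while maintaining a stack of partially filled blocks together with their target sizes: when $a_i>0$, push a new block $\{i\}$ of target size $a_i$; when $a_i=0$, append $i$ to the topmost block; in either case pop the top block as soon as it reaches its target size. The hypothesis $\sum_{i=1}^j a_i\geq j$ ensures the stack is non-empty whenever $a_i=0$ (after $i-1$ steps, the number of unfilled slots is $\sum_{k<i} a_k - (i-1) \geq 1$), and $\sum_i a_i=n$ ensures the stack empties at the end, so the output is a genuine set partition of $\llbracket 1;n\rrbracket$. Moreover, the LIFO discipline of the stack automatically produces a \emph{noncrossing} partition: if two output blocks $B_1,B_2$ satisfy $\min(B_1)<\min(B_2)$, then $B_2$ sits above $B_1$ on the stack throughout its lifetime, so no element of $B_1$ is inserted between $\min(B_2)$ and $\max(B_2)$, which precludes a crossing.

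For step (iii), the composition read off from the output of the stack algorithm is the input sequence by construction. Conversely, starting from $\pi\in NC_n$, I need to check that applying the stack algorithm to its composition recovers $\pi$; the key observation is that the noncrossing property of $\pi$ forces every $i$ with $a_i=0$ to belong to the most recently opened block that has not yet been closed, which is exactly where the stack algorithm places it. This last reconstruction step is the main subtlety and really encodes the noncrossing hypothesis; everything else is essentially bookkeeping.
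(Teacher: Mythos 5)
Your proof is correct and takes essentially the same approach as the paper's: the stack algorithm you describe is exactly the structure encoded by the paper's Łukasiewicz paths with ``facing steps,'' since the total deficit of the stack is the height of the path and the LIFO discipline is the noncrossing matching of up and down steps. The paper only sketches the inverse map and explicitly leaves the verification as an exercise; you have carried it out.
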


\begin{proof}
  We only give a quick description of the inverse bijection, and details are left as an exercise.  First, there is a bijection between compositions as in the proposition and Łukasiewicz paths of length $n$ (which are, by definition, lattice paths in $\mathbb{N}^2$ from $(0,0)$ to $(n,0)$ with steps of the form $(1,i)$ for $i\in\mathbb{N}\cup\{-1\}$).  Explicitely, to $(a_1,\dots,a_n)$ we associate the lattice path with steps $(1,a_1-1), \dots, (1,a_n-1)$.  We build a noncrossing partition from a Łukasiewicz path as follows:
  \begin{itemize}
      \item if the $i$th step is $(1,0)$ then there is a block $\{i\}$,
      \item if the $i$th step is $(1,j)$ with $j>0$, and the $j$ facing steps $(1,-1)$ have indices $i_1,\dots,i_j$, then there is a block $\{i,i_1,\dots,i_j\}$.
  \end{itemize}
  The notion of facing steps in a path is illustrated by the horizontal arrows in Figure~\ref{fig_luk}.  The noncrossing partition associated to the path in this figure is $\{\{1,2,15\}, \allowbreak \{3,6,10,11\}, \allowbreak \{4,5\}, \allowbreak \{7,8,9\}, \allowbreak \{12,13,14\}\}$.
\end{proof}
  
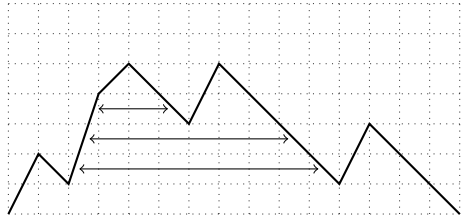
\begin{figure}[h!tp]
  \centering
  \begin{tikzpicture}[scale=0.4]
    \draw[dotted,darkgray] (0,0) grid (15,7);
    \draw[thick] (0,0) -- (1,2) -- (2,1) -- (3,4) -- (4,5) -- (5,4) -- (6,3) -- (7,5) -- (8,4) -- (9,3) -- (10,2) -- (11,1) -- (12,3) -- (13,2) -- (14,1) -- (15,0);
    \draw[<->] (3,3.5) -- (5.3,3.5);
    \draw[<->] (2.7,2.5) -- (9.3,2.5);
    \draw[<->] (2.36,1.5) -- (10.3,1.5);
  \end{tikzpicture}
  \caption{A Łukasiewicz path.\label{fig_luk}}
\end{figure}

The following is the classical definition of a parking function.

\begin{defi} \label{defi_classicalpp}
A {\it parking function} of length $n$ is a word $w_1\dots w_n$ of positive integers, such that for all $k$ between $1$ and $n$, we have $ \# \{ \; i  \; : \; w_i\leq k  \} \geq k$ (equivalently, the increasing sort of $w_1\dots w_n$ is below $1,2,\dots,n$, entrywise). The symmetric group acts on parking functions in a natural way: for $\sigma\in\mathfrak{S}_n$,
$ \sigma \cdot ( w_1\dots w_n ) = w_{\sigma^{-1}(1)} \dots w_{\sigma^{-1}(n)}$.
\end{defi}

A parking function of length $n$ can be rewritten as a weak set composition $(A_1, \ldots, A_n)$ of $\llbracket 1;n\rrbracket$ satisfying $\sum_{i=1}^k |A_i| \geq k$ for any $1 \leq k \leq n$.  The correspondence is done by letting $A_i$ be the set of positions of $i$ in the parking function: $A_i = \{ j \; | \; w_j =i \}$.  In this reformulation, the action of $\mathfrak{S}_n$ on parking functions naturally extends the action of $\mathfrak{S}_n$ on subsets of $\llbracket 1;n\rrbracket$.

By seeing 2-noncrossing partitions as ``enriched'' noncrossing partitions, the previous lemma can be extended to give the following:

\begin{prop} \label{bij_2ncp_park}
  There is a $\mathfrak{S}_n$-equivariant bijection between $\pp_n$ and parking functions of length $n$, defined by the following property: the image of $(\pi,\rho,\lambda)\in\pp_n$ is $w_1 \dots w_n$ such that
  \[
    \forall B\in\pi, \; \forall i\in \lambda(B), \;  w_i=\min B.
  \]
  Equivalently (in terms of weak set compositions), this bijection sends $(\pi,\rho,\lambda)$ to $(A_1,\dots,A_n)$ such that:
  \[
     A_i = \begin{cases}
               \lambda(B) & \text{ if } i = \min (B) \text{ for some } B\in \pi, \\
               \varnothing & \text{ otherwise}.
           \end{cases}
  \]
\end{prop}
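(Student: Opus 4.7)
The plan is to reduce the statement to Lemma~\ref{lemma_encodingNC} by observing that the underlying noncrossing partition $\pi$ controls the multiset of values in the parking word, while the bijection $\lambda$ controls the positions in which each value appears.

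First I would check that the map is well-defined. Since $\rho = \{\lambda(B) : B \in \pi\}$ is a set partition of $\llbracket 1;n \rrbracket$, for every $i \in \llbracket 1;n \rrbracket$ there is a unique $B \in \pi$ with $i \in \lambda(B)$. This makes $w_i = \min B$ (equivalently $A_{\min B} = \lambda(B)$) unambiguous, and the sequence $(A_1,\dots,A_n)$ is a weak set composition of $\llbracket 1;n \rrbracket$ because the $\lambda(B)$ partition $\llbracket 1;n \rrbracket$. Next I would verify the parking condition. By construction $|A_i| = |B|$ if $i = \min B$ for some $B \in \pi$ and $|A_i| = 0$ otherwise, so $(|A_1|, \dots, |A_n|)$ is exactly the weak composition associated to $\pi$ by Lemma~\ref{lemma_encodingNC}. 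Hence $\sum_{i=1}^k |A_i| \geq k$ for all $k$, which is the parking condition in its weak-set-composition form.

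To build the inverse, I would start from a parking function encoded as $(A_1, \dots, A_n)$. Setting $a_i = |A_i|$ produces a weak composition of $n$ satisfying the hypothesis of Lemma~\ref{lemma_encodingNC}, so it determines a unique $\pi \in NC_n$ whose blocks are precisely the $B$ with $\min B = i$ and $|B| = a_i$ for those $i$ with $a_i > 0$. One then defines $\lambda$ by $\lambda(B) := A_{\min B}$ (which has the correct cardinality $|B|$), and takes $\rho$ to be the set of images of $\lambda$. The two maps are mutually inverse by construction: given $(\pi,\rho,\lambda)$, the composition recovers the same $\pi$ through $a_i = |A_i|$, and $\lambda$ is recovered from the nonempty $A_i$'s.

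Finally I would check $\mathfrak{S}_n$-equivariance. Applying $\sigma$ to $(\pi,\rho,\lambda)$ yields $(\pi, \sigma\cdot\rho, \sigma\circ\lambda)$, whose associated weak set composition $(A'_1, \dots, A'_n)$ satisfies $A'_i = \sigma(\lambda(B)) = \sigma\cdot A_i$ whenever $i = \min B$, and $A'_i = \varnothing$ otherwise. Translated back into the word form, $j \in A'_i$ iff $\sigma^{-1}(j) \in A_i$, i.e.\ $w'_j = i$ iff $w_{\sigma^{-1}(j)} = i$, which gives $w'_j = w_{\sigma^{-1}(j)}$; this is exactly the $\mathfrak{S}_n$-action on parking functions from Definition~\ref{defi_classicalpp}.

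The only delicate point is the equivariance bookkeeping, since $\sigma$ acts on the second and third components of the triple but not on the first, and one has to see that this matches the permutation of positions (not values) of the parking word. Once the encoding via $(A_1,\dots,A_n)$ is in place, this becomes transparent, and the rest of the proof is a direct appeal to Lemma~\ref{lemma_encodingNC}.
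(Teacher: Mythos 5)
Your proof is correct and follows essentially the same route as the paper: both use Lemma~\ref{lemma_encodingNC} to match $\pi$ with the size profile $(|A_1|,\dots,|A_n|)$, and both define the inverse by $\lambda(B)=A_{\min(B)}$. The one substantive difference is that the paper shortcuts the bijectivity argument by invoking the known equality of cardinalities (``both sets have the same cardinality'') and then only checks injectivity, whereas you construct a genuine two-sided inverse and also spell out the $\mathfrak{S}_n$-equivariance, which the paper's proof omits entirely. Your version is therefore a bit more self-contained, and the explicit equivariance check (matching $\sigma\circ\lambda$ on the triple with the position action $w_j\mapsto w_{\sigma^{-1}(j)}$) is a worthwhile addition given that the proposition actually asserts it.
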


\begin{proof}
It is straightforward to check that the two formulations are equivalent, so we only prove the second one.

It is known that both sets have the same cardinality.  To describe the inverse bijection, first note that the sequence $(|A_1|,\dots,|A_n|)$ is the weak composition of $n$ corresponding to $\pi$ via the bijection in Lemma~\ref{lemma_encodingNC}, since $|\lambda(B)| = |B|$ for $B\in \pi$.  Second, note that $\rho = \{A_i \; | \; A_i\neq \varnothing\}$.  Eventually, $\lambda$ is such that $\lambda(B) = A_{\min(B)}$.  The map in the proposition is thus injective, and bijective.
\end{proof}

For example, this maps sends the 2-noncrossing partition in~\eqref{pfp_example} to the parking function 41112712, with 1s in position 2347, etc.

It is worth making explicit what are the parking functions corresponding to $(\pi,\pi,id)$, because these are orbit representatives.  We will not use this result hereafter, and the proof is left as an exercise.

\begin{prop}
  The bijection from the previous proposition sends the elements $(\pi,\pi,id) \in \pp_n$ to parking functions $w_1\dots w_n$ such that:
  \begin{itemize}
      \item $w_i\leq i$ for all $i\in\llbracket 1;n\rrbracket$,
      \item $w_1\dots w_n$ is lexicographically maximal among parking functions in the same orbit and satisfying the previous condition.
  \end{itemize}
\end{prop}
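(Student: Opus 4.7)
The plan is to first identify explicitly the parking function $w_1\dots w_n$ assigned to $(\pi,\pi,id)$ by the bijection of Proposition~\ref{bij_2ncp_park}. Since $\lambda=id$, the formula there gives $w_i=\min B(i)$, where $B(i)\in\pi$ denotes the block containing $i$. The first bullet $w_i\le i$ is then immediate because $i\in B(i)$, so the real content of the proposition is in the second bullet.

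For the lex-maximality, I would first observe that the $\mathfrak{S}_n$-orbit of a parking function consists of \emph{all} rearrangements of its value multi-sequence (rearranging preserves the sorted sequence, hence the parking property). The statement thus reduces to showing that among all rearrangements $w^*$ of $(w_1,\dots,w_n)$ satisfying $w^*_j\le j$ for every $j$, the arrangement $w$ is lex-maximal. I would argue by induction on the first index where $w^*$ and $w$ disagree: assuming $w^*_j=w_j$ for $j<k$, I would show $w^*_k\le w_k$. Under this hypothesis, each value $\min B$ (for $B\in\pi$) has been used exactly $|B\cap\llbracket 1;k-1\rrbracket|$ times in positions $<k$, leaving $|B\cap\llbracket k;n\rrbracket|$ copies available; hence the values still available at position $k$ that are $\le k$ form exactly the set
\[
  S_k \;=\; \{\min B:B\in\pi,\ \min B\le k,\ B\cap\llbracket k;n\rrbracket\ne\varnothing\}.
\]
The inductive step therefore reduces to proving $\max S_k = \min B(k)$, which together with $w^*_k\in S_k$ gives $w^*_k\le w_k$.

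This last identity is the only step that really uses the noncrossing structure of $\pi$, and is where I expect the main obstacle to lie. Note that $\min B(k)\in S_k$ since $k\in B(k)\cap\llbracket k;n\rrbracket$. Conversely, for any $B\in\pi$ distinct from $B(k)$ contributing to $S_k$, one has $k\notin B$, so $B$ contains elements $b^-<k<b^+$, which may be chosen consecutive in $B$. The noncrossing property applied to $B$ and $B(k)\ni k$ then forces every element of $B(k)$ to lie in the open interval $(b^-,b^+)$: any element of $B(k)$ outside this interval would, together with $k\in B(k)$, exhibit a crossing quadruple with $b^-,b^+\in B$. Hence $\min B(k)>b^-\ge\min B$, which gives $\min B(k)=\max S_k$ and closes the induction. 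The whole argument is really this noncrossing ``sandwich'' lemma wrapped in a greedy induction.
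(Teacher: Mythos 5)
Your proof is correct and complete. The paper leaves this proposition as an exercise without supplying a proof, so there is no reference argument to compare against; your greedy reduction together with the noncrossing ``sandwich'' lemma (if $B\neq B(k)$ contains elements $b^-<k<b^+$, the noncrossing condition forces $B(k)\subset(b^-,b^+)$, hence $\min B<\min B(k)$) is a natural and clean route to the second bullet, and the reduction of lex-maximality to the claim $\max S_k=\min B(k)$ at the first index of disagreement is sound. One tiny remark: requiring $b^\pm$ to be consecutive in $B$ is unnecessary — the crossing argument works for any $b^-<k<b^+$ in $B$, and taking $b^-=\min B$ directly gives $\min B(k)>\min B$.
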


\begin{rema}
It seems there is no clear and simple way to describe the poset structure of $\pp_n$ directly in terms of words $w_1,\dots,w_n$ as in Definition~\ref{defi_classicalpp}. Indeed, the covering relation there are given by choosing a letter which appears more than twice and increasing it. The main difficulty lays in the value to which it is allowed to increase the letter. \end{rema}

\subsection{Species and parking trees}

We refer to \cite{BLL} for the notion of {\it combinatorial species} and operations on them.  

\begin{defi}
A \emph{parking tree} on a set $L$ is a rooted plane tree $T$ such that:
\begin{itemize}
 \item internal vertices of $T$ are labelled with nonempty subsets of $L$, which form a set partition of $L$,
 \item leaves are labelled by empty sets,
 \item each vertex has as many children as elements in its label.
\end{itemize}
The \emph{species of parking functions} (or \emph{parking species}), denoted $\mathcal{P}_f$, is the species which associates to any finite set $L$ the set of parking trees on $L$ as above.
\end{defi}

Note that a parking tree on $L$ has $\#L$ edges.

\begin{exam} We represent below the parking trees on $\{1\}$, $\{1, 2\}$ and $\{1, 2, 3\}$. We left blanks for leaves:
\begin{align*}
\includegraphics{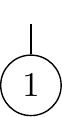}, \quad
\includegraphics{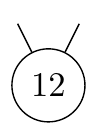}, \;
\includegraphics{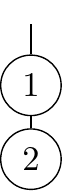}, \;
\includegraphics{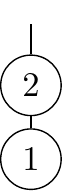}, \quad
\includegraphics{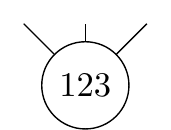}, \;
\includegraphics{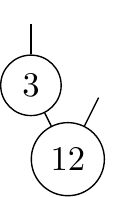}, \;
\includegraphics{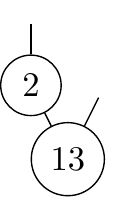}, \;
\includegraphics{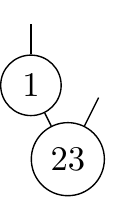}, \;
\includegraphics{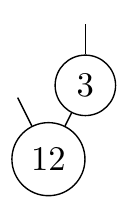},\\
\includegraphics{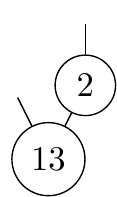}, \;
\includegraphics{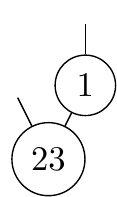}, \;
\includegraphics{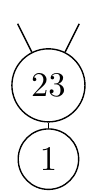}, \;
\includegraphics{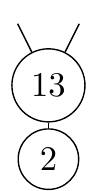}, \;
\includegraphics{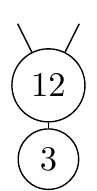}, \;
\includegraphics{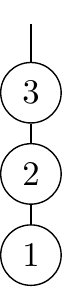}, \;
\includegraphics{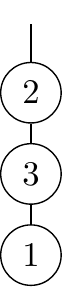}, \;
\includegraphics{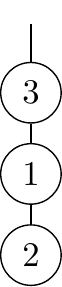}, \;
\includegraphics{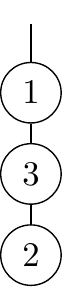}, \,
\includegraphics{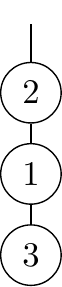}, \;
\includegraphics{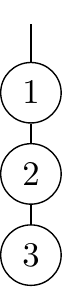}.
\end{align*}
\end{exam}

\begin{figure}
\centering
1325271 
\hspace{2mm}
$\leftrightarrow$
\hspace{2mm}
$\scriptstyle{(\{1,7\}, \{3,5\}, \{2\}, \emptyset, \{4\}, \emptyset, \{6\})}$
\hspace{2mm}
$\leftrightarrow$
\hspace{2mm}
\begin{tikzpicture}[grow=up, scale=0.7, sibling distance=24mm, level distance=12mm]
\node[draw, ellipse]{1\ 7}
    child {     
        node[draw, circle] {\small 6} child{}}
    child{
            node[draw, ellipse] {\small{3 \ 5}}
        		child{node[draw, circle] {\small 4} child{}
        		} 
 	       		child{node[draw, circle] {\small 2}
        		child{} 
        		} 
        }              
;
\end{tikzpicture}
\caption{Bijection between parking functions, weak set compositions and parking trees.\label{FigBijParkFunctTree}}
\end{figure}
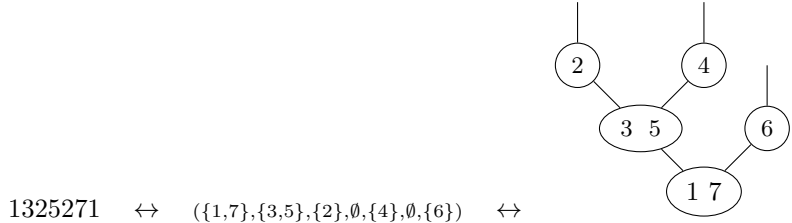

\begin{lemm}\label{BijParkNCP} There exists an explicit bijection between 
the set of parking trees $\mathcal{PT}_n$ on $\llbracket 1;n \rrbracket$ 
and the set $\mathcal{P}_n$ (as in Section~\ref{sec_parkingspace}) which preserves the action of the symmetric group.  
\end{lemm}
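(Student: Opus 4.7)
The plan is to decouple the \emph{shape} of a parking tree from its \emph{labels}, matching the shape with a noncrossing partition via the bijection $\beta$ of Section~\ref{sec:nc}. The key property of $\beta$ is that a block of size $k$ in $\pi\in NC_n$ corresponds to an internal vertex of $\beta(\pi)$ with $k$ children. By the definition of a parking tree, each internal vertex has children count equal to the size of its label, so after forgetting the labels the underlying rooted plane tree on $n+1$ vertices lies in the image of $\beta$.

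Concretely, given $T\in\mathcal{PT}_n$, I would first strip the labels to obtain a rooted plane tree $T_0$ and set $\pi:=\beta^{-1}(T_0)\in NC_n$; the internal vertices of $T$ are then in a natural size-preserving correspondence with the blocks of $\pi$. Reattaching the original labels defines a bijection $\lambda$ from $\pi$ onto a set partition $\rho$ of $\llbracket 1;n\rrbracket$ with $|\lambda(B)|=|B|$, so $(\pi,\rho,\lambda)\in\pp_n$. The inverse map is evident: given $(\pi,\rho,\lambda)$, label the internal vertex of $\beta(\pi)$ corresponding to each block $B$ by $\lambda(B)$. Composing with the bijection of Proposition~\ref{prop_pppairs} yields the desired element of $\mathcal{P}_n$.

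For equivariance, the natural $\mathfrak{S}_n$-action on parking trees (by relabeling the internal-vertex labels) corresponds under this construction to the action on $\pp_n$ from Equation~\eqref{act_ontriples}, and the step $\pp_n\to\mathcal{P}_n$ is equivariant by Propositions~\ref{prop:equivariantbij} and~\ref{prop_pppairs}. The only nontrivial point is verifying that $\beta$ actually matches block sizes with children counts, which is a standard (and essentially built-in) feature of the classical noncrossing-partition / plane-tree correspondence. A more computational alternative passes through parking functions via a depth-first traversal of the tree (visiting children in reverse order and dropping the final leaf), as suggested by Figure~\ref{FigBijParkFunctTree}, and then invokes Proposition~\ref{bij_2ncp_park}; both routes give the same bijection.
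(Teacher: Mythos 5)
Your proof is correct and takes essentially the same approach as the paper: the paper's inductive construction --- decompose $\pi$ by its block $E$ containing $1$, recurse on the noncrossing partitions in the gaps between consecutive elements of $E$, and label the root of the resulting tree by $\sigma\cdot E$ --- is exactly the $\beta$-recursion you invoke, combined with the label attachment. You merely present it more modularly, decoupling the tree shape (handled by $\beta$) from the vertex labels (handled by $\lambda$) and passing through $\pp_n$ before applying Proposition~\ref{prop_pppairs}, but the underlying bijection is the same one the paper builds directly.
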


\begin{proof}
  We construct a family of bijections 
  $\phi_n : \mathcal{P}_n \rightarrow \mathcal{PT}_n$ by induction on 
  $n\geq 0$. When the index of $\phi_k$ will be obvious, we will just 
  omit it.
  
  If $n=0$, $\mathcal{P}_n$ contains only one pair 
  $(\{\}, \operatorname{id} \in \mathfrak{S}_0)$ and $\mathcal{PT}_n$ contains only 
  one tree, which is the empty parking tree. The map $\phi_0$ is 
  defined by associating the elements in both sets.
  
  Suppose that we have constructed bijections $\phi_k$ for $k<n$. We 
  now construct $\phi_n$. Let us consider a pair $p=(\pi, \sigma)$ in 
  $\mathcal{P}_n$. The noncrossing partition $\pi$ admits the 
  following decomposition, each "arch" of the non-crossing partition 
  deliminating a non-crossing partition: 
  \begin{itemize}
      \item a part $E$ (the first of $\pi$) of size $\ell$
      \item $\ell$ (possibly empty) non-crossing partitions $\pi_1, \ldots, \pi_{\ell}$ of size strictly less than $n$
  \end{itemize}
  By induction hypothesis, we can associate to each $(\pi_k, \sigma_{|\pi_k})$ a parking tree $T_k$. We now construct a tree $T$, whose root is given by $\sigma \cdot E$ and has $\ell$ children given from left to right by $T_1, \ldots, T_k$.
  Moreover it is a parking tree as the children of the root are parking trees and the root has a label of cardinality $\ell$ and exactly $\ell$ children.
    \end{proof}
\begin{exam}
The decomposition associated with the noncrossing partition drawn on  \eqref{pfp_example} is a part $\{2,3,4,7\}$ and four noncrossing partitions $\{\{5,8\},\{1\}\}$, $\emptyset$, $\{6\}$ and $\emptyset$. The corresponding parking tree is then:

\begin{center}
\begin{tikzpicture}[grow=up, scale=0.7, sibling distance=24mm, level distance=12mm]
\node[draw, ellipse]{\small 2 \ 3 \ 4\ 7}
    child{}
    child{
            node[draw, circle] {6}
            child{}
        }   
    child{}
    child {     
        node[draw, ellipse] {5 \ 8} 
        child{ node[draw, circle] {1} child{}
        }
        child{}}
        
;
\end{tikzpicture}
 \end{center}
\end{exam}

\begin{lemm} \label{BijParkFunctTree}
There exists an explicit bijection between 
the set of parking trees $\mathcal{PT}_n$ on $\llbracket 1;n \rrbracket $ 
and parking functions
which preserves the action of the symmetric group.  
\end{lemm}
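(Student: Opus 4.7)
The plan is to obtain the bijection as the composition of three $\mathfrak{S}_n$-equivariant bijections that have already been established in this section, namely
\[
\mathcal{PT}_n \;\overset{\text{Lem.~\ref{BijParkNCP}}}{\longleftrightarrow}\; \mathcal{P}_n \;\overset{\text{Prop.~\ref{prop_pppairs}}}{\longleftrightarrow}\; \pp_n \;\overset{\text{Prop.~\ref{bij_2ncp_park}}}{\longleftrightarrow}\; \{\text{parking functions of length } n\}.
\]
Lemma~\ref{BijParkNCP} already delivers the equivariant bijection between $\mathcal{PT}_n$ and $\mathcal{P}_n$. Proposition~\ref{prop_pppairs} identifies $\mathcal{P}_n$ with $\pp_n$ by taking the minimal-length coset representatives in the bijection of Proposition~\ref{prop:equivariantbij}, so the $\mathfrak{S}_n$-action on $\pp_n$ transports in a canonical way to $\mathcal{P}_n$. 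Proposition~\ref{bij_2ncp_park} then gives the equivariant bijection between $\pp_n$ and parking functions of length $n$. Composing the three yields the desired bijection.

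To make the composite explicit, I would unwind it. Given a parking tree $T \in \mathcal{PT}_n$, the proof of Lemma~\ref{BijParkNCP} assigns to each internal node $v$ of $T$ a block $B_v$ of an underlying noncrossing partition $\pi \in NC_n$, whose minimum $m_v := \min(B_v)$ is the position visited when the interleaved depth-first traversal first reaches $v$. The label $L_v$ of $v$ is precisely $\lambda(B_v)$ for the triple $(\pi,\rho,\lambda) \in \pp_n$ attached to $T$. Then, by Proposition~\ref{bij_2ncp_park}, the associated parking function is $w_1 \dots w_n$ with
\[
  w_i = m_v \qquad \text{whenever } i \in L_v.
\]
One can then check that this is exactly the correspondence already drawn in Figure~\ref{FigBijParkFunctTree}.

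No genuine obstacle is expected: each building block is a bijection, and each of the three steps is equivariant (the only point worth a brief check is that the action on $\mathcal{P}_n$ used in Lemma~\ref{BijParkNCP} coincides with the one transported from $\pp_n$ via Proposition~\ref{prop_pppairs}, which is immediate from the defining formula \eqref{act_ontriples} since the $\mathfrak{S}_n$-action only affects $\rho$ and $\lambda$, and the canonical representative depends only on $\pi$). The content of the lemma is thus essentially the coherence of the diagram formed by the existing bijections.
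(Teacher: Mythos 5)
Your proof is correct but takes a genuinely different route from the paper's. You obtain the bijection by composing three equivariant bijections already established: $\mathcal{PT}_n \leftrightarrow \mathcal{P}_n$ (Lemma~\ref{BijParkNCP}), $\mathcal{P}_n \leftrightarrow \pp_n$ (Proposition~\ref{prop_pppairs}, transporting the action of \eqref{act_ontriples} through the coset-representative identification), and $\pp_n \leftrightarrow \{\text{parking functions}\}$ (Proposition~\ref{bij_2ncp_park}). The one subtle point — that the action on $\mathcal{P}_n$ induced from $\pp_n$ is the one implicit in Lemma~\ref{BijParkNCP} — you address correctly: if $\tau'$ is the minimal coset representative of $\sigma\tau\cdot\mathfrak{S}_n(\pi)$, then $\tau'\cdot B = \sigma\tau\cdot B$ for every block $B \in \pi$, so the node labels produced by $\phi_n$ transform as required. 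The paper, by contrast, builds the bijection \emph{directly}: it reads the parking function as a weak set composition $(E_1,\dots,E_{n+1})$ and grafts vertices in prefix order, using the parking inequality $\sum_{j\leq i}|E_j|\geq i$ to show the greedy grafting never fails; the inverse reads labels off the tree in prefix order. Your composition and the paper's construction produce the same map (e.g.\ both reproduce Figure~\ref{FigBijParkFunctTree}). What the paper's direct construction buys, and what your route does not as readily supply, is the ability to adapt the argument wholesale to $k$-parking trees and Yan's $k$-parking functions in Section~\ref{ktree}, where there is no ready analogue of the intermediate poset $\pp_n$ to pass through; what your route buys is economy and a transparent demonstration that all the representations of Figure~\ref{fig:4park} cohere.
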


\begin{proof}
First, let us recall that parking functions can be represented as (weak) set compositions satisfying some properties, using the same bijection as in Proposition~\ref{bij_2ncp_park}. 

To a parking function $f=w_1\ldots w_n$ of length $n$ with $k:=\max(w_i, 1 \leq i \leq n)$ can be associated a weak set composition of $\{1, \ldots, n\}$,  $\varphi(f)=(E_1, \ldots, E_k)$ given by $E_i=\{j | w_j=i\}$. The property of $f$ being a parking function can immediately be translated into $\sum_{i=1}^m | E_i| \geq m$, for $1 \leq m \leq k$. Reciprocally, to a weak set composition of $\{1, \ldots, n\}$ $c=(E_1, \ldots, E_k)$ satisfying $\sum_{i=1}^m | E_i| \geq m$, for $1 \leq m \leq k$, can be associated a word of length $n$, $\psi(c)=w_1\ldots w_n$, defined by $w_i=j$ for any $i \in E_j$. The conditions on the cardinality of the sets $E_i$ ensure that the word $\psi(c)$ is indeed a parking function.

To prove the above lemma, we will use the representation of parking functions in terms of weak set compositions.

\begin{description}
\item[From weak set compositions to parking trees : ]
Let us consider a weak set composition of $\llbracket 1 ; n \rrbracket $, $c=(E_1, \ldots, E_{n+1})$ satisfying $\sum_{i=1}^m | E_i| \geq m$, for $1 \leq m \leq n$ (it is always possible to consider weak set compositions of $\llbracket 1 ; n \rrbracket $ of length $n+1$ up to adding some empty sets at the end of the composition). We consider the parking tree $T(c)$ obtained recursively as follows:
\begin{itemize}
    \item the vertices of $T(c)$ are labelled by the sets $E_i$ (hence $T(c)$ has $n+1$ vertices)
    \item the root is labelled by $E_1$ (which is non-empty thanks to the hypothesis)
    \item if $E_i$ is not empty, $E_{i+1}$ is the leftmost child of $E_i$
    \item if $E_i$ is empty, $E_{i+1}$ is grafted as the rightmost child  of the closest ancestor $E_j$ of $E_i$ ($j \leq i$) having strictly less than $|E_j|$ children.  
\end{itemize}
Here, we consider that a node is the \emph{last child of its parent} if it is the $p$th child and the label of its parent as $p$ elements. 

We have to check that:
\begin{enumerate}
    \item it is always possible to graft a vertex as mentioned above,
    \item for each internal vertex, the cardinality of its label coincides with the number of its children.
\end{enumerate}

For the first point, we have to check that such an ancestor $E_j$ exists. By construction, the child of a node $N$ is introduced only when the lineage of its older siblings is full. If every $E_j$ ancestor of $E_i$ is the last child of its parent, it means that every $E_j$, for $j \leq i$ has exactly $|E_j|$ children. Hence the number of vertices of the constructed tree is:
\begin{itemize}
    \item $i$, since vertices are labelled by $E_1, \ldots, E_i$
    \item $\sum_{j=1}^i |E_j|+1$, since every vertex but the root is the child of another vertex. 
\end{itemize}
By hypothesis, $\sum_{j=1}^i |E_j|+1 >i$: we then get a contradiction and the existence of $E_j$ is proven. 

For the second point, the grafting algorithm ensures that an internal vertex does not have more children than the number of elements in its label. Denoting by $c_i$ the number of children of vertex $E_i$, we have $c_i\leq |E_i|$. Moreover, every $E_i$ are grafted so the tree has $n+1$ nodes, i.e. $n+1=\sum_{i=1}^{n+1} c_i +1 \leq \sum_{i=1}^{n+1} | E_i|+1 \leq n+1$. Hence every inequalities are equalities and every $E_i$ has exactly $|E_i|$ children. As it is clear that the obtained graph is a rooted planar tree, $T(c)$ is then a parking tree.

\item[From parking trees to weak set compositions : ]

Reading nodes of a parking tree $t$ in the prefix order gives a weak set composition $C(t)=(E_1, \ldots, E_{n+1})$. The node labelled by $E_m$ is the $m$th visited node, for $2 \leq m \leq n+1$. When we visit $E_m$ for the first time, all visited nodes ($E_m$ included) are $E_1$ or descendants of some $E_i$, for $1 \leq i < m$. This leads to the inequality $\sum_{i=1}^{m-1} | E_i| +1 \geq m$, hence the set composition satisfies the desired hypothesis.
\end{description}
Finally, the map $C$ is the inverse bijection of $T$. Indeed, the grafting order of the map $T$ corresponds to the prefix order. See Figure \ref{FigBijParkFunctTree} for an example of these bijections.
\end{proof}

\begin{rema}
Suppose that $L$ (and consequently, any of its subsets) is endowed with a total order.  Since parking trees are planar, we can for each vertex canonically associate each outgoing edge to an element in the vertex. In this case we can associate each internal vertex that is not the root to an element of $L$ from which it descends. This defines a function $f : L \mapsto L$, that is not defined on the elements of the root, constant on each node of the tree, and which is nilpotent. Conversely, for each nilpotent partial function $f$ on $L$, we can recursively build a parking tree: the root is the set of elements on which $f$ is not defined, and for each $i \in L$, its descendant is labelled by $f^{-1}(i)$. Note that if $|\operatorname{Im}(f)| = r$, then the resulting tree has $r+1$ vertices. This bijection falls in line with Laradji and Umar's in \cite{laradjiumar}, in which they enumerate partial nilpotent functions with a fixed image set of size $r$.
\end{rema}

The covering relations on parking trees corresponding to those of the $2$-partition poset are then given as follows.  From a parking tree $T$, another one $U$ such that $T\lessdot U$ is obtained from $T$ by a sequence of operations, represented on Figure~\ref{fig:exple_cover}:
\begin{itemize}
\item choose a vertex $A$ and  partition it into two (non empty) sets $A_1$ and $A_2$,
\item deconcatenate the list of its (possibly empty) subtrees into three lists $L_1$, $L_2$ and $L_3$, such that $L_1$ is non empty and $L_2$ and $A_2$ have the same cardinality,
\item remove from the tree the elements of $A_2$ and $L_2$
\item add the elements of $A_2$ to the rightmost leaf of $A_1$ in $L_1$
\item add $L_2$ as the list of children of $A_2$.
\end{itemize} 

For the leftmost tree in Figure \ref{fig:exple_cover}, $A_1=\{1,5,6\}$ and $A_2=\{2\}$, the possible lists are $(L_1, L_2, L_3) = ((\textcolor{blue}{\emptyset}),(\includegraphics{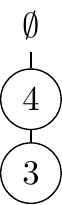}),(\textcolor{red}{\emptyset}, \textcolor{green}{\emptyset}))$, $((\textcolor{blue}{\emptyset},\includegraphics{34.pdf}),(\textcolor{red}{\emptyset}), (\textcolor{green}{\emptyset}))$ or $((\textcolor{blue}{\emptyset},\includegraphics{34.pdf},\textcolor{red}{\emptyset}), (\textcolor{green}{\emptyset}),())$, which gives each of the other trees in Figure \ref{fig:exple_cover}.

\begin{figure}[h!tp]
    \centering
    \includegraphics[scale=0.7]{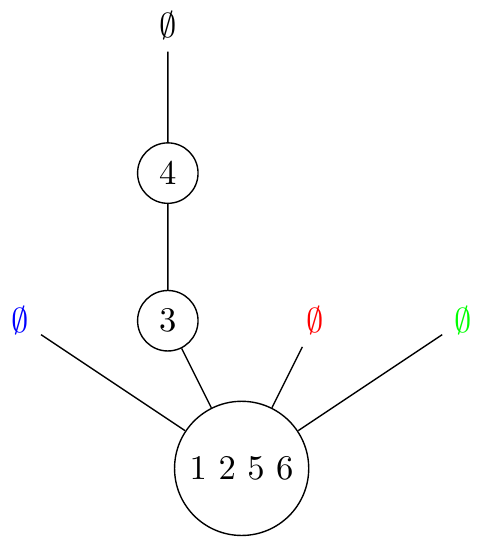}
    \qquad
    \includegraphics[scale=0.7]{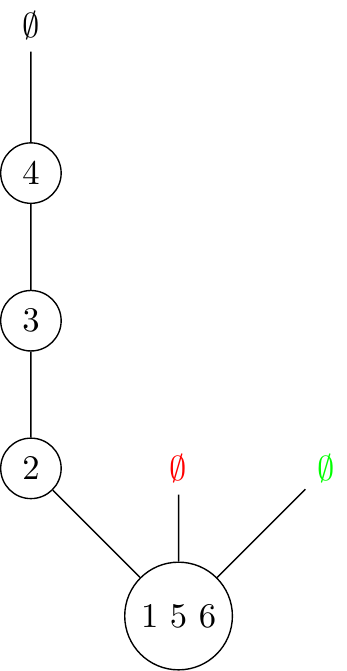}
    \qquad
    \includegraphics[scale=0.7]{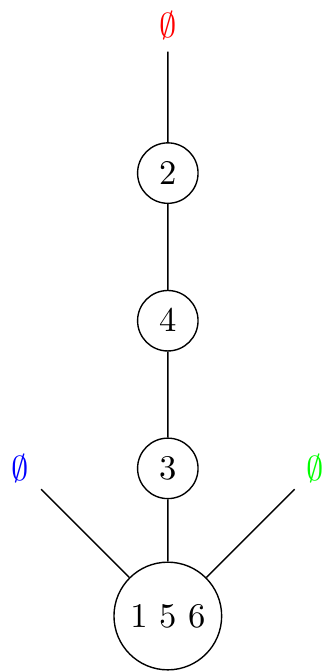}
    \qquad
    \includegraphics[scale=0.7]{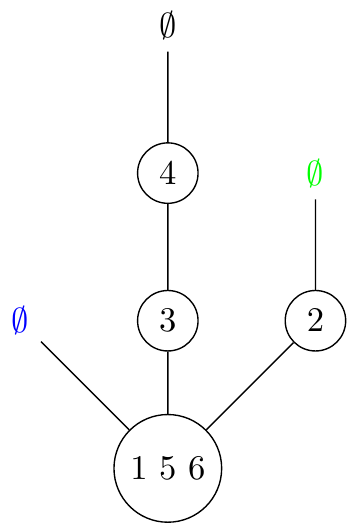}
    \caption{A parking tree and some parking trees covering it.\label{fig:exple_cover}}

\bigskip

\includegraphics[scale=0.8]{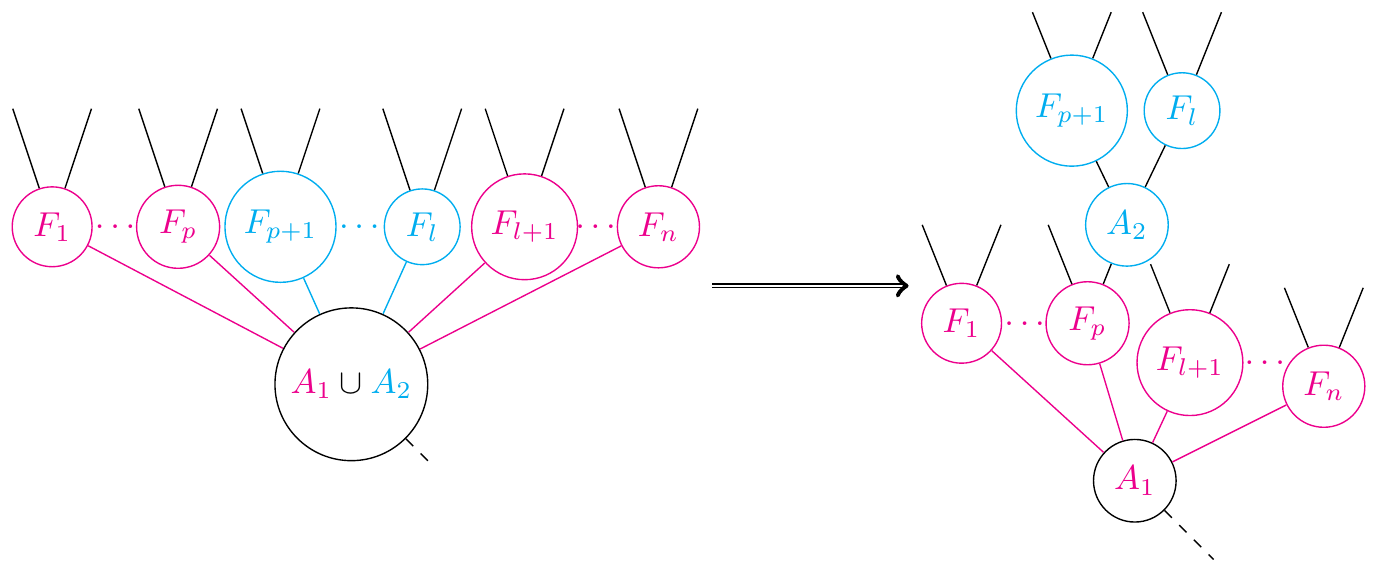}
\caption{Covering relations in parking trees poset.\label{fig_covers}}
\bigskip

\includegraphics[scale=0.8]{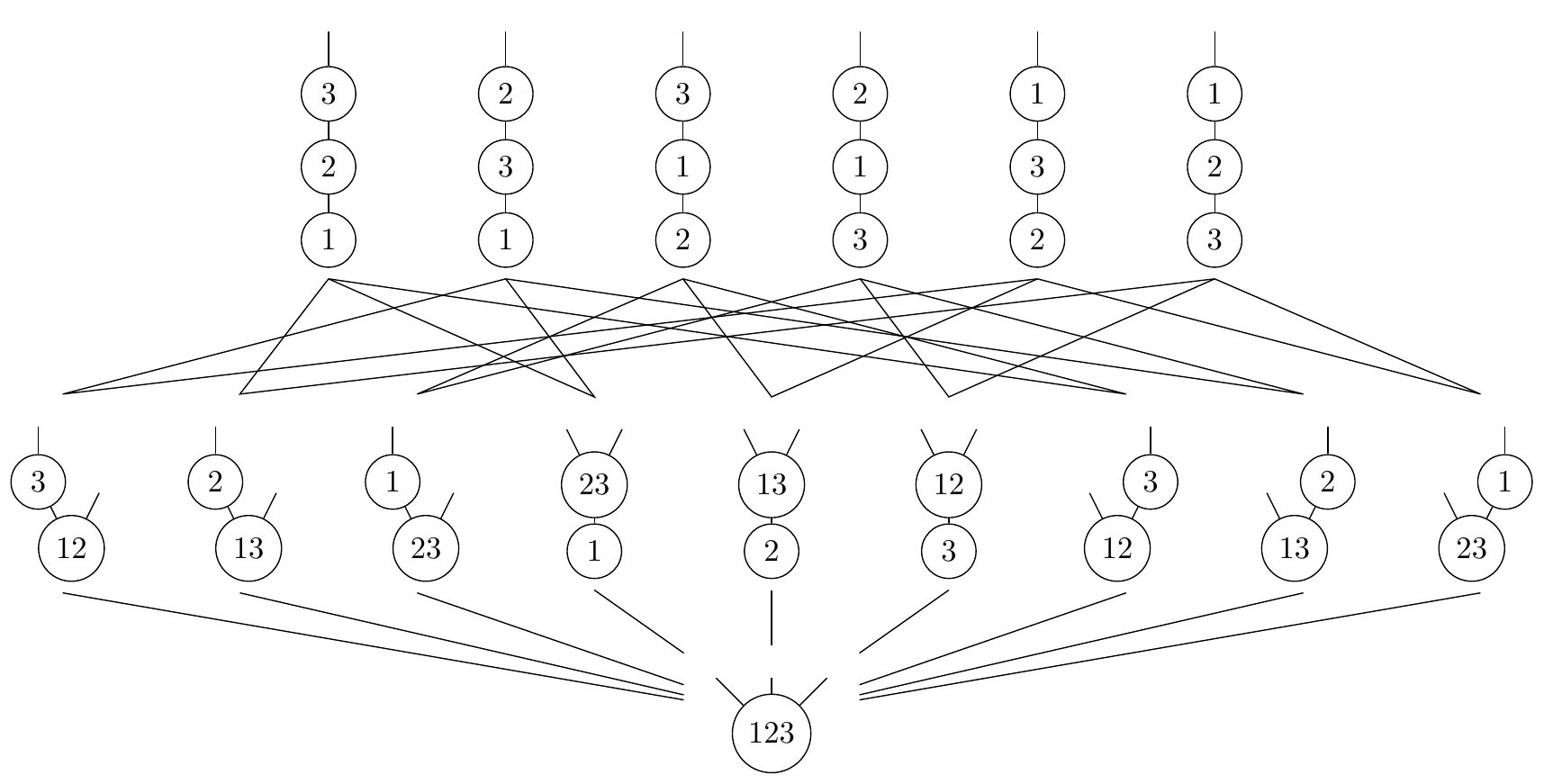}
\caption{The poset of parking trees on three elements.\label{fig_parkpos3}}
\end{figure}

\subsection{Face poset of the permutahedron} \label{permutohedron}

We study in this subsection the restriction of parking poset to elements of the poset such that the associated non-crossing partitions are interval partitions. In terms of parking trees, it corresponds to parking trees which are right combs. We detail below this bijection:

\begin{lemm} \label{Lem_permutahedron}
  The following objects are in bijection:
  \begin{itemize}
      \item 2-partitions whose associated non-crossing partition is an interval partition in $k$ parts,
      \item parking trees which are right combs with $k$ inner vertices,
      \item composition of $k$ non-empty sets.
  \end{itemize}
\end{lemm}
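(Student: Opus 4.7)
The plan is to use set compositions of $\llbracket 1;n\rrbracket$ into $k$ nonempty blocks as the central bridge, and exhibit two direct bijections, one to each of the other two sets.

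First, I would make the bijection between (i) and (iii) explicit. Given an ordered set composition $(A_1,\dots,A_k)$ with sizes $(n_1,\dots,n_k)$, one associates the unique interval partition $\pi$ whose blocks, read in increasing order, are $B_i=\llbracket n_1+\dots+n_{i-1}+1 ; n_1+\dots+n_i\rrbracket$, together with $\rho=\{A_1,\dots,A_k\}$ and $\lambda(B_i)=A_i$. This clearly defines an element of $\pp_n$ whose underlying noncrossing partition is an interval partition with $k$ parts, and conversely any such 2-partition yields back a set composition by reading $(\lambda(B_1),\dots,\lambda(B_k))$ in the natural left-to-right order of the blocks of $\pi$. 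Both maps are obviously inverse to each other.

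For (i) $\leftrightarrow$ (ii) I would reuse the bijection of Lemma~\ref{BijParkNCP}. When $\pi$ is an interval partition with first block $B_1 = \llbracket 1; m\rrbracket$, the inductive decomposition in that proof uses $\ell=m$ and produces sub-partitions $\pi_1,\dots,\pi_m$ indexed by the $m-1$ interior gaps of $B_1$ together with the tail $\llbracket m+1;n\rrbracket$. Since $B_1$ is an interval, its interior gaps contain no elements, whence $\pi_1=\dots=\pi_{m-1}=\emptyset$, and $\pi_m$ is itself the interval partition of $\llbracket m+1;n\rrbracket$ with $k-1$ blocks induced by $\pi$. The resulting parking tree therefore has a root with $m$ children, the first $m-1$ of which are leaves, and whose last child is (by induction on $k$) a right comb with $k-1$ internal vertices; hence the tree is a right comb with exactly $k$ inner vertices, whose labels from top to bottom along the rightmost spine are $\lambda(B_1),\dots,\lambda(B_k)$. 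Conversely, any right comb with $k$ inner vertices labelled by a set partition of $\llbracket 1;n\rrbracket$ must come from such a $(\pi,\rho,\lambda)$, since the recursive procedure can be reversed step by step.

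Composing these two bijections, a set composition $(A_1,\dots,A_k)$ corresponds to the right comb whose rightmost spine carries the labels $A_1,\dots,A_k$ in that order, each node having its label's cardinality as number of children. The only genuine verification required is the reading of the inductive decomposition in Lemma~\ref{BijParkNCP} specialized to interval partitions, and this reduces to the essentially tautological observation that consecutive elements of an interval block bound no other element of the partition; the rest of the argument is bookkeeping. I expect no substantial obstacle, since each of the three sets is naturally described by the same data (the ordered list of labels of size $n_1,\dots,n_k$ forming a set partition of $\llbracket 1;n\rrbracket$).
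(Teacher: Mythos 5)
Your proof is correct, and the bijection you arrive at is the same one as in the paper: a right comb carries on its right spine the ordered labels $\lambda(B_1),\dots,\lambda(B_k)$, which is exactly the set composition, and these labels together with their sizes recover the interval partition. The genuine difference is in how the (i)\,$\leftrightarrow$\,(ii) correspondence is established. The paper builds the right comb by a \emph{bottom-up} induction on $k$: it considers $\tilde\pi=\{B_1,\dots,B_{k-1}\}$, applies the inductive hypothesis to get a right comb, and then grafts the node for $B_k$ at the rightmost leaf, arguing from the value associated to $B_k$ (namely $\sum_{i<k}b_i$) that this is the only possible grafting position. You instead specialize the already-established recursion of Lemma~\ref{BijParkNCP}, which is \emph{top-down}: you observe that when $B_1=\llbracket 1;m\rrbracket$ is an interval, the $m-1$ interior gaps are empty, so the root has $m-1$ leaf children followed by a single subtree that is, by induction, a right comb on $\{B_2,\dots,B_k\}$. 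Your route is a little more economical in that it reuses the paper's own machinery rather than redoing the construction, and it makes the inverse direction transparent (a right comb forces the interior gaps of the first block to be empty, hence the first block to be an interval). The paper's route, on the other hand, avoids having to discuss gaps at all, working instead with the ``value'' of the last block. Both are valid; they are two inductive readings of the same bijection, peeling from the top versus from the bottom of the comb.

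Two small points worth tightening if you write this up. First, your converse in (i)\,$\leftrightarrow$\,(ii) is asserted rather than argued; as you note, it is immediate since Lemma~\ref{BijParkNCP} is a bijection, so it suffices to show that the preimage of the right combs is exactly the set of interval 2-partitions, and your forward induction essentially contains the argument (a non-interval first block would give a nonempty interior gap, hence a non-leaf child besides the rightmost). Second, your direct description of (i)\,$\leftrightarrow$\,(iii) is the same as the paper's map $\mu$ composed with reading off $(\lambda(B_1),\dots,\lambda(B_k))$, so no issue there.
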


\begin{proof}
We first consider 2-partitions whose associated non-crossing partitions are interval partition. Let us consider such a 2-partition $(\pi, \rho, \lambda)$. If $B_1, \ldots, B_k$ are the blocks of $\pi$, ordered according to their minimal elements and $b_i$ the size of $B_i$,we then have $\max B_i < \min B_{i+1}$, for any $i$ between $1$ and $k-1$. Moreover, the associated parking function has its value in $\{1, b_1+1, \ldots, \sum_{i=1}^{j} b_i +1, \ldots, \sum_{i=1}^{k-1} b_i +1\}$. Let us show by induction that the associated parking tree is a right comb. If $k=1$, the 2-partition is trivial and has only one part. The associated parking tree has only one node containing all the labels $\{1, \ldots, n\}$ : it is a right comb. Let us now suppose the property true up to $k-1$ and consider a 2-partition with $k$ parts $(\pi, \rho, \lambda)$, $\pi$ being an interval non-crossing partition. Considering the non-crossing partition $\tilde{\pi}=\{B_1, \ldots, B_{k-1}\}$, we can associate to it a 2-partition $(\tilde{\pi}, \{\lambda(B_1), \ldots, \lambda(B_{k-1})\}, \lambda)$ on which we can apply the induction hypothesis: the associated parking tree is a right comb. To build the parking tree associated with $\pi$, we have to add the vertex associated with the value of $B_k$ and containing labels of $\lambda(B_k)$. As the non-crossing partition is an interval partition, the value associated with $B_k$ is $\sum_{i=1}^{k-1} b_i$. The only way for the node associated with $B_k$ to be associated with this value is to graft it on the rightmost leaf of the tree: we then get a right comb. Let us denote by $\phi$ the map sending 2-partitions whose associated non-crossing partition is an interval partition to parking trees which are right combs.

When reading nodes on the right branch of a parking tree which are a right comb, we get a composition of sets which are known to be equivalent to surjection. We denote this map by $\psi$. This map is clearly a bijection. If we denote by $C=(C_1, \ldots, C_k)$ the composition, the $i$th set of the composition is the $C_i$th child in the parking tree of the $i-1$th set of the composition, the other children being empty.

To finish the proof, we now have to describe the map $\mu$ which associates to a composition of sets $C=(C_1, \ldots, C_k)$ a 2-partition whose associated non-crossing partition is an interval partition. The set partition is obtained by forgetting the order of the composition. The non-crossing partition is $\{\{1, \ldots, |C_1|\}, \{|C_1|+1, \ldots, |C_1|+|C_2|\}, \ldots, \{\sum_{p=1}^{k-1}|C_p|+1, \ldots, \sum_{p=1}^{k}|C_p|\}\}$. The map $\lambda$ is given by associated to the part which is the $i$th of $C$ the $i$th part of the non-crossing partition. 

The map $\phi \circ \mu \circ \psi$ sends a parking tree which is a right comb to a parking tree which is a right comb. Moreover, the $k$th node of the tree is send by $\psi$ to the $k$th set of the composition which is send by $\mu$ to the $k$th part of the noncrossing partition, itself send by $\phi$ to the $k$th node of the parking tree. Hence, $\phi \circ \mu \circ \psi$ is the identity and $\phi$ and $\mu$ are also bijections. 
\end{proof}

Compositions of sets are known to label the faces of the permutahedron. The link between parking trees and the permutahedron goes deeper as the order on parking tree studied in this article is exactly the inclusion order of the permutahedron.

\begin{prop}
  The subposet of $\pp_n$ of parking trees which are right combs is isomorphic to the face poset of the permutahedron. 
\end{prop}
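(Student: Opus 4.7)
The plan is to combine Lemma~\ref{Lem_permutahedron}, which already identifies the underlying set of right-comb parking trees with set compositions of $\llbracket 1;n\rrbracket$, with a verification that the order inherited from $\pp_n$ matches the refinement order on set compositions, since the latter is the standard model of the face poset of the permutahedron on $n$ points.

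To make this precise, I would first use the bijection $\mu$ from Lemma~\ref{Lem_permutahedron}: to a set composition $C=(C_1,\ldots,C_k)$ corresponds the $2$-partition $\mu(C)=(\pi,\rho,\lambda)$ whose non-crossing partition $\pi$ has the interval blocks $B_1,\ldots,B_k$ (in order) with $|B_i|=|C_i|$ and $\lambda(B_i)=C_i$. Since the underlying bijection between the two sets is settled, it remains only to check that $\mu$ is an isomorphism of posets, which I would do at the level of cover relations.

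The main step is the following analysis of covers. A cover $(\pi,\rho,\lambda)\lessdot(\pi',\rho',\lambda')$ in $\pp_n$ amounts to refining $\pi$ by splitting exactly one block $B_i$ into two sub-blocks. When $\pi$ is an interval partition, these sub-blocks are forced to be two consecutive sub-intervals $B_i'$ and $B_i''$, so $\pi'$ is still an interval partition (the split composition still has right-comb shape). The compatibility condition of Definition~\ref{defi_edelman} then requires $\lambda(B_i)=\lambda'(B_i')\uplus\lambda'(B_i'')$, that is, a partition $C_i=C_i'\uplus C_i''$ into two non-empty subsets with $\lambda'(B_i')=C_i'$ and $\lambda'(B_i'')=C_i''$; conversely any such splitting yields a valid cover. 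Under $\mu^{-1}$, this is exactly the cover relation on set compositions that replaces one part $C_i$ by two consecutive non-empty parts $(C_i',C_i'')$ with $C_i=C_i'\uplus C_i''$, which is the standard cover relation for the refinement poset of set compositions, hence for the face poset of the permutahedron (up to the usual convention on the direction of the order).

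The only real obstacle is the bookkeeping in the previous step: one must check that the compatibility condition in Definition~\ref{defi_edelman} imposes no additional constraint on the partition $C_i=C_i'\uplus C_i''$ beyond non-emptiness of the two pieces, so that every cover relation on set compositions is realized by some cover in $\pp_n$. Once this is verified, the matching of covers in both directions gives the claimed isomorphism, and the result follows.
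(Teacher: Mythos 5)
Your overall strategy is the same as the paper's — invoke Lemma~\ref{Lem_permutahedron} for the underlying bijection, then match covering relations — but you phrase the analysis in the $2$-partition model $(\pi,\rho,\lambda)$ rather than the parking-tree model that the paper uses, which is a reasonable and slightly more transparent choice.

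There is, however, a genuine error in your cover analysis. You assert that if $(\pi,\rho,\lambda)\lessdot(\pi',\rho',\lambda')$ in $\pp_n$ and $\pi$ is an interval partition, then the block $B_i$ being split is \emph{forced} to split into two consecutive sub-intervals, so that $\pi'$ is again an interval partition. This is false: already for $n=3$, the interval partition $\{\{1,2,3\}\}$ is covered in $NC_3$ by the non-interval partition $\{\{1,3\},\{2\}\}$, which lifts to covers in $\pp_3$ that leave the subposet. So the subposet of right combs is \emph{not} closed under taking covers in $\pp_n$, and the claim ``so $\pi'$ is still an interval partition'' needs to be dropped. Because of this, matching covers in $\pp_n$ with set-composition covers, as you do, does not directly establish the isomorphism: you are implicitly assuming that covers starting in the subposet stay in the subposet, which fails.

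The fix is to argue about the order relation directly (or equivalently to restrict throughout to covers \emph{within} the subposet and then verify that these generate the induced order, using that the rank $|\pi|-1$ agrees with the rank $|C|-1$ on set compositions). Concretely: if $C\leq C'$ then $\mu(C)\leq\mu(C')$ by construction; conversely, if $\mu(C)=(\pi,\rho,\lambda)\leq(\pi',\rho',\lambda')=\mu(C')$ in $\pp_n$ with both $\pi$ and $\pi'$ interval partitions, then each block of $\pi$ is automatically a union of \emph{consecutive} blocks of $\pi'$ (this is the step that replaces your false claim — it holds because both endpoints are assumed to be interval partitions, not because $\pi$ alone is), and the compatibility condition of Definition~\ref{defi_edelman} then yields exactly the refinement relation $C\leq C'$ on set compositions. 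With this correction the remainder of your argument goes through and matches the paper's conclusion.
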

  
  \begin{proof}
The allowed covering relations are when a vertex $A=A_1 \sqcup  A_2$ is split into $A_1$ and $A_2$, with $A_2$ grafted on the rightmost leaf of the $|A_1|$th child of $A_1$ and have children of $A$ of indices between $|A_1|+1$ and $|A_1|+|A_2|$ as  children. As all children but the last one are empty, the covering relations can be simplified in choosing a node $A$ with rightmost child $C$ and replacing the right comb  $A-C$ by a right comb $A_1-A_2-C$, where $P-C$ is a shortcut for "$C$ is the rightmost child of $P$ in the tree". This covering relation translating in term of composition of set is exactly the one of the face poset of the permutahedron. In this order, the composition $(C_1, \ldots, C_{l-1}, C_l\cup C_{l+1},C_{l+2} \ldots, C_k)$ is covered by the composition $(C_1, \ldots, C_{l-1} C_l,C_{l+1}, C_{l+2}, \ldots, C_k)$
  \end{proof}

\section{Shellability of the parking functions poset}
\label{sec_shell}

Recall that $\hat\pp_n$ is the bounded poset obtained by adding a new maximal element $\hat 1$ on top of $\pp_n$.  The goal of this section is to build a {\it shelling} of this poset.  This shellability property of $\pp_n$ has geometric consequences, in particular it will be used to study its homology in Section~\ref{sec_homology}.  We refer to~\cite{bjorner80,wachs} for general notions of combinatorial topology (shellability, EL-labelings, etc.)

\subsection{Construction of the shelling}

Let $P$ be a ranked poset of length $n$ (this means that all maximal chains of $P$ contains $n+1$ elements).  A maximal chain of $P$ will be denoted $\underline{p}=(p_i)_{0\leq i \leq n}$, where it is understood that $p_0$ is the minimal element, $p_n$ is the maximal element, and $\forall i\in \llbracket 0;n-1\rrbracket , \; p_i\lessdot p_{i+1}$.

\begin{defi}  \label{def_shelling}
  A {\it shelling} of $P$ is a total order $<$ on its maximal chains, such that if $\underline{p}'<\underline{p}$, there exists a maximal chain $\underline{p}''$ such that (seeing chains of the poset as subsets):
  \begin{itemize}
      \item $\underline{p}''<\underline{p}$
      \item $\# ( \underline{p}'' \cap \underline{p}  ) = n-1$,
      \item $\underline{p}' \cap \underline{p} \subset \underline{p}''$.
  \end{itemize}
\end{defi}

Let us mention that more generally, shellings can be defined for simplicial complexes.  The definition above correspond to shellings of the order complex $\Omega(P)$ (see Section~\ref{sec_homology}).

For $x\in P$, its set of upper covers is:
\[
  \operatorname{Up}(x) := \{ y\in P \;:\; y \gtrdot x\}.
\]
Consider a family $(\prec_x)_{x\in P}$ where each $\prec_x$ is a total order on $\operatorname{Up}(x)$.  This data gives rise to a total order $<_{\rm lex}$ on maximal chains of $P$, using lexicographic comparison: a chain $\underline{p'} = (p'_i)_{0\leq i\leq n}$ precedes another chain $\underline{p} = (p_i)_{0\leq i\leq n}$ if $p'_{j+1} \prec_{p_j} p_{j+1}$ where $j$ is the minimal index such that $p'_{j+1} \neq p_{j+1}$.  This kind of structure is natural in the context of lexicographic shellability.  For example, via the notion of {\it recursive atom ordering}~\cite{wachs} there is a criterion on $(\prec_x)_{x\in P}$ that ensures that $<_{\rm lex}$ is a  shelling of $P$.  However we will use another method.

Suppose that the poset $P$ is endowed with an edge-labeling, i.e., a function $\lambda(x,y)$ defined for each cover relation $x\lessdot y$ and taking values in a totally ordered set.  This gives rise to total orders $\prec_x$ as above via the rule $y \prec_x y'$ iff $\lambda(x,y) \leq \lambda(x,y')$ (we do not discuss here the possibility that $y\neq y'$ and $\lambda(x,y)= \lambda(x,y')$).  An interesting class of labelings whose associated lexicographic order are shellings are EL-labelings.  They are defined as labelings such that each interval contains a unique strictly increasing chain, and it is the lexicographically minimal one (see~\cite{wachs} for details).  Björner and Edelman showed that $NC_n$ admit such an EL-labeling (see~\cite{bjorner80}).  Moreover, there exists an EL-labeling $\lambda$ having the additional property that $\forall y,y' \in \operatorname{Up}(x)$, we have $y\neq y' \Rightarrow \lambda(x,y) \neq \lambda(x,'y)$.  Explicitly, we can take $\lambda(x,y) = \bar x ^{-1} \bar y$ (note that $x\lessdot y$ in $NC_n$ implies that $\bar x ^{-1} \bar y \in \mathfrak{S}_n$ is a transposition, and we order them with the lexicographic order on pairs $(i,j)$ such that $i<j$).  We will use such an EL-labeling $\lambda$ of $NC_n$ as an ingredient to show that $\pp_n$ is shellable.

\begin{defi}
The \emph{code} of a permutation $\sigma = \sigma_1 \dots \sigma_n \in\mathfrak{S}_n$ is the word $\gamma(\sigma) = c_n(\sigma) \allowbreak \ldots \allowbreak  c_1(\sigma)$, with $c_{i}(\sigma) = \# \{ \; j \in\llbracket 1;i-1\rrbracket \; | \; \sigma^{-1}(j)>\sigma^{-1}(i) \; \}$. 
\end{defi}

Concretely, $c_i(\sigma)$ is the number of integers smaller than $i$ on its right in $\sigma$. For instance, $\gamma(15324)= 30100$.  In the sequel, it will be convenient to see elements of $\pp_n$ as pairs in $NC_n\times \mathfrak{S}_n$, as explained in Section~\ref{sec_parkingspace}.  Whenever an element of $\pp_n$ is written as a couple rather than a triple, it is understood that we take this convention.  Moreover, if $\phi=(\pi,\sigma)\in\pp_n$, by convention we write $\gamma(\phi) = \gamma(\sigma)$, $c_i(\phi)=c_i(\sigma)$, etc.

\begin{defi} \label{def:orders}
For each $\phi = (\pi,\sigma) \in \pp_n$ with $\rk(\phi)<n-1$, we define a total order $\prec_\phi$ on $\operatorname{Up}(\phi)$ by:
\[
  (\rho,\tau) \prec_\phi (\rho',\tau') 
  \quad \Longleftrightarrow \quad
  \gamma(\tau) < \gamma(\tau'),
  \text{ or }
    \gamma(\tau) = \gamma(\tau')
    \text{ and }
    \lambda( \pi , \rho) < \lambda( \pi , \rho')
\]
where we use the lexicographic order to compare $\gamma(\tau)$ and $\gamma(\tau')$.  If $\rk(\phi)=n-1$, it has a unique cover in $\hat\pp_n$, namely the maximal element $\hat 1$.  In this case, the total order on $\operatorname{Up}(\phi)$ is the obvious one.
\end{defi}

\begin{thm} \label{theo_shelling}
The order $<_{\rm lex}$ defined using the total orders $(\prec_\phi)_{\phi\in \ppp_n}$ from Definition~\ref{def:orders} is a shelling of $\hat\pp_n$.
\end{thm}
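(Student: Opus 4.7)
The plan is to derive the theorem from Lemma~\ref{lemma_for_shelling}, which verifies the crucial two-case local condition on the orders $\prec_\phi$ for elements of $\pp_n$. Specifically, the shelling axiom of Definition~\ref{def_shelling} reduces, for any two maximal chains $\underline{p}' <_{\rm lex} \underline{p}$, to producing a shelling chain $\underline{p}''$; letting $j+1$ be the first position of disagreement and applying the lemma with $x = p_j$, $y = p_{j+1}$, $z = p_{j+2}$, $y' = p'_{j+1}$ yields the key replacement. The boundary case $z = \hat 1$ (where the lemma is not directly applicable, since its statement is within $\pp_n$) is immediate: both $y$ and $y'$ then sit at rank $n-1$ and cover $\hat 1$, so $y'' := y'$ gives a chain obtained from $\underline{p}$ by the single swap $p_{j+1} \leadsto y'$.

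In the generic case $z \in \pp_n$, if the lemma produces a cover $y''$ of $x$ with $y'' \lessdot z$ and $y'' \prec_x y$, then the chain $\underline{p}''$ obtained from $\underline{p}$ by swapping $p_{j+1} \leadsto y''$ differs from $\underline{p}$ in one element, is lex-smaller, and contains $\underline{p}' \cap \underline{p}$ automatically since $p_{j+1} \notin \underline{p}'$. If instead the lemma produces $z'$ covering $y$ with $z' \leq y' \vee z$ and $z' \prec_y z$, one extends $z'$ to a maximal chain $\underline{p}^*$ whose first disagreement with $\underline{p}$ occurs at position $j+2 > j+1$, and re-applies the lemma with the shifted data $(p_{j+1}, p_{j+2}, p_{j+3}, z')$. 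Iterating, the first-disagreement position strictly increases and, being bounded by $n$, the procedure terminates at a direct-replacement step, producing the desired $\underline{p}''$ along with the containment $\underline{p}' \cap \underline{p} \subseteq \underline{p}''$ (since all modifications occur at positions strictly greater than $j$).

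The substantive content lies entirely in Lemma~\ref{lemma_for_shelling}, and its verification is the main obstacle. Depending on how $y' \prec_x y$ is witnessed, one distinguishes the \emph{equal-code} regime $\sigma_{y'} = \sigma_y$ with $\lambda(\pi_x, \pi_{y'}) < \lambda(\pi_x, \pi_y)$ — reducible to the EL-shellability of $NC_n$ on the length-$2$ interval $[\pi_x, \pi_z]$, with Lemma~\ref{lem:unique} providing unique lifts back to $\pp_n$ — from the \emph{strict-code} regime $\gamma(\sigma_{y'}) < \gamma(\sigma_y)$, which demands a fine-grained description of how the minimal coset representative $\sigma$ evolves under a cover $(\pi, \sigma) \lessdot (\pi', \sigma')$: splitting a block $B \in \pi$ into $B_1 \sqcup B_2 \in \pi'$ triggers a controlled re-sort of the values of $\sigma$ on $B$, whose effect on the Lehmer-type statistic $\gamma$ must be tracked carefully and compared across the different splits available within the interval $[x,z]$. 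The parking-tree description (Lemma~\ref{BijParkNCP}), in which covers correspond to the local tree moves of Figure~\ref{fig:exple_cover}, provides the cleanest combinatorial setting in which to carry out this case analysis.
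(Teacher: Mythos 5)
Your derivation of Theorem~\ref{theo_shelling} from Lemma~\ref{lemma_for_shelling} follows the paper's strategy: locate the first disagreement between $\underline{p}'$ and $\underline{p}$, apply the lemma with $(x,y,z,y')=(p_j,p_{j+1},p_{j+2},p'_{j+1})$, and obtain either a one-element swap in $\underline{p}$ (first alternative) or an iterative shift (second alternative). The structure matches, including the handling of the boundary case $z=\hat 1$, and you correctly identify that the real work is in proving Lemma~\ref{lemma_for_shelling} itself.

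One step in the iteration is under-justified. You assert the final containment $\underline{p}'\cap\underline{p}\subseteq\underline{p}''$ holds ``since all modifications occur at positions strictly greater than $j$''. That only guarantees that positions $\leq j$ of $\underline{p}$ survive. But $\underline{p}'\cap\underline{p}$ also contains $p_J$, where $J>j+1$ is the next agreement index ($p'_J=p_J$), and possibly further agreement positions above $J$. The terminal direct-replacement step removes some $p_{m+1}$ from $\underline{p}$, and for the containment one needs $p_{m+1}\notin\underline{p}'$, i.e.\ $m+1<J$. This is true, but it requires the extra observation that the iterating $y'$-sequence remains $\leq p_J$: the bound $z'\leq y'\vee z$ together with $y'_0,z\leq p_J$ propagates inductively, and once $z=p_J$ the second alternative of the lemma would force $z'=p_J$ (same rank, $\leq p_J$), contradicting $z'\prec_y z$; so the first alternative must then hold and the process stops at $m+1=J-1$. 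The paper makes this transparent by constructing each intermediate chain to coincide with $\underline{p}$ on positions $\geq J$, from which the invariant $\underline{p}'\cap\underline{p}\subsetneq\underline{p}''\cap\underline{p}$ — and hence the containment — is immediate at each step. Your increasing-disagreement-position invariant is equivalent in effect but does not deliver the containment without the auxiliary bound $y'_k\leq p_J$.
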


The proof relies on the following lemma, which will be proved in the next sections and is illustrated on Figure~\ref{fig:lem_shell}.
 
\begin{lemm} \label{lemma_for_shelling}
  Let $x,y,y',z\in\pp_n$ such that $x\lessdot y \lessdot z$, $x\lessdot y'$, and $y'\prec_x y$.  Then:
  \begin{itemize}
      \item either there exists $y'' \in \pp_n$ such that $x\lessdot y'' \lessdot z$ and $y''\prec_x y$,
      \item or there exists $z'\in\pp_n$ such that $y\lessdot z' \leq y'\vee z$ and $z' \prec_y z$.
  \end{itemize}
\end{lemm}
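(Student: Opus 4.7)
My plan is to exploit the isomorphism between the interval $[x,z]$ in $\pp_n$ and the interval $[\pi_x,\pi_z]$ in $NC_n$ (Proposition~\ref{lemm:intervproj}) together with the EL-shellability of $NC_n$. Write $x=(\pi_x,\sigma_x)$, $y=(\pi_y,\sigma_y)$, $y'=(\pi_{y'},\sigma_{y'})$, $z=(\pi_z,\sigma_z)$ as pairs via Proposition~\ref{prop_pppairs}. Since $[\pi_x,\pi_z]$ has length two, $[x,z]$ has either one or two atoms, and two easy situations can be handled immediately: if $y'\leq z$, take $y'':=y'$ (which is automatically a cover of $z$ by rank); and if $[\pi_x,\pi_z]$ is a diamond whose second atom $\tilde y\in[x,z]$ satisfies $\tilde y\prec_x y$, take $y'':=\tilde y$.

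In the remaining situations I would produce $z'$ as follows. Let $\pi_{z'}$ be the noncrossing partition obtained from $\pi_y$ by transporting the split $\pi_x\to\pi_{y'}$ onto the relevant block of $\pi_y$. If $y'\vee z$ lies in $\pp_n$, let $z'$ be the unique element of $\pp_n$ below $y'\vee z$ with partition $\pi_{z'}$, provided by Lemma~\ref{lem:unique}; if instead $y'\vee z=\hat 1$, construct $z'$ directly as the element above $y$ obtained by performing the analogous split at the permutation level. Then $z'\leq y'\vee z$ by construction, $y\leq z'$ follows from $\pi_y\leq\pi_{z'}$ and the uniqueness statement of Lemma~\ref{lem:unique} applied to $y'\vee z$ with partition $\pi_y$ (which must return $y$), and the containment $\eta_{y'}(k)\cap\eta_z(k)\subseteq\eta_{z'}(k)$ needed to verify $z'\leq y'\vee z$ via Lemma~\ref{lem:c_eta} is checked blockwise.

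It remains to show $z'\prec_y z$. The main obstacle is understanding precisely how the code $\gamma(\sigma)$ of the permutation part changes under a cover of $\pp_n$; once this is clarified, both sub-cases of Definition~\ref{def:orders} can be handled. In the sub-case where $\gamma(\sigma_{y'})=\gamma(\sigma_y)$ and $\lambda(\pi_x,\pi_{y'})<\lambda(\pi_x,\pi_y)$, the needed inequality $\lambda(\pi_y,\pi_{z'})<\lambda(\pi_y,\pi_z)$ is provided by the EL-property of the chosen labeling of $NC_n$, and $\gamma(\sigma_{z'})=\gamma(\sigma_z)$ propagates through the transported split. In the sub-case where $\gamma(\sigma_{y'})<\gamma(\sigma_y)$ one must verify that the first index at which $\gamma(\sigma_{y'})$ drops below $\gamma(\sigma_y)$ produces an analogous drop of $\gamma(\sigma_{z'})$ below $\gamma(\sigma_z)$. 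Both analyses become delicate when the splits $x\to y$ and $x\to y'$ touch the same block of $\pi_x$ with interlocking label subsets, and will require a careful case-by-case argument using the increasing-block characterization of Proposition~\ref{prop_pppairs} to track how permutation entries rearrange through successive covers.
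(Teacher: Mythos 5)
Your plan roughly mirrors the outline the paper follows (reduce as much as possible to the EL-shellability of $NC_n$, construct a suitable $z'$, and verify $z' \prec_y z$), but the concrete construction and verification you propose do not close the gap, and you acknowledge as much yourself. Three problems stand out.

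First, the construction ``let $\pi_{z'}$ be obtained from $\pi_y$ by transporting the split $\pi_x\to\pi_{y'}$'' is not well-defined precisely in the hard case. When $N(x,y)=N(x,y')$, the block $b$ of $\pi_x$ split to obtain $\pi_{y'}$ has already been split differently in $\pi_y$, so there is no canonical way to carry the split $b\to b'_1\uplus b'_2$ onto $\pi_y$. The paper's treatment of this situation (its Cases~7 and~8) does something genuinely different: it chooses $z'$ so that the block of $y$ being split is the one \emph{not} touched by the cover $y\lessdot z$, and then controls the code of $z'$ by tracking the position of a particular label $l = n - p_0(y)$; it is not produced by ``transporting'' $y'$.

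Second, the crux of the lemma is establishing $z'\prec_y z$, and your proposal explicitly defers this: you write that it ``will require a careful case-by-case argument \dots\ to track how permutation entries rearrange through successive covers.'' That sentence \emph{is} the hard part of the lemma. The order $\prec_x$ compares codes $\gamma(\cdot)$ lexicographically before falling back to the $NC_n$ EL-labeling, so one needs a precise understanding of how $\gamma$ behaves under covers and joins. The paper develops this machinery through several lemmas (that $\phi\leq\phi'\Rightarrow\gamma(\phi)\leq\gamma(\phi')$; that equal codes imply the join is in $\pp_n$ with the same code; the behaviour of $p_0$ under joins; the quantities $N(\cdot,\cdot)$ and $m(\cdot,\cdot)$ and how they interact in rank-two intervals) before the case analysis can even begin. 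None of that is set up in your sketch, and without it the claim $z'\prec_y z$ is not verifiable.

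Third, a smaller point: a rank-two interval $[\pi_x,\pi_z]$ in $NC_n$ is isomorphic to either $NC_2\times NC_2$ or $NC_3$, so it has two or three atoms, never one; your ``one or two atoms'' is off, and the three-atom case is not covered by your preliminary reductions. This is easily repaired, but it illustrates that the easy reductions do not narrow the problem as much as suggested: the genuinely difficult configurations where both covers split the same block of $\pi_x$ survive all of your preliminary steps, and for those your argument is a placeholder rather than a proof.
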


\begin{figure}
    \centering
\begin{tikzpicture}[very thick]
   \tikzstyle{ver} = [circle, draw, fill, inner sep=0.5mm];
   \node[ver](O) at (0,0){} node[below=2pt of O.south east] {$\hat{0}$};
   \node[ver](x) at (0,2){} node[left=2pt of x.south east] {$x$};
   \node[ver](y) at (1,3){} node[right=2pt of y.south east] {$y$};
   \node[ver](y') at (-1,3){} node[left=2pt of y'.south east] {$y'$};
   \node[ver](y'') at (0,3){} node[above left=2pt of y''.south east] {$y''$};
   \node[ver](z) at (1,4){} node[right=2pt of z.south east] {$z$};
   \node[ver](z') at (0,4){} node[left=2pt of z'.south east] {$z'$};
   \node[ver](1) at (0,6){} node[left=2pt of 1.south east] {$y' \vee z $};
      \node[ver](max) at (0,7){} node[above=2pt of max.south east] {$\hat{1}$};
   \draw[decorate, violet, decoration=zigzag] (O)--(x);
   \draw[decorate, violet, decoration=zigzag] (1)--(max);
   \draw[magenta] (x)--(y');
   \draw[blue] (x)--(y)--(z);
   \draw[orange] (x)--(y'')--(z);
   \draw[decorate, magenta, decoration=zigzag] (y')--(1);
   \draw[decorate, blue, decoration=zigzag] (z)--(1);
   \draw[decorate, green, decoration=zigzag] (z')--(1);
   \draw[green] (z')--(y);
\end{tikzpicture}
    \caption{Hasse diagram illustrating  Lemma~\ref{lemma_for_shelling}, where straight edges are cover relations, wavy edges are general relations and the existence of magenta, blue and violet edges implies the one of either green or orange edges.}
    \label{fig:lem_shell}
\end{figure}
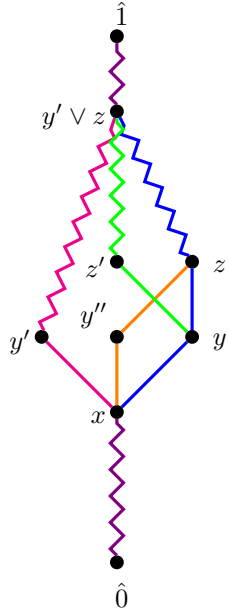
 
\begin{proof}[Proof of Theorem~\ref{theo_shelling}]
  Let $\underline{p} = (p_i)_{0\leq i \leq n} $ and $\underline{p}' = (p'_i)_{0\leq i \leq n} $ be two distinct maximal chains of $\hat\pp_n$ such that $\underline{p}' <_{\rm lex} \underline{p}$.  Our goal is to find $\underline{p}'' = (p''_i)_{0\leq i \leq n} $ as in Definition~\ref{def_shelling}.
  
  Let $x = p_i = p'_i$ where $i$ is the minimal index such that $p_{i+1} \neq p'_{i+1}$. Let $y' = p'_{i+1}$, $y = p_{i+1}$, and note that $\underline{p}' <_{\rm lex} \underline{p}$ means that $y' \prec_x y$. Let $z = p_{i+2}$.  If $i=n-2$, the maximal chains $\underline{p}'$ and $\underline{p}$ only differ in rank $n-1$.  So we can take $\underline{p}'' = \underline{p}'$ and it satisfies the requirements in the definition of a shelling.  Otherwise, we have $i<n-2$, so $z < \hat 1$ and we are in situation to use Lemma~\ref{lemma_for_shelling}.  We distinguish two cases.
  \begin{itemize}
      \item If there exists $y'' \in \pp_n$ such that $x\lessdot y'' \lessdot z$ and $y''\prec_x y$, we define a maximal chain $\underline{p}''$ by replacing $y$ with $y''$ in $\underline{p}$.  By construction, it satisfies the requirements in the definition of a shelling.
      \item Otherwise, there exists $z'\in\pp_n$ such that $y\lessdot z' \leq y' \vee z$ and $z'\prec_y z$.  Let $j$ be the minimal integer such that $p'_j = p_j$ and $j>i$.  We have $j\geq i+3$ (it cannot be equal to $i+2$ because that would mean $y'\lessdot z$, and this situation was already ruled out in the previous case).  Note that $y'$ and $z$ are both below $p_j$, so $y'\vee z \leq p_j $, so $z' \leq p_j$.  Define $\underline{p}'' = (p''_k)_{0\leq k \leq n}$ by: 
      \begin{itemize}
          \item $p''_k = p_k$ if $k\leq i+1$ or $k\geq j$,
          \item $p''_{i+2} = z'$, 
          \item the elements strictly between $p''_{i+2}$ and $p''_j$ are defined arbitrarily with the constraint that $\underline{p}''$ is a maximal chain.
      \end{itemize}
      By construction, we can check: $\underline{p}'' <_{\rm lex} \underline{p}$ (this holds because $z' \prec_y z$), $\underline{p}' \cap \underline{p} \subsetneq \underline{p}'' \cap \underline{p} $ (the inclusion holds because $p_k\neq p'_k $ if $i<k<j$ , it is strict since $y$ is in the second intersection but not in the first one).  By iterating the argument, we can eventually construct a maximal chain satisfying the requirements in the definition of a shelling.
  \end{itemize}
\end{proof} 
 
Now, it remains only to prove Lemma~\ref{lemma_for_shelling}.  

\begin{rema} \label{rem:shellabilitycriterion}
  It would be interesting consider the following problem.  Given a finite poset $P$ and a family $(\prec_x)_{x\in P}$ as above, we have showed that $P$ is shellable if it satisfies a criterion as in Lemma~\ref{lemma_for_shelling} (if $P$ is not a lattice, the condition $z'\leq y'\vee z$ should be replaced with: $\forall u\in P$, $u\leq y'$ and $u\leq z$ implies $u\leq z'$).  It is natural to ask whether this new criterion for shellability can be compared with other ones such as EL-shellability or CL-shellability. From definitions, it is clear that an order on atoms of any intervals in the poset which is a recursive atom ordering satisfies Lemma~\ref{lemma_for_shelling} (more precisely the second case). This criterion is then weaker or equivalent to CL-shellability.
\end{rema}

\begin{rema}
  We wrongly asserted in \cite{fpsac2020} that the orders $(\prec_x)$ form a {\it recursive atom ordering}, a technical condition that characterizes {\it CL-shellability}.  To see that it is not the case, consider $x$ the minimal element together with:
\begin{align*}
 \begin{tikzpicture}[scale=0.6]
   \tikzstyle{ver} = [circle, draw, fill, inner sep=0.5mm]
   \tikzstyle{edg} = [line width=0.6mm]
   \node at (-0.6,0) {$y=$};
   \node[ver] at (1,0) {};
   \node[ver] at (2,0) {};
   \node[ver] at (3,0) {};
   \node[ver] at (4,0) {};
   \node[ver] at (5,0) {};
   \node[ver] at (6,0) {};
   \node      at (1,-0.6) {1};
   \node      at (2,-0.6) {2};
   \node      at (3,-0.6) {4};
   \node      at (4,-0.6) {3};
   \node      at (5,-0.6) {5};
   \node      at (6,-0.6) {6};
   \draw[edg] (1,0) to[bend left=60] (2,0);
   \draw[edg] (2,0) to[bend left=60] (3,0);
   \draw[edg] (4,0) to[bend left=60] (5,0);
   \draw[edg] (5,0) to[bend left=60] (6,0);
   \node at (6.4,-0.1) {,};
 \end{tikzpicture}
 \qquad 
  \begin{tikzpicture}[scale=0.6]
   \tikzstyle{ver} = [circle, draw, fill, inner sep=0.5mm]
   \tikzstyle{edg} = [line width=0.6mm]
   \node at (-0.6,0) {$z=$};
   \node[ver] at (1,0) {};
   \node[ver] at (2,0) {};
   \node[ver] at (3,0) {};
   \node[ver] at (4,0) {};
   \node[ver] at (5,0) {};
   \node[ver] at (6,0) {};
   \node      at (1,-0.6) {1};
   \node      at (2,-0.6) {4};
   \node      at (3,-0.6) {2};
   \node      at (4,-0.6) {3};
   \node      at (5,-0.6) {5};
   \node      at (6,-0.6) {6};
   \draw[edg] (1,0) to[bend left=60] (3,0);
   \draw[edg] (4,0) to[bend left=60] (5,0);
   \draw[edg] (5,0) to[bend left=60] (6,0);
   \node at (6.4,-0.1) {.};
 \end{tikzpicture}
\end{align*}
It is easily checked that $y$ is the $\prec_x$-minimal element among the 2 lower covers of $z$.  However, there exists $z' \in \operatorname{Up}(y)$ such that $z \prec_y z' $, and $z'$ covers an element $y'$ such that $y'\prec_x y$.  Take for example:
\begin{align*}
 \begin{tikzpicture}[scale=0.6]
   \tikzstyle{ver} = [circle, draw, fill, inner sep=0.5mm]
   \tikzstyle{edg} = [line width=0.6mm]
   \node at (-0.6,0) {$y'=$};
   \node[ver] at (1,0) {};
   \node[ver] at (2,0) {};
   \node[ver] at (3,0) {};
   \node[ver] at (4,0) {};
   \node[ver] at (5,0) {};
   \node[ver] at (6,0) {};
   \node      at (1,-0.6) {3};
   \node      at (2,-0.6) {1};
   \node      at (3,-0.6) {2};
   \node      at (4,-0.6) {4};
   \node      at (5,-0.6) {5};
   \node      at (6,-0.6) {6};
   \draw[edg] (1,0) to[bend left=60] (4,0);
   \draw[edg] (2,0) to[bend left=60] (3,0);
   \draw[edg] (4,0) to[bend left=60] (5,0);
   \draw[edg] (5,0) to[bend left=60] (6,0);
   \node at (6.4,-0.1) {,};
 \end{tikzpicture}
 \qquad 
  \begin{tikzpicture}[scale=0.6]
   \tikzstyle{ver} = [circle, draw, fill, inner sep=0.5mm]
   \tikzstyle{edg} = [line width=0.6mm]
   \node at (-0.6,0) {$z'=$};
   \node[ver] at (1,0) {};
   \node[ver] at (2,0) {};
   \node[ver] at (3,0) {};
   \node[ver] at (4,0) {};
   \node[ver] at (5,0) {};
   \node[ver] at (6,0) {};
   \node      at (1,-0.6) {4};
   \node      at (2,-0.6) {1};
   \node      at (3,-0.6) {2};
   \node      at (4,-0.6) {3};
   \node      at (5,-0.6) {5};
   \node      at (6,-0.6) {6};
   \draw[edg] (2,0) to[bend left=60] (3,0);
   \draw[edg] (4,0) to[bend left=60] (5,0);
   \draw[edg] (5,0) to[bend left=60] (6,0);
   \node at (6.4,-0.1) {.};
 \end{tikzpicture}
\end{align*}
\end{rema}

%%%%%%%%%%%%%%%%%%%%%%%%%
\subsection{A few lemmas}
%%%%%%%%%%%%%%%%%%%%%%%%%

\begin{lemm} \label{lemma:shelling_nc}
  The statement in Lemma~\ref{lemma_for_shelling} holds when $\pp_n$ is replaced with $NC_n$.
\end{lemm}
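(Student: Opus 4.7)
The plan is to invoke the EL-labeling $\lambda(x,y) = \bar x^{-1}\bar y$ of $NC_n$, together with the fact that distinct upper covers of a fixed element receive distinct labels, so that on $NC_n$ the order $\prec_x$ coincides with the lex order induced by this labeling. I split into the two natural cases.

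First case: there exists $y'' \in [x,z]$ with $x \lessdot y''$ and $\lambda(x, y'') < \lambda(x, y)$. Since $y''$ is then an atom of the rank-two interval $[x,z]$, it automatically satisfies $y'' \lessdot z$, and $\lambda(x, y'') < \lambda(x, y)$ gives $y'' \prec_x y$, so option (a) holds.

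Otherwise, $\lambda(x, y)$ is minimal among labels of atoms of $[x, z]$, so EL-shellability of $[x, z]$ forces $x \lessdot y \lessdot z$ to be its unique rising chain; in particular $\lambda(x, y) < \lambda(y, z)$. Moreover $y'$ cannot lie in $[x, z]$ (otherwise it would be an atom of $[x,z]$ with strictly smaller label than $\lambda(x,y)$), so $y' \not\le z$, whence $y' \vee z > z$ and $I := [x, y' \vee z]$ has rank at least three. Let $z^*$ be the atom of $[y, y' \vee z]$ with smallest label, i.e.\ the first step of the unique rising chain of $[y, y' \vee z]$, so that $y \lessdot z^* \le y' \vee z$. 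I would then establish $z^* \neq z$; since distinct atoms of $[y, y' \vee z]$ carry distinct labels and $z^*$ realises the minimum, this gives $\lambda(y, z^*) < \lambda(y, z)$, and $z' := z^*$ witnesses option (b).

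The remaining (and main) step is the verification that $z^* \neq z$, which I would prove by contradiction. If $z^* = z$, then the rising chain of $[y, y' \vee z]$ begins $y \lessdot z \lessdot w_2 \lessdot \cdots$ with strictly increasing labels $\lambda(y, z) < \lambda(z, w_2) < \cdots$. Prepending the edge $x \lessdot y$ yields a maximal chain of $I$ whose label sequence $(\lambda(x, y), \lambda(y, z), \lambda(z, w_2), \dots)$ is strictly increasing, using $\lambda(x, y) < \lambda(y, z)$ from the current case. By uniqueness of the rising chain, this must be the rising chain of $I$. But $y'$ is an atom of $I$ with $\lambda(x, y') < \lambda(x, y)$, so the first label of the rising chain of $I$ is at most $\lambda(x, y') < \lambda(x, y)$, a contradiction.

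The only nontrivial ingredient is the uniqueness of rising chains, applied successively in the two intervals $[y, y' \vee z]$ and $[x, y' \vee z]$; everything else is bookkeeping. No appeal to the explicit combinatorics of noncrossing splits is needed beyond the existence of the EL-labeling.
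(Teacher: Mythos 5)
Your proof is correct and follows essentially the same route as the paper's: use the EL-labeling $\lambda$ of $NC_n$ with injective labels on upper covers, handle one case via the rank-two interval $[x,z]$, and in the remaining case look at the first step $z^*$ of the rising chain of $[y, y'\vee z]$, ruling out $z^*=z$ by prepending $x\lessdot y$ and contradicting uniqueness of the rising chain of $[x,y'\vee z]$. The only cosmetic difference is your case split (whether $\lambda(x,y)$ is minimal among atom labels of $[x,z]$) versus the paper's (whether $\lambda(x,y)\ge\lambda(y,z)$); both immediately reduce to the same two situations.
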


\begin{proof}
  Recall that $\lambda$ is an EL-labeling for $NC_n$ (the one in~\cite{bjorner80} for instance).  We only use the existence of $\lambda$, and the fact that for distinct elements $y,y' \in \operatorname{Up}(x)$ we have $\lambda(x,y) \neq \lambda(x,y')$.  Let $x,y,y',z \in NC_n$ as in the statement of Lemma~\ref{lemma_for_shelling}.
  
  If $\lambda(x,y) \geq \lambda(y,z)$, it means that $(x,y,z)$ is a decreasing chain of $[x,z]$. By properties of EL-labelings, it is not the lexicographically minimal chain of $[x,z]$, so there exists $y''$ with $x\lessdot y'' \lessdot z$ and $\lambda(x,y'') < \lambda(x,y)$.  

  Now, assume $\lambda(x,y) < \lambda(y,z)$.  Consider the maximal chain $(p_0,p_1,\dots)$ of $[y , y' \vee z]$ which is strictly increasing.  If $p_1 = z$, by adding $x$ at the beginning we obtain a strictly increasing chain from $x$ to $y' \vee z$ going through $y$ and $z$.  So this chain is lexicographically minimal.  But this is a contradiction: there is a maximal chain from $x$ to $y' \vee z$ going through $y'$, and it is lexicographically smaller than any chain going through $y$.  We thus have $p_1 \neq z$.
  
  Now, let $z' = p_1$.  We have $\lambda(y,z') < \lambda(y,z)$ and $z' \leq y'\vee z$.  This completes the proof.
\end{proof}

\begin{lemm}  \label{lem:invcode}
 Suppose that $\phi,\phi'\in\pp_n$ are such that $\phi\leq\phi'$.  Then $\gamma(\phi) \leq \gamma(\phi')$.
\end{lemm}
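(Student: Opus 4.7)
The plan is to reduce to cover relations and then compare codes digit by digit starting from the most significant position. Since lexicographic order on integer sequences is transitive and $\leq$ on $\pp_n$ is generated by cover relations, it suffices to verify $\gamma(\phi)\leq\gamma(\phi')$ when $\phi\lessdot\phi'$. Writing $\phi=(\pi,\sigma)$ and $\phi'=(\pi',\sigma')$ in the pair form of Section~\ref{sec_parkingspace}, such a cover corresponds to splitting a single block $B$ of $\pi$ into two blocks $B_1,B_2$ of $\pi'$. The uniqueness statement in Lemma~\ref{lem:unique} forces $\sigma(m)=\sigma'(m)$ at every position $m\notin B$, while on $B$ the two permutations share the common image set $\sigma(B)=\sigma'(B_1)\cup\sigma'(B_2)$---but $\sigma$ lists these values increasingly along $B=\{b_1<\dots<b_k\}$, whereas $\sigma'$ does so only inside $B_1$ and inside $B_2$ separately.

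Set $a_\ell:=\sigma(b_\ell)$, so $\sigma(B)=\{a_1<\dots<a_k\}$. Assuming $\sigma\neq\sigma'$, I will let $i^*$ be the largest index with $\sigma'(b_{i^*})\neq a_{i^*}$. By maximality of $i^*$, the values $a_{i^*+1},\dots,a_k$ occupy positions $b_{i^*+1},\dots,b_k$ in both $\sigma$ and $\sigma'$, so $a_1,\dots,a_{i^*}$ sit at $\{b_1,\dots,b_{i^*}\}$ in both permutations (in possibly different orders); in particular $(\sigma')^{-1}(a_{i^*})=b_m$ for some $m<i^*$. The aim is then to establish two facts that together give $\gamma(\sigma)<\gamma(\sigma')$ lexicographically: (i) $c_i(\sigma)=c_i(\sigma')$ for every $i>a_{i^*}$, and (ii) $c_{a_{i^*}}(\sigma)<c_{a_{i^*}}(\sigma')$. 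Claim (ii) is the easier half: moving $a_{i^*}$ leftward from $b_{i^*}$ to $b_m$ enlarges the window ``strictly to the right of the occurrence of $a_{i^*}$'' by the $i^*-m\geq 1$ positions $b_{m+1},\dots,b_{i^*}$ inside $B$, and the values $\sigma'$ places there belong to $\{a_1,\dots,a_{i^*-1}\}$ and are therefore each $<a_{i^*}$, while any positions outside $B$ in the same interval can only contribute further.

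The hard part will be (i), for which careful bookkeeping is required. The position of $i$ itself is unchanged (either $i\notin\sigma(B)$, or $i=a_\ell$ with $\ell>i^*$ and both permutations place it at $b_\ell$), and the contribution to $c_i$ from positions outside $B$ is literally identical in the two permutations, so it remains to control the contribution from positions of $B$ lying strictly to the right of $\sigma^{-1}(i)$. A short case distinction on where $i$ lies among the $a_\ell$'s shows that, since $\{a_1,\dots,a_{i^*}\}$ occupies $\{b_1,\dots,b_{i^*}\}$ as a set in both $\sigma$ and $\sigma'$ and $\{a_{i^*+1},\dots,a_k\}$ is pointwise fixed, the count ``number of values $<i$ from $\sigma(B)$ at positions of $B$ strictly greater than $\sigma^{-1}(i)$'' depends only on the pair $(B,\sigma^{-1}(i))$ together with $i$, and not on the internal reordering inside $B$. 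Once this invariance is established, (i) is immediate and, combined with (ii), concludes the proof.
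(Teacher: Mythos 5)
Your proposal is correct, and it proves the same statement by a genuinely more explicit route than the paper's. Both arguments reduce to the cover relation $\phi=(\pi,\sigma)\lessdot\phi'=(\pi',\sigma')$ and both exploit that $\sigma$ and $\sigma'$ agree outside the split block $B$ while rearranging $\sigma(B)$ inside it. The difference is in how the pivotal index is located. The paper appeals to the standard determinacy property of the code (an ``inversion-table'' fact): if the entries $c_n,\dots,c_{l+1}$ of $\gamma(\sigma)$ and $\gamma(\sigma')$ coincide, then $\sigma^{-1}(i)=\sigma'^{-1}(i)$ for all $i>l$. This gives a shortcut: the first discrepancy in the codes already implies agreement of the positions of all larger values, after which the increasing-on-blocks condition quickly forces $\sigma'^{-1}(l)<\sigma^{-1}(l)$ and hence $c_l(\sigma)<c_l(\sigma')$. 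You instead start from the block data and construct the pivot directly: $i^*$ is the largest index inside $B$ at which $\sigma$ and $\sigma'$ disagree, and then you verify by hand that (i) $c_i(\sigma)=c_i(\sigma')$ for all $i>a_{i^*}$ because the count of values of $\sigma(B)$ below $i$ and to the right of a fixed position is insensitive to reshuffling the set $\{a_1,\dots,a_{i^*}\}$ among the positions $\{b_1,\dots,b_{i^*}\}$, and (ii) $c_{a_{i^*}}(\sigma)<c_{a_{i^*}}(\sigma')$ because moving $a_{i^*}$ left from $b_{i^*}$ to some $b_m$ with $m<i^*$ enlarges the ``to the right'' window by positions carrying only smaller values. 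Both computations are sound (your sub-case in (i) where $\sigma^{-1}(i)$ is not in $B$ and falls strictly left of $b_{i^*}$ is the one that genuinely needs the observation you make). Your version is longer but self-contained; the paper's is shorter at the cost of invoking the code-determines-positions fact without proof. One minor attribution slip: the equality $\sigma|_{B^c}=\sigma'|_{B^c}$ does not really come from Lemma~\ref{lem:unique} (which concerns restriction downwards, not upwards) but simply from the definition of the order and the increasing-on-blocks normalization of the pair $(\pi,\sigma)$; this does not affect the argument.
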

 
\begin{proof}
  We only need to consider the case where $\phi'$ covers $\phi$, as the general case follows by transitivity.  Let $\phi=(\pi,\sigma)$ and $\phi'=(\pi',\sigma')$.  Assume $\gamma(\phi) \neq \gamma(\phi')$ (in case of equality, the lemma is true).
 
  Assume that $\gamma(\phi)$ and $\gamma(\phi')$ have a common prefix of length $n-l$, and differ on the $n-l+1$-st letter.  By properties of the code, this means that we have $\sigma^{-1}(i) = \sigma'^{-1}(i)$ for $i\in\llbracket l+1 ;n \rrbracket$ and $\sigma^{-1}(l) \neq \sigma'^{-1}(l)$.  It remains only to show that $\sigma^{-1}(l) > \sigma'^{-1}(l)$, as it easily follows that $c_{l}(\sigma) < c_{l}(\sigma')$, and $\gamma(\sigma) < \gamma(\sigma')$.
  
  As $\pi\lessdot \pi'$, there is a block $b\in \pi$ such that $b\notin\pi'$.  Since $\sigma^{-1}(l) \neq \allowbreak \sigma'^{-1}(l)$, necessarily $\sigma^{-1}(l)$ and $\sigma'^{-1}(l)$ are both in $b$.  
  
  Let $k \in b$ such that $k > \sigma^{-1}(l)$.  As $\sigma$ is increasing along the block $b$ of $\pi$, we have $\sigma(k) > l$, so $k = \sigma'^{-1}( \sigma(k) )$, and $\sigma'(k) = \sigma(k)$.  It follows that $\sigma'(k) \neq l$, and consequently $k\neq \sigma'^{-1}(l)$.  We deduce that $\sigma'^{-1}(l) < \sigma^{-1}(l)$, as needed.
\end{proof}

\begin{lemm}  \label{lemm:veeinpp}
  If $\phi,\phi'\in\pp_n$ are such that $\gamma(\phi)=\gamma(\phi')$, then $\phi\vee\phi' \in \pp_n$ (i.e., $\phi\vee\phi'\neq \hat 1$).  Moreover, $\gamma(\phi\vee\phi') = \gamma(\phi)$.
\end{lemm}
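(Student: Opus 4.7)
The plan is to combine the pair description $\phi=(\pi,\sigma)$, $\phi'=(\pi',\sigma')$ of elements of $\pp_n$ given by Proposition~\ref{prop_pppairs} with the explicit construction of the join from the proof of Proposition~\ref{prop:lattice}. The initial observation is that the map $\sigma\mapsto\gamma(\sigma)$ is, up to reindexing, the classical Lehmer code, and hence a bijection from $\mathfrak{S}_n$ onto its image: one recovers the position of each value $i$ inductively from $c_i(\sigma)$ given those of $1,\dots,i-1$. Consequently $\gamma(\phi)=\gamma(\phi')$ already forces $\sigma=\sigma'$, and I may work with a single permutation throughout the rest of the proof.

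Next I would analyse the sequence $\bigl(\eta_\phi(k)\cap\eta_{\phi'}(k)\bigr)_{1\le k\le n}$ that drives the case analysis in the proof of Proposition~\ref{prop:lattice}. Unpacking the definition of $\eta$ using $\lambda(B)=\sigma(B)$, the set $\eta_\phi(k)$ is exactly the block of $\pi$ containing $\sigma^{-1}(k)$, and likewise $\eta_{\phi'}(k)$ for $\pi'$. Therefore $\eta_\phi(k)\cap\eta_{\phi'}(k)$ is the block of the common refinement $\gamma$ of $\pi$ and $\pi'$ that contains $\sigma^{-1}(k)$, which is in particular nonempty. Moreover each block $b\in\gamma$ appears in this sequence exactly $|b|$ times, namely for the $|b|$ values $k\in\sigma(b)$. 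This places us squarely in the third case of the proof of Proposition~\ref{prop:lattice}: the element $\chi:=\phi\vee\phi'$ exists in $\pp_n$ and has underlying noncrossing partition $\gamma$.

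It remains to show $\gamma(\chi)=\gamma(\sigma)$, which I do by identifying $\chi$ with the pair $(\gamma,\sigma)$ under Proposition~\ref{prop_pppairs}. Since every block of $\gamma$ sits inside a block of $\pi$ and $\sigma$ is increasing along the blocks of $\pi$, it is increasing along the blocks of $\gamma$, so $(\gamma,\sigma)$ is a valid pair. Writing $\chi=(\gamma,\tau)$, a short verification using the formula $\eta_\chi(k)=\eta_\phi(k)\cap\eta_{\phi'}(k)$ (equivalently $\lambda_\chi(b)=\sigma(b)$ for every $b\in\gamma$) shows that both $\tau$ and $\sigma$ are increasing on each block of $\gamma$ and satisfy $\tau(b)=\sigma(b)$ setwise, so $\tau=\sigma$. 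In particular $\gamma(\chi)=\gamma(\sigma)=\gamma(\phi)$, as claimed.

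I expect no substantive obstacle: the argument is structural and reuses exactly the machinery developed in Section~\ref{sec_parkingspace} together with the trichotomy already established in the proof of Proposition~\ref{prop:lattice}. The one step warranting mild care is the final identification of $\chi$ as $(\gamma,\sigma)$, which invokes the uniqueness clause of Proposition~\ref{prop_pppairs} (minimal-length coset representative), but this is forced once $\sigma$ is shown to be increasing along the refined blocks and to realise the correct images $\sigma(b)=\lambda_\chi(b)$.
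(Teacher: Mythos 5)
Your proof is correct, but it takes a genuinely different route from the paper's. After the shared opening observation that the code is injective on $\mathfrak{S}_n$ (so $\gamma(\phi)=\gamma(\phi')$ forces the two elements to have the same underlying permutation $\sigma$), the paper simply notes that both $\phi$ and $\phi'$ lie below the maximal element $(1_n,\sigma)$ — whence $\phi\vee\phi'\leq(1_n,\sigma)<\hat 1$ — and then squeezes $\gamma(\phi\vee\phi')$ between $\gamma(\phi)$ and $\gamma((1_n,\sigma))=\gamma(\sigma)$ via the monotonicity statement of Lemma~\ref{lem:invcode}. Your argument instead re-runs the $\eta$-analysis from the proof of Proposition~\ref{prop:lattice} to show explicitly that $\phi\vee\phi'$ is the pair $(\pi\vee\pi',\sigma)$ (where $\pi\vee\pi'$ is the common refinement in $NC_n$), and then reads off the code directly. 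The paper's route is shorter and leverages an already-proven lemma, so it is the more economical choice locally; your route gives strictly more information — an explicit description of the join when codes coincide — which could be reused, but it duplicates case work already done in Proposition~\ref{prop:lattice}. One cosmetic caution: you reuse the letter $\gamma$ both for the code (as the paper does) and for the partition $\pi\vee\pi'$, which in a written-up version should be renamed to avoid collision.
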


\begin{proof}
  Let $\sigma\in\mathfrak{S}_n$ be the permutation with the same code as $\phi$ and $\phi'$.  We easily see that $\phi$ and $\phi'$ are both below $(1_n,\sigma) \in \pp_n$, and it follows that $\phi\vee\phi' < \hat 1$.  Using the previous lemma, we have:
  \[
    \gamma(\sigma) = \gamma(\phi) \leq \gamma(\phi\vee\phi')\leq \gamma((1_n,\sigma)) = \gamma(\sigma)
  \]
  so that $\gamma(\phi\vee\phi') = \gamma(\phi)$.
\end{proof}

\begin{defi}
  For $\sigma\in\mathfrak{S}_n$, let $p_0(\sigma)\in\llbracket0;n\rrbracket$ be the length of the longest prefix of $\gamma(\sigma)$ containing only zeroes.  If $\phi = (\pi,\sigma) \in \pp_n$, by convention we denote $p_0(\phi)=p_0(\sigma)$.
\end{defi}  

Note that $p_0(\sigma)$ is a decreasing function of $\gamma(\sigma)$.

\begin{lemm}
  For $\phi\in\pp_n$, we have:
  \[
    p_0(\phi)
    =
    \max\big\{ k \in \llbracket 0;n \rrbracket \; : \; \forall i\in \llbracket n-k+1;n\rrbracket, i\in \eta_\phi(i)   \big\}.
  \]
\end{lemm}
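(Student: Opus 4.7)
The plan is to work in the pair representation $\phi = (\pi,\sigma)$ from Proposition~\ref{prop_pppairs}, where $\sigma$ is increasing on each block of $\pi$ and $\lambda(B) = \sigma\cdot B$. Under this identification, $\eta_\phi(k)$ is exactly the block of $\pi$ containing $\sigma^{-1}(k)$, so the right-hand side condition ``$i \in \eta_\phi(i)$'' translates to ``$i$ and $\sigma^{-1}(i)$ lie in a common block of $\pi$''. The goal is then to show that the largest $k$ for which this holds for every $i \in \llbracket n-k+1; n\rrbracket$ coincides with $p_0(\phi)$.

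First I would establish a simple combinatorial fact on codes: $p_0(\sigma)\geq k$ iff $\sigma$ fixes every element of $\llbracket n-k+1; n\rrbracket$ pointwise. Indeed, $c_i(\sigma)=0$ means that $\sigma^{-1}(i)>\sigma^{-1}(j)$ for every $j<i$; a short descending induction on $i$ starting from $n$ shows that $c_n = c_{n-1} = \cdots = c_{n-k+1} = 0$ forces $\sigma^{-1}(i)=i$ for $i\in\llbracket n-k+1; n\rrbracket$, and the converse is immediate. This reduces the proposition to the equivalence: $\sigma$ fixes $\llbracket n-k+1; n\rrbracket$ pointwise iff $i$ and $\sigma^{-1}(i)$ belong to the same block of $\pi$ for every $i \in \llbracket n-k+1; n\rrbracket$.

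The forward direction of this equivalence is trivial (if $\sigma^{-1}(i)=i$, then $i$ and $\sigma^{-1}(i)$ tautologically share a block). The reverse direction is the main step, which I would prove by descending induction on $i \in \llbracket n-k+1; n\rrbracket$. Assume as inductive hypothesis that $\sigma$ fixes $\llbracket i+1; n\rrbracket$ pointwise (vacuous when $i=n$). Then $\sigma$ restricts to a permutation of $\llbracket 1; i\rrbracket$; in particular $\sigma^{-1}(i)\leq i$. Let $B\in\pi$ be the block containing both $i$ and $\sigma^{-1}(i)$, and set $B' := B \cap \llbracket 1; i\rrbracket$. Then $i, \sigma^{-1}(i) \in B'$ and $i = \max B'$, while $\sigma$ is still increasing on $B'$ and maps it into $\llbracket 1; i\rrbracket$. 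Since an increasing map preserves maxima,
\[
  \sigma(i) \;=\; \max \sigma(B') \;\leq\; i,
\]
the inequality coming from $\sigma(B')\subseteq\llbracket 1; i\rrbracket$. On the other hand $i \in \sigma(B')$ because $\sigma^{-1}(i)\in B'$, so $\max\sigma(B')\geq i$. Hence $\sigma(i)=i$, closing the induction.

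The delicate point is precisely this last argument: the hypothesis is a block-level identity, but one has to extract a pointwise fixedness statement. This is where the normalization that $\sigma$ is \emph{increasing} on each block of $\pi$ (from Proposition~\ref{prop_pppairs}) is essential, since otherwise $\sigma$ could freely permute elements inside $B$ and the conclusion would fail.
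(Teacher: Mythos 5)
Your proof is correct and follows essentially the same route as the paper's: reduce via the pair encoding $(\pi,\sigma)$ to the equivalence ``$p_0(\phi)\geq k$ iff $\sigma$ fixes $\llbracket n-k+1;n\rrbracket$ pointwise,'' note $\sigma^{-1}(i)\in\eta_\phi(i)$ for the forward direction, and prove the reverse by a descending induction showing $\sigma(i)=i$. The paper explicitly leaves the induction ``to the reader''; your use of the increasing-on-blocks normalization together with the max argument on $B' = B\cap\llbracket 1;i\rrbracket$ is exactly the intended filling-in of that gap.
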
 
 
\begin{proof}
  We prove that $p_0(\phi) \geq k$ is equivalent to: $\forall i\in \llbracket n-k+1;n\rrbracket, \; i\in \eta_\phi(i)$.
  
  Write $\phi =(\pi,\sigma)$. The condition $p_0(\phi) \geq k$ means $\sigma(i)=i$ for all $i \in \llbracket n-k+1; n \rrbracket$.  For each $i$, $\sigma(i)=i$ implies $i\in \eta_\phi(i)$, since in general we have $\sigma^{-1}(i) \in \eta_\phi(i)$.  So the first condition implies the second one. 
  
  Reciprocally, assume $\forall i \in \llbracket n-k+1;n\rrbracket, \; i \in \eta_\phi(i)$.  By induction on $j \in \llbracket 0; k \rrbracket$, we prove that for all $i \in \llbracket n-j+1; n \rrbracket$, $\sigma(i)=i$.  The case $j=0$ is clear, and the details of the induction are left to the reader.
\end{proof} 

\begin{lemm}
  \label{lemm:samep0}
  If $\phi,\psi\in\pp_n$ are such that $\phi \vee \psi \neq \hat{1}$, then $p_0(\phi \vee \psi) = \min(p_0(\phi),p_0(\psi))$. 
\end{lemm}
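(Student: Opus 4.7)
The plan is a two-inequality proof, each direction exploiting a different characterization of $p_0$ already established in the previous two lemmas.

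For the inequality $p_0(\phi \vee \psi) \leq \min(p_0(\phi), p_0(\psi))$, I would use the fact (just noted right after the definition of $p_0$) that $p_0$ is a decreasing function of $\gamma$. Since $\phi \leq \phi \vee \psi$ and $\psi \leq \phi \vee \psi$, Lemma~\ref{lem:invcode} yields $\gamma(\phi) \leq \gamma(\phi \vee \psi)$ and $\gamma(\psi) \leq \gamma(\phi \vee \psi)$, so taking $p_0$ reverses the inequalities and gives the desired bound.

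For the inequality $p_0(\phi \vee \psi) \geq \min(p_0(\phi), p_0(\psi))$, I would use the $\eta$-characterization from the previous lemma. Set $k = \min(p_0(\phi), p_0(\psi))$. By that characterization, for each $i \in \llbracket n-k+1; n\rrbracket$ we have $i \in \eta_\phi(i)$ and $i \in \eta_\psi(i)$. Now the crucial point, which I would extract from the proof of Proposition~\ref{prop:lattice}: when $\phi \vee \psi \neq \hat 1$, the join is explicitly the element $\chi \in \pp_n$ satisfying $\eta_\chi(j) = \eta_\phi(j) \cap \eta_\psi(j)$ for all $j$. Therefore $i \in \eta_{\phi \vee \psi}(i)$ for every $i \in \llbracket n-k+1; n\rrbracket$, and applying the previous lemma in the reverse direction gives $p_0(\phi \vee \psi) \geq k$.

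Neither step presents a genuine obstacle; the only thing worth being careful about is ensuring that the $\eta$-description of the join is really available. This is not stated as a standalone lemma, but it is exactly what the third case of the proof of Proposition~\ref{prop:lattice} establishes, so invoking it is legitimate. Combining both inequalities yields the claimed equality.
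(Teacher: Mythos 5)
Your proof is correct and matches the paper's approach: both rely on the $\eta$-characterization of $p_0$ (the unnumbered lemma preceding this one) together with the fact that $\eta_{\phi\vee\psi}(i) = \eta_\phi(i)\cap\eta_\psi(i)$, which as you correctly note is established in the third case of the proof of Proposition~\ref{prop:lattice}. Your $\leq$ direction via Lemma~\ref{lem:invcode} and monotonicity of $p_0$ in $\gamma$ is a valid variant, though the paper gets both inequalities at once directly from the $\eta$-characterization and the intersection identity.
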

 
\begin{proof}
   Since $\phi \vee \psi \neq \hat{1}$, we know from Lemma~\ref{lem:c_eta} that $\eta_{\phi \vee \psi}(k) = \eta_{\phi}(k) \cap \eta_{\psi}(k)$ for $k\in \llbracket 1;n \rrbracket$.  So this is a direct consequence of the previous lemma.
\end{proof}

\begin{defi}  \label{def:Nm}
Let $\phi,\phi'\in\pp_n$ such that $\phi\lessdot\phi'$, with $\phi=(\pi,\sigma)$ and $\phi'=(\pi',\sigma')$.  We define two quantities as follows.
\begin{itemize}
    \item $N(\phi,\phi')$ is the block of $\pi$ that is split to obtain $\pi'$. 
    \item If $\sigma\neq \sigma'$, then $m(\phi,\phi')$ is the maximal $i\in \llbracket 1;n\rrbracket$ such that
    $c_i(\sigma) < c_i(\sigma')$.  Otherwise, $m(\phi,\phi') = 0$.
\end{itemize}
\end{defi}

Remark that if $\phi'_1$ and $\phi'_2$ both cover $\phi$, we have
\[
  m(\phi,\phi'_1) < m(\phi,\phi'_2) \Rightarrow \gamma(\phi'_1) < \gamma(\phi'_2).
\]
Moreover, if $m(\phi,\phi'_1) \neq m(\phi,\phi'_2)$, the reciprocal is true.

\begin{lemm}  \label{lemm:interval_of_4}
  Suppose that $\phi\in \pp_n$ and $\psi,\psi'\in \operatorname{Up}(\phi)$ are such that $N(\phi,\psi)\neq N(\phi,\psi')$.  Then $\psi'\vee\psi$ has rank $\rk(\phi)+2$, and the open interval $(\phi,\psi\vee\psi')$ only contains $\psi$ and $\psi'$.  Moreover:
   \[
     m(\phi,\psi) = m(\psi',\psi\vee\psi'), \quad 
     m(\phi,\psi') = m(\psi,\psi\vee\psi'),
   \]
   and $m(\phi,\psi) \neq m(\phi,\psi')$ unless they are both equal to $0$.
\end{lemm}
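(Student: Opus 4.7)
The plan is to lean on the noncrossing structure of $\pi$: since $B := N(\phi,\psi)$ and $B' := N(\phi,\psi')$ are distinct blocks of the noncrossing partition underlying $\phi$, they cannot interleave, so for any positions $p,p' \in B$ the intersection $B' \cap (p,p']$ is either empty or all of $B'$, and symmetrically when $B$ and $B'$ are swapped. This observation is the engine of the entire proof.

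First, I would construct $\psi \vee \psi'$ directly from $\phi = (\pi,\sigma)$ by performing both splits at once. Via Lemma~\ref{lem:c_eta}, the resulting element $\chi \in \pp_n$ satisfies $\eta_\chi(k) = \eta_\psi(k) \cap \eta_{\psi'}(k)$, which is exactly the join produced in the proof of Proposition~\ref{prop:lattice}, and since two blocks of $\pi$ get split, $\rk(\chi) = \rk(\phi)+2$. For the open interval, any $\tau$ with $\phi \lessdot \tau \leq \chi$ splits a single block $C$ of $\pi$; the inclusions $\eta_\chi(k) \subset \eta_\tau(k) \subset \eta_\phi(k)$ combined with $C \in \pi$ force $C \in \{B, B'\}$, with the resulting split and value assignment uniquely pinned down by the containments, so $\tau \in \{\psi, \psi'\}$.

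For the equality $m(\phi,\psi) = m(\psi',\psi \vee \psi')$, write $\sigma_B$, $\sigma_{B'}$, $\sigma_{BB'}$ for the permutations underlying $\psi$, $\psi'$, $\psi \vee \psi'$. Because the two splits act on disjoint position sets, $\sigma_{BB'}$ is obtained from $\sigma_{B'}$ by the very same $B$-rearrangement that takes $\sigma$ to $\sigma_B$. I would then prove the stronger pointwise identity $c_i(\sigma) - c_i(\sigma_B) = c_i(\sigma_{B'}) - c_i(\sigma_{BB'})$ for every $i$, which immediately implies that the maximal indices where strict increase occurs coincide. Splitting the sum defining $c_i$ according to whether $j < i$ lies in $\sigma(B)$, $\sigma(B')$, or neither, the contributions from the first and third families agree on both sides by direct inspection. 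The contributions from $j \in \sigma(B')$ reduce to comparing $\#\{s \in B' \cap (p,p'] : \sigma(s) < i\}$ with $\#\{s \in B' \cap (p,p'] : \sigma_{B'}(s) < i\}$, where $p,p' \in B$ are the positions of $i$ in $\sigma, \sigma_B$ when $i \in \sigma(B)$ (the case $i \in \sigma(B')$ being symmetric): by the noncrossing observation $B' \cap (p,p']$ is either empty, yielding a trivial zero, or all of $B'$, in which case both multisets coincide with $\sigma(B')$ and the counts agree. The symmetric equality $m(\phi,\psi') = m(\psi,\psi \vee \psi')$ follows by interchanging $B$ and $B'$.

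For the last assertion, I would show that when $m(\phi,\psi) > 0$ it equals the largest value $v^* \in \sigma(B)$ whose position changes between $\sigma$ and $\sigma_B$. All values of $\sigma(B)$ exceeding $v^*$ are fixed by the $B$-rearrangement, so they occupy identical positions in $\sigma$ and $\sigma_B$, which a short count turns into $c_i(\sigma) = c_i(\sigma_B)$ for every $i > v^*$. At $i = v^*$, the fact that the larger values of $\sigma(B)$ occupy the later positions of $B$ forces $v^*$ to move strictly to an earlier position, cramming the smaller values of $\sigma(B)$ into fewer ``low'' positions and thus guaranteeing that at least one of them lies to the right of the new position of $v^*$; hence $c_{v^*}(\sigma) < c_{v^*}(\sigma_B)$. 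Therefore $m(\phi,\psi) \in \sigma(B)$ and, symmetrically, $m(\phi,\psi') \in \sigma(B')$; since $B \cap B' = \varnothing$ these two integers cannot agree unless both are $0$. The main obstacle is the case analysis underlying the code identity: the noncrossing hypothesis enters precisely to rule out $B' \cap (p,p']$ being a proper nonempty subset of $B'$, which is where the whole cancellation would otherwise fail.
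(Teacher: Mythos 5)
Your proof is correct, and it supplies a careful argument for a lemma that the paper dismisses with ``clear upon inspection.'' Since the paper contains no actual proof, there is nothing substantive to compare against; your argument is thus a genuine contribution of detail that the authors leave implicit.

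A few remarks on the structure you chose. The three parts of the lemma live at increasing depth. The rank and open-interval claims follow quickly from the fact that the two splits have disjoint supports in $\pi$ and commute, which you capture by building $\chi$ directly and using Lemma~\ref{lem:c_eta}; this is the ``inspection'' the authors presumably had in mind. The equalities of $m$-values are where your argument does real work: the pointwise code identity $c_i(\sigma) - c_i(\sigma_B) = c_i(\sigma_{B'}) - c_i(\sigma_{BB'})$ is stronger than what is asked, but it is the natural invariant to track. Your observation that noncrossing forces $B' \cap (p,p']$ (for $p,p' \in B$) to be empty or all of $B'$ is exactly the obstruction that must be cleared — a proper nonempty intersection would break the cancellation, as a crossing pair such as $B=\{1,5\}$, $B'=\{3,7\}$ would show. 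One small imprecision in your write-up: you attribute the ``easy'' contributions to ``the first and third families,'' but when $i \in \sigma(B')$ the family $j \in \sigma(B)$ is the one that needs the noncrossing observation (applied with $B$ and $B'$ exchanged), while $j \in \sigma(B')$ is the easy one. You do flag the case $i\in\sigma(B')$ as ``symmetric,'' so nothing is lost, but the symmetry exchanges which family requires the nontrivial argument, and spelling this out would remove any ambiguity. The last part — identifying a positive $m(\phi,\psi)$ with the largest value in $\sigma(B)$ that moves, via the observation that this value must migrate strictly leftward within $B$ — is correct and is the key to the final disjointness claim; it deserves the explicit statement you give, since it is the least ``clear upon inspection'' part of the lemma.
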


\begin{proof}
  This is clear upon inspection.
\end{proof}

\begin{lemm} \label{lemm:case7}
  Let $x,y,y' \in \pp_n$ such that $x\lessdot y$, $x\lessdot y'$, and $N(x,y)=N(x,y')$. Then for any $u,v$ such that $x\leq u \lessdot v \leq y\vee y'$, we have $m(u,v) \leq \max( m(x,y) , m(x,y') )$.
\end{lemm}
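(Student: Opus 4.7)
The plan is to prove the stronger statement that $c_i(\sigma_\phi) = c_i(\sigma_x)$ for every $\phi \in [x, y\vee y']$ and every $i > M := \max(m(x,y), m(x,y'))$. Given this, any cover $u \lessdot v$ in the interval satisfies $c_i(\sigma_u) = c_i(\sigma_v)$ for $i > M$, hence $m(u,v) \leq M$, which is the lemma. Throughout we assume $y\vee y' \neq \hat 1$, the case relevant to the shelling application.

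First I would reduce to the case $\sigma_x = \mathrm{id}$. Using the proposition on the structure of order filters in $\pp_n$, since $N(x,y) = N(x,y') = B$, the whole interval $[x, y\vee y']$ lies in the factor $\pp_{|B|}$ of the product decomposition $\pp_{|B_1|}\times \pp_{|B_2|}\times\cdots$ of the filter above $x$. In this factor the minimum $x$ corresponds to $(0_{|B|}, \mathrm{id})$, so after reduction $c_i(\sigma_x) = 0$ and the target becomes $c_i(\sigma_\phi) = 0$ for $i > M$. This reduction is the main technical step, as one must verify that the code-based invariants $c_i$ and $m$ transfer coherently under the order-filter isomorphism.

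Second, by the definition of $m(x,y)$ and $m(x,y')$, we have $c_i(\sigma_y) = c_i(\sigma_{y'}) = 0$ for all $i > M$. In terms of the $p_0$ invariant, this says that the initial zero-prefix of $\gamma$ extends past index $M$ for both $y$ and $y'$. Applying Lemma~\ref{lemm:samep0} then gives $p_0(\sigma_{y\vee y'}) = \min(p_0(\sigma_y), p_0(\sigma_{y'}))$, so $c_i(\sigma_{y\vee y'}) = 0$ for $i > M$ as well.

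Third, for an arbitrary $\phi \in [x, y\vee y']$, Lemma~\ref{lem:invcode} yields $\gamma(\sigma_x) \leq \gamma(\sigma_\phi) \leq \gamma(\sigma_{y\vee y'})$ in lex order. Since both endpoints have $c_i = 0$ at all positions $i > M$, a direct lex-sandwich argument forces $c_i(\sigma_\phi) = 0$ for $i > M$ as well, completing the proof: if some $i^{*} > M$ were the largest index with $c_{i^{*}}(\sigma_\phi) \neq 0$, then the first lex-discrepancy between $\gamma(\sigma_\phi)$ and $\gamma(\sigma_{y\vee y'})$ would occur at $i^{*}$ with $c_{i^{*}}(\sigma_\phi) > 0 = c_{i^{*}}(\sigma_{y\vee y'})$, contradicting $\gamma(\sigma_\phi)\leq\gamma(\sigma_{y\vee y'})$. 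Once Step 1 is in place, Steps 2 and 3 are routine applications of Lemma~\ref{lemm:samep0} and Lemma~\ref{lem:invcode}.
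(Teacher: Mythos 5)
Your proposal is correct and follows essentially the same route as the paper: reduce to $x = \hat 0$ (equivalently $\sigma_x = \mathrm{id}$) via the order-filter decomposition above $x$, then combine the $p_0$-invariant, Lemma~\ref{lemm:samep0} applied to $y\vee y'$, and the monotonicity of codes (Lemma~\ref{lem:invcode}) to bound $m(u,v)$. The only cosmetic difference is that you package the middle steps as the slightly stronger claim that $c_i(\sigma_\phi)=0$ for all $\phi\in[x,y\vee y']$ and $i>M$, whereas the paper goes directly through the inequality $m(u,v)\leq n-p_0(v)\leq n-p_0(y\vee y')$; these are the same argument stated in two equivalent ways, and your explicit assumption $y\vee y'\neq\hat 1$ matches the implicit one made when the paper invokes Lemma~\ref{lemm:samep0}.
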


\begin{proof}
  We first prove the case where $x=\hat 0$.  We clearly have $m(u,v) \leq n - p_0(v)$.  On the other side, $p_0(v) \geq p_0(y' \vee y) = \min( p_0(y) , p_0(y') )$ by Lemma~\ref{lemm:samep0}.  So $m(u,v) \leq \max( n - p_0(y) , n - p_0(y') )$.  Using $x= \hat 0$, we easily get $m(x,y) = n - p_0(y)$ and similarly for $y'$.  We thus get the desired inequality.
 
  Now, consider the general case ($x\neq \hat 0)$.
  Let $b = N(x,y)$.  All elements in the interval $[x,y\vee y']$ are obtained from $x$ by splitting the block $b$.  We can discard other blocks of $x$ to identify the interval $[x,y'\vee y]$ with an initial interval in $\pp_{n'}$ with $n' = \# b$ (initial means that the bottom element of this interval is the bottom element of the poset).  Via this identification, it is straightforward to see that the quantities $m(x,y)$, $m(x,y')$, etc., are changed via a relabelling which preserves their relative order.  We thus get the result from the case $x = \hat 0$.
\end{proof}

\begin{lemm} \label{lemm:increasingm}
  Suppose that we have $\phi \lessdot \chi \lessdot \psi$ in $\pp_n$.  If there exists no $\chi'$ such that $\phi \lessdot \chi' \lessdot \psi$ and $\chi' \prec_\phi \chi$, then we have $m(\phi,\chi) \leq m(\chi,\psi)$.
\end{lemm}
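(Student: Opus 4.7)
The plan is to argue by contraposition: assume $m(\phi,\chi)>m(\chi,\psi)$ and construct $\chi'\in\pp_n$ with $\phi\lessdot\chi'\lessdot\psi$ and $\chi'\prec_\phi\chi$. Set $b:=N(\phi,\chi)$ and $b':=N(\chi,\psi)$, and distinguish according to whether $b'$ is already a block of $\pi$ (equivalently, $b'\neq b$) or $b'$ is one of the two pieces into which $b$ was split when passing from $\phi$ to $\chi$.

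When $b'\in\pi$, let $\chi'$ be the unique atom of $\phi$ with $N(\phi,\chi')=b'$ and $\chi'\leq\psi$, which exists because $\pi_\psi$ is obtained from $\pi$ by splitting both $b$ and $b'$ (so $\chi'$ is built from $\psi$ by reversing only the $b$-split). Lemma~\ref{lemm:interval_of_4} then gives $\chi\vee\chi'=\psi$ together with the $m$-swap $m(\phi,\chi')=m(\chi,\psi)$, so the hypothesis yields the strict inequality $m(\phi,\chi')<m(\phi,\chi)$; the remark following Definition~\ref{def:Nm} upgrades this to $\gamma(\chi')<\gamma(\chi)$, whence $\chi'\prec_\phi\chi$.

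The second case is where the real work lies. Here one block $b\in\pi$ is refined into three pieces $P_1,P_2,P_3$ in $\pi_\psi$, and $[\phi,\psi]$ has two or three atoms, all having $N$-block $b$ and with pairwise joins equal to $\psi$. Repeated application of Lemma~\ref{lemm:case7} to pairs of these atoms shows that $M:=m(\phi,\psi)$ equals the maximum of $m(\phi,\chi_i)$ over the atoms (and, in the three-atom case, that this maximum is attained at least twice). Combining this with the squeezing $\gamma(\phi)\leq\gamma(\zeta)\leq\gamma(\psi)$ for every $\zeta\in[\phi,\psi]$ and with the hypothesis $m(\phi,\chi)>m(\chi,\psi)$, one deduces $m(\phi,\chi)=M$ and $c_M(\chi)=c_M(\psi)$, so it is enough to exhibit an atom $\chi'$ of $[\phi,\psi]$ satisfying $c_M(\chi')<c_M(\psi)$: such a $\chi'$ automatically agrees with $\chi$ at every position strictly above $M$ and is strictly smaller at $M$, giving $\gamma(\chi')<\gamma(\chi)$ and $\chi'\prec_\phi\chi$.

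The main obstacle is this last step, namely producing an atom $\chi'$ with $c_M(\chi')<c_M(\psi)$. Because $\sigma_{\chi_i}$ coincides with $\sigma_\phi$ outside $b$ and is, on each piece of its $2$-split of $b$, the unique increasing bijection onto the prescribed subset of $\sigma(b)$, the value $c_M(\chi_i)$ can be read off from the bipartition of $\sigma(b)$ associated with $\chi_i$. A case split on whether $M\in\sigma(b)$, combined with a direct comparison of the (at most three) non-crossing $2$-refinements of $\{P_1,P_2,P_3\}$, exhibits such a $\chi'$ in every configuration, completing the contrapositive.
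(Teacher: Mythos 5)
Your Case~1 is the paper's argument, merely rewrapped as a contrapositive: you identify the other atom $\chi'$ with $N(\phi,\chi')=b'$, invoke Lemma~\ref{lemm:interval_of_4} for $m(\phi,\chi')=m(\chi,\psi)$, and conclude via the remark after Definition~\ref{def:Nm}. That part is sound and matches the paper.

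Case~2 has a genuine gap. You reduce matters to exhibiting an atom $\chi'$ of $[\phi,\psi]$ with $c_M(\chi')<c_M(\psi)$, and then declare that ``a case split on whether $M\in\sigma(b)$, combined with a direct comparison of the (at most three) non-crossing $2$-refinements of $\{P_1,P_2,P_3\}$, exhibits such a $\chi'$ in every configuration.'' This is precisely the step that needs a proof, and you don't carry it out. The paper handles this by first reducing to $\phi=\hat 0$ (reusing the discard-other-blocks argument from the second paragraph of the proof of Lemma~\ref{lemm:case7}); after this reduction $m(\phi,\zeta)=n-p_0(\zeta)$, and the explicit $\chi'$ is obtained by merging, in $\psi$, the block containing the label $l=n-p_0(\psi)$ with the block containing the label immediately to its right. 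Without the reduction you lose both the clean identity $m(\phi,\zeta)=n-p_0(\zeta)$ and the guarantee that such a merge stays inside $b$ (so that $\phi\leq\chi'$); and without the explicit merge you have no construction at all. Relatedly, your claim that $M=\max_i m(\phi,\chi_i)$ follows from ``repeated application of Lemma~\ref{lemm:case7}'' is under-justified: Lemma~\ref{lemm:case7} bounds $m(u,v)$ for cover pairs inside the interval, but the equality (and the extension of $m$ to the non-cover pair $(\phi,\psi)$) requires an additional argument; the paper gets the needed control from Lemma~\ref{lemm:samep0} after the $\phi=\hat 0$ reduction. In short: the skeleton of your argument is right, but the load-bearing construction at the end of Case~2 is asserted rather than proved, and the reduction that makes it tractable is missing.
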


\begin{proof}
  First, assume that there exists $\chi'$ such that $\psi = \chi \vee \chi'$ and $N(\phi,\chi) \neq N(\phi,\chi') $.  By hypothesis, we have $\chi \prec_\phi \chi'$, so $\gamma(\chi)\leq \gamma(\chi')$ and $m(\phi,\chi) \leq m(\phi,\chi')$. Besides, we have $m(\phi , \chi' ) = m( \chi , \psi )$ by Lemma~\ref{lemm:interval_of_4}.  So $m(\phi , \chi ) \leq m( \chi , \psi )$.
  
  Otherwise, $\psi$ is obtained from $\chi$ by splitting one block.  It means that we can assume $\phi = \hat 0$, using the same argument in the second paragraph of the proof of the previous lemma.  Note that $m(\phi,\chi) = n - p_0(\chi)$ (using $\phi = \hat 0$).  Also, we have $n- p_0(\chi) \leq n-p_0(\psi)$ (since $\chi \leq \psi)$.  
  
  By a way of contradiction, assume $m(\phi,\chi) > m(\chi,\psi)$, so that $m(\chi,\psi) < n- p_0(\psi)$.  It remains to show that there exists $\chi'$ such that $\phi \lessdot \chi' \lessdot \psi$ and $m(\chi',\psi) = n- p_0(\psi)$, indeed it easily follows that $\chi' \prec_\phi \chi$.  To build $\chi'$, denote $n - p_0(\psi) = l$.  It means that in $\psi$, the label $l$ has a smaller label on its right, and labels $l+1,\dots, n $ are on the final part of the permutation.  By merging in $\psi$ the block containing the label $l$ with the block containing the label right to $l$, we obtain $\chi'$ which has the desired properties by construction.
\end{proof}

\subsection{\texorpdfstring{Proof of Lemma~\ref{lemma_for_shelling}}{Proof of Theorem 34}}

From hypotheses of Lemma~\ref{lemma_for_shelling}, it follows that $y$, $y'$ and $z$ satisfy $\gamma(y') \leq \gamma(y) \leq \gamma(z)$. We decompose the proof into several steps: the first three steps deals with equality cases in the previous relation. From Case $4$, we will presume that $\gamma(x) \leq \gamma(y') < \gamma(y) < \gamma(z)$

\subsubsection{Case 1: $\gamma(y') = \gamma(y) = \gamma(z)$}

By Lemma~\ref{lemm:veeinpp}, $\gamma(y') = \gamma(z)$ is also the code of $y'\vee z$.  So all element in the $[x , y' \vee z]$ have the same code, except possibly $x$.  It follows that the natural projection $\pp_n \to NC_n$ sends $[x,y' \vee z]$ to an interval in $NC_n$, in a way which is compatible with the orders $\prec_x$, $\prec_y$, etc.  So this case follows from Lemma~\ref{lemma:shelling_nc}.

\subsubsection{Case 2: $\gamma(y') < \gamma(y)=\gamma(z)$}

Note that we have $\gamma(x)<\gamma(y)$, since $\gamma(x) \leq \gamma(y') < \gamma(y)$.  It follows that $m(x,y)>0$ and $m(y,z)=0$.  By Lemma~\ref{lemm:increasingm}, there exists $y''$ such that $x\lessdot y''\lessdot z$ and $\gamma(y'') < \gamma(y)$.  In particular, $y'' \prec_x y$.

\subsubsection{Case 3: $\gamma(y')=\gamma(y) < \gamma(z)$}

As $y'\neq y$ and they have the same rank, we have $y<y'\vee y$.  So there exists $z'$ such that $y\lessdot z' \leq y'\vee y$.  Note that $z' \leq y'\vee y \leq y'\vee z$ .  By Lemma~\ref{lemm:veeinpp}, we have $\gamma(y)\leq \gamma(z') \leq \gamma(y'\vee y) = \gamma(y)$.  So $\gamma(z') = \gamma(y) < \gamma(z)$, and $z' \prec_y z$.

\subsubsection{Case 4: $y' \vee z = \hat 1$}

Since Cases 1 and 2 contain all situations where $\gamma(y)=\gamma(z)$, we can assume $\gamma(y)<\gamma(z)$.  If $y=(\pi,\sigma)$,
take $z' =(\pi',\sigma)$, where $\pi'\in NC_n$ is such that $\pi\lessdot \pi'$.  We clearly have $y\lessdot z'$ and $\gamma(y)=\gamma(z')$.  Since $\gamma(z') = \gamma(y) < \gamma(z)$, we have $z' \prec_y z$.  Since $y' \vee z = \hat 1$, we have $z'\leq y'\vee z$.

\subsubsection{Case 5: $p_0(y) > p_0(z)$}

Since Case~4 is already ruled out, we can assume $y'\vee z < \hat 1$, and in particular $y'\vee y < \hat 1$ (since $y'\vee y \leq y'\vee z$).  We have $p_0(y')\geq p_0(y)$ since $\gamma(y') \leq \gamma(y)$, and by Lemma~\ref{lemm:samep0} it follows $p_0(y'\vee y) = \min( p_0(y'), p_0(y) ) = p_0(y) $. 

Let $z' \in \pp_n$ such that $y\lessdot z' \leq y'\vee y$.  Note that $z'\leq y'\vee z$ since $y'\vee y \leq y'\vee z$.  We have $p_0(y) \geq p_0(z') \geq p_0(y'\vee y) = p_0(y)$, so $p_0(z') = p_0(y) > p_0(z) $ and $p_0(z') > p_0(z) $.  It follows that $\gamma(z') < \gamma(z)$, and $z' \prec_y z$.

\subsubsection{Case 6: $N(x,y') \neq N(x,y) $}

We can assume that $\gamma(x) < \gamma(y)$, otherwise $\gamma(x) = \gamma(y') = \gamma(y)$, and the situations where $\gamma(y') = \gamma(y)$ were already settled in Cases~1 and 3.  It follows that $m(x,y)>0$, and from Lemma~\ref{lemm:interval_of_4} we get $m(x,y') \neq m(x,y)$.  More precisely, since $y' \prec_x y$ we have $m(x,y') < m(x,y)$.

If there exists $y''\in \pp_n$ such that $x\lessdot y'' \lessdot z$ and $y''\prec_x y$, this case is settled.  Assume otherwise, so that we  have $m(x,y) \leq m(y,z)$ by Lemma~\ref{lemm:increasingm}.  

We thus have $m(x,y') < m(x,y) \leq m(y,z)$, so that $m(x,y') < m(y,z)$.  Besides, $m(x,y') = m(y,y\vee y')$ by Lemma~\ref{lemm:interval_of_4}.  We thus have $m(y,y\vee y') < m(y,z)$.  So, with $z' = y'\vee y$ we have $\gamma(z')<\gamma(z) $, and $z' \prec_x z$.

\subsubsection{Case 7: $N(x,y) = N(x,y') $ and $N(y,z) \not \subset N(x,y)$}

It is straightforward to see that the non-inclusion $N(y,z) \not \subset N(x,y)$ means: $z=y \vee \bar y$ where $x \lessdot \bar y$ and $N(x,y) \neq N(x,\bar y)$.  We are thus in the situation described in Lemma~\ref{lemm:interval_of_4}.  We can assume $\gamma(x) < \gamma(y)$ (see the previous case), and it follows that $m(x,y) \neq m(x, \bar y)$.

If $\gamma(\bar y) < \gamma(y)$, we can take $y'' = \bar y$ and we have $x\lessdot y''\lessdot z$, $y'' \prec_x y$ so that this case is settled.  Assume otherwise, so that $m(x,y) < m(x, \bar y)$.  By Lemma~\ref{lemm:interval_of_4}, we have $m(x, \bar y) = m(y,z)$.  So we have $m(x,y) < m(y,z)$. 

Let $z'$ be such that $y\lessdot z' \leq y'\vee y$.  By Lemma~\ref{lemm:case7}, we have $m(y,z') \leq m(x,y)$.  We thus have $m(y,z') \leq m(x,y) < m(y,z)$, and $m(y,z') < m(y,z)$.  It follows that $\gamma(z') < \gamma(z)$ and $z'\prec_y z$.

\subsubsection{Case 8 (last case)}

As Cases~6 and 7 are ruled out, we assume $N(x,y) = N(x,y') $ and $N(y,z) \subset N(x,y)$.  It follows that only one block of $x$ is involved, i.e., all  elements in $[x,y'\vee z]$ are obtained from $x$ by splitting this block.  We assume $x = \hat 0$, as we can focus on this case by ignoring the other blocks (it is easily seen that discarding the other blocks is compatible with the orders $\prec_\phi$). 

As Case~5 is ruled out, we assume $p_0(y) = p_0(z)$, and denote this quantity $n-l$ with $l\in\llbracket 1;n\rrbracket$ ($l>0$ since we can assume $\gamma(x)<\gamma(y)$, as was done in Case~6).   This means $c_l(y)$ (resp. $c_l(z)$) is the first non-zero element of $\gamma(y)$ (resp. $\gamma(z)$).  In particular, the label $l$ has a strictly smaller label to its right in $y$ (resp. $z$).

If $c_l(y) = c_l(z)$, we define $y''$ from $z$ by merging the block containing the label $l$ with the block containing the label to the right of $l$.  We have $\gamma(y'') < \gamma(y)$ as $p_0(y'') \geq p_0(z) = p_0(y) = n-l$ and $c_l(y'') < c_l(z) = c_l(y)$, so $y'' \prec_x y$ and this case is settled.  We thus assume $c_l(y) \neq c_l(z)$, so that $c_l(y) < c_l(z)$.

Now, let $b_1 = N(y,z)$, and let $b_2$ denote the other block of $y$ (since $y$ has rank 1, it has only two blocks).  We claim that $b_2$ is not a block of $y' \vee z$.  Indeed, otherwise it would also be a block of $y'$, so that $y'$ and $y$ have the same underlying noncrossing partition, and only their codes differ.  This is impossible, because it would imply $y' \vee y = \hat 1$ (and this is already ruled out by Case~4).  

Since $b_2$ is not a block of $y' \vee z$, there exists $z'$ such that $y \lessdot z' \leq y' \vee z$ and $N(y,z') = b_2$.  Note that $p_0(z') \geq p_0(y' \vee z) = p_0(z) = n-l$.  

Besides, the position of $l$ in $z'$ is the same as in $y$, since $l$ is not the label in $y$ of an element of $N(y,z') = b_2$. Indeed, since $c_l(y) < c_l(z)$, $l$ is the label in $y$ of some element in $N(y,z)=b_1$.  We thus get $c_l(z') = c_l(y)$.

From $p_0(z') \geq n-l$ and $c_l(z') = c_l(y) < c_l(z)$, we obtain $\gamma(z') < \gamma(z)$.  This completes the last case.

%%%%%%%%%%%%%%%%%%%%%%%%%%%%%%%%%%%%%%%%%%%%%%%%%%%%
\section{Enumeration of chains of parking functions}
%%%%%%%%%%%%%%%%%%%%%%%%%%%%%%%%%%%%%%%%%%%%%%%%%%%%
\label{sec_enum}

Edelman proved in \cite{edelman} that the zeta polynomial of $\pp_n$ is:
\[
  Z(\pp_n,k+1)  = (nk+1)^{n-1}. 
\]
In particular, setting $k=1$ we see that noncrossing 2-partitions and parking functions are equienumerous. 
  
Another result from~\cite{edelman} is that for $0\leq k \leq n-1$, the number of elements of rank $\ell$ in $\pp_n$, called the $\ell$th {\it Whitney number of the second kind}, is
\begin{equation}  \label{edelman_whit2}
   W_\ell(\pp_n) = \ell! \binom{n}{\ell} S_2(n,\ell+1), 
\end{equation}
where $S_2(n,k)$ are the Stirling numbers of the second kind.

Our main motivation for counting chains in the parking function poset comes from the relation with the homology of the poset, as explained Section~\ref{sec_homology}.  We will get in particular a nice formula for {\it Whitney number of the first kind}.  This will follow from Corollary~\ref{coro_enumchains}.

\subsection{Species and generating functions}

\begin{prop}
The species $\mathcal{P}_f$ of (non-empty) parking trees satisfies:
\begin{equation}
\mathcal{P}_f = \sum_{k \geq 1} \mathcal{E}_k \times (1+\mathcal{P}_f)^k,
\end{equation}
where $\mathcal{E}_k(V) = \delta_{|V|=k} \mathbb{K}$ (with $\mathbb{K}$ the ground field) and the species of non-empty sets is $\mathcal{E}^+  := \mathcal{E}-1=\sum_{k\geq 1} \mathcal{E}_k$.
\end{prop}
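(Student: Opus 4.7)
The plan is to unpack the recursive structure of parking trees at the root, and show that this bijective decomposition is precisely what the stated identity of species encodes. The argument is structural rather than computational; there is no real obstacle, only bookkeeping of the species conventions.

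First I would fix notation: let $T$ be a non-empty parking tree on a finite set $V$, and let $N \subseteq V$ be the label of its root. By definition $N$ is a non-empty subset of $V$, of some size $k \geq 1$, and the root has exactly $k$ children, listed in left-to-right order because the tree is planar. Each child $i \in \llbracket 1; k \rrbracket$ is itself either a leaf (an empty parking tree on the empty set) or a non-empty parking tree $T_i$ whose internal vertex labels partition some subset $W_i \subseteq V \setminus N$. Because the labels of all internal vertices of $T$ form a partition of $V$, the sets $W_i$ (with empty $W_i$ allowed when the $i$-th child is a leaf) form a weak set composition of $V \setminus N$.

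Second I would translate each piece of this decomposition into species operations. Choosing the root label, a non-empty subset $N \subseteq V$ of size $k$, is exactly the species $\mathcal{E}_k$ acting on $N$. An ordered $k$-tuple whose $i$-th entry is either the empty tree (the unit species $1$) or an element of $\mathcal{P}_f(W_i)$, where $(W_1, \ldots, W_k)$ is a weak composition of $V \setminus N$, is by definition an element of the species product $(1 + \mathcal{P}_f)^k$ evaluated on $V \setminus N$. The species product $\mathcal{E}_k \cdot (1 + \mathcal{P}_f)^k$ then automatically sums over the disjoint decomposition $V = N \sqcup (V \setminus N)$, yielding exactly the data of a parking tree on $V$ whose root has a label of size $k$.

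Finally I would note that the map from $T$ to $(N, T_1, \ldots, T_k)$ is bijective on $V$: conversely, any choice of non-empty $N \subseteq V$, together with an ordered $k$-tuple of (possibly empty) parking trees on a weak composition of $V \setminus N$, reconstructs a unique parking tree on $V$ by attaching those subtrees in order as the children of a new root labelled by $N$. Summing over $k \geq 1$ (the root label must be non-empty) then gives the desired species identity
\[
  \mathcal{P}_f \;=\; \sum_{k \geq 1} \mathcal{E}_k \times (1 + \mathcal{P}_f)^k.
\]
The only subtlety worth spelling out is that this equality must be understood as a natural isomorphism of species (that is, functorial in the underlying set $V$ and equivariant under relabellings), but this is clear because the decomposition described uses only canonical data of the parking tree.
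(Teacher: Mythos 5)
Your proposal is correct and follows exactly the decomposition at the root that the paper invokes; the paper gives no explicit proof, merely the remark ``this is obtained from the tree structure,'' and your argument is a careful spelling-out of precisely that observation, with the species product bookkeeping done correctly.
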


This is obtained from the tree structure, and accordingly we can write an equation in terms of symmetric functions for the Frobenius image of the characters of $\mathcal{P}_f$.

\begin{rema} This equation cannot be simplified, as for generating series, in 
\begin{equation*}
\mathcal{P}_f = \mathcal{E}^+ \circ ( \mathcal{X} \times \left( 1+\mathcal{P}_f \right)),    
\end{equation*}
where $\mathcal{E}^+$ is the species of non-empty sets and $\mathcal{X}$ is the singleton species. Indeed, this equation is not true when considering the action of the symmetric group usually defined on parking function. The action of the symmetric group associated with this equation would not only exchange letter in the associated word but also move values. For instance, the parking function $112$ would be send by the transposition $(12)$ on the parking function $113$.
\end{rema}

The set of \emph{weak $k$-chains} of parking functions on $I$ is
the set $\operatorname{PF}^I_k$ of $k$-tuples $(a_1,\ldots, a_k)$ 
where $a_i$ are parking functions on $I$ and $a_i \leq a_{i+1}$. 
The species which associates to any set $I$ the set 
$\operatorname{PF}^I_k$ is denoted by $\Ckl$.

\begin{thm} \label{thmspecies} We have:
\begin{equation}
\Ckl = \sum_{p \geq 1} \Ckmupl \times \left( t\Ckl +1\right)^p,
\end{equation}
where $\Ckmupl (V) = \delta_{|V|=p} \Ckmul (V)$ on any set $V$ of size $p$.

In terms of generating functions, this translates to:
\begin{equation} 
C_{k,t}^l = C_{k-1,t}^l \circ  \left( x \left( tC_{k,t}^l+1\right) \right).
\end{equation}
\end{thm}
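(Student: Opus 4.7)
The plan is to generalize the species recursion for parking trees, $\mathcal{P}_f = \sum_{p \geq 1} \mathcal{E}_p \times (1 + \mathcal{P}_f)^p$, from single parking functions to weak $k$-chains by extracting canonical data from the parking tree of the coarsest chain element. Given $\mathbf{a} = (a_1 \leq \cdots \leq a_k) \in \operatorname{PF}_k^V$, I pass via Lemma~\ref{BijParkFunctTree} to parking trees $T_1, \ldots, T_k$ and read off from $T_1$ its root label $R \subseteq V$ of size $p \geq 1$, together with the ordered sequence $V_1, \ldots, V_p$ of vertex sets supporting its $p$ (possibly empty) children, so that $V = R \sqcup V_1 \sqcup \cdots \sqcup V_p$.

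The central structural claim is that this decomposition of $V$ is preserved along the chain, in the sense that every internal label of every $T_i$ is contained in one of the parts $R, V_1, \ldots, V_p$. This is proved by induction on $i$, using that each cover relation in $\pp_n$ refines exactly one block of the underlying noncrossing partition (as made explicit by the parking-tree cover operations of Section~\ref{sec1}). Consequently one may restrict each $a_i$ to $R$ and to each $V_j$ by contracting the labels not lying in the target set, producing well-defined parking functions that inherit the inequalities $a_i \leq a_{i+1}$. For each nonempty $V_j$ this yields a $k$-chain $(a_1|_{V_j}, \ldots, a_k|_{V_j}) \in \Ckl(V_j)$, matching a $t\Ckl$-factor of $(t\Ckl+1)^p$, while an empty $V_j$ matches the $1$-factor. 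On $R$, the restriction $a_1|_R$ is forced to be the minimum parking function on $R$ (since $R$ is one single block of $\pi_1$), so stripping it off yields a $(k-1)$-chain $(a_2|_R, \ldots, a_k|_R)$ on a size-$p$ set, i.e.\ an element of $\Ckmupl(V)$.

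The resulting map is $\mathfrak{S}_V$-equivariant and admits an explicit inverse that rebuilds $T_1$ from $R$ and the top roots of the restricted data, then reconstructs $a_2, \ldots, a_k$ block by block from the $(k-1)$-chain on $R$ and the $k$-chains on each nonempty $V_j$. The weight $t$ attached to each nonempty subtree records, after iterating the decomposition, one factor of $t$ per internal vertex of $T_1$ other than the root. Summing over $p$ yields the species identity; the standard species-to-series dictionary then translates it into the plethystic equation $C_{k,t}^l = C_{k-1,t}^l \circ (x(tC_{k,t}^l + 1))$. The delicate point is the partition-stability claim together with the well-definedness of the restriction map; once these are established, the remaining ingredients — equivariance, explicit inverse, and the passage to generating series — are routine bookkeeping.
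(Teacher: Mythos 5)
Your approach is essentially the paper's: decompose a weak $k$-chain via the root $R$ of the coarsest parking tree $T_1$ together with the label sets $V_1,\ldots,V_p$ of its children, and use the stability fact (your observation that every label of every $T_i$ is contained in one of $R,V_1,\ldots,V_p$, proved by tracking a single cover) — this is precisely the paper's terse assertion that ``the splittings of the root and of its descendants do not mix,'' and the $(k-1)$-chain on $R$ arises for the same reason the paper states (the bottom of that chain is forced). Your write-up supplies more detail than the paper's two-sentence argument, which is welcome; the one slip is the side remark on the $t$-weight: since $t$ tracks $\rk(\phi_k)=|\pi_k|-1$, iterating the decomposition produces one factor of $t$ per internal vertex (other than the root) of the \emph{finest} tree $T_k$, not of $T_1$, but this does not affect the bijection or the functional equation.
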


\begin{proof}
  This decomposition is obtained by separating the root, of size $p$, and elements in the chain obtained from its splitting on one side and the subtree attached to its root in the minimal element of the chain and elements obtained from them on the other side. This is made possible by the fact that the splittings of the root and of its descendants do not mix. The chain obtained by restricting to the root and its splitting is equivalent to a chain of length $k-1$ as the minimal element of the chain can easily be recovered by merging all the vertices.
\end{proof}

Note that from the functional equation in terms of species, it is theoretically possible to find a formula for the character of $\mathfrak{S}_n$ acting on the chains as above.  Here we only consider the enumerative result.

\begin{rema}
In terms of usual generating series, the computations in terms of generating series are the same as if we considered chains in a poset of forests of rooted non planar trees. In such a poset, the corresponding species would satisfy the following equation, denoting by $\mathcal{F}_{k,t}^l$ the species of large chains:
\begin{equation}
    \mathcal{F}^l_{k,t} = \left(\mathcal{E}-1\right) \circ (X \left(t\mathcal{F}_{k,t}^l+1\right)^k)
\end{equation}
Finding such an order is however still an open question.
\end{rema}

From Theorem \ref{thmspecies}, we show by induction the following formula, for any $1 \leq i \leq k$, which leads to Corollary \ref{leCor}:
\begin{equation} 
C_{k,t}^l = C_{k-i,t}^l \circ  \left( x \left( tC_{k,t}^l+1\right)^i \right).
\end{equation}

\begin{coro} \label{leCor} The generating function of weak $k$-chains in the 2-partition posets satisfies:
\begin{equation} \label{eqfuncC}
C_{k,t}^l = \exp \left( x \left( tC_{k,t}^l+1\right)^k \right) -1.
\end{equation}
\end{coro}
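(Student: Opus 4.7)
The plan is to iterate Theorem \ref{thmspecies} until only $C_{1,t}^l$ appears on the right, and then inject into it the closed form for $C_{1,t}^l$ coming from the $k=1$ instance of the same theorem.

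First, I would prove by induction on $i$, for $1 \leq i \leq k-1$, the identity already announced just above the statement,
\begin{equation*}
C_{k,t}^l = C_{k-i,t}^l \circ \bigl( x(tC_{k,t}^l+1)^i \bigr).
\end{equation*}
The base case $i=1$ is precisely the generating-function form of Theorem \ref{thmspecies}. For the inductive step, I substitute $C_{k-i,t}^l = C_{k-i-1,t}^l \circ (x(tC_{k-i,t}^l+1))$ (the theorem applied at level $k-i$) into the induction hypothesis. In the resulting inner composition, the variable of $x(tC_{k-i,t}^l+1)$ is replaced by $x(tC_{k,t}^l+1)^i$, producing $x(tC_{k,t}^l+1)^i \cdot \bigl( t\, C_{k-i,t}^l(x(tC_{k,t}^l+1)^i) + 1 \bigr)$. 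The induction hypothesis itself rewrites $C_{k-i,t}^l(x(tC_{k,t}^l+1)^i)$ as $C_{k,t}^l$, so the factor $(tC_{k,t}^l+1)$ combines with $(tC_{k,t}^l+1)^i$ to advance the exponent to $i+1$, closing the induction.

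Specialising at $i = k-1$ gives $C_{k,t}^l = C_{1,t}^l \circ (x(tC_{k,t}^l+1)^{k-1})$. The $k=1$ instance of Theorem \ref{thmspecies}, after summation of the exponential series (a $0$-chain on a $p$-set contributes $x^p/p!$ to the generating function), reads
\begin{equation*}
C_{1,t}^l + 1 = \exp\bigl(x(tC_{1,t}^l+1)\bigr).
\end{equation*}
Substituting $x \mapsto x(tC_{k,t}^l+1)^{k-1}$ in this equality, the left-hand side becomes $C_{k,t}^l + 1$ by the iterated formula, while the argument of $\exp$ telescopes to $x(tC_{k,t}^l+1)^{k-1}(tC_{k,t}^l+1) = x(tC_{k,t}^l+1)^k$, which delivers \eqref{eqfuncC}; the case $k=1$ needs no iteration and is the displayed identity itself. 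The only delicate point is the composition bookkeeping in the inductive step, but once one recognises the inner evaluation as an instance of the induction hypothesis, the exponents telescope automatically and the rest is purely formal.
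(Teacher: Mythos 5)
Your proposal is correct and matches the paper's approach: the paper also derives Corollary~\ref{leCor} by iterating Theorem~\ref{thmspecies} to obtain $C_{k,t}^l = C_{k-i,t}^l \circ (x(tC_{k,t}^l+1)^i)$ via exactly the telescoping substitution you describe, and then closes with the exponential. The only cosmetic difference is that the paper iterates to $i=k$ and uses $C_{0,t}^l = e^x - 1$ directly, whereas you stop at $i=k-1$ and invoke the $k=1$ instance $C_{1,t}^l + 1 = \exp(x(tC_{1,t}^l+1))$; both are the same argument up to a shift of one step.
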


From \eqref{eqfuncC},  $C_{k,t}^l$ is the compositional inverse of $\operatorname{ln}(1+x)\left(1+tx \right)^{-k}$.
By using Lagrange inversion, it is possible to extract the coefficients and we get:

\begin{coro} \label{coro_enumchains}
 The number of chains $\phi_1 \leq \dots \leq \phi_k$ in $\pp_n$ where $\rk(\phi_k)=\ell$ is:
 \begin{equation}  \label{eq_enumchains}
    \ell !  \binom{kn}{\ell} S_2(n,\ell+1).
 \end{equation}
\end{coro}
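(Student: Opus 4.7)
The plan is to follow the hint given after Corollary~\ref{leCor}: apply Lagrange inversion to \eqref{eqfuncC}, which identifies $C_{k,t}^l$ as the compositional inverse (in $x$) of $f(y) := \ln(1+y)(1+ty)^{-k}$. Since $f(0)=0$ and $f'(0)=1$, the Lagrange inversion formula gives
\begin{equation*}
  n\,[x^n]\,C_{k,t}^l \;=\; [y^{n-1}]\left(\frac{y}{f(y)}\right)^n \;=\; [y^{n-1}]\left(\frac{y}{\ln(1+y)}\right)^n(1+ty)^{kn}.
\end{equation*}
Expanding $(1+ty)^{kn}$ by the binomial theorem and extracting $[t^\ell]$ reduces the problem to evaluating the single ``universal'' coefficient $[y^{n-1-\ell}]\bigl(y/\ln(1+y)\bigr)^n$.

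For this coefficient, the key move is the substitution $u=\ln(1+y)$, i.e.\ $y = e^u-1$ with $dy = e^u\,du$, which turns the awkward denominator $\ln(1+y)^n$ into the monomial $u^n$. Performing this change of variables inside the formal residue $[y^j] = \operatorname{Res}_{y=0} y^{-j-1}\,dy$ yields
\begin{equation*}
  [y^j]\left(\frac{y}{\ln(1+y)}\right)^n \;=\; [u^{n-1}]\,e^u\,(e^u-1)^{n-1-j}.
\end{equation*}
Setting $j = n-1-\ell$, expanding $(e^u-1)^\ell$ by the binomial theorem and using $[u^{n-1}]\,e^{(m+1)u} = (m+1)^{n-1}/(n-1)!$ then gives
\begin{equation*}
  (n-1)!\,[y^{n-1-\ell}]\left(\frac{y}{\ln(1+y)}\right)^n \;=\; \sum_{m=0}^{\ell}(-1)^{\ell-m}\binom{\ell}{m}(m+1)^{n-1}.
\end{equation*}

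It remains to recognise this alternating sum as a Stirling number of the second kind. Using $(\ell+1)\binom{\ell}{m}=(m+1)\binom{\ell+1}{m+1}$ and the reindexing $j=m+1$, the sum rewrites as
\begin{equation*}
  \frac{1}{\ell+1}\sum_{j=0}^{\ell+1}(-1)^{\ell+1-j}\binom{\ell+1}{j}j^{n} \;=\; \ell!\,S_2(n,\ell+1)
\end{equation*}
by the classical formula $S_2(n,r)=\tfrac{1}{r!}\sum_{j=0}^{r}(-1)^{r-j}\binom{r}{j}j^n$. Combining the three displays gives $n!\,[x^n t^\ell]\,C_{k,t}^l = \ell!\binom{kn}{\ell}S_2(n,\ell+1)$, which is the desired count since $C_{k,t}^l$ is the exponential generating function enumerating weak $k$-chains weighted by $t^{\rk(\phi_k)}$. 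The main technical obstacle is the evaluation of $[y^{j}]\bigl(y/\ln(1+y)\bigr)^n$: a direct expansion of $\ln(1+y)^{-n}$ is unwieldy, and the substitution $u=\ln(1+y)$ is the crucial trick that converts the logarithm into a monomial and makes the extraction routine. As a sanity check, specialising at $t=1$ (or summing over $\ell$) recovers the zeta polynomial $Z(\pp_n,k+1)=(kn+1)^{n-1}$ of Edelman, and $k=1$ recovers the Whitney numbers of the second kind \eqref{edelman_whit2}.
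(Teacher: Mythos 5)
Your proof is correct and carries out in detail the Lagrange-inversion computation that the paper merely asserts just before the corollary (``By using Lagrange inversion, it is possible to extract the coefficients''); the substitution $u=\ln(1+y)$ to evaluate the residue, and the reindexing via $(\ell+1)\binom{\ell}{m}=(m+1)\binom{\ell+1}{m+1}$ to recognize $\ell!\,S_2(n,\ell+1)$, both check out. The paper additionally supplies a separate bijective proof via $k$-parking trees in Section~\ref{ktree}, but your analytic argument is exactly the route announced after Corollary~\ref{leCor}.
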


\subsection{Bijective proof of Corollary~\ref{coro_enumchains}} \label{ktree}
We give here a bijective proof of this corollary, relying on the notion of $k$-parking trees.

\begin{defi}
A \emph{$k$-parking tree} on a set $L$  is a rooted plane tree $T$ such that:
\begin{itemize}
    \item internal vertices of $T$ are labelled with nonempty subsets of $L$, which form a set partition of $L$,
    \item leaves are labelled by empty sets,
\item each vertex has as many children as $k$ times the number of elements in its label.
\end{itemize}
\end{defi}

\begin{figure}
    \centering
\includegraphics{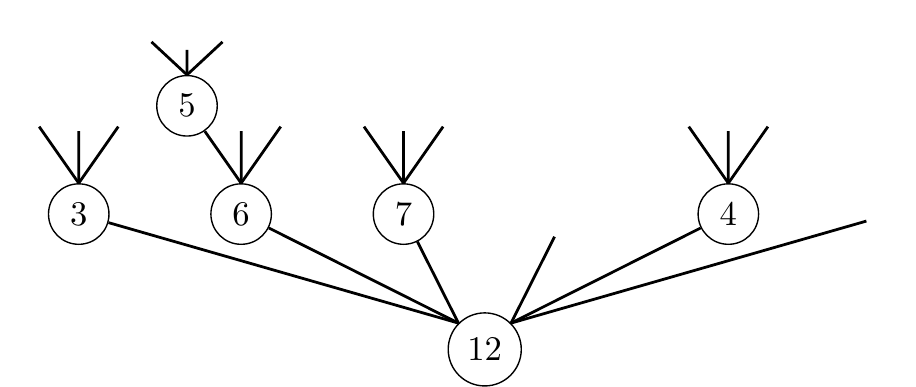}

\includegraphics[scale=0.7]{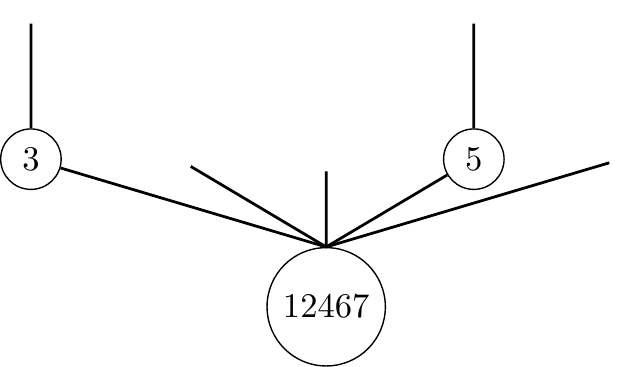} $\leq$ \includegraphics[scale=0.7]{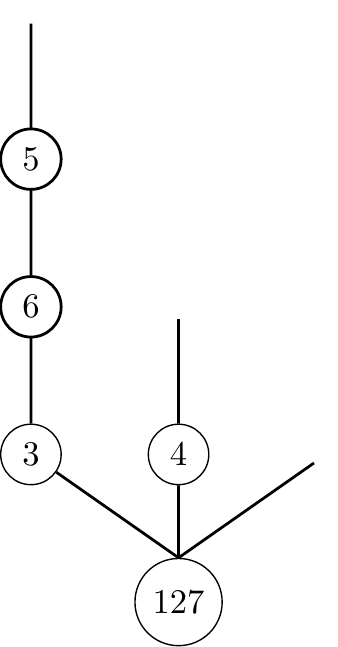} $\leq$ \includegraphics[scale=0.7]{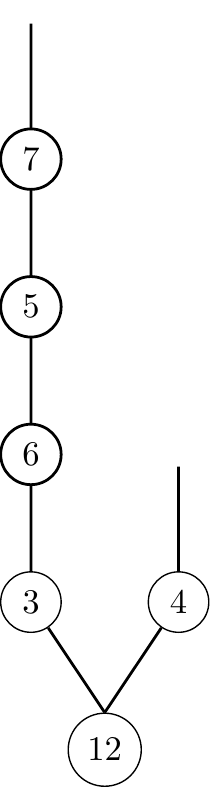}
    \caption{A $3$-parking tree and the associated chain in the poset}
    \label{fig:kTree}
\end{figure}

We will group edges from a vertex into uplets of $k$ edges, that will be called \emph{broods}, as drawn on Figure \ref{fig:kTree}. The $i$th element of this uplet will be called the child of index $i$. On the example on top of Figure~\ref{fig:kTree}, $7$ is then the child of index $3$ of the first brood of $12$, whereas $6$ and $4$ are children of index $2$, respectively of the first and second broods.

Yan~\cite{yan} introduced a variant of parking functions, such that there are $(kn+1)^{n-1}$ of them.  They can also be related with $k$-element chains in $\pp_n$.

\begin{defi}
  A $k$-parking function  of length $n$ is a word $w_1\dots w_n$ of positive integers, such that for all $j\in\llbracket1;n\rrbracket$, $\#\{ \, i \, : \, w_i \leq k(j-1)+1 \} \geq j$ (equivalently, the increasing sort of $w_1\dots w_n$ is below $1,k+1,2k+1,\dots$, entrywise). The symmetric group acts on $k$-parking functions in a natural way: for $\sigma\in\mathfrak{S}_n$, $\sigma\cdot (w_1 \dots w_n) = w_{\sigma^{-1}(1)} \dots w_{\sigma^{-1}(n)}$.
\end{defi}

There is between $k$-parking trees and $k$-parking functions of the same kind as the bijection between parking trees and parking functions stated in Lemma~\ref{BijParkFunctTree}. 

\begin{lemm}
There is a bijection between $k$-parking trees on $\llbracket 1 ; n \rrbracket$ and $k$-parking functions of length $n$, which preserves the action of the symmetric group.
\end{lemm}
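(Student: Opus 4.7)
The plan is to adapt the proof of Lemma~\ref{BijParkFunctTree}, using weak set compositions of $\llbracket 1;n\rrbracket$ as an intermediate object.  To a $k$-parking function $w=w_1\ldots w_n$, associate the weak set composition $\varphi(w) = (E_1, \ldots, E_{kn+1})$ with $E_i = \{j : w_j = i\}$, padding with trailing empty sets so that the total length matches the number $kn+1$ of vertices of the associated $k$-parking tree.  The defining inequality $\#\{i : w_i \leq k(j-1)+1\} \geq j$ translates directly into $\sum_{i=1}^{k(j-1)+1} |E_i| \geq j$ for $j\in\llbracket 1;n\rrbracket$, and the inverse reassembling map $\psi$ is identical to the one used in the proof of Lemma~\ref{BijParkFunctTree}.

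From such a composition $c=(E_1,\ldots,E_{kn+1})$ I would build the $k$-parking tree $T(c)$ by the same recursive grafting rule as in Lemma~\ref{BijParkFunctTree}, with the sole modification that an internal vertex labelled $E_i$ must accommodate $k|E_i|$ children rather than $|E_i|$.  The inverse map $C$ simply reads the tree in prefix order.  The main verification is the analogue of the graftability argument: suppose for contradiction that at step $i+1$ (with $i+1\leq kn+1$) no ancestor of $E_i$ has an unfilled slot.  Then every $E_j$ for $j\leq i$ must already have its full complement of $k|E_j|$ children placed, yielding
\[
k\sum_{j=1}^{i} |E_j| = i - 1.
\]
On the other hand, setting $m=\lceil i/k\rceil$ one has $k(m-1)+1\leq i$ and $km\geq i$, so the $k$-parking inequality gives $\sum_{j=1}^{k(m-1)+1}|E_j|\geq m$ and hence $k\sum_{j=1}^i |E_j|\geq km\geq i$, contradicting the previous equation.

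Once graftability is established, the fact that each internal vertex ends up with exactly $k|E_j|$ children follows from a global count: the tree has $kn+1$ vertices, its total child-capacity is $k\sum_j |E_j|+1 = kn+1$, so every inequality must be an equality.  That $C$ inverts $T$ is immediate, as the prefix-order reading of $T(c)$ reproduces the grafting order.  Equivariance under $\mathfrak{S}_n$ is then transparent, since a permutation of $\llbracket 1;n\rrbracket$ acts identically on labels of the tree, entries of the composition, and letters of the word.  The main obstacle is the graftability inequality: it requires carefully matching the offset $k(m-1)+1$ in the $k$-parking condition against the integer $i$ (with a small case split depending on whether $k\mid i$), but beyond this index bookkeeping the argument is a direct transcription of the $k=1$ case.
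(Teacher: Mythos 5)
Your proposal follows exactly the paper's route: reuse the recursive grafting construction of Lemma~\ref{BijParkFunctTree} on weak set compositions of length $kn+1$, with each internal vertex now required to carry $k$ times as many children as the size of its label, and close the argument with the same two ingredients (a graftability argument by contradiction, then a global counting argument $kn+1 = \sum c_i + 1 \leq k\sum|E_i| + 1 \leq kn+1$ to force exact slot counts). You in fact make explicit a step the paper only asserts, namely that the $k$-parking inequality $\sum_{m=1}^{k(j-1)+1}|E_m|\geq j$ implies the grafting inequality $k\sum_{j=1}^{i}|E_j|\geq i$, via the choice $m=\lceil i/k\rceil$; that index bookkeeping is correct and matches equation~\eqref{condK} of the paper. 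The one small point you leave implicit (as does the paper) is the converse verification that prefix reading of a $k$-parking tree yields a composition satisfying the $k$-parking inequality, but this is the easy direction and your equivariance remark is also fine.
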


\begin{proof}
The proof of Lemma \ref{BijParkFunctTree} can be adapted to the $k$ case. The condition for a vertex labelled by $E_i$ to have $k|E_i|$ children is equivalent to the condition for a weak set composition to correspond to a $k$-parking function. This condition is given by 
\begin{equation}\label{condK}
\sum_{j=1}^i k |E_j| \geq i.
\end{equation}

Note that the set compositions considered here is a weak set composition of $\llbracket 1 ; n 
\rrbracket$ of length $kn+1$.
It is direct that a set composition read from a $k$-parking tree by reading the labels in a 
prefix manner satisfies Equation \eqref{condK}. Moreover, all the reasoning of the converse 
can be adapted to this case. The most tricky point of it is perhaps ensuring that for all 
inner vertices, the cardinality of its label coincides with the number of its children. Once 
again, the grafting algorithm ensures that an internal vertex does not have more children than
$k$ times the number of elements in its label. Denoting by $c_i$ the number of children of 
vertex $E_i$, we have $c_i \leq k|E_i|$. Moreover, every $E_i$ are grafted so the tree has 
$kn+1$ nodes, i.e. $kn+1=\sum_{i=1}^{n+1} c_i +1 \leq \sum_{i=1}^{n+1} k| E_i| +1\leq kn+1$. 
Hence every inequalities are equalities and every $E_i$ has exactly $k|E_i|$ children. 
\end{proof}

We can now prove Corollary \ref{coro_enumchains}, which is illustrated on Figure \ref{fig:kTree}.

The proof of Corollary \ref{coro_enumchains} will be in two steps : first proving that such chains are in bijection with $k$-parking trees having $\ell+1$ non-empty nodes and then that they are enumerated by the formula. The following corollary immediately follows from the bijection between $k$-chains in the poset and $k$-parking trees. 
\begin{coro}
  Relations in the poset are given by $2$-parking trees. 
  In more details, for any $a$ smaller than $b$ in the poset, there exists a $2$-parking tree $T$ such that 
  \begin{itemize}
      \item $a$ is obtained by merging children of index $2$ with their parent
      \item and $b$ is obtained by grafting every child of index $2$ in $T$ on the rightmost leaf of its elder sibling.
  \end{itemize}
    In other words, a parking tree $a$ is smaller than a parking tree $b$ in the poset if and only if:
  \begin{itemize}
      \item the nodes of $a$ are obtained as unions of some nodes of $b$
      \item $a$ is obtained from $b$ by choosing a set of rightmost edges in $b$ and for each edge $e$ in it, between a parent $p$ and a child $c$, by
      \begin{enumerate}
          \item deleting $e$
          \item merging $c$ with $p$ or any of its ancestor for which $e$ is on the rightmost branch of one of its child.
      \end{enumerate}
  \end{itemize}
 \end{coro}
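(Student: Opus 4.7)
The plan is to apply the bijection between $k$-chains in $\pp_n$ and $k$-parking trees established above, specialised to $k=2$. A weak 2-chain is by definition a pair $a\leq b$ in $\pp_n$, so this bijection yields a correspondence between 2-parking trees on $\llbracket 1;n\rrbracket$ and pairs $a\leq b$ in $\pp_n$. The content of the corollary is to read off this correspondence explicitly: the first half gives the inverse map $T\mapsto(a,b)$, and the second half translates the existence of such a $T$ into a purely combinatorial criterion on the pair $(a,b)$ alone.

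First I would revisit the proof of Theorem~\ref{thmspecies} at $k=2$ to track how a 2-parking tree $T$ encodes the pair $(a,b)$. Each vertex of the minimum element $a$ of size $p$ accounts for $p$ \emph{index-1} children in $T$, recording the structural edges of $a$; paired with each such index-1 edge is an \emph{index-2} subtree that records the further splitting of that block when passing to $b$. Consequently, merging every index-2 child into its parent (recursively) collapses $T$ back to the parking tree of $a$, which is the first bullet. For the second bullet, the nodes of $b$ are exactly the labels appearing in $T$ with each index-2 label becoming its own block, and the plane-tree shape of $b$ is obtained by grafting every index-2 child onto the rightmost leaf of its index-1 elder sibling: this is precisely the placement dictated by the weak-set-composition encoding of parking trees in Lemma~\ref{BijParkFunctTree}, since it is where $\min(c)$ gets inserted in the prefix reading of $b$.

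For the ``in other words'' characterisation, I would translate the existence of a 2-parking tree $T$ directly into a statement about the two parking trees $a$ and $b$. The set of index-2 children of $T$ corresponds to a distinguished set of rightmost edges of $b$, and each such edge $e=(p,c)$ records one elementary splitting: to descend from $b$ to $a$ one deletes $e$ and absorbs $c$ into either $p$ itself or into some ancestor of $p$ on whose rightmost branch $e$ lies. This gives both halves of the equivalence, since the merging operation is determined by where in the 2-parking tree we chose to attach the index-2 subtree.

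The main obstacle, which is bookkeeping rather than conceptual difficulty, is checking that the ``rightmost-ancestor'' flexibility is exactly what is needed to realise every relation $a\leq b$, with no superfluous choices. This follows by iterating the elementary cover description in $\pp_n$ recorded in Figure~\ref{fig_covers}: a single cover $a'\lessdot b'$ detaches a new block and grafts it at a rightmost available leaf, so a maximal chain from $a$ to $b$ lifts uniquely to such a rightmost-edge contraction pattern, and conversely every such pattern is the image of a genuine 2-parking tree via the bijection of the preceding lemma.
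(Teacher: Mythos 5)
Your proposal is correct and takes essentially the same approach as the paper: the paper states this corollary as an immediate consequence of the bijection $\Phi/\Psi$ between weak $k$-chains in $\pp_n$ and $k$-parking trees, specialised to $k=2$, which is exactly what you do. Your unpacking of how the index-1/index-2 structure of the 2-parking tree encodes the minimum $a$ (by merging index-2 children) and the maximum $b$ (by grafting index-2 children onto the rightmost leaf of their elder sibling), and how the choice of ancestor in the merging step corresponds to where the index-2 subtree sits in the brood, matches the paper's construction.
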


\begin{proof}[Proof of Corollary \ref{coro_enumchains}, step 1]

Let us first explain the bijection $\Phi$ between chains $\phi_1 \leq \dots \leq \phi_k$ in $\pp_n$ where $\rk(\phi_k)=\ell$ and $k$-parking trees having $\ell+1$ non-empty nodes.
Consider a $k$-parking tree having $\ell+1$ non-empty nodes. The $i$th parking tree of the associated chain is obtained by merging every children of indices strictly more than $i$ with their parent. In a same brood, the subtree whose root is the child of index $j$, for $2 \leq j \leq i$ is then grafted on:
\begin{itemize}
    \item the rightmost leaf of the subtree whose root is the child of index $j-h$ if this subtree is not empty and the children of indices $j-h+1$, ..., $j-1$ are empty,
    \item its parent otherwise.
\end{itemize}
We denote by $\Phi(T)[\ell]$ the $\ell$th tree of the chain $\Phi(T)$.
We do not create any cycle and transform every brood into a unique 
child: we then get a parking tree. Moreover, from a tree to 
another, the only differences are labels split with half of it 
brought to the rightmost leaf of the other: this exactly corresponds 
to covering relations in the poset. The obtained uplet of trees is 
then a chain of length $k$ in the parking poset.

To exhibit the inverse bijection $\Psi$, we start from a chain $\phi_1 \leq 
\dots \leq \phi_k$ in $\pp_n$. Let us construct a $k$-parking tree 
$T$ from it by induction on the number of vertices in $\phi_k$.
If $\phi_k$ has only one vertex $R$, the chain is constant ($\phi_i=\phi_k$, for every $i$). The associated $k$-parking tree is the $k$-parking tree with only one node labelled by $R$.  
Otherwise, let us assume that $\phi_k$ has $N+1$ vertices.
The vertices of $T$ are the $N+1$ vertices of $\phi_k$. 
Starting from the root, we then construct inductively $T$. 
The root $R$ of $T$ is exactly the root of $\phi_k$. The 
$j$th subtree $S_j$ of this root in $\phi_k$ gives the $j$th
 brood of $R$. Indeed, the root of $S_j$ is the child of 
index $i$ of the brood if it splits from $R$ at time $i$. 
Then, the position of indices strictly smaller than $i$ in 
the brood are empty (otherwise, they would be occupied by a 
descendant of the root who would also be a parent of $R$). 

Let us split the rightmost branch of $S_j$ into subtrees $S_j^i, \ldots, S_j^k$. To do so, we run from $R$ to the rightmost leaf of $S_j$. We define iteratively the forest and a strictly increasing function $f : \mathbb{N}^* \rightarrow \mathbb{N}^*$. The first step is to cut the edge between $R$ and $S_j$, then $f(1)=i$ (the index of the root of $S_j$). At step $\ell$, we cut the next edge such that the child is in the same vertex as $R$ in $\phi_{f(\ell)-1}$, but not in $\phi_{f(\ell)}$, with $k \geq f(\ell)>f(\ell -1)$. We end the procedure when either we reach the rightmost leaf of $S_j$ or at step $m$ where $f(m)=k$. We then define $S_j^{f(\ell)}$ to be the subtree whose root is the child of the edge cut at step $\ell$. The other $S_j^p$ are empty. We call, in what follows, this procedure the \emph{cutting procedure}.

\begin{lemm} \label{lem:LemmakPark}
To each $S_j^\ell$ can be associated a chain of parking tree, i.e. ${\phi_1}_{|V(S_j^\ell)}<\ldots<{\phi_k}_{|V(S_j^\ell)}$ is a chain of parking trees. 
\end{lemm}

\begin{proof}[Proof of Lemma \ref{lem:LemmakPark}]
  The only thing to prove is that ${\phi_i}_{|V(S_j^\ell)}$ is connected for every $i$. Let us consider a node $D$ in  $S_j^\ell$, and $C$ the root of $S_j^\ell$. We want to prove that in every $\phi_i$, it is either in the root of $\phi_i$ and $i<\ell$ or in a node which is a union of some nodes in  $S_j^\ell$ (excluding any other node).  
  
  Suppose first that $D$ is in the root of $\phi_1$. There is one $t$ such that $D$ is in the root of $\phi_{t-1}$ but no more in the root of $\phi_t$ as $D$ is not in the root of $\phi_k$. As $D$ is in $S_j^\ell$, $t$ is smaller or equal to $\ell$, otherwise it would have been split from the subtree $S_j^\ell$ by the cutting procedure. 
  
  Let us proceed now by the absurd and suppose that there is $B$ in $S_j^k$ ($k<\ell$) such that $B$ and $D$ are in the same vertex in $\phi_t$, this vertex being different from the root. If $t$ was smaller than $\ell$, the root of $S_j^\ell$ could not be an ancestor of $D$ because one cannot insert the splitting of a node on a path between two nodes. Then $t$ is greater than or equal to $\ell$.
  At step $t-1$, the vertex $D$ can only split to a position in one of the descendant subtree of the node $bd$ containing both $B$ and $D$. Hence, the only way for $D$ to be in $S_j^\ell$ is that $C$ is a descendant of $bd$ and $D$ splits at step $t$ to the subtree rooted in $C$. This splitting is only allowed if $C$ is not on the rightmost branch of this subtree. However, as $C$ splitted from the root after $k$, it can only be on the rightmost branch of this subtree: we get a contradiction. Hence, every $S_j^k$ evolves independently. 
  
  This proof is illustrated on Figure \ref{fig:PreuveLem}.
  \begin{figure}
      \centering
      $\phi_{l-1}=$~\includegraphics[scale=0.8]{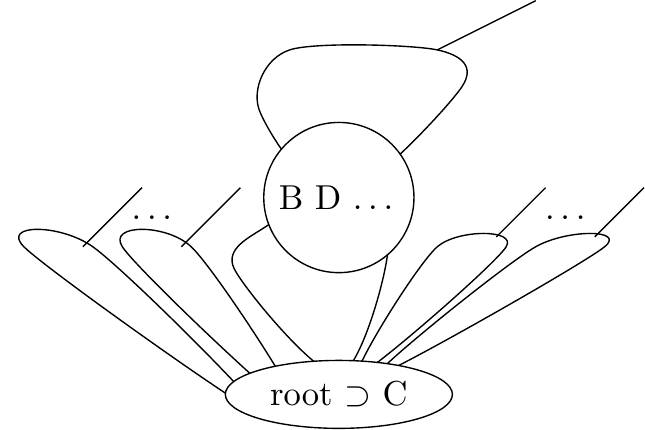},
      $\phi_{t-1}=$~\includegraphics[scale=0.8]{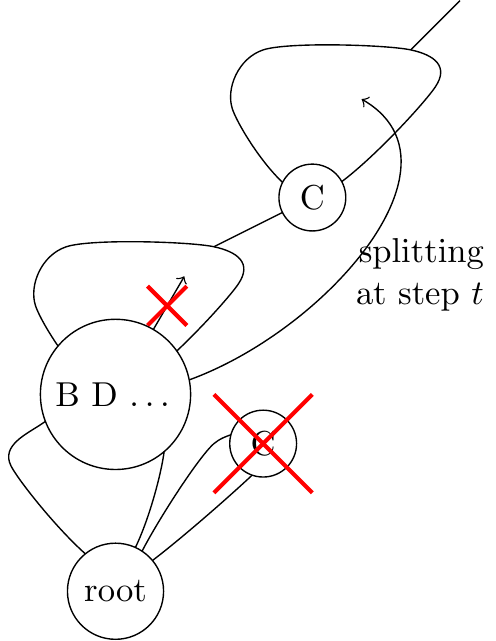},
      $\phi_{k}=$~\includegraphics[scale=0.8]{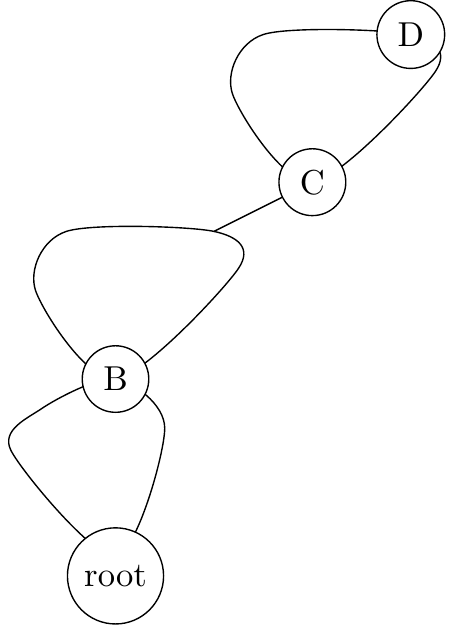}
        where    \includegraphics{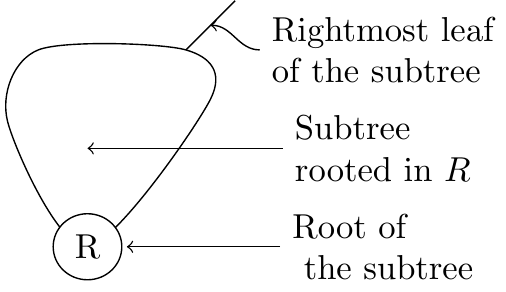}
      \caption{From top to bottom and left to right, $\phi_{l-1}$, $\phi_{t-1}$, $\phi_{k}$ in the proof of Lemma \ref{lem:LemmakPark} and legend of the drawings.}
      \label{fig:PreuveLem}
  \end{figure}
\end{proof}

The root of the subtree $S_j^\ell$, if it exists, is then defined to be the child of index $\ell$ of the $j$th brood of the root of $T$. Otherwise, this index of the brood correspond to an empty tree. Using Lemma \ref{lem:LemmakPark}, we can then apply the induction on each subtrees $S_j^i, \ldots, S_j^k$, for every $j$, to get the full $k$-parking tree $T$. 
\end{proof}

\begin{proof}[Proof of Corollary \ref{coro_enumchains}, step 2]

%%%%%%%%%%%%%%%%%
%% Enumeration %%
%%%%%%%%%%%%%%%%%

We now enumerate $k$-parking trees.  Consider the set of $k$-parking trees on $\llbracket 1;n\rrbracket$ having $\ell+1$ non-empty nodes. We can associate to it a kind of Prüfer code. First, we consider that a set $A$ is smaller than a set $B$ if the minimal element of $A$ is smaller than the minimal element of $B$. Let us consider a $k$-parking tree $T$ and number the half edges on the parent side from $1$ to $\ell$. Starting from $T_0=T$ at step $0$, we iteratively delete the smallest leaf $l_k$ of the remaining tree $T_k$, obtaine dafter the kth iteration of the loop, until no vertex is left in $T_\ell$. To each $l_k$ can be associated the number of its associated edge : we get a permutation $l_0\ldots l_{\ell-1}$ of $\{1, \ldots, \ell\}$.  We describe this algorithm in pseudo-code below, with $\epsilon$ denoting the empty word:

\SetCustomAlgoRuledWidth{13.2cm} 
\begin{algorithm}[H]
\SetAlgoLined
\KwResult{permutation $w$ of $\{1, \ldots, \ell\}$}
$T$ initial $k$-parking tree \;
$w \gets \epsilon$ \;
\For{$i\gets0$ \KwTo $\ell-1$ }{
$l \gets $ smallest leaf of $T$ \;
$w \gets w+$number of the half-edges attached to $l$ \;
$T \gets$ tree obtained by deleting $l$ in $T$\;
}
\Return w \;
 \caption{Construction of the code associated to a $k$-parking tree}
\end{algorithm}

Conversely, from a choice of vertex set ($S_2(n,\ell+1)$) $V$ (with $k|v|$ half-edges attached to each vertex $v$), a choice of used half-edges($\binom{kn}{\ell}$), and a permutation $\sigma$ of $\{1, \ldots, \ell\}$ ($\ell!$), one can build back a $k$-parking tree by iterating the following algorithm. We initialize the set $L$ with all vertices whose half-edges are not numbered by an integer between $1$ and $\ell$ and the word $w$ by $\sigma$. As long as $w$ is not empty, we pop (choose and remove) the element of $L$ with the smallest root and attach it to the half-edge corresponding to the first letter in $\sigma$. We get a new tree $t$. We delete the first letter of $\sigma$ and add $t$ in $L$ if no half-edge of it is labelled by an element of $\sigma$. We describe this algorithm in pseudo-code below:
\begin{algorithm}[H]
\SetAlgoLined
\KwResult{$k$-parking tree}
 $L \gets$ vertices of $V$ with no half-edges labelled\;
 $w \gets \sigma$\;
 \While{$w \leq \epsilon$}{
  $t_0 \gets \min L$\;
  $L \gets L-\{t\}$\;
  $t \gets $ Grafting of $t_0$ on the tree with half-edge $w[0]$\;
  $w \gets w[2:]$\;
  \If{$t$ has no half-edge in $w$}{
   $L \gets t$ \;
   }
  \If{$\operatorname{length}(w)=0$}{
   \Return $ t$ \;
   }
 }
 \caption{Construction of a $k$-parking tree}
\end{algorithm}
The termination of the algorithm is given by the strict decrease of the length of $w$. $L$ is never empty because there are $\ell+1-k$ trees at the $k$th iteration and $\ell-k$ numbered half-edges.
The two algorithms are reciprocal. This shows that a $k$-parking tree is equivalent to a partition, with elements of the partition having $k$ half-edges, among which we choose $\ell$ half-edges and use a permutation to encode the grafting on this half-edges. Hence they are counted by $\ell ! \binom{kn}{\ell} S_2(n, \ell+1)$.
\end{proof}

Clearly, the formula in \eqref{eq_enumchains} specializes to \eqref{edelman_whit2}, by letting $k=1$.
Also, using a general fact linking the zeta polynomial of a poset with its Möbius function, at $k=-1$ the formula above specializes to the {\it Whitney numbers of the first kind} of $\pp_n$, defined by:
\[
  w_{\ell}(\pp_n) = \sum_{\phi \in\ppp_n ,  \, \rk(\phi)=\ell }  \mu( \hat0 , \phi).
\] 
Note that the number $\mu( \hat0 , \phi)$ is a product of Catalan numbers.  Indeed, this interval is isomorphic to an interval in $NC_n$, so it follows from the result on the Möbius function of $NC_n$~\cite{kreweras}.  By letting $k=-1$ in \eqref{eq_enumchains}, we get 
\[
  w_{\ell}(\pp_n) = (-1)^{\ell} \ell! \binom{n+\ell-1}{n} S_2(n,\ell+1).
\]
In general, Whitney numbers of the first kind are the dimensions of the {\it Whitney modules}, which are useful to compute the homology of a poset (see the definition in the next section).

%%%%%%%%%%%%%%%%%%%%%%%%%%%%%%%%%%%%%%%%%%%%%%%%
\section{Homology of the parking function poset}
%%%%%%%%%%%%%%%%%%%%%%%%%%%%%%%%%%%%%%%%%%%%%%%%

\label{sec_homology}

We now study the homology associated to the parking function poset.  The reader may read Wachs' article \cite{wachs} as a general reference on this subject (in particular for Philip Hall's theorem, the Hopf trace formula, Whitney homology), and Munkres' book \cite{munkres} for more details on simplicial homology. 

\subsection{A first derivation using the zeta character}

Let $\bar\pp_n$ denote the {\it proper part} of $\pp_n$, i.e., $\pp_n$ with its bottom element removed (the topology associated to $\pp_n$ is trivial, so $\bar\pp_n$ is the poset to consider here).  We denote by $\Omega(\bar\pp_n)$ the {\it order complex} of $\bar\pp_n$, i.e., the simplicial complex having strict chains in $\bar\pp_n$ as simplices.  We are interested in the reduced simplicial homology of $\Omega(\bar\pp_n)$, but let us be more explicit.

\begin{defi}
For $-1\leq m\leq n-2$, the \emph{$m$th space of chains} is the vector space $\mathcal{C}_m$ freely generated by $m$-dimensional simplices in $\Omega(\bar\pp_n)$ (i.e., strict chains $\phi_1<\dots<\phi_{m+1}$, where $\phi_i\in\bar\pp_n$).  For $0\leq m \leq n-2$, we define a linear map $\partial_m:\mathcal{C}_m\to \mathcal{C}_{m-1}$ as follows: if $\Delta=\{\phi_1,\dots,\phi_{m+1}\}\in\Omega(\bar\pp_n)$ with $\phi_1<\dots<\phi_{m+1}$, then
\[
  \partial_m( \Delta ) = \sum_{i=1}^{m+1} (-1)^i \cdot (\Delta\setminus\{\phi_i\}).
\]
It is straightforward to check that $\partial_{m}\circ\partial_{m+1}=0$.  For $-1\leq m \leq n-2$, the $m$th reduced homology space of $\bar\pp_n$ is $\tilde H_m(\bar\pp_n) = \ker \partial_{m} / \im \partial_{m+1}$. (By convention, $\ker \partial_{-1} = \im \partial_{n-1} = \{0\}$.)
\end{defi}

Note that the action of $\mathfrak{S}_n$ on chains in $\bar\pp_n$ permits us to view $\mathcal{C}_m$ as a $\mathfrak{S}_n$-module.  It is clear that the maps $\partial_m$ are module maps, so that $\tilde H_m(\bar\pp_n)$ is also a $\mathfrak{S}_n$-module.  

As a consequence of the shellability property obtained in Theorem~\ref{theo_shelling}, $\Omega(\bar\pp_n)$ has the homotopy type of a bouquet of $n-2$-dimensional spheres, so $\dim \tilde H_m(\bar\pp_n)=0$ for $m \neq n-2$. 

\begin{thm} \label{theo_charhomology}
The character of $\tilde H_{n-2}(\bar\pp_n)$ as a representation of $\mathfrak{S}_n$ is given by:
\begin{align}  \label{formula_charhomology}
  \sigma \mapsto (-1)^{n-z(\sigma)} (n-1)^{z(\sigma)-1}.
\end{align}
\end{thm}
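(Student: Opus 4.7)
The plan is to derive the character of $\tilde H_{n-2}(\bar\pp_n)$ as a signed equivariant Euler characteristic of $\bar\pp_n$, and then extract it from the zeta character $\operatorname{Park}^{(k)}_n(\sigma) = (kn+1)^{z(\sigma)-1}$ by specializing at $k=-1$.

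Since Theorem~\ref{theo_shelling} gives shellability, $\bar\pp_n$ is Cohen--Macaulay and its only nonvanishing reduced homology is in top degree $n-2$. Writing $D_j(\sigma)$ for the number of strict $j$-chains in $\bar\pp_n$ fixed pointwise by $\sigma$ (with $D_0(\sigma)=1$ by convention), the Hopf trace formula applied to the chain complex of $\Omega(\bar\pp_n)$, together with the contribution of the empty simplex in degree $-1$, gives
\[
  (-1)^{n-2}\,\tilde H_{n-2}(\bar\pp_n)(\sigma) \;=\; -\sum_{j\ge 0} (-1)^j\, D_j(\sigma).
\]

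Next, I would establish the polynomial identity
\[
  \operatorname{Park}^{(k)}_n(\sigma) \;=\; \sum_{j\ge 0} \binom{k}{j}\, D_j(\sigma).
\]
A $\sigma$-fixed weak $k$-chain in $\pp_n$ is determined by its strict chain of distinct values (possibly including $\hat 0$) together with the choice of which of the $k$ positions first realise each distinct value. Splitting according to whether $\hat 0$ appears among the distinct values and applying the Pascal relation $\binom{k-1}{j-1}+\binom{k-1}{j}=\binom{k}{j}$ consolidates both cases into a single binomial coefficient multiplying $D_j(\sigma)$. Both sides are polynomial in $k$, so the identity extends from positive integers to all integer values of $k$.

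Specializing to $k=-1$ and using $\binom{-1}{j}=(-1)^j$ together with \eqref{char_chains} yields
\[
  \sum_{j\ge 0}(-1)^j D_j(\sigma) \;=\; (1-n)^{z(\sigma)-1} \;=\; (-1)^{z(\sigma)-1}(n-1)^{z(\sigma)-1}.
\]
Substituting this back into the Hopf trace identity and collecting signs via $(-1)^{n-2}\cdot(-1)^{z(\sigma)-1}\cdot(-1) = (-1)^{n-z(\sigma)}$ produces the desired formula. The main point requiring care is the combinatorial identity bridging $\operatorname{Park}^{(k)}_n$ and the $D_j(\sigma)$, especially the bookkeeping around the missing bottom element $\hat 0\notin\bar\pp_n$; the remaining steps are purely formal and the sign manipulations are routine.
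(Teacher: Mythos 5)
Your proposal is correct and follows essentially the same route as the paper's own (explicit) proof: shellability gives concentration of homology in top degree, the Hopf trace formula reduces the problem to an alternating sum of fixed-point counts over strict chains, and the polynomial identity linking $\operatorname{Park}^{(k)}_n$ to these counts via binomial coefficients is established and then specialized at $k=-1$. The paper phrases the key identity in the representation ring of $\mathfrak{S}_n$, i.e. $\mathcal{D}_k = \sum_{i} \binom{k}{i+1}\mathcal{C}_i$, where your version evaluates traces at a fixed $\sigma$, but these are the same statement; your Pascal-rule bookkeeping around whether $\hat 0$ occurs among the distinct values is precisely the content the paper leaves implicit in the phrase "a large chain can be obtained from a strict chain by choosing multiplicities and adding the minimal element with some multiplicity."
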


\begin{proof}
  We can use the result in \cite[Proposition~1.7]{sppp}, and it follows that the desired character is $(-1)^{n-1}$ times the specialization at $k=-1$ of \eqref{char_chains}.  This gives the desired formula.
  
  It's worth writing that more explicitely.  First, the {\it Hopf trace formula} gives the equality
  \begin{equation} \label{hopftrace}
    \sum_{i=-1}^{n-2} (-1)^i \mathcal{C}_i = \sum_{i=-1}^{n-2} (-1)^i \tilde H_i ( \bar\pp_n )  
  \end{equation}
  in the representation ring of $\mathfrak{S}_n$.  Since only one term is nonzero in the right-hand side, this is also equal to $(-1)^n \tilde H_{n-2}(\bar\pp_n)$.
  
  Let $\mathcal{D}_k$ be the vector space freely generated by large chains $\phi_1\leq \dots \leq \phi_k$ in $\pp_n$.  It is a $\mathfrak{S}_n$-module in a natural way, and its character is $\operatorname{Park}^{(k)}_n$ (see~\eqref{char_chains}).  A large chain in $\pp_n$ can be obtained from a strict chain in $\bar\pp_n$ by choosing some multiplicities for each element in the chain, and adding the minimal element of $\pp_n$ with some multiplicity.  Omitting details, for any $k\geq 0$ this gives the relation
  \begin{align}  \label{rel_strict_large}
   \mathcal{D}_k = \sum_{i=-1}^{n-2} \binom{k}{i+1}  \mathcal{C}_i
  \end{align}
  in the representation ring of $\mathfrak{S}_n$.  By a polynomiality argument, we can set $k=-1$ in \eqref{rel_strict_large}.  What we get on the right hand side is the alternating sum in the left-hand side of~\eqref{hopftrace}, up to a sign.  Thus, we have $\mathcal{D}_{-1} = (-1)^{n-1} \tilde H_{n-2}(\bar\pp_n) $.
  So the character of $H_{n-2}(\bar\pp_n)$ is $(-1)^{n-1} \operatorname{Park}^{(-1)}$.  This gives $\sigma \mapsto (-1)^{n-1} (1-n)^{z(\sigma)-1}$, and we get~\eqref{formula_charhomology}.
\end{proof}

\begin{coro}
The Möbius number of $\hat\pp_n$ is $\mu ( \hat\pp_n ) = (-1)^n (n-1)^{n-1}$.
\end{coro}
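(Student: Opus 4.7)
My plan is to deduce this immediately from Theorem~\ref{theo_charhomology} via the standard relationship between the Möbius number of a bounded poset and the reduced Euler characteristic of the order complex of its proper part. Specifically, for the bounded poset $\hat\pp_n$ (with bottom element $\hat 0$ of $\pp_n$ and the added top $\hat 1$), Philip Hall's theorem gives
\[
  \mu(\hat\pp_n) \;=\; \tilde\chi\bigl(\Omega(\bar\pp_n)\bigr) \;=\; \sum_{i=-1}^{n-2} (-1)^i \dim \tilde H_i(\bar\pp_n).
\]

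The next step is to invoke shellability. Theorem~\ref{theo_shelling} says $\hat\pp_n$ is shellable, so the order complex $\Omega(\bar\pp_n)$ has the homotopy type of a wedge of spheres of dimension $n-2$, and in particular $\tilde H_i(\bar\pp_n) = 0$ for $i \neq n-2$. Hence the alternating sum collapses to a single term:
\[
  \mu(\hat\pp_n) \;=\; (-1)^{n-2} \dim \tilde H_{n-2}(\bar\pp_n) \;=\; (-1)^{n} \dim \tilde H_{n-2}(\bar\pp_n).
\]

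The final step is to read off the dimension from Theorem~\ref{theo_charhomology}: the dimension of a representation equals the value of its character at the identity element, and the identity $\sigma = \mathrm{id} \in \mathfrak{S}_n$ has $z(\sigma) = n$ cycles. Substituting in formula~\eqref{formula_charhomology} gives
\[
  \dim \tilde H_{n-2}(\bar\pp_n) \;=\; (-1)^{n-n}(n-1)^{n-1} \;=\; (n-1)^{n-1}.
\]
Combining the two displays yields $\mu(\hat\pp_n) = (-1)^n (n-1)^{n-1}$, as required.

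There is no real obstacle here; everything is a one-line consequence of results already established. The only subtlety worth a sentence of care is confirming that $\bar\pp_n$ as defined in the homology section (the proper part obtained by deleting the bottom of $\pp_n$) is precisely the proper part of the bounded poset $\hat\pp_n$, so that Hall's theorem applies directly with the signs stated.
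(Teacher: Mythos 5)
Your proof is correct and matches the paper's argument exactly: Philip Hall's theorem identifies $\mu(\hat\pp_n)$ with the reduced Euler characteristic, shellability collapses the alternating sum to the top homology, and the dimension is read off by evaluating the character of Theorem~\ref{theo_charhomology} at the identity. The only difference is that you spell out the intermediate steps more explicitly than the paper does.
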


\begin{proof}
  By Philip Hall's theorem, $\mu( \hat\pp_n )$ is the Euler characteristic of $\Omega(\bar\pp_n)$. So it is also $(-1)^n\dim \tilde H_{n-2}(\bar\pp_n)$.  This comes from taking $\sigma=id$ in~\eqref{formula_charhomology}.
\end{proof}

%%%%%%%%%%%%%%%%%%%%%%%%%%%%%%%%%%%%%%%%%%%%%%%%
\subsection{A second derivation using Whitney modules}
%%%%%%%%%%%%%%%%%%%%%%%%%%%%%%%%%%%%%%%%%%%%%%%%

Another method to compute the character in Theorem~\ref{theo_charhomology} consists in using {\it Whitney homology}.  For each $\pi\in NC_n \setminus\{0_n\}$, denote $(0_n,\pi)$ the open interval in $NC_n$.  If $\pi$ has rank $|\pi|-1 = \ell$, the only nonzero homology group of this open interval is $\tilde H_{\ell-2} ( (\hat 0 , \pi) )$.  It is seen as a $\mathfrak{S}_n(\pi)$-module in a trivial way, each group element acting as the identity.  By definition, $\ell$th Whitney module $\mathcal{W}_{\ell}(\pp_n)$ is defined as the sum of induced representations
\[
   \mathcal{W}_{\ell}(\pp_n) = 
   \sum_{\pi \in NC_n, \, |\pi|-1=\ell} \operatorname{Ind}_{\mathfrak{S}_n(\pi)}^{\mathfrak{S}_n} \tilde H_{\ell-2} ( (\hat 0 , \pi) )
\]
if $1\leq \ell\leq n-1$, and by convention $\mathcal{W}_{0}(\pp_n)$ is the trivial $\mathfrak{S}_n$-module.

To see that this definition coincides with that in \cite[Chapter~4]{wachs}, note that orbit representatives for rank $\ell$ elements in $\pp_n$ are the elements $(\pi,\pi,id)$ where $\pi \in NC_n$ has rank $\ell$.  The open interval $((0_n,0_n,id),(\pi,\pi,id))$ in $\pp_n$ is isomorphic to the open interval $(0_n,\pi)$ in $NC_n$ via Lemma~\ref{lemm:intervproj}.  Moreover, $\sigma\in\mathfrak{S}_n$ acts as the indentity on this interval if $\sigma \cdot (\pi,\pi,id) = (\pi,\pi,id)$  (i.e., $\sigma\in\mathfrak{S}_n(\pi)$).

\begin{lemm} \label{MoebInt}
For $\pi\in NC_n\setminus\{ 0_n \}$ of rank $\ell$, we have:
\[
  \dim \tilde H_{\ell-2} ( ( 0_n , \pi) )
  =
  \prod_{b\in K(\pi)} C_{|b|-1}.
\]
\end{lemm}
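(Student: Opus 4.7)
The plan is to compute the dimension via the Möbius number of the closed interval $[0_n,\pi]$ in $NC_n$, exploiting the product structure of this interval together with shellability of $NC_n$.

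First, I would recall from Section~\ref{sec:nc} that the order ideal $[0_n,\pi]$ in $NC_n$ is isomorphic to the product $\prod_{b \in K(\pi)} NC_{|b|}$. Since $NC_n$ is EL-shellable (Björner--Edelman), every closed interval in $NC_n$ is shellable, hence Cohen--Macaulay, and so is $[0_n,\pi]$. Consequently, the order complex $\Omega((0_n,\pi))$ has the homotopy type of a wedge of $(\ell-2)$-spheres, and only $\tilde H_{\ell-2}((0_n,\pi))$ is nonzero. By Philip Hall's theorem (reduced Euler characteristic $=$ Möbius number) applied to a pure complex with homology concentrated in one degree,
\[
  \dim \tilde H_{\ell-2}((0_n,\pi)) = (-1)^\ell\,\mu_{NC_n}(0_n,\pi).
\]

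Next I would compute the right-hand side using multiplicativity of the Möbius function on products. Combined with Kreweras' formula $\mu_{NC_m}(0_m,1_m)=(-1)^{m-1}C_{m-1}$, the product decomposition gives
\[
  \mu_{NC_n}(0_n,\pi) = \prod_{b\in K(\pi)} \mu_{NC_{|b|}}(0_{|b|},1_{|b|})
  = \prod_{b\in K(\pi)} (-1)^{|b|-1} C_{|b|-1}.
\]

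Finally, a short rank count finishes the proof: since $K$ is an anti-automorphism of $NC_n$, the rank of $K(\pi)$ is $(n-1)-\ell$, so $|K(\pi)|=n-\ell$, and therefore
\[
  \sum_{b\in K(\pi)}(|b|-1) = n - |K(\pi)| = \ell.
\]
Thus the global sign $\prod_b (-1)^{|b|-1}$ equals $(-1)^\ell$, which cancels the prefactor $(-1)^\ell$ above, yielding $\dim \tilde H_{\ell-2}((0_n,\pi)) = \prod_{b\in K(\pi)} C_{|b|-1}$.

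No step is truly hard; the only real bookkeeping is the sign computation and being careful that the shellability/Cohen--Macaulayness of $NC_n$ descends to the interval $[0_n,\pi]$ (which it does since intervals of shellable posets are shellable). The main conceptual point is recognising that the product structure identified via $K(\pi)$ is exactly what turns the Möbius number into the stated product of Catalan numbers.
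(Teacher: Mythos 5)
Your proof is correct and takes essentially the same approach as the paper: both identify $[0_n,\pi]$ with a product of smaller noncrossing partition lattices via the Kreweras complement, apply Philip Hall's theorem, and use Kreweras' Möbius number formula. The only difference is that you spell out the Cohen--Macaulayness step (needed to upgrade the Euler-characteristic identity to a statement about a single homology group) and the sign bookkeeping, both of which the paper leaves implicit.
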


\begin{proof}
By Philipp Hall's theorem, this dimension is the Möbius number of the interval $[0_n, \pi]$ in $NC_n$, up to a sign.  Using the Kreweras complement, this interval is isomorphic to $[K(\pi),1_n]$, thus isomorphic to $NC_{i_1}\times NC_{i_2}\times \cdots$ where $i_1,i_2,\dots$ are the block sizes of $K(\pi)$.  The result follows from the fact that the Möbius number of $NC_n$ is $(-1)^{n-1}C_{n-1}$.
\end{proof}

\begin{proof}[Proof of Theorem~\ref{theo_charhomology}]
Using the previous lemma, we see that the Whitney modules of $\pp_n$ are given by:
\begin{equation}  \label{eq_whit}
  \mathcal{W}_{\ell}(\pp_n) = \sum_{\pi\in NC_n, \; |\pi|=\ell-1 } \bigg(\prod_{b\in K(\pi)} C_{|b|-1} \bigg) \operatorname{Ind}_{\mathfrak{S}_n(\pi)}^{\mathfrak{S}_n}(1)
\end{equation}
for $0\leq \ell\leq n-1$ (where $1$ denotes the trivial character of $\mathfrak{S}_n(\pi)$).  By a theorem of Sundaram (see~\cite[Theorem~4.4.1]{wachs}), the module $\tilde H_{n-2}(\bar\pp_n)$ can then be obtained as an alternating sum:
\[
  \tilde H_{n-2}(\bar\pp_n)  =  (-1)^{n-1} \sum_{\ell = 0 }^{n-1} (-1)^{\ell} \mathcal{W}_{\ell}(\pp_n).
\]
Using~\eqref{eq_whit}, we compute the right-hand side of the previous equation.  This gives:
\[
  \tilde H_{n-2}(\bar\pp_n)  
  =
  (-1)^{n-1} \sum_{\pi\in NC_n} (-1)^{|\pi|-1} 
  \bigg(\prod_{b\in K(\pi)} C_{|b|-1} \bigg) \operatorname{Ind}_{\mathfrak{S}_n(\pi)}^{\mathfrak{S}_n}(1).
\]
By polynomiality in $k$, we can plug $k=-1$ in~\eqref{eq_parkk}.  Using the reciprocity $C^{(-1)}_n = (-1)^{n-1}C_{n-1}$, we get:
\[
  \operatorname{Park}^{(-1)}_n 
  = 
  \sum_{\pi\in NC_n} \bigg( \prod_{b\in K(\pi)} 
  (-1)^{|b|-1} C_{|b|-1} \bigg) 
  \operatorname{Ind}_{\mathfrak{S}_n(\pi)}^{\mathfrak{S}_n}(1).
\]
To check the signs, note that
\[
  \sum_{b\in K(\pi)} (|b|-1) = n - |K(\pi)| = |\pi| - 1.
\]
We thus have $\tilde H_{n-2}(\bar\pp_n) = (-1)^{n-1} \operatorname{Park}^{(-1)}_n$.   As $\operatorname{Park}^{(-1)}_n(\sigma) = (1-n)^{z(\sigma)-1}$ for $\sigma\in\mathfrak{S}_n$, we obtain the formula in Theorem~\ref{theo_charhomology}.
\end{proof}

We now give a combinatorial interpretation of Lemma~\ref{MoebInt}.
\begin{defi}
  Let $T$ be a parking tree and $x$ be a vertex of $T$. The \emph{right branch} of $x$ is the set of vertices $v$ in the subtree of $T$ rooted in $x$ such that the unique path between $v$ and $x$ only contains $x$ and vertices which are the rightmost child of their parent. We denote it by $RB(T)$
\end{defi}

\begin{lemm}
For $(\pi, \sigma) \in \pp_n$ and $T$ the associated parking tree, the size of a block $b$ of the Kreweras complement $K(\pi)$ decreased by one is the number of vertices on the right branch of parking tree $T_b$ encoding the noncrossing partition under $b$.
\begin{equation}
    |b|=|RB(T_b)|+1 
\end{equation}
\end{lemm}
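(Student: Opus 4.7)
The plan is to decode the two ingredients of the statement — which noncrossing partition is encoded by $T_b$, and why its right branch has $|b|-1$ vertices — by going through the product decomposition of the order ideal $[0_n,\pi]$ together with the explicit construction of Lemma~\ref{BijParkNCP}.

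First, I would use the isomorphism $[0_n,\pi]\cong\prod_{b\in K(\pi)}NC_{|b|}$ stated at the end of Section~\ref{sec:nc} to identify the ``noncrossing partition under $b$''. Being an order isomorphism, it sends the top element $\pi$ to the tuple $(1_{|b|})_{b\in K(\pi)}$ of finest partitions. A concrete way to confirm this, bypassing Kreweras complementation, is to observe that if $b\in K(\pi)$ then the restriction $\pi|_b$ of $\pi$ to $b$ must be $1_{|b|}$: two elements lying in the same block of $K(\pi)$ cannot lie in the same block of $\pi$, lest the arcs of $K(\pi)$ and $\pi$ cross. Hence $T_b$ is the parking tree on $b$ attached by Lemma~\ref{BijParkNCP} to the pair $(1_{|b|},\mathrm{id})$.

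Second, I would unfold the recursive construction of Lemma~\ref{BijParkNCP} on $(1_m,\mathrm{id})$, with $m=|b|$. The first block of $1_m$ (the one containing $1$) is a singleton, so the root of the tree carries a singleton label and has exactly one subtree; by induction that subtree is the parking tree of $(1_{m-1},\mathrm{id})$ on $\{2,\dots,m\}$. Iterating, $T_b$ is the right comb with $m$ internal vertices: a chain in which each internal vertex has a single, hence rightmost, child, terminated by a unique empty leaf. The right branch of $T_b$ therefore traces out exactly this chain, and a direct count gives $|b|-1$ non-root internal vertices along the rightmost path, which yields the equality $|b|=|RB(T_b)|+1$.

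The main pitfall is the first step: one must pin down the direction of the isomorphism $[0_n,\pi]\cong\prod NC_{|b|}$ so that $\pi$ maps to the tuple of finest partitions rather than the tuple of coarsest ones. The latter choice would make $T_b$ a star with $|b|$ empty children, collapsing the right branch to a constant of two vertices and spoiling the formula. Once the correct factor is identified, the rest is a purely structural observation about right combs, and no further difficulty is expected.
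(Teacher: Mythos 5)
Your proof rests on two claims that are both incorrect, and it only appears to give the right number because the two errors cancel.

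The first problem is the identification of $T_b$. You propose that the ``noncrossing partition under $b$'' is $\pi|_b$, which you correctly observe equals $1_{|b|}$; hence you take $T_b$ to be the right comb encoding $(1_{|b|},\mathrm{id})$ on $|b|$ elements. This is not what the lemma (nor its proof in the paper) is about. In the paper, ``under $b$'' is meant geometrically: drawing $b=\{b_1<\dots<b_k\}$ on the half-integer positions, $b$ has $k-1$ arches, and $T_b$ is the parking tree obtained by running the recursive bijection of Lemma~\ref{BijParkNCP} on the restriction of $\pi$ to the integers lying underneath those arches. That restriction is in general \emph{not} the finest partition: for instance with $\pi=\{\{1,4\},\{2,3\}\}$ one has $K(\pi)=\{\{1\},\{3\},\{2,4\}\}$ and, for $b=\{2,4\}$, the partition under $b$ is $\{\{2,3\}\}=0_2$, not $1_2$; the tree $T_b$ is then a single node with two empty leaves, whose right branch has exactly one internal vertex, matching $|b|-1=1$. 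By contrast the right comb for $(1_2,\mathrm{id})$ has two internal vertices.  More generally neither of the two possible images of $\pi$ under the product isomorphism $[0_n,\pi]\cong\prod_b NC_{|b|}$ gives the correct $T_b$; that isomorphism is simply not the right tool here, and the observation $\pi|_b=1_{|b|}$, while true, is a red herring.

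The second problem is the count of the right branch. You arrive at $|RB(T_b)|=|b|-1$ by counting ``non-root internal vertices along the rightmost path.'' But the paper's definition of the right branch of a vertex $x$ explicitly includes $x$ itself (the path from $x$ to $x$ contains only $x$). With your $T_b$ being a right comb with $|b|$ internal vertices, including the root gives $|RB(T_b)|=|b|$, so your formula would read $|b|=|b|+1$ and fail. The only reason your argument produces the number $|b|-1$ is that you both chose a tree with one too many internal vertices and then discarded one by excluding the root.

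The intended argument is the recursive one hinted at in the paper: under Lemma~\ref{BijParkNCP}, the restriction of $\pi$ to the span of $b$ has a first block that becomes the root of $T_b$ and corresponds to the first arch of $b$, while the rightmost subtree of that root encodes the restriction of $\pi$ to the region under the remaining $k-2$ arches. Iterating, each of the $k-1$ arches contributes exactly one vertex to the right branch of $T_b$, giving $|RB(T_b)|=|b|-1$. Your proof would need to be restructured along these lines; the product-decomposition route does not pin down the correct tree.
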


\begin{proof}
This lemma is a direct consequence of the bijection described in Lemma~\ref{BijParkNCP}. Indeed, the size of a block minus 1 is the number of "arches" of it. The bijection sends the non-crossing partition under the first arch to the root of the tree $T_b$ and the non-crossing partition left to the right subtree of $T_b$.
\end{proof}

%%%%%%%%%%%%%%%%%%%%%%%%%%%%%%%%%%%%%%%%
\subsection{Prime parking functions}
%%%%%%%%%%%%%%%%%%%%%%%%%%%%%%%%%%%%%%%%
\label{sec:prime}

In the context of the parking space theory, there is a character closely connected to the one in~\eqref{formula_charhomology}, combinatorially related to the notion of {\it prime parking functions}.

\begin{defi}
  A noncrossing partition $\pi\in NC_n$ is {\it prime} if $1$ and $n$ are in the same block of $\pi$.  An element $(\pi,\rho,\lambda)\in\pp_n$ is {\it prime} if $\pi$ is a prime noncrossing partition.  Denote by $NC'_n\subset NC_n$ the subset of prime noncrossing partitions, and $\pp'_n\subset\pp_n$ the subset of prime noncrossing 2-partitions. 
\end{defi}

Algebraically, note that $\pi\in NC_n$ is prime iff $\bar\pi$ does not belong to a proper Young subgroup of $\mathfrak{S}_n$.  As a word $w_1 \dots w_n$, a parking function is prime iff $\#\{ i \; | \; w_i \leq k\}>k$ for $k\in\llbracket1;n-1\rrbracket$. On parking trees, it corresponds for the root to have a leaf as its rightmost child. Following Section~\ref{sec_parkingspace}, the character of $\mathfrak{S}_n$ acting on $\pp'_n$ is:
\[
  \operatorname{Park}'_n := \sum_{\pi \in NC'_n }  \operatorname{Ind}_{\mathfrak{S}_n(\pi)}^{\mathfrak{S}_n}(1). 
\]

\begin{prop} \label{prop_park}
For $\sigma\in\mathfrak{S}_n$, we have:
\begin{equation} \label{parkprim}
  \operatorname{Park}'_n (\sigma) 
  =
  (n-1)^{z(\sigma)-1}.
\end{equation}
\end{prop}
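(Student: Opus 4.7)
The plan is to deduce the formula from a recursion obtained by the ``first prime block'' decomposition of parking functions, combined with Hurwitz's multivariate Abel identity. First, the $\mathfrak{S}_n$-equivariant bijection of Proposition~\ref{bij_2ncp_park} restricts to a bijection between $\pp'_n$ and prime parking functions of length $n$: the Łukasiewicz encoding of Lemma~\ref{lemma_encodingNC} shows that $\pi \in NC_n$ is prime (i.e., $1$ and $n$ lie in the same block) iff the path $h_k = \sum_{i \leq k} a_i - k$ is strictly positive on $\llbracket 1 ; n-1 \rrbracket$, which is exactly the condition $\#\{i : w_i \leq k\} > k$ for $k \in \llbracket 1 ; n-1 \rrbracket$ defining a prime parking function $w$. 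Hence $\operatorname{Park}'_n(\sigma)$ counts $\sigma$-fixed prime parking functions of length $n$.

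Next I would apply the classical ``first prime block'' decomposition of parking functions. Given $w$, set $k^* = \min\{k \geq 1 : \#\{i : w_i \leq k\} = k\}$ and $A = \{i : w_i \leq k^*\}$; then $w|_A$ is a prime parking function of length $|A| = k^*$ on $A$, and $w|_{A^c}$ (with values shifted by $-k^*$) is a parking function of length $n - k^*$ on $A^c$. This decomposition is $\mathfrak{S}_n$-equivariant, since any $\sigma$-fixed $w$ is constant on cycles of $\sigma$, forcing $A$ to be a union of cycles. Writing $C_1, \dots, C_z$ for the cycles of $\sigma$ with sizes $c_1, \dots, c_z$, and setting $A(J) = \bigsqcup_{j \in J} C_j$ and $m(J) = \sum_{j \in J} c_j$, this yields the character recursion
\begin{equation*}
  \operatorname{Park}_n(\sigma) = \sum_{\emptyset \neq J \subseteq [z]} \operatorname{Park}'_{m(J)}(\sigma|_{A(J)}) \cdot \operatorname{Park}_{n - m(J)}(\sigma|_{A(J)^c}),
\end{equation*}
with the convention $\operatorname{Park}_0 := 1$.

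I would then argue by induction on $z = z(\sigma)$. For $z = 1$, $\sigma$ is an $n$-cycle whose unique fixed parking function is $(1, \dots, 1)$, which is prime, so $\operatorname{Park}'_n(\sigma) = 1 = (n-1)^0$. For $z \geq 2$, applying the inductive hypothesis to $\operatorname{Park}'_{m(J)}(\sigma|_{A(J)})$ for $J \subsetneq [z]$ (which involves strictly fewer cycles) and using $\operatorname{Park}_k(\tau) = (k+1)^{z(\tau)-1}$ from the discussion following Proposition~\ref{prop:2pp_park}, the recursion rearranges to
\begin{equation*}
  (n+1)^{z-1} = \operatorname{Park}'_n(\sigma) + \sum_{\emptyset \neq J \subsetneq [z]} (m(J) - 1)^{|J| - 1} (n - m(J) + 1)^{z - |J| - 1}.
\end{equation*}

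The claimed equality $\operatorname{Park}'_n(\sigma) = (n-1)^{z-1}$ is therefore equivalent to $(n+1)^{z-1} - (n-1)^{z-1} = \sum_{\emptyset \neq J \subsetneq [z]} (m(J) - 1)^{|J|-1}(n - m(J) + 1)^{z - |J| - 1}$, which I would deduce from Hurwitz's multivariate Abel identity
\begin{equation*}
  (x+y)\bigl(x + y + \textstyle\sum_i z_i\bigr)^{N-1} = \sum_{S \subseteq [N]} x\bigl(x + \textstyle\sum_{i \in S} z_i\bigr)^{|S| - 1}\, y\bigl(y + \textstyle\sum_{i \notin S} z_i\bigr)^{N - |S| - 1}
\end{equation*}
(with the boundary convention $x \cdot x^{-1} = y \cdot y^{-1} = 1$) evaluated at $x = -1$, $y = 1$, $N = z$, $z_i = c_i$: the left-hand side vanishes, and the $S = \emptyset$ and $S = [z]$ terms contribute exactly $(n+1)^{z-1}$ and $-(n-1)^{z-1}$. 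The main obstacle I anticipate is recognizing the combinatorial recursion as this particular specialization of Hurwitz's identity and handling the boundary terms consistently; the rest of the argument is routine bookkeeping.
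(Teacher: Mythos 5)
Your proof is correct, and it takes a genuinely different route from the paper. The paper's proof is essentially a citation: it identifies prime parking functions with the $(n,n-1)$-rational parking functions of Armstrong, Loehr and Warrington~\cite{ALW} and then invokes their character formula $\sigma\mapsto b^{z(\sigma)-1}$ for $(a,b)$-parking functions, omitting details. You instead give a self-contained argument: the first-prime-block decomposition gives a $\mathfrak{S}_n$-equivariant bijection between $\sigma$-fixed parking functions and pairs (prime parking function on a cycle-union $A$, parking function on $A^c$), which yields the recursion
\[
(n+1)^{z-1}=\sum_{\emptyset\neq J\subseteq[z]}\operatorname{Park}'_{m(J)}(\sigma|_{A(J)})\,(n-m(J)+1)^{z-|J|-1};
\]
induction on $z$ then reduces the claim to the identity $(n+1)^{z-1}-(n-1)^{z-1}=\sum_{\emptyset\neq J\subsetneq[z]}(m(J)-1)^{|J|-1}(n-m(J)+1)^{z-|J|-1}$, which is exactly Hurwitz's multivariate Abel identity specialized at $x=-1$, $y=1$, $z_i=c_i$. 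I checked the bookkeeping (including the boundary terms $S=\emptyset$, $S=[z]$, the convention $\operatorname{Park}_0=1$, the base case where the only $\sigma$-fixed parking function for an $n$-cycle is the all-ones word, and the small-$n$ edge cases with $0^0=1$) and it all closes up correctly. The trade-off: your argument is longer and requires knowing Hurwitz's identity, but it is elementary and does not depend on the rational Catalan machinery; the paper's is one line but outsources the real content to~\cite{ALW}. One could also remark that the same decomposition, run without specializing to a fixed $\sigma$, gives the symmetric-function version $\chi(\operatorname{Park}_n)=\sum_{k=1}^n\chi(\operatorname{Park}'_k)\chi(\operatorname{Park}_{n-k})$, which would equally well set up the induction.
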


\begin{proof}
This can be proved using a connection with rational parking functions of Armstrong, Loehr and Warrington~\cite{ALW}.  For two coprime positive integers $a$ and $b$, these authors define $(a,b)$-parking functions as a lattice path that stays above the diagonal in a $a\times b$-rectangle, with some labels on the up steps.  They show that parking functions (in the usual sense) correspond to the case $(a,b) = (n,n+1)$.  A similar argument shows that prime parking functions correspond to the case $(a,b) = (n,n-1)$.  As the character $\mathfrak{S}_a$ acting on $(a,b)$-parking functions is $\sigma\mapsto b^{z(\sigma)-1}$, we get the result.  We omit details.
\end{proof}

Let $\operatorname{Sign}$ denote the sign character of $\mathfrak{S}_n$, defined by $\operatorname{Sign}(\sigma) = (-1)^{n-z(\sigma)}$ for $\sigma\in\mathfrak{S}_n$.  By comparing~\eqref{formula_charhomology} and~\eqref{parkprim}, we obtain the following:

\begin{coro}
  The character of $\tilde H_{n-2}(\bar\pp_n)$ is 
  \[
  \operatorname{Char}(\tilde H_{n-2}(\bar\pp_n))
  =
  \operatorname{Sign} \otimes \operatorname{Park}'_n.
  \]
\end{coro}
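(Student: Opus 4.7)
The proof is essentially a direct comparison of three formulas already established in the paper, so my plan is very short.

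My plan is to evaluate the character $\operatorname{Sign}\otimes\operatorname{Park}'_n$ on an arbitrary permutation $\sigma\in\mathfrak{S}_n$ and observe it matches the formula from Theorem~\ref{theo_charhomology}. Recall that the character of a tensor product of two representations is the pointwise product of their characters, so for every $\sigma\in\mathfrak{S}_n$ we have
\[
(\operatorname{Sign}\otimes\operatorname{Park}'_n)(\sigma) = \operatorname{Sign}(\sigma)\cdot\operatorname{Park}'_n(\sigma).
\]

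First, I would plug in $\operatorname{Sign}(\sigma)=(-1)^{n-z(\sigma)}$ (by definition, just above the corollary) and $\operatorname{Park}'_n(\sigma)=(n-1)^{z(\sigma)-1}$ (by Proposition~\ref{prop_park}). This gives
\[
(\operatorname{Sign}\otimes\operatorname{Park}'_n)(\sigma) = (-1)^{n-z(\sigma)}(n-1)^{z(\sigma)-1}.
\]
Next, I would invoke Theorem~\ref{theo_charhomology}, which states precisely that the character of $\tilde H_{n-2}(\bar\pp_n)$ as a representation of $\mathfrak{S}_n$ sends $\sigma$ to $(-1)^{n-z(\sigma)}(n-1)^{z(\sigma)-1}$. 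Since two representations of a finite group are isomorphic if and only if their characters agree on every element, we conclude that $\operatorname{Char}(\tilde H_{n-2}(\bar\pp_n))=\operatorname{Sign}\otimes\operatorname{Park}'_n$.

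There is no genuine obstacle here; the content of the corollary is that the two independently derived formulas for $(n-1)^{z(\sigma)-1}$ (one arising from homology via the zeta character at $k=-1$, the other from the enumeration of prime parking functions through rational parking functions) differ only by the sign twist $(-1)^{n-z(\sigma)}$. The only thing to be careful about is remembering that characters are compared pointwise and that for symmetric groups (or any finite group in characteristic zero) this suffices to determine the virtual representation.
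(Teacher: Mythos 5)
Your proof is correct and takes essentially the same approach as the paper: the paper simply states the corollary after defining $\operatorname{Sign}$ and noting it follows by comparing the formulas in Theorem~\ref{theo_charhomology} and Proposition~\ref{prop_park}, which is precisely the pointwise character comparison you carried out.
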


It would be very interesting to have a direct proof of this equality, without an explicit computation of both sides.  This could be done by finding a basis $(e_\phi)_{\phi\in\ppp'_n}$ of $\tilde H_{n-2}(\bar\pp_n)$, such that $\sigma\cdot e_\phi = \operatorname{Sign}(\sigma) e_{\sigma\cdot\phi} $.  

Let's do that explicitly for $n=3$.  The Hasse diagram of $\bar\pp_3$ is represented in Figure~\ref{fig_hassepp3}, in a way that respects the symmetry of the graph rather than the order.  For each cycle of the underlying undirected graph, the alternating sum of its edges gives an element in $\tilde H_1(\bar\pp_3)$.  Note that each transposition $(i,j)$ acts on this graph as a reflection in the plane.  
\begin{itemize}
    \item The 12-cycle at the boundary of the picture is fixed by $\mathfrak{S}_3$.  This cycle can be matched with 111, the element of $\pp'_3$ fixed by $\mathfrak{S}_3$.
    \item Choose a length 6 cycle going through 211 (there are two of them).  The cyclic permutations of coordinates gives two other cycles, going through 121 and 112, respectively.  These three cycles can be matched with 211, 121, and 112, the three remaining elements of $\pp'_3$.
\end{itemize}
The four cycles obtained in this way define four elements in $\tilde H_1(\bar\pp_3)$.  It is straightforward to identify the action of $\mathfrak{S}_3$ on these elements.

\begin{figure}[h!tp]
    \centering
    \includegraphics[scale=0.5]{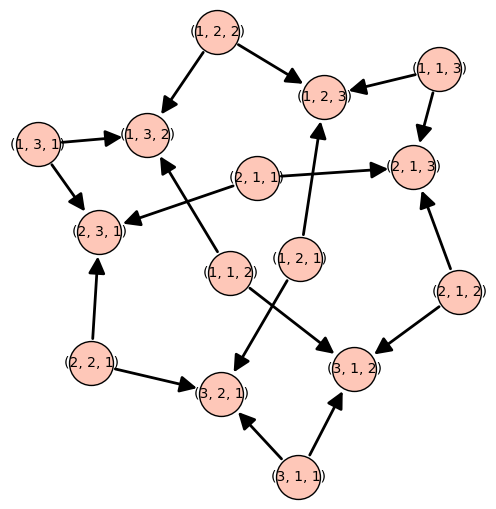}
    \caption{The Hasse diagram of $\bar\pp_3$.
    \label{fig_hassepp3}}
\end{figure}

%%%%%%%%%%%%%%%%%%%%%%%%%%%%%%%%%%%
\section{Associahedra and parking functions} \label{associahedron}
%%%%%%%%%%%%%%%%%%%%%%%%%%%%%%%%%%%

\label{sec:assoc}

The initial goal of this section was to give a combinatorial interpretation to the numbers $w_{\ell}(\pp_n)$.  This leads us to define a simplicial complex $\ppdel_n$ whose elements involve both faces of the associahedron and noncrossing $2$-partitions.  We call them {\it cluster parking function}, because the associahedron is related with the cluster complex coming from the theory of cluster algebras.  This simplicial complex $\ppdel_n$ might be useful in understanding the topology of the parking function poset.  Indeed, we will show that it has the same topology as $\pp_n$.

\subsection{The complex of noncrossing alternating forests}

The $n$th associahedron $K_n$ is a simple $n$-dimensional polytope with a long history.  For $0\leq i\leq n$, its $i$-dimensional faces are indexed by valid bracketings of $n+2$ factors with $n-i$ pairs of parentheses,  moreover the incidence relations between faces are obtained by removing or adding pairs of parentheses.  See~\cite{tamari} as a general reference.  Here, we use slightly different objects, as indicated in the title of this section.

\begin{defi}
  We denote by $\Delta_n$ the set of {\it noncrossing alternating forests} on $\llbracket 1;n\rrbracket$, i.e., forests (acyclic undirected graphs) that contains no pair of edges $\{i,j\}$ and $\{k,\ell\}$ such that $i<k\leq j < \ell$.  Moreover, let $\partial \Delta_n \subset \Delta_n$ denote the subset of forests not containing the edge $\{1,n\}$.  (It will be explained later that $\partial \Delta_n$ is the boundary of $\Delta_n$ in a precise sense.)
\end{defi}

Note that ``noncrossing'' refers to the forbidden relation $i < k < j < \ell$, which means that edges can be drawn in a noncrossing way (see example below).  And ``alternating'' refers to the forbidden relation $i < k = j < \ell$, which means that the neighbours of a vertex $i$ are all smaller or all bigger than $i$.  For example, $f=\{\{1,3\}, \{1,8\}, \{2,3\}, \{4,7\}, \{6,7\} \} \in \Delta_8$.
 
 We can identify a forest with its edge set (we always understand that $n$, hence the vertex set of the forests, is fixed once for all).  This way, $\Delta_n$ is stable under taking subsets and can be seen as a simplicial complexes such that:
\begin{itemize}
    \item its vertices are pairs $\{i,j\}$ with $1\leq i<j\leq n$, and can be identified with transpositions in $\mathfrak{S}_n$ or coatoms of $NC_n$,
    \item its facets (maximal faces) are noncrossing alternating trees. 
\end{itemize}
In particular, $\Delta_n$ is purely $n-2$-dimensional.  By taking the face poset of this simplicial complex (the set of faces ordered by inclusion), we also think of $\Delta_n$ as a poset. 
 
\begin{prop}
  The simplicial complex $\Delta_n$ is a cone over $\partial\Delta_n$.  In particular, $\Delta_n$ is topologically trivial.
\end{prop}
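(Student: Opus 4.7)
The plan is to show that the vertex $\{1,n\}$ is a cone point of $\Delta_n$: for every $f \in \partial\Delta_n$, the forest $f \cup \{\{1,n\}\}$ still belongs to $\Delta_n$. This identifies $\Delta_n$ with the simplicial join $\partial\Delta_n \ast \{\{1,n\}\}$, which is contractible (it deformation retracts onto the apex), yielding topological triviality.

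First I would verify that adjoining the edge $\{1,n\}$ preserves the defining condition of $\Delta_n$. The forbidden pattern $i<k\leq j<\ell$ cannot arise between $\{1,n\}$ and any other edge $\{a,b\}$ with $a<b$: if $\{1,n\}$ plays the role $\{i,j\}$, the inequality $j<\ell$ becomes $n<b$, impossible; if $\{1,n\}$ plays the role $\{k,\ell\}$, the inequality $i<k$ becomes $a<1$, impossible. In particular, the alternating condition is preserved at vertex $1$ (whose existing neighbors all lie strictly above $1$) and at vertex $n$ (whose existing neighbors all lie strictly below $n$).

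The main obstacle is proving that $1$ and $n$ lie in \emph{different} connected components of any $f \in \partial\Delta_n$, so that adjoining $\{1,n\}$ creates no cycle. I would argue by contradiction: suppose a simple path $1 = v_0, v_1, \ldots, v_k = n$ with $k \geq 2$ exists in $f$. The alternating condition forces the strict zigzag $v_0 < v_1 > v_2 < v_3 > \cdots$; since $v_k=n$ is the global maximum, the last step must be upward, so $k$ is odd and in particular $k \geq 3$. I would then prove by induction on $m \geq 2$ the inequality $v_m < v_1$. The nontrivial case is an upward step $v_m < v_{m+1}$: simplicity of the path gives $v_m \neq 1$, hence $v_m \geq 2$; if moreover $v_{m+1} > v_1$, the edges $\{1,v_1\}$ and $\{v_m,v_{m+1}\}$ of $f$ would satisfy $1 < v_m \leq v_1 < v_{m+1}$, violating the defining condition of $\Delta_n$. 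Equality $v_{m+1}=v_1$ is excluded by simplicity. Applied at $m=k-1$, this yields $n = v_k < v_1 \leq n$, the desired contradiction.

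Together these verifications show that every face of $\Delta_n$ either lies in $\partial\Delta_n$ or arises by adjoining $\{1,n\}$ to a face of $\partial\Delta_n$, proving the cone structure; contractibility (hence topological triviality) follows at once.
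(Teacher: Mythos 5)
Your proof is correct and follows the same strategy as the paper: exhibit $\{1,n\}$ as a cone vertex. The difference is in thoroughness. The paper's proof only verifies that adjoining $\{1,n\}$ to $f\in\partial\Delta_n$ cannot create the forbidden pattern $i<k\leq j<\ell$; it does not explicitly address the other defining requirement, namely that $f\cup\{\{1,n\}\}$ remain acyclic. Your inductive zigzag argument, showing that $1$ and $n$ lie in different connected components of any $f\in\partial\Delta_n$, closes that gap. It is worth noting that essentially the same induction, applied with the minimum and maximum vertices of a putative cycle in place of $1$ and $n$, shows that any graph satisfying the noncrossing--alternating edge condition is automatically a forest, so the acyclicity check is in fact redundant; but this is not stated in the paper and is not obvious, so making the verification explicit as you do is the more careful route.
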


\begin{proof}
  It is easily checked that the forbidden relation $i<k\leq j<\ell$ cannot hold if $\{i,j\}=\{1,n\}$ or $\{k,\ell\}=\{1,n\}$.  So, for each face $f\in\partial\Delta_n$, we have $f\cup\{\{1,n\}\} \in \Delta_n$.  This means that $\Delta_n$ can be seen as a cone over its full subcomplex with vertices different from $\{1,n\}$, i.e., over $\partial \Delta_n$. 
\end{proof}
 
There is a convenient way to represent $f\in \partial \Delta_n$ as valid bracketings of $n$ factors, such as $((\bullet \bullet \bullet)  \bullet)  \bullet$, by writing a pair enclosing the $i$th and $j$th factor (and others inbetween) if $\{i,j\}\in f$.  This way, we can identify $\partial\Delta_n$ with the poset of nonempty faces of the associahedron $K_{n-2}$ ordered by reverse inclusion.  Using the dual polytope $K_{n-2}^*$, we can identify $\partial\Delta_n$ with the poset of non-maximal faces of the simplicial polytope $K_{n-2}^*$ ordered by inclusion.  The geometrical realization of $\partial\Delta_n$ can thus be identified with the boundary of $K_{n-2}^*$, i.e., a $n-3$-dimensional sphere.  It follows that the geometric realization of $\Delta_n$ is homeomorphic to a $n-2$-dimensional ball, and $\partial\Delta_n$ is indeed its boundary.
 
\begin{defi}
  For $f\in\Delta_n$, its set of connected components form a noncrossing partition that will be denoted $\underline{f} \in NC_n$.
\end{defi}

The fact that $\underline{f}$ is indeed a noncrossing partition immediately follows from the fobidden relation $i<k<j<\ell$ if $\{i,j\},\{k,\ell\}\in f$.  For example, the associated noncrossing partition associated to $f=\{\{1,3\},\allowbreak \{1,8\},\allowbreak \{2,3\},\allowbreak \{4,7\},\allowbreak \{6,7\} \} \in \Delta_8$ as above is $\underline{f} = \{\{1,2,3,8\}, \{4,6,7\}, \{5\}\}$.  It is straightforward to check that the map $f\mapsto \underline{f}$ is order-reversing.  Equivalently and using the Kreweras complement, $f\mapsto K(\underline{f})$ is order-preserving.

\begin{prop}%[Athanasiadis~{\cite[Corollary~2.5]{athanasiadis}}]
  We have the following relation between Whitney numbers:
  \[
    W_{\ell}(\Delta_n) 
    = 
    (-1)^{\ell} w_{\ell}(NC_n)
  \]
  for $0\leq \ell\leq n-1$.  In particular, the number of facets of $\Delta_n$ (i.e., noncrossing alternating trees on $\{1,\dots,n\}$) is the Catalan number $C_{n-1}$.
\end{prop}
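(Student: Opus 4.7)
The plan is to rewrite both sides of the claimed identity as a sum indexed by noncrossing partitions of $\llbracket 1;n\rrbracket$ and match them blockwise. The key observation is that the map $f \mapsto \underline{f}$ sending a face of $\Delta_n$ to its partition into connected components provides exactly the bridge we need between forests and noncrossing partitions.

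First I would stratify the $\ell$-edge faces of $\Delta_n$ according to $\underline{f}$. If $f$ has $\ell$ edges, then $\underline{f}$ has $n-\ell$ blocks. Moreover the ``noncrossing'' and ``alternating'' conditions on $f$ are both \emph{local} to each connected component, in the sense that any relation $i<k\leq j<\ell$ involving two forbidden edges $\{i,j\},\{k,\ell\}$ would force both edges to lie inside the same block of $\underline{f}$. Therefore, for a fixed $\rho \in NC_n$ with $|\rho|=n-\ell$, the number of $f \in \Delta_n$ with $\ell$ edges and $\underline{f} = \rho$ equals
\[
   \prod_{B \in \rho} T(|B|),
\]
where $T(k)$ denotes the number of noncrossing alternating trees on a linearly ordered $k$-element set.

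Second I would establish $T(k) = C_{k-1}$. The standard way is to set up a bijection with another Catalan family: given an NCA tree on $\{1,\dots,k\}$, remove the vertex $1$ and its incident edges $\{1,j_1\},\dots,\{1,j_r\}$ (with $1<j_1<\dots<j_r$); the alternating condition forces the neighbors of $1$ to all be bigger than $1$, and the noncrossing condition forces the intervals $[j_s, j_{s+1}]$ (with $j_0=2$, $j_{r+1}=k$) to each carry an NCA tree independently. This yields the Catalan recursion $T(k) = \sum_{r\geq 0}\sum_{k_1+\cdots+k_{r+1} = k-1} T(k_1)\cdots T(k_{r+1})$, or equivalently $\sum_n T(n+1) z^n = 1/(1-\sum_n T(n+1)z^n \cdot z)$, whose solution is the Catalan generating function with $T(k)=C_{k-1}$. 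Summing the previous step gives
\[
  W_\ell(\Delta_n) = \sum_{\substack{\rho \in NC_n \\ |\rho| = n-\ell}} \prod_{B\in\rho} C_{|B|-1}.
\]

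Third, I would expand the right-hand side using the well-known structure of $NC_n$: for $\pi \in NC_n$ of rank $\ell$, the interval $[\hat 0,\pi]$ is isomorphic to $NC_{i_1}\times NC_{i_2}\times\cdots$ where the $i_j$'s are the block sizes of $K(\pi)$; combined with the Möbius number of $NC_m$ being $(-1)^{m-1}C_{m-1}$, this gives
\[
\mu_{NC_n}(\hat 0,\pi) = \prod_{b \in K(\pi)} (-1)^{|b|-1} C_{|b|-1}.
\]
Since $\sum_{b\in K(\pi)}(|b|-1) = n - |K(\pi)| = |\pi|-1 = \ell$, the sign telescopes to $(-1)^\ell$. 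Thus
\[
 (-1)^\ell w_\ell(NC_n) = \sum_{\substack{\pi\in NC_n\\|\pi|=\ell+1}} \prod_{b\in K(\pi)} C_{|b|-1}.
\]
The substitution $\rho = K(\pi)$ (an involutive rewriting as $K$ is a bijection on $NC_n$ sending rank-$\ell$ elements to rank-$(n-1-\ell)$ ones, hence $|\rho|=n-\ell$) identifies this sum with the one for $W_\ell(\Delta_n)$. The corollary about facets follows by specializing $\ell=n-1$, where $K(\pi) = \hat 0$ has a single block of size $n$, giving $W_{n-1}(\Delta_n) = C_{n-1}$.

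The main obstacle is the Catalan count $T(k)=C_{k-1}$, which is where the combinatorial content of ``alternating'' really enters; everything else is a formal manipulation using the Kreweras complement and Möbius values already recalled in the paper.
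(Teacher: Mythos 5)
Your proof follows essentially the same route as the paper's: stratify faces of $\Delta_n$ by the partition $\underline{f}$, establish
\[
   \#\{f\in\Delta_n : \underline{f}=\rho\} = \prod_{B\in\rho} C_{|B|-1},
\]
and then recognize this product of Catalan numbers as a Möbius value in $NC_n$, exactly as in the paper's equation for $\#\{f\in\Delta_n:\underline{f}=\pi\}$. The two places where you diverge are both cosmetic. For the Catalan count $T(k)=C_{k-1}$, you use a root-removal recursion instead of the paper's pictorial bijection to complete binary trees; your description of which sub-intervals carry the inherited NCA trees is a bit loose (the alternating condition actually forces $\{1,k\}$ to be an edge, which collapses one of your boundary intervals), but the generating-function equation you extract is correct and gives the right answer. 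For the Möbius step, you work with $\mu_{NC_n}(\hat 0,\pi) = \prod_{b\in K(\pi)}(-1)^{|b|-1}C_{|b|-1}$ and then substitute $\rho = K(\pi)$, whereas the paper works with $\mu_{NC_n}(\pi,1_n)$ on the fibers $\underline{f}=\pi$ directly and invokes self-duality of $NC_n$ to identify the resulting sum with $w_\ell(NC_n)$; these are the same Kreweras-complement manipulation read in opposite directions. No gaps; the argument is sound.
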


\begin{proof}
A bijection between noncrossing alternating trees and complete binary trees can be given pictorially by drawing each edge $\{i,j\}$ as two line segments from $(i,0)$ to $(\frac{i+j}2,\frac{j-i}2)$ and from $(\frac{i+j}2,\frac{j-i}2)$ to $(j,0)$, see Figure~\ref{fig_alttrees}.  This proves the case $\ell=n-1$, as we get $C_{n-1}$ on both sides. 

This bijection can be extended componentwise to get the number of noncrossing alternating forests associated to a given $\pi\in NC_n$:
\begin{equation} \label{eq_deltaknc}
  \#\{ f\in \Delta_n \;:\; \underline{f} = \pi \} =  \prod_{b\in \pi} C_{|b|-1}.
\end{equation}
This product of Catalan numbers is also $(-1)^{n-1-\ell}\mu_{NC_n}(\pi,1_n)$, see Section~\ref{sec:nc}.  By summing over $\pi$ of rank $n-1-\ell$ in $NC_n$, we get
\[
  W_{\ell}(\Delta_n) 
  = \sum_{\substack{\pi\in NC_n \\ |\pi| = n-\ell}} (-1)^{\ell} \mu_{NC_n}(\pi,1_n) 
  =
  (-1)^{\ell} w_{\ell}(NC_n),
\]
where in the last equality we used the self-duality of $NC_n$.
\end{proof}

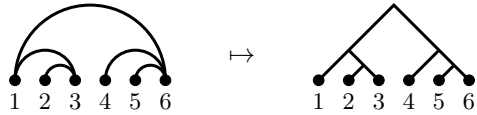
\begin{figure}[h!tp]
  \centering
  \begin{tikzpicture}[scale=0.4]
     \tikzstyle{ver} = [circle, draw, fill, inner sep=0.5mm]
     \tikzstyle{edg} = [line width=0.4mm]
     \node      at (1,-0.7) {\small 1};
     \node      at (2,-0.7) {\small 2};
     \node      at (3,-0.7) {\small 3};
     \node      at (4,-0.7) {\small 4};
     \node      at (5,-0.7) {\small 5};
     \node      at (6,-0.7) {\small 6};
     \node[ver] at (1,0) {};
     \node[ver] at (2,0) {};
     \node[ver] at (3,0) {};
     \node[ver] at (4,0) {};
     \node[ver] at (5,0) {};
     \node[ver] at (6,0) {};
     \draw[edg] (3,0) arc (0:180:1  );
     \draw[edg] (3,0) arc (0:180:0.5  );
     \draw[edg] (6,0) arc (0:180:2.5);
     \draw[edg] (6,0) arc (0:180:1  );
     \draw[edg] (6,0) arc (0:180:0.5);
  \end{tikzpicture}
  \qquad
  \begin{tikzpicture}[scale=0.4]
    \useasboundingbox (0,0) rectangle (0,2);
    \node at (0,2) {$\mapsto$};
  \end{tikzpicture}
  \qquad
  \begin{tikzpicture}[scale=0.4]
     \tikzstyle{ver} = [circle, draw, fill, inner sep=0.5mm]
     \tikzstyle{edg} = [line width=0.4mm]
     \node      at (1,-0.7) {\small 1};
     \node      at (2,-0.7) {\small 2};
     \node      at (3,-0.7) {\small 3};
     \node      at (4,-0.7) {\small 4};
     \node      at (5,-0.7) {\small 5};
     \node      at (6,-0.7) {\small 6};
     \node[ver] at (1,0) {};
     \node[ver] at (2,0) {};
     \node[ver] at (3,0) {};
     \node[ver] at (4,0) {};
     \node[ver] at (5,0) {};
     \node[ver] at (6,0) {};
     \draw[edg] (1,0) -- (3.5,2.5) -- (6,0);
     \draw[edg] (3,0) -- (2,1);
     \draw[edg] (2,0) -- (2.5,0.5);
     \draw[edg] (4,0) -- (5,1);
     \draw[edg] (5,0) -- (5.5,0.5);
  \end{tikzpicture}
  \caption{A noncrossing alternating tree and the associated complete binary tree.\label{fig_alttrees}}
\end{figure}

The poset $\Delta_n$ can be used to find what is the topology of $NC_n$.  We briefly explain this, following the results of Athanasiadis and Tzanaki~\cite{athanasiadistzanaki}.  Let $\bar {NC}_n$ (resp.~$\bar \Delta_n$) denote the poset $NC_n$ (resp.~$\Delta_n$) with its minimal element and maximal element(s) removed.  (Note that this notation with a bar is not uniform for the posets considered in this article.)  As $\Delta_n$ is topologically a $n-2$-dimensional ball, removing the $C_{n-1}$ top-dimensional simplices results in a wedge of $C_{n-1}$ many $n-3$-dimensional spheres, which is thus the topology of $\bar {\Delta}_n$.  The geometric realizations of $\bar {\Delta}_n$ and $\Omega(\bar \Delta_n)$ are homeomorphic, since the latter is the {\it barycentric subdivision} of the former.  Eventually, Athanasiadis and Tzanaki~\cite{athanasiadistzanaki} proved that the map $\Omega(\bar \Delta_n) \to \Omega( \bar {NC}_n)$ induced by $f\mapsto \underline{f}$ is a homotopy equivalence, using Quillen's fiber lemma.  It follows that $\Omega(\bar{NC}_n)$ is homotopy equivalent to a wedge of $C_{n-1}$ many $n-3$-dimensional spheres, just like $\bar \Delta_n$.

%%%%%%%%%%%%%%%%%%%%%%%%%%%%%%%%%%%%%%
\subsection{Cluster parking functions}
%%%%%%%%%%%%%%%%%%%%%%%%%%%%%%%%%%%%%%

By analogy with our discussion about $\Delta_n$ in the previous section, we introduce a simplicial complex $\ppdel_n$. In some sense, it is related to $\pp_n$ in the same way as $\Delta_n$ is related to $NC_n$.

\begin{defi}
  A {\it cluster parking function} is an element of the set
  \[
    \ppdel_n := \big\{ (f,(\pi,\rho,\lambda)) \in \Delta_n\times\pp_n \; \big| \; K(\underline{f})=\pi \big\}.
  \]
  A partial order on $\ppdel_n$ is defined by $(f',\phi') \leq (f,\phi)$ iff $f'\subset f$ and $\phi' \leq \phi$ in $\pp_n$.  An action of $\mathfrak{S}_n$ on $\ppdel_n$ is defined by $\sigma \cdot ( f , \phi ) = (f , \sigma\cdot\phi )$.
\end{defi}

Note that $\ppdel_n$ is a subposet of the product poset $\Delta_n\times \pp_n$, and it contains pair of elements having the same rank.  It is easily seen that the projection on each factor is a rank-preserving poset map.  Moreover the action of $\mathfrak{S}_n$ respects the order of $\ppdel_n$.

\begin{rema}
  The poset $\ppdel_n$ can be seen as the fiber product of $\Delta_n$ and $\pp_n$ over $NC_n$, along the two poset maps $\Delta_n \to NC_n, \; f\mapsto K( \underline{f} )$ and $\pp_n\to NC_n, \; (\pi,\rho,\lambda)\mapsto \pi$.  This point of view will be useful to relate the topology of the two posets $\ppdel_n$ and $\pp_n$, as we will use a fiber poset theorem.
\end{rema}

\begin{prop}
 For $0\leq\ell\leq n-1$, we have $W_{\ell}(\ppdel_n) = (-1)^{\ell} w_\ell(\pp_n)$.
\end{prop}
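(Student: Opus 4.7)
The plan is to evaluate both sides in closed form as sums over $NC_n$ indexed by the underlying noncrossing partition $\pi$, and check that the two expressions agree up to the sign $(-1)^\ell$.

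First I would set up notation for the rank. For $(f,(\pi,\rho,\lambda))\in\ppdel_n$ the constraint $K(\underline f)=\pi$ forces $|\pi|=|f|+1$, since in general $|\underline f|+|K(\underline f)|=n+1$ and $|\underline f|=n-|f|$. Consequently the two natural rank functions agree, and an element has rank $\ell$ in $\ppdel_n$ precisely when $|\pi|-1=\ell$ (equivalently $|f|=\ell$).

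Next I would count $W_\ell(\ppdel_n)$ by fixing $\pi\in NC_n$ of rank $\ell$ and multiplying two independent counts. By \eqref{eq_deltaknc}, the number of forests $f\in\Delta_n$ with $K(\underline f)=\pi$, i.e.\ with $\underline f=K(\pi)$, equals $\prod_{b\in K(\pi)}C_{|b|-1}$. The number of triples $(\pi,\rho,\lambda)\in\pp_n$ with this fixed $\pi$ is the multinomial $\binom{n}{(|b|)_{b\in\pi}}$, since one must choose for each block $B\in\pi$ a subset $\lambda(B)$ of the required cardinality forming a set partition of $\llbracket 1;n\rrbracket$. Thus
\begin{equation*}
  W_\ell(\ppdel_n)
  =\sum_{\substack{\pi\in NC_n\\ |\pi|=\ell+1}}
     \binom{n}{(|b|)_{b\in\pi}}\prod_{b\in K(\pi)}C_{|b|-1}.
\end{equation*}

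Then I would compute $w_\ell(\pp_n)$ from the definition. By Proposition~\ref{lemm:intervproj} and the description of ideals below $\pi\in NC_n$, for any $\phi=(\pi,\rho,\lambda)$ the interval $[\hat 0,\phi]$ in $\pp_n$ is isomorphic to $NC_{i_1}\times NC_{i_2}\times\cdots$ where $i_1,i_2,\dots$ are the block sizes of $K(\pi)$. Using $\mu_{NC_m}(0_m,1_m)=(-1)^{m-1}C_{m-1}$ and the identity $\sum_{b\in K(\pi)}(|b|-1)=n-|K(\pi)|=|\pi|-1=\ell$, this gives $\mu(\hat0,\phi)=(-1)^\ell\prod_{b\in K(\pi)}C_{|b|-1}$. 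The same multinomial count as above then yields
\begin{equation*}
  w_\ell(\pp_n)
  =(-1)^\ell\sum_{\substack{\pi\in NC_n\\ |\pi|=\ell+1}}
     \binom{n}{(|b|)_{b\in\pi}}\prod_{b\in K(\pi)}C_{|b|-1}.
\end{equation*}
Comparing the two displays gives $W_\ell(\ppdel_n)=(-1)^\ell w_\ell(\pp_n)$.

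There is essentially no hard step here; everything follows by combining the earlier structural results, so the main care is just to verify the rank matches and to track the sign via the identity $\sum_{b\in K(\pi)}(|b|-1)=\ell$.
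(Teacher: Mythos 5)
Your proof is correct and is essentially the paper's argument, with one extra (and ultimately unnecessary) layer of decomposition. The paper sums directly over elements $\phi\in\pp_n$ of rank $\ell$: it writes $W_\ell(\ppdel_n) = \sum_\phi \#\{f : K(\underline f)=\pi\}$, replaces each fiber count by $\prod_{b\in K(\pi)}C_{|b|-1}$ via \eqref{eq_deltaknc}, and then recognizes this quantity (up to the sign $(-1)^\ell$) as $\mu_{\pp_n}(\hat 0,\phi)$ using the interval isomorphism of Proposition~\ref{lemm:intervproj} and the known Möbius function of $NC_n$. You go one step further and collapse the sum over $\phi$ to a sum over the underlying $\pi\in NC_n$ by inserting the multinomial factor $\binom{n}{(|b|)_{b\in\pi}}$ counting the triples $(\pi,\rho,\lambda)$ with fixed $\pi$; since you insert the same factor on both sides of the identity, it cancels and contributes nothing. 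This is harmless, but the paper's version is tighter: there is no need to compute the multiplicity because the summands are already matched bijectively (both sums run over $\phi$, and the summands are equal term by term). Your preliminary check that the rank in $\ppdel_n$ is well-defined (i.e.\ $|f|=|\pi|-1$ via $|\underline f|+|K(\underline f)|=n+1$) is a worthwhile detail that the paper elides with the remark that the projections are rank-preserving.
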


\begin{proof}
Using the map $(f,\phi)\mapsto \phi$ from $\ppdel_n$ to $\pp_n$, we can write:
\[
  W_{\ell}(\ppdel_n) 
  =
  \sum_{\substack{\phi\in\ppp_n \\ \rk(\phi) = \ell}}
  \#\{ f\in\Delta_n \;|\; (f,\phi)\in\ppdel_n \}
  =
  \sum_{\substack{(\pi,\rho,\lambda)\in\ppp_n \\ \rk(\pi) = \ell}}
  \#\{ f\in\Delta_n \;|\; K(\underline{f}) = \pi \}.
\]
Using~\eqref{eq_deltaknc} and the result on the Möbius function of $NC_n$, we get
\[
  W_{\ell}(\ppdel_n) 
  =
  \sum_{\substack{(\pi,\rho,\lambda)\in\ppp_n \\ \rk(\pi) = \ell}}
  \prod_{b\in K(\pi)} C_{|b|-1}
  =
  (-1)^\ell \sum_{\substack{(\pi,\rho,\lambda)\in\ppp_n \\ \rk(\phi) = \ell}}
  \mu_{NC_n}(0_n,\pi)
  .
\]
Recall that the interval $[ 0_n , \pi ]$ in $NC_n$ is isomorphic to the interval $[ (0_n,0_n,id) , (\pi,\rho,\lambda) ]$ in $\pp_n$.  The previous equation gives:
\[
  W_{\ell}(\ppdel_n) 
  =
  (-1)^\ell \sum_{\substack{(\pi,\rho,\lambda)\in\ppp_n \\ \rk(\phi) = \ell}}
  \mu_{\ppp_n}((0_n,0_n,id) , (\pi,\rho,\sigma))
  =
  (-1)^\ell w_{\ell} (\pp_n)
  .
\]
\end{proof}

\begin{prop}
  $\ppdel_n$ is the face poset of a simplicial complex.
\end{prop}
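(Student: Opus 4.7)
The plan is to apply the classical criterion that a finite poset with a unique minimum is the face poset of an abstract simplicial complex if and only if every principal order ideal $[\hat 0, x]$ is a Boolean lattice. I will show that for each $(f, \phi) \in \ppdel_n$, the lower interval $[\hat 0, (f, \phi)]$ is isomorphic to $2^f$, with the atoms below $(f,\phi)$ in bijection with the edges of $f$.

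First, $\ppdel_n$ has a unique minimum $\hat 0 = (\emptyset, (0_n, 0_n, id))$, since $\underline{\emptyset} = 1_n$ and $K(1_n) = 0_n$. Now fix $(f, \phi) \in \ppdel_n$ and write $\phi = (\pi, \rho, \lambda)$ with $\pi = K(\underline{f})$. For any $S \subseteq f$, the edge set $S$ is again a noncrossing alternating forest, since the forbidden configurations $i < k \leq j < \ell$ are preserved under taking subsets. The inclusion $S \subseteq f$ means that $\underline{f}$ is a coarsening of $\underline{S}$, so $\underline{S} \geq \underline{f}$ in $NC_n$, and applying the anti-automorphism $K$ yields $K(\underline{S}) \leq \pi$.

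By Lemma~\ref{lem:unique}, there is then a unique $\phi_S \in \pp_n$ with $\phi_S \leq \phi$ whose underlying noncrossing partition is $K(\underline{S})$, so $(S, \phi_S)$ lies in $\ppdel_n$ and below $(f, \phi)$. This defines a map $\Phi : 2^f \to [\hat 0, (f, \phi)]$ by $\Phi(S) = (S, \phi_S)$. The uniqueness clause in Lemma~\ref{lem:unique} immediately gives order-preservation: if $S \subseteq S' \subseteq f$, then $\phi_S$ is the unique element below $\phi$ of its type and also the unique such element below $\phi_{S'}$, forcing $\phi_S \leq \phi_{S'}$. Surjectivity is immediate from the definition of the order in $\ppdel_n$, which forces any $(f', \phi') \leq (f, \phi)$ to satisfy $f' \subseteq f$ and $\phi' = \phi_{f'}$ by uniqueness.

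This shows that every lower interval of $\ppdel_n$ is isomorphic to a Boolean lattice of rank $|f|$, which establishes the proposition. The only subtle point is to verify that the Kreweras anti-automorphism correctly converts subset containment on forests into the refinement order on $NC_n$, which is precisely where Lemma~\ref{lem:unique} applies; beyond this bookkeeping, I do not expect any serious obstacle.
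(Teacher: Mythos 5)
Your argument that every lower interval of $\ppdel_n$ is a Boolean lattice is correct and matches the first half of the paper's proof, but the criterion you invoke is incomplete, and this creates a genuine gap. A poset with a unique minimum in which every principal order ideal $[\hat 0, x]$ is Boolean is what is usually called a \emph{simplicial poset} (or Boolean cell complex); such a poset need not be the face poset of an abstract simplicial complex. The extra condition you must verify is that the map $x \mapsto \{ v \text{ atom} : v \le x\}$ is injective. To see that this is not automatic, consider the poset $\{\hat 0, a, b, x, y\}$ with $\hat 0 \lessdot a, b$ and $a, b \lessdot x, y$: every lower interval is a Boolean lattice, yet $x$ and $y$ lie above the same two atoms, so this cannot be the face poset of a simplicial complex (two distinct faces would have the same vertex set).

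The paper supplies exactly the missing step. It lets $(f_i,(\pi_i,\rho_i,\lambda_i))_{1\le i\le k}$ be the rank-$1$ elements below $(f,(\pi,\rho,\lambda))$, observes that $f = \bigcup f_i$ (so $f$ is recoverable from the atoms), then shows $\pi = \pi_1 \vee \dots \vee \pi_k$ via the Kreweras complement, and finally argues that $(\pi,\rho,\lambda) = \bigvee_i (\pi_i,\rho_i,\lambda_i)$ by a rank argument. Only after this join-reconstruction argument is the identification of $\ppdel_n$ with a simplicial complex on the rank-$1$ elements justified. You should append an argument of this form to close the gap; the Boolean property you already proved makes the atom count correct but does not by itself preclude two faces sharing the same atom set.
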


\begin{proof}
If $(f,\phi) \in \pp_n$ and $f'\subset f$, it follows from Lemma~\ref{lem:unique} that there exist unique $\rho$ and $\lambda$ such that $(K(\underline{f'}) , \rho , \lambda ) \leq \phi$ in $\pp_n$, so there exists a unique $\phi'\in\pp_n$ such that $(f',\phi') \leq (f,\phi)$ in $\ppdel_n$.  It follows that each order ideal in $\ppdel_n$ is a boolean lattice.  

Let $V$ denote the set of rank 1 element in $\ppdel_n$.  It remains only to show that the map
\begin{equation} \label{mapv}
  (f,\phi) \mapsto \{ v\in V \;:\; v\leq (f,\phi) \}
\end{equation}
is injective to identify $\ppdel_n$ with a simplicial complex having $V$ as its vertex set. 

So, let $(f,(\pi,\rho,\lambda))\in\ppdel_n$, and let $(f_i,(\pi_i,\rho_i,\lambda_i))_{1\leq i \leq k}$ be the rank 1 elements below it.  Here $k$ is the rank of $(f,(\pi,\rho,\lambda))$, since the order ideal of elements below it is boolean.  First, note that $f$ is the union of the singletons $f_i$, as if $f' \subset f$ is a singleton there exists $(f',\phi') \leq (f,\phi)$.  It follows that $\underline{f} = \underline{f_1} \wedge \dots \wedge \underline{f_k}$ in $NC_n$.  By taking the Kreweras complement, we get $\pi = \pi_1 \vee \dots \vee \pi_k$.  Eventually, we show that $(\pi,\rho,\lambda) = \vee _{1\leq i \leq k}(\pi_i,\rho_i,\lambda_i)$.  Otherwise, the join would be of rank $<k$, and taking the projection on $NC_n$ give a contradiction since $\pi = \pi_1 \vee \dots \vee \pi_k$.  This shows that $(f,(\pi,\rho,\lambda))$ is the join of rank 1 elements below it, so that the map in \eqref{mapv} is injective.
\end{proof}

Let $\bar \ppdel_n$ denote the poset $\ppdel_n$ with its bottom element removed.

\begin{prop}
  The geometric realizations of the two simplicial complexes $\Omega(\bar \ppdel_n)$ and $\Omega(\bar \pp_n)$ are homotopy equivalent.  Moreover, $\tilde H_{n-2}( \bar \ppdel_n  )$ and $\tilde H_{n-2}(\bar \pp_n)$ are equivalent as representations of $\mathfrak{S}_n$.
\end{prop}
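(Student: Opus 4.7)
The plan is to apply Quillen's fiber lemma to the projection $p \colon \bar\ppdel_n \to \bar\pp_n$ given by $(f,\phi) \mapsto \phi$. This is an $\mathfrak{S}_n$-equivariant, rank-preserving poset map. If one shows that for every $\phi \in \bar\pp_n$ the fiber $p^{-1}(\{\phi' \in \bar\pp_n : \phi' \leq \phi\})$ has contractible order complex, the fiber lemma delivers a homotopy equivalence $\Omega(\bar\ppdel_n) \simeq \Omega(\bar\pp_n)$. This is precisely in the spirit of the Athanasiadis--Tzanaki argument for $\bar\Delta_n \to \bar{NC}_n$ recalled just before the proposition.

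Fix $\phi = (\pi,\rho,\lambda) \in \bar\pp_n$. By Lemma~\ref{lem:unique}, every $\phi' \leq \phi$ in $\pp_n$ is uniquely determined by its first coordinate $\pi' \leq \pi$. Translating through the defining relation $\pi' = K(\underline{f})$ and the fact that $K$ is an anti-automorphism of $NC_n$, the condition $(f,\phi') \in \bar\ppdel_n$ with $\phi' \leq \phi$ becomes: $f \in \Delta_n$, $f \neq \varnothing$, and $\underline{f} \geq K(\pi)$ in $NC_n$. Writing $K(\pi) = \{B_1,\dots,B_r\}$, any such $f$ decomposes uniquely as $f = f_1 \sqcup \cdots \sqcup f_r$, where $f_i$ is a noncrossing alternating forest on $B_i$: the decoupling works because $K(\pi)$ is noncrossing, so an edge inside $B_i$ and an edge inside $B_j$ cannot cross, while the alternating condition depends only on neighbors of each vertex. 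Hence the fiber, ordered by inclusion, is the face poset (with the empty face removed) of the simplicial join $\Delta_{|B_1|} \ast \cdots \ast \Delta_{|B_r|}$. Each factor is a cone by the earlier proposition that $\Delta_m$ is a cone over $\partial\Delta_m$ with apex $\{1,m\}$, hence contractible; a join of contractible complexes is contractible, so the fiber is contractible.

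For the representation statement, one checks that the fiber is in fact pointwise fixed by the stabilizer $G_\phi$: once $f$ is fixed, the associated $\phi' = (\pi',\rho',\lambda')$ has $\rho'$ and $\lambda'$ assembled from $\rho$ and $\lambda$ by unions of blocks, so any $\sigma \in G_\phi$ fixes $\phi'$ too. The fiber is therefore equivariantly contractible, and the equivariant Quillen fiber lemma (see, e.g., Bj\"orner--Wachs--Welker) upgrades the homotopy equivalence to an $\mathfrak{S}_n$-equivariant one. Passing to top reduced homology gives $\tilde H_{n-2}(\bar\ppdel_n) \cong \tilde H_{n-2}(\bar\pp_n)$ as $\mathfrak{S}_n$-modules. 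I expect the main technical point to be the explicit identification of the fiber as the join $\Delta_{|B_1|} \ast \cdots \ast \Delta_{|B_r|}$: the noncrossing and alternating conditions across blocks of $K(\pi)$ require a short but careful verification, after which both contractibility and the conclusion follow formally.
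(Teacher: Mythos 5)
Your argument is correct and matches the paper's own proof: both apply the (equivariant) Quillen poset fiber theorem to the projection $(f,\phi)\mapsto\phi$, use Lemma~\ref{lem:unique} to identify the fiber over $\phi=(\pi,\rho,\lambda)$ with the set of noncrossing alternating forests whose connected components refine a fixed noncrossing partition, and observe that this splits block-by-block into a product (equivalently, simplicial join) of the contractible complexes $\Delta_m$. One small slip: the condition $K(\underline{f})\leq\pi$ unwinds to $\underline{f}\geq K^{-1}(\pi)$ rather than $\underline{f}\geq K(\pi)$, but since $K(\pi)$ and $K^{-1}(\pi)$ are conjugate and hence have the same block sizes, this does not affect the argument.
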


\begin{proof}
This is a direct application of Quillen's fiber poset theorem (see~\cite[Theorem~5.2.1]{wachs}), similar to the argument in \cite{athanasiadistzanaki} (as described at the end of Section~\ref{sec:assoc}).  The statement about homology follows from the equivariant version (see~\cite[Theorem~5.2.2]{wachs}). Consider the projection $ \bar \ppdel_n \to \bar\pp_n$ defined by $(f,\phi) \mapsto \phi$.  To apply the fiber poset theorem, we need to check that the fibers
\begin{equation} \label{eq:fiber}
  \big\{ (f',\phi') \in \ppdel_n \;\big|\; \phi' \leq \phi \big\}
\end{equation}
are topologically trivial for any $\phi \in \pp_n $.  

Let $(f',\phi')$ be in the set~\eqref{eq:fiber}.  Using Lemma~\ref{lem:unique}, we see that $\phi'\in\pp_n$ is uniquely determined from $f'$ and $\phi$.  So the fiber in Equation~\eqref{eq:fiber} is isomorphic to its projection to the first factor $\Delta_n$ (we have seen in the proof of the previous proposition that this projection respects the order).  If $\phi=(\pi,\rho,\sigma)$, the image of this projection is the subcomplex
\begin{equation} \label{eq:fiber2}
  \big\{ f \in \Delta_n \;\big|\; K(\underline{f}) \leq \pi \big\} =   \big\{ f \in \Delta_n \;\big|\; \underline{f} \geq K^{-1}(\pi) \big\}.
\end{equation}
This is easily seen to be isomorphic to the product $\Delta_{n_1}\times\Delta_{n_2}\times \cdots$ where $n_1,n_2,\dots$ are the block sizes of $K^{-1}(\pi)$, so it is topologically trivial since each factor $\Delta_{n_i}$ is.  So the fiber in \eqref{eq:fiber} is topologically trivial as well, and we can apply the poset fiber theorem.
\end{proof}

The geometric realizations of the two simplicial complexes $\bar \ppdel_n$ and $\Omega(\bar\ppdel_n)$ are homeomorphic (they are related by barycentric subdivision, as explained at the end of Section~\ref{sec:assoc}).  Together with the previous proposition, this shows that $\ppdel_n$ as a simplicial complex has the same topology as $\Omega(\pp_n)$.  

It might be possible to use this property in order to get an alternative way to obtain the homotopy type of $\pp_n$.  Proving shellability of $\ppdel_n$ only requires to find an appropriate total order on its maximal elements, which is potentially easier than ordering maximal chains of $\pp_n$.

%%%%%%%%%%%%%%%%%%%%%%
\section{\texorpdfstring{$k$-divisible noncrossing $2$-partitions}{k-divisible noncrossing 2-partitions}}
%%%%%%%%%%%%%%%%%%%%%%

%%%%%%%%%%%%%%%%%%%%%%%%%%%%%%%%%%%%%%%%%%%%
\subsection{\texorpdfstring{$k$-divisible noncrossing partitions}{k-divisible noncrossing partitions}}
%%%%%%%%%%%%%%%%%%%%%%%%%%%%%%%%%%%%%%%%%%%%

Let $k\geq 1$ be an integer.  The poset of $k$-divisible noncrossing partitions (of size $n$) was introduced by Edelman~\cite{edelman}, as the (full) subposet of $NC_{kn}$ containing elements $\pi$ such that the cardinality of each block is a multiple of $k$.  We use an equivalent formulation, due to Armstrong~\cite[Chapter~3]{armstrong} in a more general context.  It relies on the embedding of $NC_n$ in $\mathfrak{S}_n$ and identifies $k$-divisible noncrossing partitions with $k$-element chains in $NC_n$.  The equivalence between Edelman's definition and Armstrong's definition is stated in~\cite[Section~4.3]{armstrong}.

\begin{defi}
If $\pi\leq\tau$ in $NC_n$, define their {\it relative Kreweras complement} $K(\pi,\tau)\in NC_n$ as the unique $\nu\in NC_n$ such that $\overline{\nu} = \overline{\pi}\,\overline{\tau}^{-1}$.
The poset $NC_n^{(k)}$ of $k$-divisible noncrossing partitions is such that:
\begin{itemize}
    \item an element of $NC_n^{(k)}$ is a $k$-element chain $\pi_1\leq\dots\leq\pi_k$ in $NC_n$,
    \item There is a relation $(\pi_1,\dots,\pi_k) \leq (\tau_1,\dots,\tau_k)$ in $NC_n^{(k)}$ iff we have $\forall i\in\llbracket 1;k\rrbracket$, $K(\pi_{i-1},\pi_i) \allowbreak \geq K(\tau_{i-1},\tau_i)$ (with the convention $\pi_0=\tau_0=0_n$).
\end{itemize}
\end{defi}

Note that it might seem unnatural to have $\geq$ rather than $\leq$ in the last condition above.  We could change the definition of the relative Kreweras complement (by composing with the Kreweras complement) to have the inequality in the other way around.  If $\pi\leq\tau$, then $K(\pi,\tau)=1_n$ if and only if $\pi=\tau$, and roughly speaking $\pi$ and $\tau$ are close to each other if $K(\pi,\tau)$ is close to $1_n$.

Let us mention some properties taken from~\cite{armstrong}.  Because we took different conventions, our poset $NC_n^{(k)}$ is isomorphic to the one denoted ``$NC_{(k)}(A_{n-1})$'' in \cite{armstrong}.

\begin{prop}[{\cite[Theorem~3.4.4]{armstrong}}] \label{nckrank}
The poset $NC_n^{(k)}$ is ranked, with rank function given by 
\[
  (\pi_1,\dots,\pi_k) \mapsto |\pi_k|-1.
\]
In particular, the poset $NC_n^{(k)}$ has one minimal element, namely $(0_n,\dots,0_n)$, and its maximal elements are the $(\pi_1,\dots,\pi_k)\in NC_n^{(k)}$ such that $\pi_k=1_n$.  
\end{prop}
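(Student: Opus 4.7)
The plan is to verify that $f(\pi_1,\ldots,\pi_k) := |\pi_k|-1$ is the rank function by combining strict monotonicity of $f$ with an explicit construction, for any given $(\pi_1,\ldots,\pi_k)$, of a saturated chain of length $|\pi_k|-1$ from the minimum. The minimum is $(0_n,\ldots,0_n)$: since $K(0_n,0_n) = 1_n$ is the top of $NC_n$, the defining inequalities $K(0_n,0_n) \geq K(\tau_{i-1},\tau_i)$ are automatic. Dually, once $f$ is known to be the rank function, maximal elements are exactly those with $\pi_k = 1_n$: strict monotonicity forbids any strict successor (as any $\tau_k \geq 1_n$ equals $1_n$), while the construction below shows any element with $\pi_k < 1_n$ admits a cover.

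For strict monotonicity, I would use the identity
\[
  |K(\pi,\tau)| \;=\; n + |\pi| - |\tau| \qquad \text{for } \pi \leq \tau \text{ in } NC_n,
\]
obtained by computing the rank of $[\pi,\tau]$ in $NC_n$ in two ways: directly as $|\tau|-|\pi|$, and through the product decomposition $[\pi,\tau] \cong \prod_{B \in K(\pi,\tau)} NC_{|B|}$ (which generalizes the two extreme cases stated in Section~\ref{sec:nc}) to get $n - |K(\pi,\tau)|$. Telescoping over the chain $\pi_0 \leq \pi_1 \leq \cdots \leq \pi_k$ yields
\[
  f(\pi_1,\ldots,\pi_k) \;=\; kn - \sum_{i=1}^k |K(\pi_{i-1},\pi_i)|.
\]
If $(\pi_1,\ldots,\pi_k) < (\tau_1,\ldots,\tau_k)$, at least one of the relations $K(\pi_{i-1},\pi_i) \geq K(\tau_{i-1},\tau_i)$ is strict in $NC_n$; as distinct comparable elements of $NC_n$ necessarily have different block counts, this forces $\sum_i |K(\tau_{i-1},\tau_i)| < \sum_i |K(\pi_{i-1},\pi_i)|$, hence $f$ strictly increases. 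Consequently, every chain in $NC_n^{(k)}$ from the minimum to $(\pi_1,\ldots,\pi_k)$ has length at most $|\pi_k|-1$.

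For the matching lower bound, I would build a saturated chain in $k$ stages. At the end of stage $j \in \{0,1,\ldots,k\}$, the current element is $(\pi_1,\ldots,\pi_j,\pi_j,\ldots,\pi_j)$ (with $\pi_j$ repeated in the last $k-j$ positions); stage $0$ is the minimum. Within stage $j$, choose a maximal chain $\pi_{j-1} = \sigma_0 \lessdot \sigma_1 \lessdot \cdots \lessdot \sigma_m = \pi_j$ in the interval $[\pi_{j-1},\pi_j]$ of $NC_n$ (of length $m = |\pi_j|-|\pi_{j-1}|$), and pass through the elements $(\pi_1,\ldots,\pi_{j-1},\sigma_t,\sigma_t,\ldots,\sigma_t)$ as $t$ increases from $0$ to $m$. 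Each elementary step alters only the $j$th $K$-value, replacing $K(\pi_{j-1},\sigma_t)$ by $K(\pi_{j-1},\sigma_{t+1})$; summing stage lengths yields a chain of total length $\sum_{j=1}^{k}(|\pi_j|-|\pi_{j-1}|) = |\pi_k|-1$.

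The main obstacle is verifying that each elementary step is a genuine cover in $NC_n^{(k)}$, equivalently, that $K(\pi_{j-1},\sigma_{t+1}) \lessdot K(\pi_{j-1},\sigma_t)$ in $NC_n$. This reduces to the fact that the map $\sigma \mapsto K(\pi_{j-1},\sigma)$ is an anti-automorphism of $[\pi_{j-1},1_n]$: it is a bijection swapping the endpoints $\pi_{j-1}$ and $1_n$ (direct computation from $\bar K(\pi_{j-1},\sigma) = \bar\pi_{j-1}\bar\sigma^{-1}$), and its order-reversing property is transparent from the product decomposition $[\pi_{j-1},1_n] \cong \prod_{B \in \pi_{j-1}} NC_{|B|}$ of Section~\ref{sec:nc}, on each factor of which the map restricts to an ordinary Kreweras complement. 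Combining this with strict monotonicity identifies $f$ as the rank function and confirms the characterization of the maximal elements.
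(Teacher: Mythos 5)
The paper does not prove this proposition itself; it is cited verbatim from Armstrong's memoir (Theorem~3.4.4 there), so your proposal is an independent attempt rather than a reconstruction of an argument appearing in the paper. Most of what you write is correct: the identity $|K(\pi,\tau)| = n + |\pi| - |\tau|$ and its derivation from the product decomposition of $[\pi,\tau]$, the telescoping expression $f = kn - \sum_i |K(\pi_{i-1},\pi_i)|$, strict monotonicity of $f$ along the order of $NC_n^{(k)}$, the anti-automorphism $\sigma \mapsto K(\pi,\sigma)$ of $[\pi,1_n]$, and the fact that each elementary step in your staged construction is a genuine cover (it changes $f$ by exactly $1$, and strict monotonicity rules out any intermediate).

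However, the concluding sentence — ``combining this with strict monotonicity identifies $f$ as the rank function'' — contains a genuine gap. What you have actually established is that $f(x)$ equals the \emph{height} of $x$, i.e.\ the length of the longest chain from $\hat 0$ to $x$: strict monotonicity gives the upper bound, and your staged construction realizes it. But height equal to $f$ plus strict monotonicity of $f$ does not imply that $f$ is a rank function, i.e.\ that $f(y) = f(x) + 1$ for every cover $x \lessdot y$. A five-element poset with cover relations $\hat 0 \lessdot a \lessdot b \lessdot \hat 1$ and $\hat 0 \lessdot c \lessdot \hat 1$, with $f$ taking values $0,1,2,1,3$ on $\hat 0,a,b,c,\hat 1$, satisfies both of your hypotheses yet is not graded because $f(\hat 1) - f(c) = 2$. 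The missing ingredient is precisely the hard part of Armstrong's theorem: given $(\pi_1,\dots,\pi_k) < (\tau_1,\dots,\tau_k)$ with $|\tau_k| - |\pi_k| \geq 2$, one must produce a strictly intermediate element of $NC_n^{(k)}$. Your staged construction only handles intervals whose bottom is $(0_n,\dots,0_n)$; for a general bottom, one cannot simply replace a single $K_i$ by some cover between $K(\tau_{i-1},\tau_i)$ and $K(\pi_{i-1},\pi_i)$, because a tuple $(K_1,\dots,K_k)$ lying coordinatewise between two realizable tuples need not itself arise from a chain $\chi_1 \leq \dots \leq \chi_k$ in $NC_n$. Establishing this realizability (typically via properties of reduced $T$-factorizations in absolute order, or via Armstrong's delta-sequence formalism) is where the real content of the proof lies and is exactly what your proposal omits.
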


\begin{prop}[{\cite[Theorem~3.6.7]{armstrong}}] \label{prop:nck_chains}
  There is a bijection between $j$-element chains in $NC_n^{(k)}$ and $jk$-element chains in $NC_n$.
\end{prop}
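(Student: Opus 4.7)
The plan is to view both sides of the bijection through the language of \emph{noncrossing factorizations} of the long cycle $c := \overline{0_n} \in \mathfrak{S}_n$. For any $m \geq 1$, a multichain $0_n = \sigma_0 \leq \sigma_1 \leq \cdots \leq \sigma_m$ in $NC_n$ is uniquely encoded by its delta sequence $d_i := K(\sigma_{i-1}, \sigma_i)$ together with the final element $\sigma_m$, producing the factorization $c = \overline{d_1}\, \overline{d_2} \cdots \overline{d_m}\, \overline{\sigma_m}$ in $\mathfrak{S}_n$. The tuples arising this way are exactly those whose reflection lengths add correctly ($\sum_i \ell_T(\overline{d_i}) + \ell_T(\overline{\sigma_m}) = n-1$), which gives a bijection between $m$-multichains in $NC_n$ and such length-$(m{+}1)$ noncrossing factorizations of $c$. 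Applied with $m=k$, this identifies $NC_n^{(k)}$ with length-$(k{+}1)$ noncrossing factorizations of $c$; applied with $m=jk$, it identifies $jk$-multichains in $NC_n$ with length-$(jk{+}1)$ noncrossing factorizations.

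Under the first identification, the defining order on $NC_n^{(k)}$ becomes componentwise refinement of the first $k$ factors: $\Pi \leq \Pi'$ iff $d_i \geq d_i'$ in $NC_n$ for $i = 1, \ldots, k$. Consequently, a $j$-multichain $\Pi^{(1)} \leq \cdots \leq \Pi^{(j)}$ in $NC_n^{(k)}$ is the data of a $j \times k$ matrix of deltas in which each column $i$ is itself a multichain $d_i^{(1)} \geq d_i^{(2)} \geq \cdots \geq d_i^{(j)}$ in $NC_n$. Applying the delta encoding to each such column (with $d_i^{(1)}$ playing the role of $0_n$) yields a refinement $\overline{d_i^{(j)}} = u_i^{(1)} u_i^{(2)} \cdots u_i^{(j)}$ with $u_i^{(1)} = \overline{d_i^{(1)}}$ and $u_i^{(a)}$ the unique element such that $\overline{d_i^{(a)}} = \overline{d_i^{(a-1)}}\, u_i^{(a)}$ with length-additivity.

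Substituting these $k$ column-refinements into the global factorization $c = \overline{d_1^{(j)}}\, \overline{d_2^{(j)}} \cdots \overline{d_k^{(j)}}\, \overline{\pi_k^{(j)}}$ produces a length-$(jk{+}1)$ noncrossing factorization of $c$, which under the second identification corresponds to a $jk$-multichain in $NC_n$. The inverse map takes such a length-$(jk{+}1)$ noncrossing factorization, sets aside the final factor as $\overline{\pi_k^{(j)}}$, partitions the first $jk$ factors into $k$ consecutive blocks of size $j$, and in the $i$-th block $(u_i^{(1)}, \ldots, u_i^{(j)})$ reconstructs the column of the matrix by $\overline{d_i^{(a)}} := u_i^{(1)} u_i^{(2)} \cdots u_i^{(a)}$.

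The crux of the argument is verifying that length-additivity matches across both directions: the $k$ length-$j$ column-refinements concatenate into a single length-additive length-$(jk{+}1)$ factorization of $c$, and conversely any such factorization, when grouped as above, produces $k$ columns each of which is length-additive. Both claims follow from the elementary fact that length-additivity in the absolute order of $\mathfrak{S}_n$ is preserved under regrouping of factors, applied iteratively. Once this bookkeeping is in place, the two maps are manifestly mutually inverse, yielding the claimed bijection, essentially as in Armstrong~\cite[Theorem~3.6.7]{armstrong}.
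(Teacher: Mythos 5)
The paper cites Armstrong~\cite[Theorem~3.6.7]{armstrong} for this proposition and gives no proof, so there is no in-paper argument to compare against; your proof is the standard delta-sequence approach used by Armstrong, and its overall structure --- a $j\times k$ matrix of relative Kreweras complements, column refinements, concatenation into a single noncrossing factorization of the long cycle $c=\overline{0_n}$ --- is correct.

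There is, however, a genuine gap in the closing ``crux'' paragraph. The forward verification is indeed a regrouping: each column gives a length-additive prefix factorization of $\overline{d_i^{(j)}}$, and concatenating the $k$ columns with $\overline{\pi_k^{(j)}}$ preserves length-additivity. But in the inverse direction, after setting $\overline{d_i^{(a)}} := u_i^{(1)}\cdots u_i^{(a)}$, you must also verify for each $a<j$ that the row $(d_1^{(a)},\ldots,d_k^{(a)})$ is the delta sequence of a genuine element of $NC_n^{(k)}$, i.e.\ that $\overline{d_1^{(a)}}\cdots\overline{d_k^{(a)}}=(u_1^{(1)}\cdots u_1^{(a)})\cdots(u_k^{(1)}\cdots u_k^{(a)})$ is again a length-additive product lying below $c$ in absolute order. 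This product \emph{skips} every $u_i^{(b)}$ with $b>a$, so it is not a consecutive sub-block of your $(jk{+}1)$-factorization and cannot be produced by ``regrouping of factors,'' however iterated. The lemma actually needed is: if $y_1,\dots,y_k$ multiply length-additively to an element below $c$, and each $x_i$ lies below $y_i$ in absolute order, then $x_1,\dots,x_k$ also multiply length-additively to an element below $c$. This follows by induction on $k$ via the order-reversing map $z\mapsto z^{-1}c$ on the absolute-order interval below $c$ (a form of Kreweras complementation): from $x_1$ below $y_1$ one gets $y_1^{-1}c$ below $x_1^{-1}c$, hence $x_2\cdots x_k$ below $y_1^{-1}c$ below $x_1^{-1}c$, closing the induction. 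This ``shrinking'' lemma is exactly what Armstrong supplies as a separate step in the memoir; once you invoke or prove it, your bijection is complete and correct.
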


By a result of Armstrong and Thomas (see~\cite{armstrong}), the poset $NC_n^{(k)}$ has nice topological properties.  As in the case of $\pp_n$, we denote $\hat {NC}_n^{(k)}$ the (bounded) poset obtained from $NC_n^{(k)}$ by adding a top element $\hat 1$.

\begin{prop}[{\cite[Theorem~3.7.2]{armstrong}}]
The poset $\hat{NC}_n^{(k)}$ is shellable.
\end{prop}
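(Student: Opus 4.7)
The plan is to prove shellability by constructing an EL-labeling of $\hat{NC}_n^{(k)}$ that lifts the Björner--Edelman EL-labeling of $NC_n$. Recall that the latter assigns to a cover $\sigma\lessdot\sigma'$ in $NC_n$ the transposition $\bar\sigma^{-1}\bar\sigma'=(a,b)$ with $a<b$, and transpositions are ordered lexicographically. I will combine this labeling with an extra coordinate recording which ``slot'' of the $k$-tuple is being modified.

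First I would describe the cover relations of $NC_n^{(k)}$ explicitly. Using the rank function from Proposition~\ref{nckrank}, a cover $(\pi_1,\dots,\pi_k)\lessdot(\tau_1,\dots,\tau_k)$ must satisfy $|\tau_k|=|\pi_k|+1$, and I would show that this is equivalent to the existence of a unique index $i\in\{1,\dots,k\}$ such that $K(\tau_{i-1},\tau_i)\lessdot K(\pi_{i-1},\pi_i)$ is a cover in $NC_n$, while $K(\tau_{j-1},\tau_j)=K(\pi_{j-1},\pi_j)$ for $j\neq i$ (with convention $\pi_0=\tau_0=0_n$). The label assigned to the cover would be the pair $(i,(a,b))$, where $(a,b)$ is the Björner--Edelman label of the corresponding cover in $NC_n$. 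These pairs are ordered lexicographically (primarily by $i$, secondarily by the transposition), and the cover from each maximal element of $NC_n^{(k)}$ into $\hat 1$ is assigned a label strictly larger than any $(i,(a,b))$ above.

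Second, I would verify the EL-property. The key structural lemma is a product decomposition of intervals: the interval $[x,y]$ in $NC_n^{(k)}$ with $x=(\pi_1,\dots,\pi_k)$ and $y=(\tau_1,\dots,\tau_k)$ is isomorphic to $\prod_{i=1}^k[K(\tau_{i-1},\tau_i),K(\pi_{i-1},\pi_i)]$ in $NC_n$, where the isomorphism sends an element to its tuple of relative Kreweras complements. Granted this, the unique strictly increasing maximal chain in $[x,y]$ is obtained by performing the unique increasing maximal chain in each factor (which exists by EL-shellability of $NC_n$) consecutively in order of increasing $i$. Because the labels are sorted primarily by $i$, this chain is also the lexicographically smallest maximal chain of the interval, giving the EL-property.

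The main obstacle is establishing the product decomposition of intervals, which requires proving that the relative Kreweras complements form a complete invariant of elements in an interval and that meets and joins in $NC_n^{(k)}$ are computed componentwise through them; this depends on fine properties of the Kreweras complement and the structure of $NC_n$ as a self-dual lattice. Once this decomposition is in place, shellability of $\hat{NC}_n^{(k)}$ follows from the standard fact that an EL-labeling induces a shelling of the order complex, and the extra cover into $\hat 1$ requires no additional work since it is the unique way to reach the top element.
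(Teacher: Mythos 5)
The paper does not prove this proposition; it is quoted directly from Armstrong's memoir (Theorem~3.7.2 there), so there is no internal argument to compare against.

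Your outline has a genuine gap in the treatment of the added top element $\hat 1$, and the gap is fatal as written. For $k\geq 2$ the poset $NC_n^{(k)}$ has many maximal elements: by Proposition~\ref{nckrank} these are exactly the $(\pi_1,\dots,\pi_k)$ with $\pi_k=1_n$, and there are $C_n^{(k-1)}>1$ of them. Hence the interval $[\hat 0,\hat 1]$ is \emph{not} a product interval and is not covered by the structural lemma you invoke, and your closing claim that the cover into $\hat 1$ ``requires no additional work since it is the unique way to reach the top element'' is simply false for $k\geq 2$. Worse, your own setup forces the failure of the EL-property: if the product decomposition and the coordinate-wise labeling produce exactly one strictly increasing maximal chain in each $[\hat 0,\phi]$ with $\phi$ maximal in $NC_n^{(k)}$, and if every cover $\phi\lessdot\hat 1$ is labeled by something larger than all the $(i,(a,b))$, then each of those chains extends (by appending $\hat 1$) to a strictly increasing maximal chain of $[\hat 0,\hat 1]$. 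So $[\hat 0,\hat 1]$ has $C_n^{(k-1)}$ increasing maximal chains rather than one. Already for $n=3$, $k=2$ this gives five increasing chains among the twelve maximal chains of $\hat{NC}_3^{(2)}$.

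The rest of the plan---the order-reversing $\delta$-map, the description of covers, the product decomposition of intervals $[x,y]$ with $y\in NC_n^{(k)}$, and the label order sorted primarily by the coordinate index---is reasonable and reconstructs part of Armstrong's analysis of $NC^{(k)}$ via delta sequences. But the real difficulty in Armstrong's Theorem~3.7.2 lies precisely in the intervals containing $\hat 1$: the labels of the covers $\phi\lessdot\hat 1$, and the global linear order on labels, must be chosen so that exactly one chain through exactly one coatom is increasing. Assigning all covers into $\hat 1$ a common ``large'' label cannot do that, so this part of the argument needs to be rebuilt.
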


It follows that $NC_n^{(k)}$ is a Cohen-Macaulay poset.  

\begin{rema}
The bijection in Lemma~\ref{lemma_encodingNC} can be extended to the case of $k$-divisible noncrossing partitions, at the condition of using Edelman's definition.  Indeed, it clearly gives a bijection between elements $\pi\in NC_{kn}$ having block sizes divisible by $k$, and weak compositions $(a_1,\dots,a_{kn})$ of $kn$ such that each $a_i$ is a multiple of $k$ and $\sum_{i=1}^j a_i \geq j$, $\forall j \in \llbracket 1;kn\rrbracket$.  These are obviously in bijection with weak compositions $(a_1,\dots,a_{kn})$ of $n$ such that $\sum_{i=1}^j ka_i \geq j$, $\forall j \in \llbracket 1;kn\rrbracket$.
\end{rema}

%%%%%%%%%%%%%%%%%%%%%%%%%%%%%%%%%%%%%%
\subsection{\texorpdfstring{$k$-divisible noncrossing $2$-partitions}{k-divisible noncrossing 2-partitions}}

Just as in the case of noncrossing partitions, the idea is to define here an order relation on $k$-element chains of $\pp_n$.

\begin{defi} \label{defppnk}
 The poset $\pp_n^{(k)}$ of {\it $k$-divisible noncrossing $2$-partitions} is defined as the set of $k$-element chains of $\pp_n$, with the order relation such that:
\[
  ((\pi_1,\rho_1,\lambda_1),\dots,(\pi_k,\rho_k,\lambda_k))
  \leq 
  ((\pi'_1,\rho'_1,\lambda'_1),\dots,(\pi'_k,\rho'_k,\lambda'_k)
\]
iff $(\pi_1,\dots,\pi_k) \leq (\pi'_1,\dots,\pi'_k)$ in $NC_n^{(k)}$ and $(\pi_k,\rho_k,\lambda_k) \leq (\pi'_k,\rho'_k,\lambda'_k)$ in $\pp_n$.
\end{defi}

Note that a chain $((\pi_i,\rho_i,\lambda_i))_{1\leq i \leq k}$ as above can be concisely encoded in the tuple $(\pi_1,\dots, \pi_k,\rho_k,\lambda_k)$, using Lemma~\ref{lem:unique}.  We can see $\pp_n^{(k)}$ as the fiber product of $NC_n^{(k)}$ and $\pp_n$ over $NC_n$, along the two posets maps $\pp_n^{(k)} \to NC_n^{(k)}$,
$(\pi_1,\dots, \pi_k,\rho_k,\lambda_k) \mapsto (\pi_1,\dots, \pi_k)$ and $\pp_n^{(n)} \to \pp_n$, $(\pi_1,\dots, \pi_k,\rho_k,\lambda_k) \mapsto (\pi_k, \rho_k , \lambda_k)$.

In analogy with Lemma~\ref{lem:unique}, we have:

\begin{lemm}\label{lem:uniquek}
  Let $(\pi_1,\dots, \pi_k,\rho_k,\lambda_k) \in \pp_n^{(k)}$.  Then, for each  $(\pi'_1,\dots, \pi'_k)$ below $(\pi_1,\dots, \pi_k)$ in $NC_n^{(k)}$, there exist unique $\rho'_k$ and $\lambda'_k$ such that $(\pi'_1,\dots, \pi'_k,\rho'_k,\lambda'_k)$ is below $(\pi_1,\dots, \pi_k,\rho_k,\lambda_k)$ in $\pp_n^{(k)}$.  
\end{lemm}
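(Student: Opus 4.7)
The plan is to reduce Lemma~\ref{lem:uniquek} to Lemma~\ref{lem:unique} via a single application at the top of the chain, followed by a routine downward propagation. First I would establish the intermediate fact that the $k$-th projection $NC_n^{(k)} \to NC_n$, $(\pi_1,\dots,\pi_k) \mapsto \pi_k$, is order-preserving; this yields $\pi'_k \leq \pi_k$ in $NC_n$ from the hypothesis $(\pi'_1,\dots,\pi'_k) \leq (\pi_1,\dots,\pi_k)$ in $NC_n^{(k)}$. Then Lemma~\ref{lem:unique}, applied to $(\pi_k, \rho_k, \lambda_k) \in \pp_n$ together with $\pi'_k \leq \pi_k$, produces unique $\rho'_k$ and $\lambda'_k$ such that $(\pi'_k, \rho'_k, \lambda'_k) \leq (\pi_k, \rho_k, \lambda_k)$ in $\pp_n$.

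Next I would observe that iterating Lemma~\ref{lem:unique} downward through $\pi'_{k-1} \leq \pi'_k$, $\pi'_{k-2} \leq \pi'_{k-1}$, and so on, extends $(\pi'_k, \rho'_k, \lambda'_k)$ to a unique $k$-chain $(\pi'_1, \rho'_1, \lambda'_1) \leq \dots \leq (\pi'_k, \rho'_k, \lambda'_k)$ in $\pp_n$, which is precisely the data of an element of $\pp_n^{(k)}$ encoded as $(\pi'_1,\dots,\pi'_k, \rho'_k, \lambda'_k)$. It then remains only to verify that this constructed element lies below $(\pi_1,\dots,\pi_k, \rho_k, \lambda_k)$ in $\pp_n^{(k)}$, which is immediate from Definition~\ref{defppnk}: the $NC_n^{(k)}$-condition $(\pi'_1,\dots,\pi'_k) \leq (\pi_1,\dots,\pi_k)$ is given by hypothesis, while the $\pp_n$-condition $(\pi'_k, \rho'_k, \lambda'_k) \leq (\pi_k, \rho_k, \lambda_k)$ holds by construction. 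Uniqueness of $(\rho'_k, \lambda'_k)$ is immediate in turn: any valid pair must make $(\pi'_k, \rho'_k, \lambda'_k) \leq (\pi_k, \rho_k, \lambda_k)$ in $\pp_n$, and then Lemma~\ref{lem:unique} pins it down.

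The main obstacle is the intermediate fact invoked above. While it is intuitively clear and verifiable on small examples (and transparent from Edelman's equivalent description inside $NC_{kn}$, where refinement visibly preserves the top-block structure), a direct argument from Armstrong's definition requires unpacking the order in terms of the reduced factorization $\overline{K(\pi_k)} = \overline{K(0_n, \pi_1)} \cdots \overline{K(\pi_{k-1}, \pi_k)}$ in the absolute order on $\mathfrak{S}_n$. The hypothesis forces each factor of this product to lie below, in the absolute order, the corresponding factor of $\overline{K(\pi'_k)} = \overline{K(0_n, \pi'_1)} \cdots \overline{K(\pi'_{k-1}, \pi'_k)}$, and one then has to deduce $\overline{K(\pi_k)} \leq_{\mathrm{abs}} \overline{K(\pi'_k)}$, i.e.\ $\pi'_k \leq \pi_k$ in $NC_n$. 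This is a known monotonicity property of Kreweras-type factorizations of the long cycle; alternatively it can be extracted from Proposition~\ref{prop:nck_chains} by noting that the bijection there sends the $2$-chain $(\pi'_1,\dots,\pi'_k) \leq (\pi_1,\dots,\pi_k)$ to a $2k$-chain in $NC_n$ whose middle comparison is precisely $\pi'_k \leq \pi_k$.
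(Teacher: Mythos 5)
Your proof is correct and follows the same route as the paper, which simply invokes Definition~\ref{defppnk} and Lemma~\ref{lem:unique}: once one knows $\pi'_k \leq \pi_k$ in $NC_n$, the pair $(\rho'_k,\lambda'_k)$ is produced and pinned down by Lemma~\ref{lem:unique}, and the $\pp_n^{(k)}$-order condition is then automatic. You have usefully made explicit the one fact the paper leaves tacit, namely that $(\pi'_1,\dots,\pi'_k)\leq(\pi_1,\dots,\pi_k)$ in $NC_n^{(k)}$ forces $\pi'_k\leq\pi_k$; this is indeed a standard monotonicity property of the top-coordinate projection $NC_n^{(k)}\to NC_n$ in Armstrong's framework, and your sketch of it (via absolute-order factorizations of $\overline{K(\pi_k)}$, or via Edelman's realization inside $NC_{kn}$) is sound.
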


\begin{proof}
  This is a direct consequence of the definition of $\pp_n^{(k)}$, together with Lemma~\ref{lem:unique}.
\end{proof}

It follows from the previous lemma that there is an isomorphism between the order ideal of $NC_n^{(k)}$ containing elements below $(\pi_1,\dots, \pi_k)$, and the order ideal of $\pp_n^{(k)}$ containing elements below $(\pi_1,\dots, \pi_k,\rho_k,\lambda_k)$.

We have seen that the action of $\mathfrak{S}_n$ preserves the order relation of $\pp_n$, thus defining an action on $\pp_n^{(k)}$ having $\operatorname{Park}_n^{(k)}$ as its character.  Moreover, it is straightforward from Definition~\ref{defppnk} that this action preserves the order relation on $\pp_n^{(k)}$.  Going one step further, we have:

\begin{prop} \label{zeta_characterk}
  There is a $\mathfrak{S}_n$-equivariant bijection between $j$-element chains of $\pp_n^{(k)}$ and $\pp_n^{(jk)}$.  In particular, the character of $\mathfrak{S}_n$ acting on $j$-element chains of $\pp_n^{(k)}$ is $\operatorname{Park}_n^{(jk)}$.
\end{prop}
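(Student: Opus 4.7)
The plan is to encode both sides as pairs consisting of an underlying $NC$-chain together with a single ``top extension'' piece of $\pp_n$-data, and then invoke Proposition~\ref{prop:nck_chains} to match the underlying $NC$-chains. First I would use Lemma~\ref{lem:uniquek} to show that a $j$-element chain $X^{(1)} \leq \dots \leq X^{(j)}$ in $\pp_n^{(k)}$, with $X^{(m)} = (\pi_1^{(m)},\dots,\pi_k^{(m)},\rho_k^{(m)},\lambda_k^{(m)})$, is completely determined by the underlying $j$-chain $E^{(1)} \leq \dots \leq E^{(j)}$ in $NC_n^{(k)}$ (where $E^{(m)} = (\pi_1^{(m)},\dots,\pi_k^{(m)})$) together with the single pair $(\rho_k^{(j)}, \lambda_k^{(j)})$ extending $\pi_k^{(j)}$ at the top; symmetrically, Lemma~\ref{lem:unique} shows that a $jk$-chain in $\pp_n$ is determined by its underlying $jk$-chain in $NC_n$ together with the top extension data. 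In both cases the action of $\mathfrak{S}_n$ fixes the $NC$-data and acts only on the top extension via~\eqref{act_ontriples}.

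Next I would combine the bijection $\Phi$ of Proposition~\ref{prop:nck_chains} (between $j$-chains in $NC_n^{(k)}$ and $jk$-chains in $NC_n$) with this identification of top extensions. The key technical point is that $\Phi$ preserves the top partition, in the sense that the top coordinate $\pi_k^{(j)}$ of the top element of a $j$-chain in $NC_n^{(k)}$ equals the top partition of the corresponding $jk$-chain in $NC_n$. Granted this, pairing with the $\pp_n$-extension at that common top partition yields a well-defined bijection, and its $\mathfrak{S}_n$-equivariance is automatic since the group acts only on this extension and the extension is transported unchanged.

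The main obstacle is justifying the top-preserving property of $\Phi$. I would invoke the explicit unfolding in Armstrong~\cite{armstrong} whereby the $jk$-chain associated with a $j$-chain of $k$-chains is constructed to end at $\pi_k^{(j)}$. An alternative route that sidesteps pinning $\Phi$ down is to establish the character statement directly from the encoding of the first paragraph: the character of $\mathfrak{S}_n$ on $j$-chains in $\pp_n^{(k)}$ equals
\[
  \sum_{\pi \in NC_n} N_j^{(k)}(\pi) \, \operatorname{Ind}_{\mathfrak{S}_n(\pi)}^{\mathfrak{S}_n}(1),
\]
where $N_j^{(k)}(\pi)$ counts $j$-chains in $NC_n^{(k)}$ whose top coordinate is $\pi_k^{(j)}=\pi$. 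Since the order ideal in $NC_n^{(k)}$ below such an element decomposes as the product $\prod_{b \in K(\pi)} NC_{|b|}^{(k)}$ (the natural $k$-divisible analogue of the product decomposition recalled for $NC_n$ at the end of Section~\ref{sec:nc}), one obtains $N_j^{(k)}(\pi) = \prod_{b \in K(\pi)} Z(NC_{|b|}^{(k)}, j) = \prod_{b \in K(\pi)} C_{|b|}^{(jk)}$, the last equality by applying Proposition~\ref{prop:nck_chains} inside each block. Substituting into Lemma~\ref{lemm_parkk} gives exactly $\operatorname{Park}_n^{(jk)}$, matching~\eqref{char_chains}.
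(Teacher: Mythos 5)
Your main argument follows the paper's proof essentially verbatim: encode a $j$-chain of $\pp_n^{(k)}$ as the underlying $j$-chain in $NC_n^{(k)}$ together with the single $\pp_n$-extension at the top (via Lemmas~\ref{lem:uniquek} and~\ref{lem:unique}), transport the $NC$-data by the bijection of Proposition~\ref{prop:nck_chains}, and reattach the extension, with equivariance holding because $\mathfrak{S}_n$ acts only on that extension. You are also right to flag that the top-preserving property of $\Phi$ is the one load-bearing step the paper passes over ("it can be checked"); your alternative character-level argument via the product decomposition of lower intervals in $NC_n^{(k)}$ and Lemma~\ref{lemm_parkk} is a clean independent verification of the character statement for readers who do not wish to unpack Armstrong's construction.
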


\begin{proof}
From Lemma~\ref{lem:uniquek}, a $j$-element chain $\psi_1,\dots,\psi_j$ of $\pp_n^{(k)}$ can be encoded in a $j$-element chain $\pi_1,\dots,\pi_j$ in $NC_n^{(k)}$ together with the element $\psi_j = (\phi_1,\dots,\phi_k)$ of $\pp_n^{(k)}$.  Using Lemma~\ref{lem:unique}, $(\phi_1,\dots,\phi_k)$ can be recovered from $\pi_j$ and $\phi_k$.  The bijection in Proposition~\ref{prop:nck_chains} sends $\pi_1,\dots,\pi_j$ to a $jk$-element chain $\tau_1,\dots,\tau_{jk}$.
Together with $\phi_k$, this chains defines an element of $\pp_n^{(jk)}$ and it can be checked that this is an equivariant bijection.
\end{proof}

As usual, let $\hat \pp_n^{(k)}$ denote the poset $\pp_n^{(k)}$ with an extra top element $\hat 1$, and $\bar \pp_n^{(k)}$ denote $\pp_n^{(k)}$ with its bottom element removed. 

\begin{prop}
  The poset $\pp_n^{(k)}$ is Cohen-Macaulay.
\end{prop}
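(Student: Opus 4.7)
The plan is to use the natural projection $f : \pp_n^{(k)} \to NC_n^{(k)}$, $(\pi_1,\ldots,\pi_k,\rho_k,\lambda_k) \mapsto (\pi_1,\ldots,\pi_k)$, to transfer the Cohen--Macaulay property from $NC_n^{(k)}$ (which is shellable, by the Armstrong--Thomas result cited just above) to $\pp_n^{(k)}$. The key input is Lemma~\ref{lem:uniquek}.

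First I would upgrade Lemma~\ref{lem:uniquek} from lower order ideals to arbitrary closed intervals, showing that for any $y \leq x$ in $\pp_n^{(k)}$, the map $f$ restricts to a poset isomorphism $[y,x] \cong [f(y),f(x)]$. Indeed, Lemma~\ref{lem:uniquek} applied with top $x$ shows that any $z \leq x$ in $\pp_n^{(k)}$ is uniquely determined by its image $f(z) \leq f(x)$; and once $z \leq x$, the additional last-coordinate compatibility required for $z \geq y$ is forced by Lemma~\ref{lem:unique} and reduces to the condition $f(z) \geq f(y)$. Since $\hat{NC}_n^{(k)}$ is Cohen--Macaulay, it follows that every open interval $(y,x)$ of $\pp_n^{(k)}$ with both endpoints in $\pp_n^{(k)}$ is Cohen--Macaulay.

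It then remains to verify the Cohen--Macaulay condition for open intervals of the form $(x,\hat 1)$ in $\hat\pp_n^{(k)}$. When $x \neq \hat 0$, writing $x = (\pi_1,\ldots,\pi_k,\rho_k,\lambda_k)$ and letting $b_1,\ldots,b_m$ be the blocks of $\pi_k$, I would show that the order filter $(\pp_n^{(k)})_{\geq x}$ decomposes as a product $\prod_{j=1}^m \pp_{|b_j|}^{(k)}$, by combining the block-decomposition of filters above $(\pi_k,\rho_k,\lambda_k)$ in $\pp_n$ (from Section~\ref{sec11}) with the corresponding product decomposition of filters above $(\pi_1,\ldots,\pi_k)$ in $NC_n^{(k)}$ available in Armstrong's framework. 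Since $|b_j| < n$, an induction on $n$ together with the fact that products of Cohen--Macaulay posets are Cohen--Macaulay handles this case. The remaining case $x = \hat 0$, i.e.\ the whole proper part $\bar\pp_n^{(k)}$, is handled by constructing an explicit lexicographic shelling of $\hat\pp_n^{(k)}$ that refines the shelling of $\hat{NC}_n^{(k)}$ by the shelling of $\hat\pp_n$ from Theorem~\ref{theo_shelling}: on maximal chains one compares first their projections in $\hat{NC}_n^{(k)}$, and within a given fiber one falls back on the order used in Section~\ref{sec_shell}.

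The main obstacle will be the product decomposition of order filters in $NC_n^{(k)}$ and its compatibility with the analogous decomposition in $\pp_n$ through the fiber-product structure; this reduces to tracking how the relative Kreweras complements interact with block decomposition. Once that is established the remaining arguments are routine; alternatively, one may bypass this step by directly verifying the shellability of $\hat\pp_n^{(k)}$ via the \emph{ad hoc} criterion of Lemma~\ref{lemma_for_shelling}, adapted to chains of noncrossing 2-partitions.
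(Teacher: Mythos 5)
Your approach is genuinely different from the paper's, and the cleaner half of it is a nice observation, but it leaves exactly the hard cases unresolved. You project to $NC_n^{(k)}$ and correctly note, via Lemma~\ref{lem:uniquek}, that $f$ restricts to a poset isomorphism $[y,x]\cong[f(y),f(x)]$ whenever $x\in\pp_n^{(k)}$ (the check that $z\geq y$ reduces to $f(z)\geq f(y)$ does go through by two applications of Lemma~\ref{lem:unique}). This settles open intervals $(y,x)$ with $x\neq\hat 1$. However, the Cohen--Macaulay definition also demands that the intervals $(x,\hat 1)$, and in particular $(\hat 0,\hat 1)$ (the whole proper part), have vanishing reduced homology below top degree, and these are precisely what your plan does not deliver.

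For $(x,\hat 1)$ with $x\neq\hat 0$, you assert a decomposition $(\pp_n^{(k)})_{\geq x}\cong\prod_j\pp_{|b_j|}^{(k)}$ indexed by the blocks of $\pi_k$ and keeping the same parameter $k$. This is not something that follows from the cited ingredients, and there is reason to doubt it holds in that form: the order filter of $NC_n^{(k)}$ above $(\pi_1,\dots,\pi_k)$ is governed by \emph{all} the relative Kreweras complements $K(\pi_{i-1},\pi_i)$, not only by $\pi_k$, and Armstrong's interval structure theorem yields factors that are generalized noncrossing partition posets of varying Fuss parameter (and parabolic type), not a clean product of $NC_m^{(k)}$'s. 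Tracking this through the fiber product with $\pp_n$ is exactly what you flag as ``the main obstacle,'' but this is not a technicality to be deferred --- it is the entire content of the claim. For $(\hat 0,\hat 1)$, the proposed ``refined lexicographic shelling'' --- comparing chains first by projection to $\hat{NC}_n^{(k)}$, then by the Section~\ref{sec_shell} order ``within a fiber'' --- is only a sketch: maximal chains of $\hat\pp_n^{(k)}$ do not come grouped into fibers of maximal chains of $\hat{NC}_n^{(k)}$ in a useful way, and there is no general theorem that composing a shelling of a quotient with shellings of fibers produces a shelling.

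The paper avoids both pitfalls by projecting the \emph{other} way, $\pp_n^{(k)}\to\pp_n$, sending a $k$-chain to its top element, and applying Baclawski's Cohen--Macaulay poset fiber theorem. There the relevant fibers are preimages of order \emph{ideals}, which via Lemma~\ref{lem:uniquek} are identified with $\{(\pi_1,\dots,\pi_k)\in NC_n^{(k)}\,:\,\pi_k\leq\pi\}$, and the elementary product decomposition of order ideals of $NC_n$ (into $NC_{i_1}\times NC_{i_2}\times\cdots$, $i_j$ the block sizes of $K(\pi)$) carries over to $k$-chains to give $\prod_j NC_{i_j}^{(k)}$ --- CM by Armstrong--Thomas. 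No filter decomposition and no new shelling are needed. If you want to pursue your route, you would need to prove the filter decomposition rigorously (likely with a variable Fuss parameter in the factors, strengthening the induction accordingly) and actually construct and verify a shelling of $\hat\pp_n^{(k)}$, e.g.\ by adapting Lemma~\ref{lemma_for_shelling}.
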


\begin{proof}
  We use the projection $\pp_n^{(k)} \to \pp_n$ sending a $k$-element chain to its top element.  Note that this map preserves rank by Proposition~\ref{nckrank}, and it can be checked to be increasing. The fiber associated to $(\pi,\rho,\lambda)\in\pp_n$ is the subposet of $\pp_n^{(k)}$ containing $k$-element chains with top element below $(\pi,\rho,\lambda)$.  Using Lemma~\ref{lem:uniquek}, we find that this fiber is isomorphic to the subposet
  \begin{equation}\label{fiberppk}
    \big\{ 
        (\pi_1,\dots,\pi_k) \in NC_n^{(k)} \; \big| \; \pi_k \leq \pi 
    \big\}.
  \end{equation}
  Note that this is not an order ideal in $NC_n^{(k)}$.  Let $i_1,i_2,\dots$ denote the block sizes of $K(\pi)$, so that the order ideal $\{\pi'\in NC_n\;|\;\pi'\leq\pi\}$ is isomorphic to $NC_{i_1}\times NC_{i_2}\times \cdots$.  It can be checked that this isomorphism carries over to $k$-element chains and shows that the subposet in~\eqref{fiberppk} is isomorphic to $NC^{(k)}_{i_1}\times NC^{(k)}_{i_2}\times \cdots$.

  The fiber in~\eqref{fiberppk} is Cohen-Macaulay, since it is isomorphic to a product of Cohen-Macaulay posets.  We can thus use the poset fiber theorem for Cohen-Macaulay posets (see~\cite[Theorem~5.2]{baclawski}). 
\end{proof}

Note that this method does not provide a shelling of the poset $\pp_n^{(k)}$, and it would be interesting to build one explicitly.  But the previous proposition suffices to guarantee that the only nonzero homology group of $\pp_n^{(k)}$ is the $n-2$nd.  Now, the content of Section~\ref{sec_homology} can be adapted to the $k$-divisible case.

\begin{thm} \label{theo_charhomologyk}
The character of $\tilde H_{n-2}(\bar\pp_n^{(k)})$ as a representation of $\mathfrak{S}_n$ is given by:
\begin{align}  \label{formula_charhomologyk}
  \sigma \mapsto (-1)^{n-z(\sigma)} (kn-1)^{z(\sigma)-1}.
\end{align}
\end{thm}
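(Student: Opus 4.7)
The plan is to follow the strategy of the first proof of Theorem~\ref{theo_charhomology}: combine the Hopf trace formula with the polynomiality (in the chain parameter) of the zeta character $\operatorname{Park}_n^{(jk)}$, which was identified as the character of $\mathfrak{S}_n$ acting on $j$-element chains of $\pp_n^{(k)}$ in Proposition~\ref{zeta_characterk}. The key ingredient newly available is the Cohen--Macaulay property of $\pp_n^{(k)}$ just proved, which collapses all but the top reduced homology group.

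First I would set up the analogues of the modules from Section~\ref{sec_homology}: let $\mathcal{C}_i$ denote the $\mathfrak{S}_n$-module freely generated by strict chains $\phi_1 < \dots < \phi_{i+1}$ of $\bar \pp_n^{(k)}$, and let $\mathcal{D}_j$ denote the $\mathfrak{S}_n$-module freely generated by weak $j$-element chains of $\pp_n^{(k)}$ (including the minimal element with multiplicity). By Proposition~\ref{zeta_characterk}, the character of $\mathcal{D}_j$ is $\operatorname{Park}_n^{(jk)}$, which by \eqref{char_chains} evaluates to $(jkn+1)^{z(\sigma)-1}$ and is therefore a polynomial in $j$ at each $\sigma$. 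Since $\pp_n^{(k)}$ is Cohen--Macaulay and of length $n-1$, only $\tilde H_{n-2}(\bar\pp_n^{(k)})$ is nonzero, so the Hopf trace formula gives
\[
  \sum_{i=-1}^{n-2} (-1)^i \mathcal{C}_i = (-1)^n \tilde H_{n-2}(\bar\pp_n^{(k)})
\]
in the representation ring of $\mathfrak{S}_n$.

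Next, I would write down the combinatorial identity expressing a weak $j$-chain as a choice of a strict chain in $\bar\pp_n^{(k)}$ together with multiplicities for its elements and the bottom element, exactly as in \eqref{rel_strict_large}:
\[
  \mathcal{D}_j = \sum_{i=-1}^{n-2} \binom{j}{i+1} \mathcal{C}_i.
\]
Because both sides are polynomial in $j$ at each group element (the left by the remark above, the right trivially), the identity holds virtually at $j=-1$. Since $\binom{-1}{i+1} = (-1)^{i+1}$, setting $j=-1$ yields $\mathcal{D}_{-1} = -\sum (-1)^i \mathcal{C}_i = -(-1)^n \tilde H_{n-2}(\bar\pp_n^{(k)})$, so
\[
  \tilde H_{n-2}(\bar\pp_n^{(k)}) = (-1)^{n-1} \operatorname{Park}_n^{(-k)}.
\]
Evaluating at $\sigma$ gives $(-1)^{n-1}(1-kn)^{z(\sigma)-1} = (-1)^{n-1}(-1)^{z(\sigma)-1}(kn-1)^{z(\sigma)-1} = (-1)^{n-z(\sigma)}(kn-1)^{z(\sigma)-1}$, which is the claimed formula.

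The only subtle step is justifying that the substitution $j=-1$ is legitimate at the level of (virtual) characters, i.e., that the polynomiality of $j \mapsto \mathcal{D}_j(\sigma)$ is genuine and not just formal. This is the same point raised in the proof of Theorem~\ref{theo_charhomology}, and the justification is identical: both sides of the displayed identity, evaluated on any fixed $\sigma \in \mathfrak{S}_n$, are polynomials in $j$ of degree $\leq n-1$, and the identity holds for all nonnegative integer values of $j$, hence as polynomials. I do not expect any genuine difficulty, since the Cohen--Macaulay property and Proposition~\ref{zeta_characterk} do all the nontrivial work. One could alternatively adapt the Whitney-homology derivation (using Lemma~\ref{lemm_parkk} and Sundaram's theorem), but the Hopf trace route is shorter given what has already been proved.
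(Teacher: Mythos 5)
Your proof is correct and follows exactly the route the paper takes: the paper's own argument simply says ``as in the proof of Theorem~\ref{theo_charhomology}, plug $j=-1$ into $\operatorname{Park}_n^{(jk)}$ and multiply by $(-1)^{n-1}$,'' and your write-up faithfully unpacks that reference to the Hopf trace formula, the weak-to-strict chain relation, and the polynomiality argument. No gaps, and the sign bookkeeping checks out.
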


\begin{proof}
  As in the proof of Theorem~\ref{theo_charhomology}, we can use the zeta character of $\pp_n^{(k)}$, which is $\operatorname{Park}_n^{(jk)}$ by Proposition~\ref{zeta_characterk}.  By plugging $j=-1$ and multiplying by $(-1)^{n-1}$, we get the result.
\end{proof}

The explicit formulas for Whitney is left to the interested reader, and we finish this section with the analog of Section~\ref{sec:prime}.

\begin{defi}
  An element $(\pi_1,\dots,\pi_k) \in NC_n^{(k)}$ is {\it prime} if $\pi_k\in NC'_n$.    An element $(\phi_1,\dots,\phi_k) \in \pp_n^{(k)}$ is {\it prime} if $\phi_k\in \pp'_n$.  We denote $NC'{}_n^{(k)} \subset NC_n^{(k)}$ the subset of prime $k$-divisible noncrossing partitions, and by $\pp'{}_n^{(k)} \subset \pp_n^{(k)}$ the subset of prime $k$-divisible noncrossing $2$-partitions.
\end{defi}

Following the argument in Sections~\ref{sec_parkingspace} and~\ref{sec:prime}, the character of $\mathfrak{S}_n$ acting on $\pp'{}_n^{(k)}$ is 
\[
  \operatorname{Park'}_n^{(k)} 
  := 
  \sum_{(\pi_1,\dots\pi_k)\in NC'{}_n^{(k)}} \operatorname{Ind}_{\mathfrak{S}_n(\pi_k)}^{\mathfrak{S}_n}(1).
\]

\begin{prop} \label{prop_parkk}
  For $\sigma\in\mathfrak{S}_n$, we have:
  \[
    \operatorname{Park'}_n^{(k)}(\sigma)
    =
    (kn-1)^{z(\sigma)-1}.
  \]
\end{prop}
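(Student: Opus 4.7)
The plan is to generalize the proof of Proposition~\ref{prop_park}, which handled the case $k=1$, by invoking the rational parking functions of Armstrong-Loehr-Warrington. For coprime positive integers $a, b$, these authors show that the $(a, b)$-rational parking functions form a $\mathfrak{S}_a$-set whose character is $\sigma \mapsto b^{z(\sigma)-1}$. Ordinary parking functions correspond to $(a, b) = (n, n+1)$, and prime parking functions to $(n, n-1)$; the $k$-chains of $\pp_n$ should correspond to $(n, kn+1)$-rational parking functions (matching~\eqref{char_chains}), and prime $k$-chains to $(n, kn-1)$-rational parking functions.

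First I would establish the $\mathfrak{S}_n$-equivariant bijection between $\pp_n^{(k)}$ and $(n, kn+1)$-rational parking functions, extending the bijection between $\pp_n$ and parking functions from Proposition~\ref{bij_2ncp_park}. Using Lemma~\ref{lem:uniquek}, an element of $\pp_n^{(k)}$ is encoded by a chain $\pi_1 \leq \dots \leq \pi_k$ in $NC_n$ together with the triple $(\pi_k, \rho_k, \lambda_k)$; via the Edelman-Armstrong correspondence with $k$-divisible noncrossing partitions of $kn$ (see Proposition~\ref{prop:nck_chains}), this data can be translated into a labelled lattice path in an $n \times (kn+1)$ rectangle that stays weakly above the diagonal. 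Next I would restrict this bijection to the prime elements: the condition that $\pi_k \in NC_n$ is prime (i.e.\ $1$ and $n$ lie in the same block) should correspond under the bijection to the strict-dominance condition cutting $(n, kn+1)$-paths down to $(n, kn-1)$-paths, generalizing the description of prime parking functions $w_1\dots w_n$ by $\#\{i\,:\,w_i\leq j\}>j$ for $1 \leq j \leq n-1$.

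Applying the character formula $\sigma \mapsto b^{z(\sigma)-1}$ with $(a, b) = (n, kn-1)$ then yields the desired identity $\operatorname{Park'}_n^{(k)}(\sigma) = (kn-1)^{z(\sigma)-1}$. The main obstacle is verifying that primality matches the combinatorial description of $(n, kn-1)$-parking functions through the Edelman-Armstrong chain bijection; this requires tracking the block structure of $\pi_k$ through the correspondence, but it is the direct generalization of the case $k=1$ used (without detail) in the proof of Proposition~\ref{prop_park}. As an alternative, one could compute $\operatorname{Park'}_n^{(k)}(\sigma)$ directly from its definition as a sum over prime noncrossing partitions weighted by $\prod_{b \in K(\pi)} C^{(k)}_{|b|}$, then derive the formula by induction on $n$ using the decomposition of any $\pi \in NC_n$ into its \emph{prime initial segment} on $\{1, \dots, j\}$ (where $j = \max B$ for $B$ the block containing $1$) and an arbitrary noncrossing partition on $\{j+1, \dots, n\}$, matched against the known formula for $\operatorname{Park}^{(k)}_n$ from~\eqref{char_chains}.
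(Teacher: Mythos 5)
Your main argument is essentially the paper's proof: both invoke the Armstrong--Loehr--Warrington character formula for $(a,b)$-rational parking functions with parameters $(a,b)=(n,kn-1)$, after observing that the bijection between $\pp_n^{(k)}$ and $k$-parking functions restricts to a bijection between prime elements and the rational parking functions with the strict dominance condition; the paper phrases the prime side via Yan's $k$-parking functions, i.e.\ words $w_1\dots w_n$ with $\#\{i : w_i \leq k(j-1)+1\}>j$ for $1\leq j\leq n-1$, but these are exactly the labelled lattice paths in your $(n,kn-1)$ rectangle, so the two descriptions coincide. Both you and the paper omit the detailed verification that primality of $\pi_k$ is carried to the strict inequality, which is the only non-routine step. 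Your proposed alternative via $\operatorname{Park'}_n^{(k)} = \sum_{\pi\in NC'_n}\big(\prod_{b\in K(\pi)} C^{(k)}_{|b|}\big)\operatorname{Ind}(1)$ and an induction using the prime initial segment of a noncrossing partition is a genuinely different, more self-contained route not pursued in the paper; it would avoid citing ALW but would require carrying out the generating-function bookkeeping that the ALW citation sidesteps.
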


\begin{proof}
  Using the bijection between $\pp_n^{(k)}$ and $k$-parking functions, we can check that $\pp'{}_n^{(k)}$ is sent to $k$-parking functions $w_1,\dots,w_n$ such that $\forall j\in\llbracket1;n-1\rrbracket$, $\#\{ \, i \, : \, w_i \leq k(j-1)+1 \} > j$.  As in the case of Proposition~\ref{prop_park}, they are in bijection with the rational parking functions of Armstrong, Loehr and Warrington~\cite{ALW}, this time with the parameters $(a,b)=(n,kn-1)$.  So the result follows from the formula for the character of $(a,b)$-parking functions.  We omit details.
\end{proof}

From Theorem~\ref{theo_charhomologyk} and Proposition~\ref{prop_parkk}, we get the following:

\begin{coro}
  The character of $\tilde H_{n-2}(\bar\pp_n^{(k)})$ is 
  \[
    \operatorname{Char}(\tilde H_{n-2}(\bar \pp_n^{(k)})) = \operatorname{Sign} \otimes \operatorname{Park'}_n^{(k)}.
  \]
\end{coro}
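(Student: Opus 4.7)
The plan is to verify the equality of characters pointwise on each $\sigma\in\mathfrak{S}_n$, since two representations of a finite group agree if and only if their characters do. Fortunately, both sides have already been computed explicitly in the preceding results, so the proof reduces to combining these formulas.

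First, I would recall from Theorem~\ref{theo_charhomologyk} that the character of the homology module evaluates as
\[
  \operatorname{Char}(\tilde H_{n-2}(\bar \pp_n^{(k)}))(\sigma) = (-1)^{n-z(\sigma)}(kn-1)^{z(\sigma)-1}.
\]
Next, I would recall that the sign character of $\mathfrak{S}_n$ satisfies $\operatorname{Sign}(\sigma) = (-1)^{n-z(\sigma)}$ (an $m$-cycle has sign $(-1)^{m-1}$, so a permutation with $z(\sigma)$ cycles has sign $(-1)^{n-z(\sigma)}$). Finally, by Proposition~\ref{prop_parkk},
\[
  \operatorname{Park'}_n^{(k)}(\sigma) = (kn-1)^{z(\sigma)-1}.
\]

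Multiplying the last two yields exactly the formula from Theorem~\ref{theo_charhomologyk}:
\[
  (\operatorname{Sign}\otimes\operatorname{Park'}_n^{(k)})(\sigma) = (-1)^{n-z(\sigma)}(kn-1)^{z(\sigma)-1} = \operatorname{Char}(\tilde H_{n-2}(\bar \pp_n^{(k)}))(\sigma).
\]
Since this holds for every $\sigma\in\mathfrak{S}_n$, the two representations are isomorphic, proving the corollary.

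There is essentially no obstacle here: all the work has been done in the earlier sections, and the corollary is only assembling the pieces. The only thing to watch out for is the sign identity for $\operatorname{Sign}$ in terms of the number of cycles, which is standard and can be stated in one line. As the authors remark after the analogous Corollary in Section~\ref{sec:prime}, a more conceptual proof would construct an explicit $\mathfrak{S}_n$-equivariant basis $(e_\phi)_{\phi\in\ppp'{}_n^{(k)}}$ of $\tilde H_{n-2}(\bar\pp_n^{(k)})$ on which $\sigma$ acts by $\operatorname{Sign}(\sigma)\, e_{\sigma\cdot\phi}$; such a direct construction, however, is not needed for this corollary and would be a genuinely harder problem left to future work.
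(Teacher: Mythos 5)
Your proof is correct and matches the paper's approach exactly: the corollary is stated in the paper immediately after the sentence ``From Theorem~\ref{theo_charhomologyk} and Proposition~\ref{prop_parkk}, we get the following,'' which is precisely the pointwise comparison of characters you carry out. Your final remark about a hypothetical explicit sign-twisted basis also correctly mirrors what the authors say after the corollary.
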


As explained in Section~\ref{sec:prime} for the case $k=1$, it would be interesting to prove this corollary by finding an explicit basis $(e_\phi)_{\phi\in \ppp_n'^{(k)}}$ of $\tilde H_{n-2}(\bar\pp_n^{(k)})$ such that $\sigma\cdot e_{\phi} = \operatorname{Sign}(\sigma) e_{\sigma\cdot\phi}$.

\subsection{$k$-divisible noncrossing partitions in the sense of Edelman}

Let us record Edelman's definition:

\begin{defi}[\cite{edelman}]
  We define $NC^{[k]}_n$ as the subposet of $NC_{kn}$ containing elements all of whose blocks have cardinality divisible by $k$.
\end{defi}

There is a bijection between $NC_n^{[k]}$ and $k$-trees with $kn+1$ vertices, obtained as a restriction of the bijection $\beta$ from Section~\ref{sec:nc}.  It is natural to introduce the following:

\begin{defi}
  We define $\pp^{[k]}_n$ as the subposet of $\pp_{kn}$ containing elements all of whose blocks have cardinality divisible by $k$.
\end{defi}

The action of $\mathfrak{S}_{kn}$ on $\pp_{kn}$ stabilizes $\pp^{[k]}_n$, and the orbits are naturally indexed by $NC_n^{[k]}$.  For the next proposition, recall that $\chi$ denote the Frobenius characteristic map, which sends a $\mathfrak{S}_n$-character (equivalently, a $\mathfrak{S}_n$-set) to a symmetric function of degree $n$.  

\begin{prop}
  The symmetric function $\chi(\pp_n^{[k]})$ is the image of $\chi(\pp_n^{(k)})$ by the algebra morphism that sends the $i$th homogeneous symmetric function $h_i$ to $h_{ki}$.
\end{prop}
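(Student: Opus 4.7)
The plan is to expand both sides by decomposing each poset into its symmetric group orbits, then match the resulting sums via the known bijection between Armstrong's and Edelman's versions of $k$-divisible noncrossing partitions. Throughout I write $\mathrm{type}(\pi)$ for the integer partition whose parts are the block sizes of $\pi$, so that the Frobenius image of $\operatorname{Ind}_{\mathfrak{S}_m(\pi)}^{\mathfrak{S}_m}(1)$ is $h_{\mathrm{type}(\pi)}$.

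For $\pp_n^{[k]}\subset\pp_{kn}$, the induced $\mathfrak{S}_{kn}$-action has orbits indexed by $NC_n^{[k]}$, and by the argument of Proposition~\ref{prop:2pp_park} the orbit of $\pi$ has character $\operatorname{Ind}_{\mathfrak{S}_{kn}(\pi)}^{\mathfrak{S}_{kn}}(1)$, whose Frobenius image is $h_{\mathrm{type}(\pi)}$, a symmetric function of degree $kn$.  For $\pp_n^{(k)}$, Lemma~\ref{lem:uniquek} shows that a $k$-chain in $\pp_n$ is determined by the underlying chain $(\pi_1,\ldots,\pi_k)\in NC_n^{(k)}$ together with its top element $(\pi_k,\rho_k,\lambda_k)$, and $\mathfrak{S}_n$ acts only on the data $(\rho_k,\lambda_k)$. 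Hence the orbit of such a chain has character $\operatorname{Ind}_{\mathfrak{S}_n(\pi_k)}^{\mathfrak{S}_n}(1)$ with Frobenius image $h_{\mathrm{type}(\pi_k)}$, and summing over orbits yields
\[
    \chi(\pp_n^{[k]})=\sum_{\pi\in NC_n^{[k]}} h_{\mathrm{type}(\pi)},
    \qquad
    \chi(\pp_n^{(k)})=\sum_{(\pi_1,\ldots,\pi_k)\in NC_n^{(k)}} h_{\mathrm{type}(\pi_k)}.
\]

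To link the two sums I would invoke the bijection $NC_n^{(k)}\leftrightarrow NC_n^{[k]}$ cited from~\cite[Section~4.3]{armstrong} and establish the key compatibility: if $(\pi_1,\ldots,\pi_k)$ maps to $\tilde\pi\in NC_n^{[k]}$, then $\mathrm{type}(\tilde\pi)=k\cdot\mathrm{type}(\pi_k)$, i.e., every block size is scaled by $k$. Granting this, the algebra morphism $\theta:h_i\mapsto h_{ki}$ satisfies $\theta(h_{\mathrm{type}(\pi_k)})=h_{k\cdot\mathrm{type}(\pi_k)}=h_{\mathrm{type}(\tilde\pi)}$, and summing over the bijection gives $\theta(\chi(\pp_n^{(k)}))=\chi(\pp_n^{[k]})$, as required.

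The main obstacle is precisely this block-size compatibility. Proposition~\ref{nckrank} and the rank function of $NC_n^{[k]}\subset NC_{kn}$ already force $|\tilde\pi|=|\pi_k|$ (same number of parts), and the totals $|\tilde\pi|=kn$ versus $|\pi_k|=n$ fix the \emph{average} scaling to be $k$, but the part-by-part statement is strictly stronger. To verify it I would use the inflation description of the bijection: each block $\{a_1<\cdots<a_m\}$ of $\pi_k$ is replaced in $\tilde\pi$ by one block of size $km$, with $(k-1)m$ additional elements inserted in slots prescribed by the lower steps $\pi_1,\ldots,\pi_{k-1}$ of the chain. Equivalently, one may combine the composition encoding of $NC_n^{[k]}$ from the remark just preceding Section~5.2 with Lemma~\ref{lemma_encodingNC} applied to $\pi_k$: both yield compositions whose nonzero parts record the block sizes, and one is obtained from the other by multiplication by $k$.
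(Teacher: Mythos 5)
Your orbit decompositions of both Frobenius images are correct, as is the reduction to the block-size compatibility $\mathrm{type}(\tilde\pi)=k\cdot\mathrm{type}(\pi_k)$ under the Armstrong--Edelman bijection, but the route differs genuinely from the paper's. You stay entirely with noncrossing partitions and must unpack the bijection $NC_n^{(k)}\leftrightarrow NC_n^{[k]}$; the paper instead works with trees. Via the bijection $\beta$ it writes $\chi(\pp_n^{[k]})=\sum_T\prod_v h_{\deg(v)}$ as a sum over $k$-trees $T$ with $kn+1$ vertices, and via the $k$-parking-tree interpretation it writes $\chi(\pp_n^{(k)})=\sum_T\prod_v h_{\deg(v)/k}$ over the same set of $k$-trees (the underlying shape of a $k$-parking tree on $n$ labels is exactly a $k$-tree with $kn+1$ vertices). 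The morphism $h_i\mapsto h_{ki}$ then carries the second expression to the first term by term, with no reference to the Armstrong--Edelman bijection: the ``scale each block size by $k$'' statement you need is built into the tree model, since $\deg(v)$ is a multiple of $k$ by definition of $k$-tree and the label size of the corresponding $k$-parking-tree vertex is $\deg(v)/k$.

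The block-size compatibility is therefore a real gap in your write-up, and you correctly flag it as the crux. Your inflation sketch is the right mechanism and, spelled out from~\cite[Section~4.3]{armstrong}, would close it. Your alternative via compositions does not, however: Lemma~\ref{lemma_encodingNC} encodes $\pi_k$ by a length-$n$ composition, while the remark extending it to $NC_n^{[k]}$ encodes $\tilde\pi$ by a length-$kn$ composition, and nothing a priori relates the two until you actually unpack the bijection $NC_n^{(k)}\leftrightarrow NC_n^{[k]}$ --- asserting that their nonzero entries agree is a restatement of the claim you want, not a proof of it. So you would still need to carry out the inflation argument to make the proof complete, whereas the paper's tree model avoids the issue altogether.
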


\begin{proof}
  First note that we have:
  \[
    \chi(\pp_n^{[k]})
      =
    \sum_{T} \prod_{v\in T} h_{\deg(v)}
  \]
  where we sum over $k$-trees with $kn+1$ vertices, the product is over internal vertices $v$ of $T$, and $\deg(v)$ is the number of descendants of $v$.  This follows from Proposition~\ref{prop:2pp_park}, using trees rather than noncrossing partitions (via the bijection $\beta$).  So, it remains to show:
  \[
    \chi(\pp_n^{(k)})
    =
    \sum_{T} \prod_{v\in T} h_{\deg(v)/k}
  \]
  This is clear from the interpretation in terms of $k$-parking trees.
\end{proof}

%%%%%%%%%%%%%%%%%%%%%%%%%%%%%%%%%%%%%%%%%%%%
\section{Perspectives}
%%%%%%%%%%%%%%%%%%%%%%%%%%%%%%%%%%%%%%%%%%%%

Let us just mention some further questions arising from this work.  First, as said in Remark~\ref{rem:shellabilitycriterion}, it would be interesting to investigate if our criterion for shellability is equivalent to CL-shellability or if there exists a poset satisfying our criterion but not CL-shellable.  It would be moreover very interesting to find posets where our criterion is particularly suited (besides $\pp_n$).

Also, there should be a generalisation of Edelman's poset to other finite reflection groups. Indeed, in this context there is an associated noncrossing partition lattice, and a noncrossing parking space.  New methods might be needed to prove shellability in this general setting.

It would be very interesting to have a bijective proof of Proposition~\ref{prop:nck_chains} in terms of planar trees (avoiding the relative Kreweras complement), which could be adapted to parking trees to get Proposition~\ref{zeta_characterk}.  The first step towards this proof is to find an appropriate order on the tree representation of $NC_n^{(k)}$.

\end{document}